\begin{document}

\title{A Geometric Structure Associated with the Convex Polygon}

\author{Kai Jin\\
The Hong Kong University of Science and Technology\\
Clear Water Bay, Hong Kong SAR \\ cscjjk@gmail.com
}

\maketitle
\thispagestyle{empty}

\newcommand{\Nest}{\mathrm{Nest}}
\newcommand{\T}{\mathcal{T}}
\newcommand{\C}{\mathcal{C}}
\newcommand{\block}{\mathsf{block}}
\newcommand{\sector}{\mathsf{sector}}

\theoremstyle{plain}
\newtheorem{remark}{Remark}
\newtheorem{note}{Note}

\theoremstyle{theorem}
\newtheorem{theorem}{Theorem}
\newtheorem{corollary}{Corollary}
\newtheorem{fact}{Fact}
\newtheorem{lemma}{Lemma}
\newtheorem{definition}{Definition}

\begin{abstract}
We propose a geometric structure induced by any given convex polygon $P$, called $\Nest(P)$, which is an arrangement of $\Theta(n^2)$ line segments, each of which is parallel to an edge of $P$, where $n$ denotes the number of edges of $P$. We then deduce six nontrivial properties of $\Nest(P)$ from the convexity of $P$ and the parallelism of the line segments in $\Nest(P)$. Among others, we show that $\Nest(P)$ is a subdivision of the exterior of $P$, and its inner boundary interleaves the boundary of $P$. They manifest that $\Nest(P)$ has a surprisingly good interaction with the boundary of $P$. Furthermore, we study some computational problems on $\Nest(P)$. In particular, we consider three kinds of location queries on $\Nest(P)$ and answer each of them in (amortized) $O(\log^2n)$ time. Our algorithm for answering these queries avoids an explicit construction of $\Nest(P)$, which would take $\Omega(n^2)$ time.

By applying the aforementioned six properties altogether, we find that the geometric optimization problem of finding the maximum area parallelogram(s) in $P$ can be reduced to answering $O(n)$ aforementioned location queries, and thus be solved in $O(n\log^2n)$ time. This application will be reported in a subsequent paper.
\end{abstract}

\begin{figure}[h]
\begin{minipage}[c]{0.5\textwidth}
\centering\includegraphics[width=.59\textwidth]{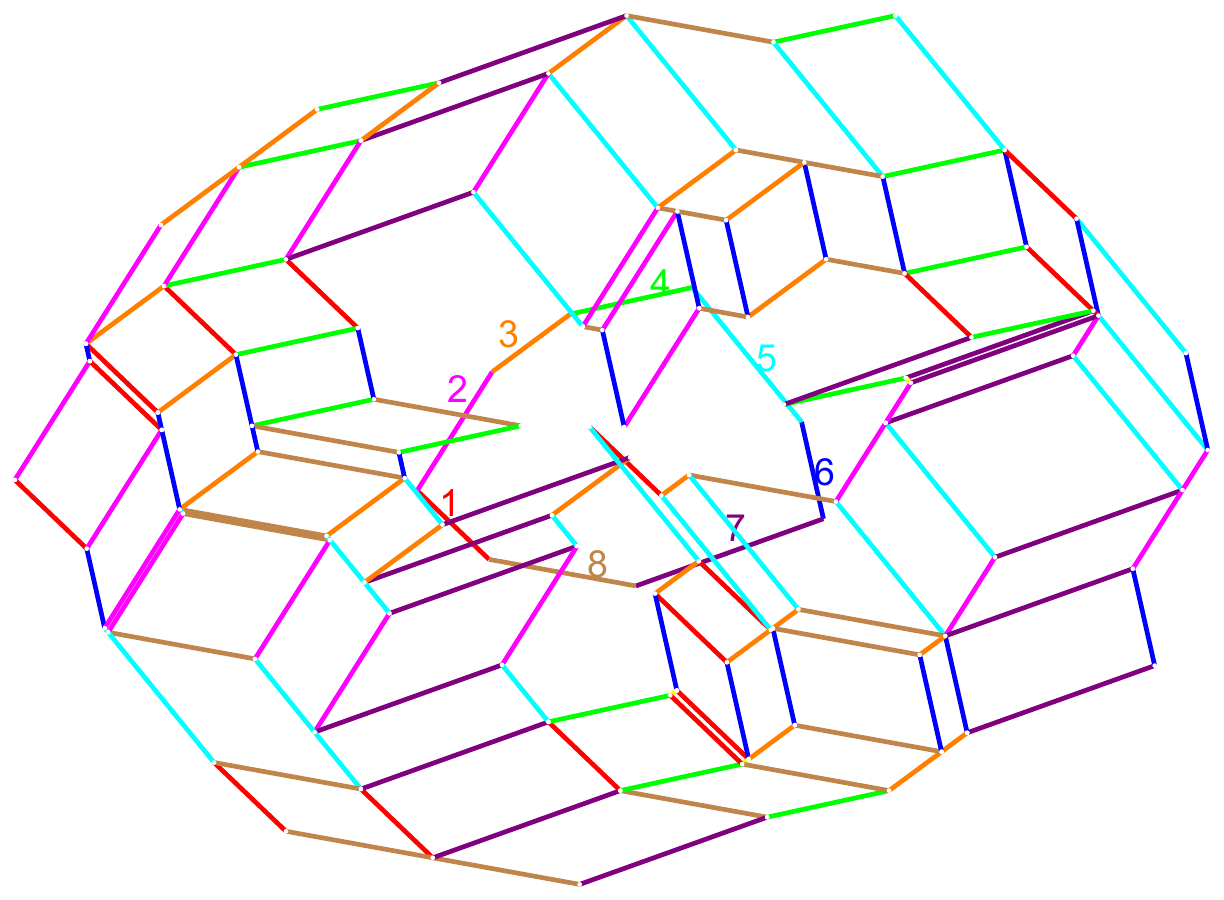}
\end{minipage}
\begin{minipage}[c]{0.5\textwidth}
\centering\includegraphics[width=.5\textwidth]{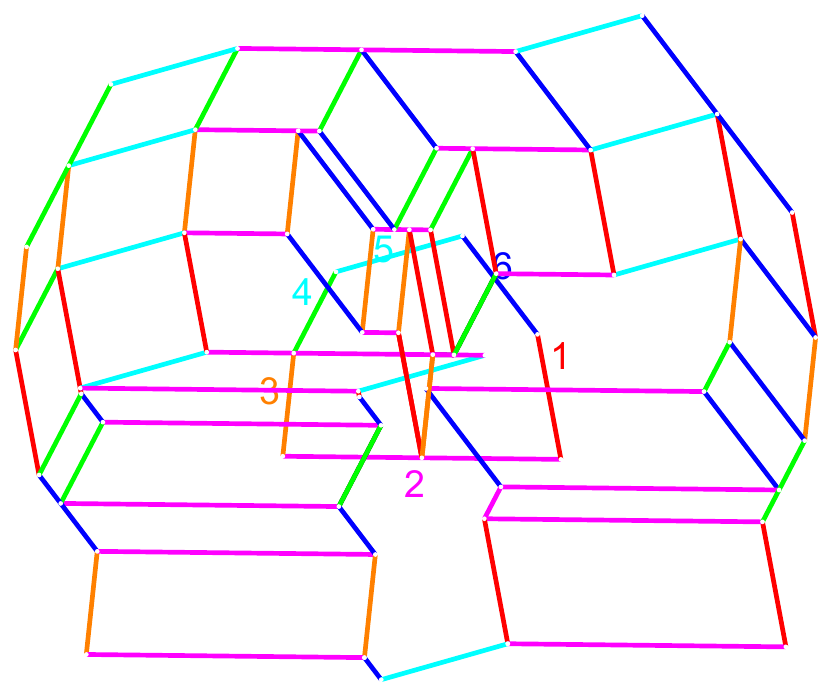}
\end{minipage}
\caption{Two examples of $\Nest(P)$. The line segments labeled from 1 to $n$ in clockwise order indicate the $n$ edges of $P$. The other line segments in the figure are the edges from $\Nest(P)$.}\label{fig:examples-only-NestP}
\end{figure}

\section{Introduction}\label{sect:introduction}

Geometric structures associated with every instance in a certain space have drawn a lot of attention in discrete and computational geometry.
Some prominent representatives are Delaunay triangulation~\cite{BergCG}, Voronoi diagram~\cite{BergCG} (with many variants),
      and Zonotope~\cite{LecturesPolytopes}, which have found enormous applications due to their rich structural properties.

In this manuscript we introduce a geometric structure that is induced by the convex polygons.
More specifically, for every convex polygon $P$, we define a structure called $\Nest(P)$, which is arrangement of $\Theta(n^2)$ line segments,
  each of which is parallel to an edge of $P$, where $n$ denotes the number of edges of $P$.
    See examples in Figure~\ref{fig:examples-only-NestP}.

The definition of $\Nest(P)$ is roughly as follows. Throughout, regard $P$ as a compact set that contains its boundary and interior.
Assume $(l,l')$ is taken over those pairs of nonparallel supporting lines of $P$ with the following property:
   the intersections $l\cap P$, $l\cap l'$, and $l'\cap P$ lie in clockwise order.
   Denote by $B_{l,l'}$ the (unique) point in $P$ that maximizes the product $d_l(B)\cdot d_{l'}(B)$, where $d_l(B)$ denotes the Euclidean distance from $B$ to $l$.
   Denote
     $F=\big\{ A+A'-B_{l,l'} \mid A\in l\cap P,A'\in l'\cap P, (l,l')\text{ is taken over the aforesaid pairs of supporting lines}\big\}.$
   We call each vertex or edge of $P$ a \emph{unit}, and regard the edges as open, i.e.\ they do not contain their endpoints. Thus each point in the boundary of $P$ lies in exactly one unit.
   Observe that we can define a subregion of $F$ by adding a constraint that $A,A'$ lie in a certain pair of units,
   and we can also define a subregion of $F$ by adding a constraint that $B_{l,l}$ is restrict to a certain unit.
   The union of the boundaries of all such subregions is defined to be $\Nest(P)$.

\smallskip \textbf{Note}: As shown below, to rigorously define $\Nest(P)$ we need several cascading notations, a few of which do not originate in this manuscript but in \cite{arxiv:n2}. However, only in this manuscript, we introduce and study $\Nest(P)$.

\smallskip This new structure turns out to have a number of surprising properties.
    In this manuscript, we prove six simple (yet nontrivial) properties of $\Nest(P)$,
       which all manifest that $\Nest(P)$ has good interactions with the boundary of $P$ (see Theorem~\ref{thm:nestp}),
    e.g.\ $\Nest(P)$ is a subdivision \cite{BergCG} of the exterior of $P$.
    These properties are essentially the properties of the convex polygon $P$, and hence may be of independent interest in convex geometry.
    Moreover, we study some computational aspects of $\Nest(P)$. We show that several location queries in $\Nest(P)$
      can be answered in $O(\log^2n)$ amortized time, without having a construction of $\Nest(P)$, which costs $\Omega(n^2)$ time (see Theorem~\ref{thm:nestp-location}).

\paragraph{Application.}
    In a subsequent paper (and also in my PhD. dissertation), we show that computing the parallelograms with the maximum area in convex polygon $P$ (as well as the parallelograms whose areas are locally maximum) reduces to answering $O(n)$ aforementioned location queries on $\Nest(P)$
      and hence can be solved in $O(n\log^2n)$ time by applying Theorem~\ref{thm:nestp-location}.
    This result is superior than the most related work
      including $O(n^3)$, $O(n^2\log n)$, $O(n^2)$, $O(n^2)$, and $O(n^2)$ time algorithms for
        finding the maximum rectangle \cite{Othershape-rect-EuroCG14}, maximum similar copy of a given triangle \cite{Placement-ST-CGTA94},
          maximum inscribed equilateral triangle and square \cite{Othershape-square-Allerton87}, and maximum parallelogram \cite{arxiv:n2} in a convex polygon.
    Interestingly, each property in Theorem~\ref{thm:nestp} has a particular value in proving this reduction.

    \medskip We state our results in this section, state important lemmas in section~\ref{sect:pre}, and outline our techniques in section~\ref{sect:techover}.

\newcommand{\unit}{\mathbf{u}}
\newcommand{\el}{\ell}        

\subsection{The so-called chasing relation between every two edges and every two units}\label{subsect:chasing}

\begin{description}
\item[Edges and vertices.]
Let $e_1,\ldots,e_n$ be a clockwise enumeration of the edges of $P$.
For simplicity of discussion, assume the edges are \textbf{pairwise-nonparallel}.
Denote the vertices of $P$ by $v_1,\ldots,v_n$ such that $e_i=(v_i,v_{i+1})$ (where $v_{n+1}=v_1$).
Denote the boundary of $P$ by $\partial P$.
As mentioned above, each point $X$ in $\partial P$ lies in a unique unit; denote the unit by $\unit(X)$.
    Unless otherwise stated, an edge, vertex, or unit refers to an edge, vertex, or unit of $P$, respectively.
Denote by $\el_i$ the extended line of $e_i$.

\item[Chasing relation between edges.]
    Edge $e_i$ is \emph{chasing} $e_j$, denoted by $e_i\prec e_j$, if the intersection of $\el_i$ and $\el_j$ lies between $e_i,e_j$ clockwise.
    Denote by $e_i\preceq e_j$ if $e_i=e_j$ or $e_i\prec e_j$.
    By the pairwise-nonparallel assumption of edges, exactly one in every pair of distinct edges is chasing the other.

    For example, in the left picture of Figure~\ref{fig:examples-only-NestP}, $e_2$ is chasing $e_3,\ldots,e_6$, whereas $e_7,e_8,e_1$ are chasing $e_2$.

\item[Backward and forward edges of every unit.]
The backward and forward edges of $v_i$ are $e_{i-1}$ and $e_i$, respectively.
The backward and forward edges of $e_i$ are $e_i$ itself.
 Intuitively, if you start at any point in a unit $u$ and move counter-clockwise (clockwise) along $\partial P$ by an infinitely small step, you will be located at the backward (forward) edge of $u$.
Denote the backward and forward edge of $u$ by $back(u)$ and $forw(u)$ respectively.

\item[Chasing relation between units.] We say unit $u$ is \emph{chasing} unit $u'$ if
\begin{equation}\label{def:unit-chasing}
back(u)\prec back(u')\hbox{ and }forw(u)\prec forw(u').
\end{equation}
\textbf{Note}: The relation ``chasing between units'' is a compatible extension of ``chasing between edges''.

\textbf{Note}: It may occur that neither $u$ is chasing $u'$, nor $u'$ is chasing $u$.
\end{description}

\subsection{Define $\zeta(u,u')$ for each unit pair $(u,u')$ in which $u$ is chasing $u'$}\label{subsect:zeta}

\newcommand{\dist}{\mathsf{disprod}}
The \emph{distance-product} from point $X$ to lines $(l,l')$, denoted by $\dist_{l,l'}(X)$, is defined as $d_{l}(X)\cdot d_{l'}(X)$.

\begin{description}

\item[$Z$-points \cite{arxiv:n2}.]
    Given two edges $e_i,e_j$ such that $e_i\prec e_j$, it is trivial and proved in \cite{arxiv:n2} that in domain $P$, function $\dist_{\el_i,\el_j}()$ achieves maximum value at a unique point; and this point must lie in $\partial P$.
Denote this maximum value point by $Z_{e_i}^{e_j}$ or $Z_i^j$ for short. We call the $\Theta(n^2)$ points in $\{Z_i^j\mid e_i\prec e_j\}$ the \emph{$Z$-points}.

\item[Boundary-portions of $P$.]
     Every continuous portion of $\partial P$ (including a single point of $\partial P$) is called a \emph{boundary-portion}.
        If we travel around $\partial P$ from one point $X$ to another point $X'$ clockwise, we pass through a boundary-portion;
          the endpoints-inclusive and endpoints-exclusive versions of this portion are denoted by $[X\circlearrowright X']$ and $(X\circlearrowright X')$ respectively. Points $X$ and $X'$ are referred to as their \emph{starting and terminal points}.

        \textbf{Note}: We assume that $[X\circlearrowright X']$ only contains the single point $X$ but not the entire $\partial P$ when $X=X'$.

        We regard every boundary-portion directed and the direction conforms with the clockwise order.

        For two points $A$ and $B$ in a boundary-portion $\rho$, we state that $A<_\rho B$ if $A$ would be encountered earlier than $B$ when we travel along $\rho$ (clockwise), and we state that $A\leq_\rho B$ if $A=B$ or $A<_\rho B$.
\end{description}

We recall a property of the $Z$-points given in \cite{arxiv:n2}. It is helpful for understanding (\ref{eqn:zeta_chasing}) -- the definition of $\zeta$.

\begin{fact}[\cite{arxiv:n2} Bi-monotonicity of the $Z$-points]\label{fact:Z_bi-monotonicity}
Given $e_s,e_t$ such that $e_s\preceq e_t$. Let \(S=\{(e_i,e_j)\mid e_i\prec e_j, \hbox{and $e_i,e_j$ both belong to $\{e_s,e_{s+1},\ldots,e_t\}$}\}.\)
We claim that all the $Z$-points in set $\{Z_i^j\mid (e_i,e_j)\in S\}$ lie in boundary-portion $\rho=[v_{t+1} \circlearrowright v_s]$, and more importantly, they obey the following bi-monotonicity:
\[\hbox{For $(e_{i},e_{j})\in S$ and $(e_{i'},e_{j'})\in S$, if $e_{i}\preceq e_{i'}$ and $e_{j}\preceq e_{j'}$, then $Z_{i}^{j}\leq_\rho Z_{i'}^{j'}$}.\]
\end{fact}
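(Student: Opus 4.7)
My plan is to establish the two conclusions separately, both leveraging Fact~\ref{fact:dist-unique-location}. For the location claim, Fact~\ref{fact:dist-unique-location}(part 1) already places $Z_i^j \in [\D_i \circlearrowright \D_j]$, so it suffices to show that this sub-arc is contained in $\rho$ whenever $(e_i,e_j)\in S$. The key sub-lemma is that for every edge $e_i \in \{e_s,\ldots,e_t\}$ with $e_s \preceq e_t$, the farthest vertex $\D_i$ lies in $\rho$, and moreover that $\D_i \leq_\rho \D_j$ whenever $e_i \prec e_j$ with both in the range. Both are monotonicity properties of the ``antipodal vertex as a function of supporting edge direction'' on a convex polygon, and I would derive them directly from convexity of $P$, the definitions of chasing, and the assumption that $[v_s \circlearrowright v_{t+1}]$ is an inferior portion; intuitively, the supporting line parallel to $\el_i$ that is farthest from $\el_i$ must touch $P$ opposite to the inferior portion, and rotating the reference edge forward in the chasing order advances that contact vertex clockwise along $\partial P$.

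For the bi-monotonicity, by transitivity of $\leq_\rho$ it suffices to establish a one-variable version: fix $e_j$ and show $Z_i^j \leq_\rho Z_{i'}^j$ whenever $e_i \preceq e_{i'}$; the symmetric statement in $e_j$ then follows the same template, and composing them yields the claim. Consider the generic situation in which $Z_i^j$ lies strictly inside an edge $e_k$; Fact~\ref{fact:dist-unique-location}(part 2) then gives $Z_i^j = \M(\I_{i,k}, \I_{j,k})$. Rotating $\el_i$ continuously toward $\el_{i'}$ while keeping $\el_j$ fixed shifts $\I_{i,k}$ monotonically along $\el_k$; the chasing condition $e_i \preceq e_{i'}$ together with convexity of $P$ forces this shift to be in the clockwise direction along $\rho$, and taking the midpoint preserves direction. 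Vertex transitions, where $Z$ jumps from one edge $e_k$ to a neighbor, are absorbed by the uniqueness clause in the definition of $Z_i^j$, which makes $Z_i^j$ a continuous function of $(e_i, e_j)$.

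I expect the main obstacle to be controlling the motion of $Z$ across vertex transitions and ensuring that the one-variable monotonicity never back-tracks: a priori, a small rotation of $\el_i$ could send $Z$ to a non-adjacent edge, especially since the chasing relation on edges is not transitive in general. The resolution should combine the range restriction $[\D_i \circlearrowright \D_j]$ from Fact~\ref{fact:dist-unique-location}(part 1) with the midpoint formula of Fact~\ref{fact:dist-unique-location}(part 2): together these confine the trajectory of $Z$ to a narrow arc on $\rho$ and reduce any apparent ``jump'' to a sequence of small local shifts on the same or adjacent edges, each of which preserves the clockwise order on $\rho$.
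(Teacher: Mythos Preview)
The paper does not prove this fact; it is imported from \cite{arxiv:n2} and used as a black box, so there is no in-paper argument to compare your proposal against. Your overall plan---locate the $Z$-points via Fact~\ref{fact:dist-unique-location} and reduce bi-monotonicity to a one-variable statement, handling the two indices separately---is the natural strategy and is sound in structure.

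The continuous-rotation step, however, is underspecified in a way that matters. ``Rotating $\ell_i$ continuously toward $\ell_{i'}$'' does not pick out a one-parameter family of lines: a line has two degrees of freedom, and once you vary the direction you must also say where each intermediate line sits. Different choices of pivot send the intersection $\I_{i,k}$ along $\ell_k$ in different directions, so the assertion that ``the chasing condition $e_i\preceq e_{i'}$ together with convexity of $P$ forces this shift to be in the clockwise direction along $\rho$'' is precisely the content to be proved and cannot be waved through. The clean fix is to work discretely, comparing $Z_i^j$ with $Z_{i+1}^j$: here $v_{i+1}=\ell_i\cap\ell_{i+1}$ supplies a canonical pivot, and the midpoint formula of Fact~\ref{fact:dist-unique-location}.2 then reduces the comparison to which side of $\ell_k$ the vertex $v_{i+1}$ lies on. Alternatively you can bypass the midpoint formula entirely: writing $A=Z_i^j$, $B=Z_{i+1}^j$, the two maximizer inequalities multiply to give $d_{\ell_{i+1}}(A)/d_{\ell_i}(A)\leq d_{\ell_{i+1}}(B)/d_{\ell_i}(B)$, and the ratio $d_{\ell_{i+1}}/d_{\ell_i}$ is constant on rays through $v_{i+1}$ and hence monotone along the arc $(v_{j+1}\circlearrowright v_i)$, which does not contain $v_{i+1}$.

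Your location sub-lemmas ($\D_i\in\rho$ for every $e_i\in\{e_s,\ldots,e_t\}$, and $\D_i\leq_\rho\D_j$ when $e_i\prec e_j$) are correct but also need an actual argument; neither is a one-line consequence of convexity. The first uses $e_s\preceq e_t$ to place $\D_s$ and $\D_t$ inside $\rho$ (via the characterization $\D_i=v_m\Leftrightarrow e_i\prec e_{m-1}$ and $e_m\prec e_i$) and then the standard rotating-calipers monotonicity of $i\mapsto\D_i$ to sandwich the intermediate ones; the second is that same monotonicity again.
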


\begin{description}
\item[Boundary-portion $\zeta(u,u')$.]\cite{arxiv:n2} For every unit pair $(u,u')$ such that $u$ is chasing $u'$, define
    \begin{equation}\label{eqn:zeta_chasing}
        \zeta(u,u'):= [Z_{back(u)}^{back(u')} \circlearrowright Z_{forw(u)}^{forw(u')}]. \qquad \hbox{(See Figure~\ref{fig:def-zeta} for an illustration.)}
    \end{equation}
    Note that when $u$ is chasing $u'$, we have (\ref{def:unit-chasing}), and thus $Z_{back(u)}^{back(u')}$ and $Z_{forw(u)}^{forw(u')}$ are both defined.
\begin{figure}[h]
\centering\includegraphics[width=.63\textwidth]{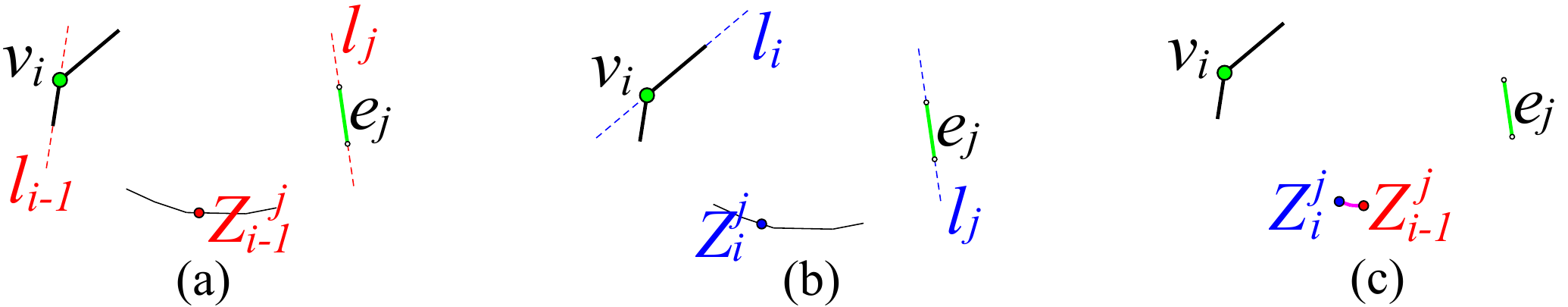}
\caption{Illustration of (\ref{eqn:zeta_chasing}). Suppose $u$ is a vertex whereas $u'$ is an edge; other cases are similar.
   Assume $u=v_i$ and $u'=e_j$.
     First, find the \emph{backward edges} of $u,u'$, which are $e_{i-1},e_j$, and
       find the point in $P$ with the maximum distance-product to the extended lines of these two edges, i.e., $Z_{i-1}^j$; see (a).
     Second, find the \emph{forward edges} of $u,u'$, which are $e_i,e_j$, and
       find the point in $P$ with the maximum distance-product to the extended lines of these two lines, i.e., $Z_i^j$; see (b).
     Then, $\zeta(u,u')$ is the boundary-portion from the first $Z$-point to the second; see (c).
     }\label{fig:def-zeta}
\end{figure}
\end{description}

\textbf{Note}: When $u,u'$ are both edges, $\zeta(u,u')$ is a point, which equals $Z_u^{u'}$.
When at least one of $u,u'$ is a vertex, $\zeta(u,u')$ could be either a boundary-portion that is not a single point, or just a single point --
  which occurs when the two $Z$-points $Z_{back(u)}^{back(u')}$, $Z_{forw(u)}^{forw(u')}$ coincide.
  For example, $Z_i^j$ can be equal to $Z_{i-1}^j$ when $v_i$ is chasing $e_j$.

\subsection{Set $\T$, and blocks and sectors of $f(\T)$}\label{subsect:T-fT}
\newcommand{\LVS}{\mathcal{L}^\star_V}
\newcommand{\RVS}{\mathcal{R}^\star_V}

Let notation $[\alpha,\beta,\gamma]$ be short for $\{(X_1,X_2,X_3)\mid X_1\in \alpha,X_2\in \beta,X_3\in \gamma \}$.
Define $f(X_1,X_2,X_3)=X_1+X_3-X_2$, i.e. the unique point $X$ so that $X_1X_2X_3X$ forms a parallelogram.
Moreover, for any $S\subseteq [\partial P, \partial P, \partial P]$, let $f(S)$ denote $\{f(X_1,X_2,X_3)\mid (X_1,X_2,X_3)\in S\}$, which is a region in the plane.

\smallskip We define $\T^P$, abbreviated as $\T$ when $P$ is clear, as the following subset of $[\partial P,\partial P,\partial P]$.
    \begin{equation}\label{def:T}
        \begin{split}
            \T^P & := \bigcup_{u \text{ is chasing } u'} [u', \zeta(u,u'),u]=
                      \bigcup_{u \text{ is chasing } u'} \left\{(X_1 \in u', X_2\in \zeta(u,u'), X_3\in u )\right\}\\
                & = \left\{(X_1,X_2,X_3)\in \partial P^3\mid \unit(X_3) \hbox{ is chasing }\unit(X_1), X_2\in \zeta(\unit(X_3),\unit(X_1))\right\}.
        \end{split}
    \end{equation}

For any unit pair $(u,u')$ in which $u$ is chasing $u'$, define
\begin{equation}\label{def:block}
    \block(u,u'):=f\left(\{(X_1,X_2,X_3)\in \T \mid X_3\in u,X_1\in u'\}\right)=f\left([u',\zeta(u,u'),u]\right).
\end{equation}

For any unit $w$, define
\begin{equation}\label{def:sector}
    \sector(w):=f(\{(X_1,X_2,X_3)\in \T \mid X_2\in w\}).
\end{equation}

We call each region in $\{\block(u,u')\mid u\hbox{ is chasing }u'\}$ a \emph{block} (of $f(\T)$)
and each region in $\{\sector(w)\mid w\hbox{ is a unit of }P\}$ a \emph{sector} (of $f(\T)$).
By the definition of blocks, the union of all the $\Theta(n^2)$ blocks is $f(\T)$.
Similarly, by the definition of sectors, the union of all the $2n$ sectors is also $f(\T)$.

\begin{figure}[h]
\includegraphics[width=.31\textwidth]{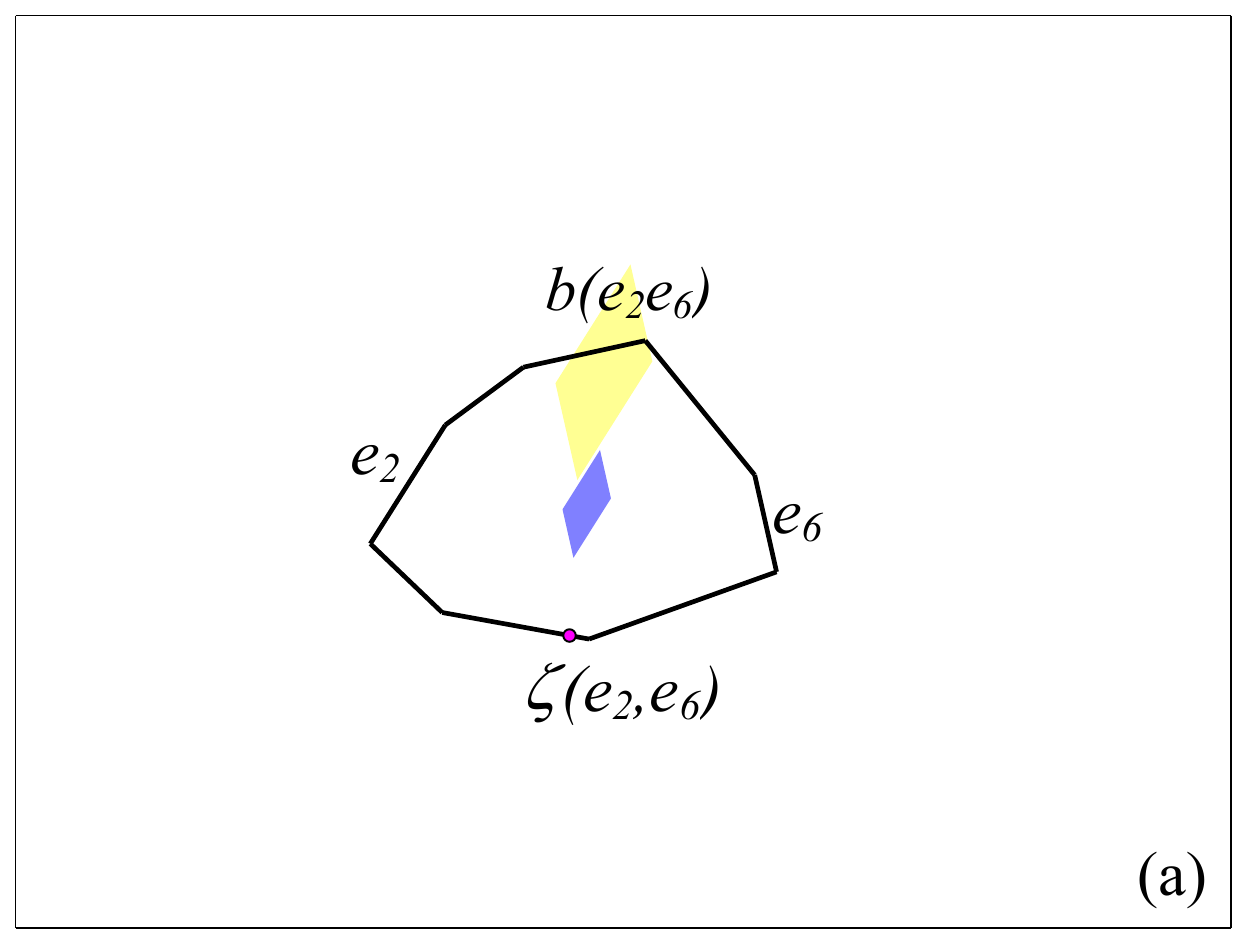}~~~~
\includegraphics[width=.31\textwidth]{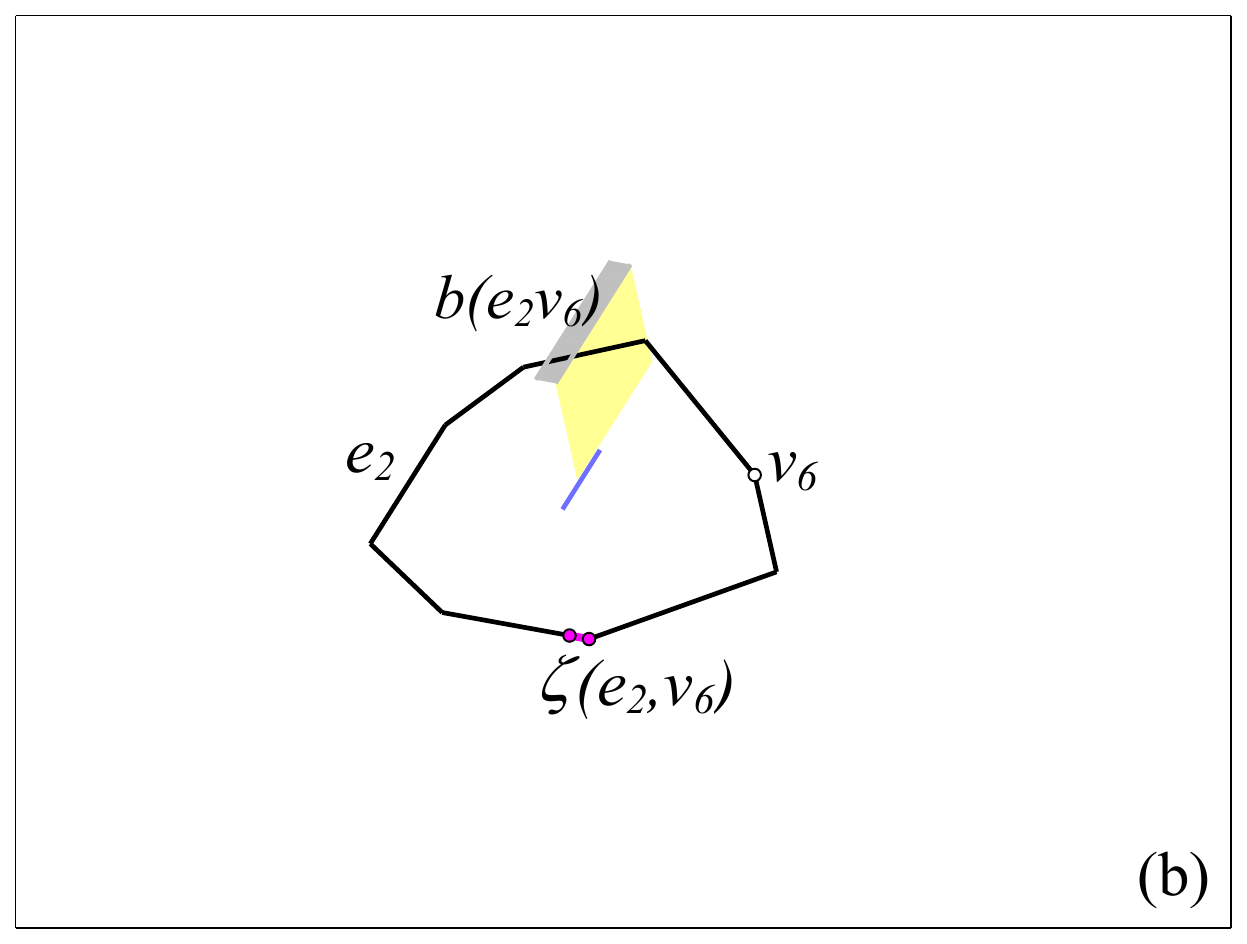}~~~~
\includegraphics[width=.31\textwidth]{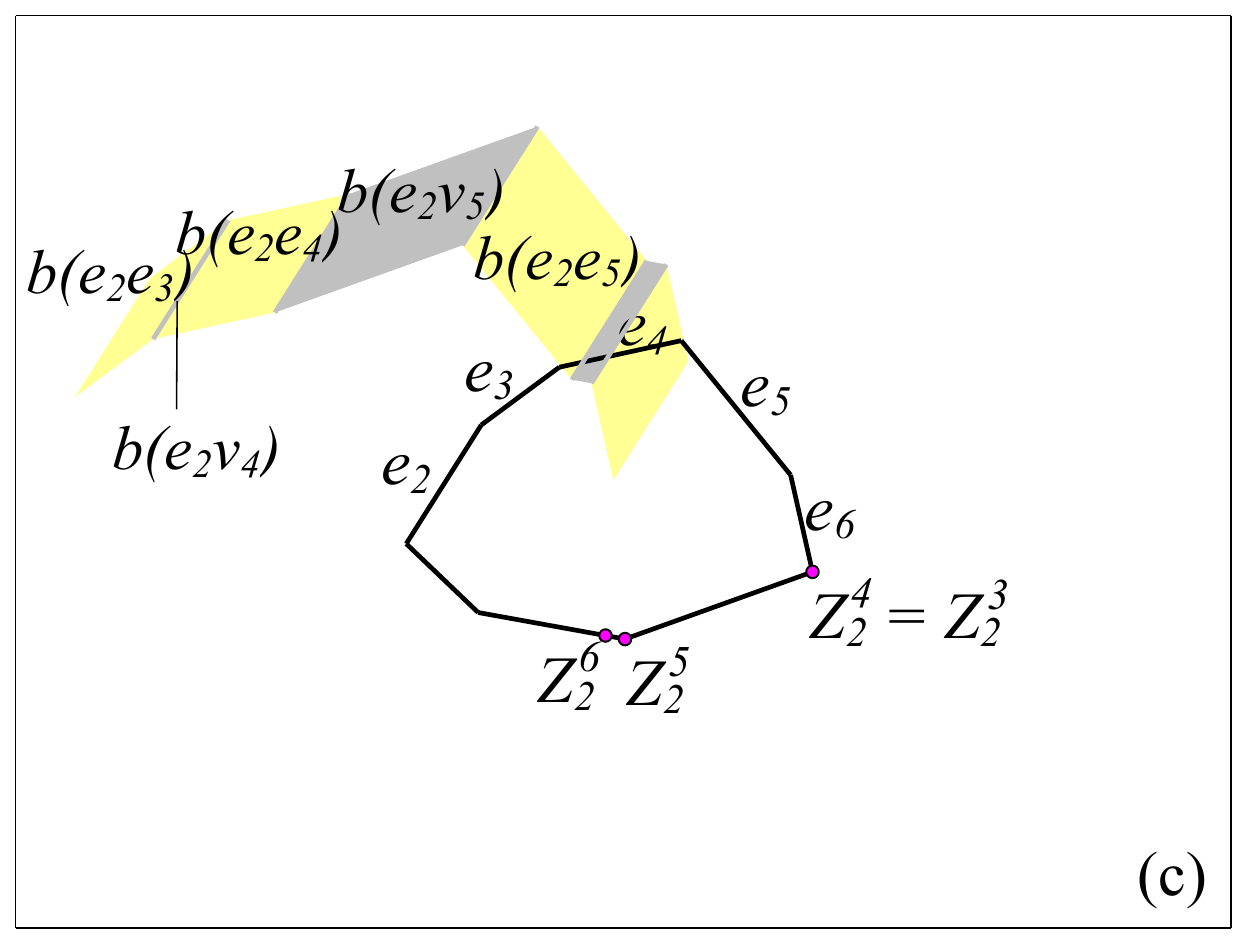}\\
\includegraphics[width=.31\textwidth]{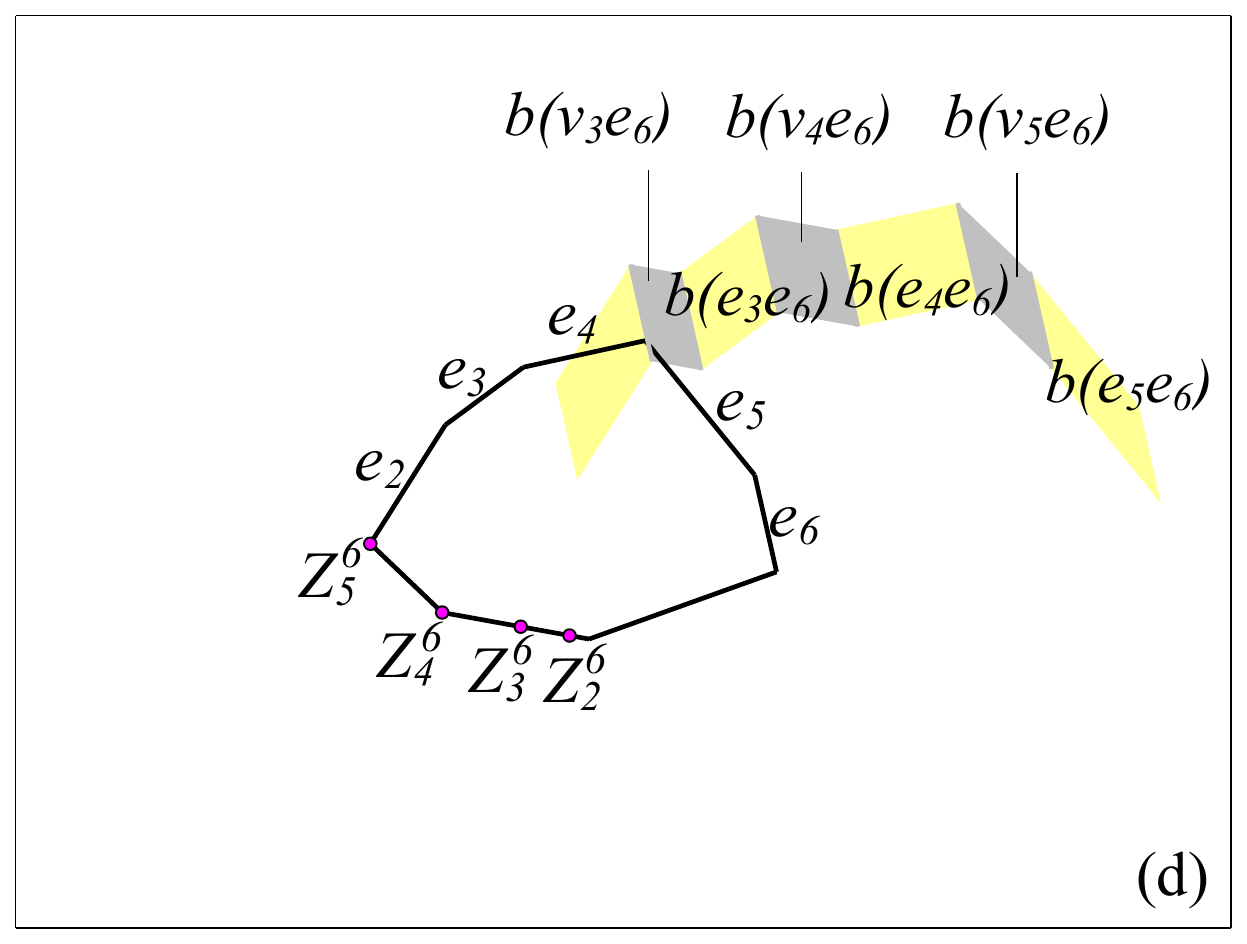}~~~~
\includegraphics[width=.31\textwidth]{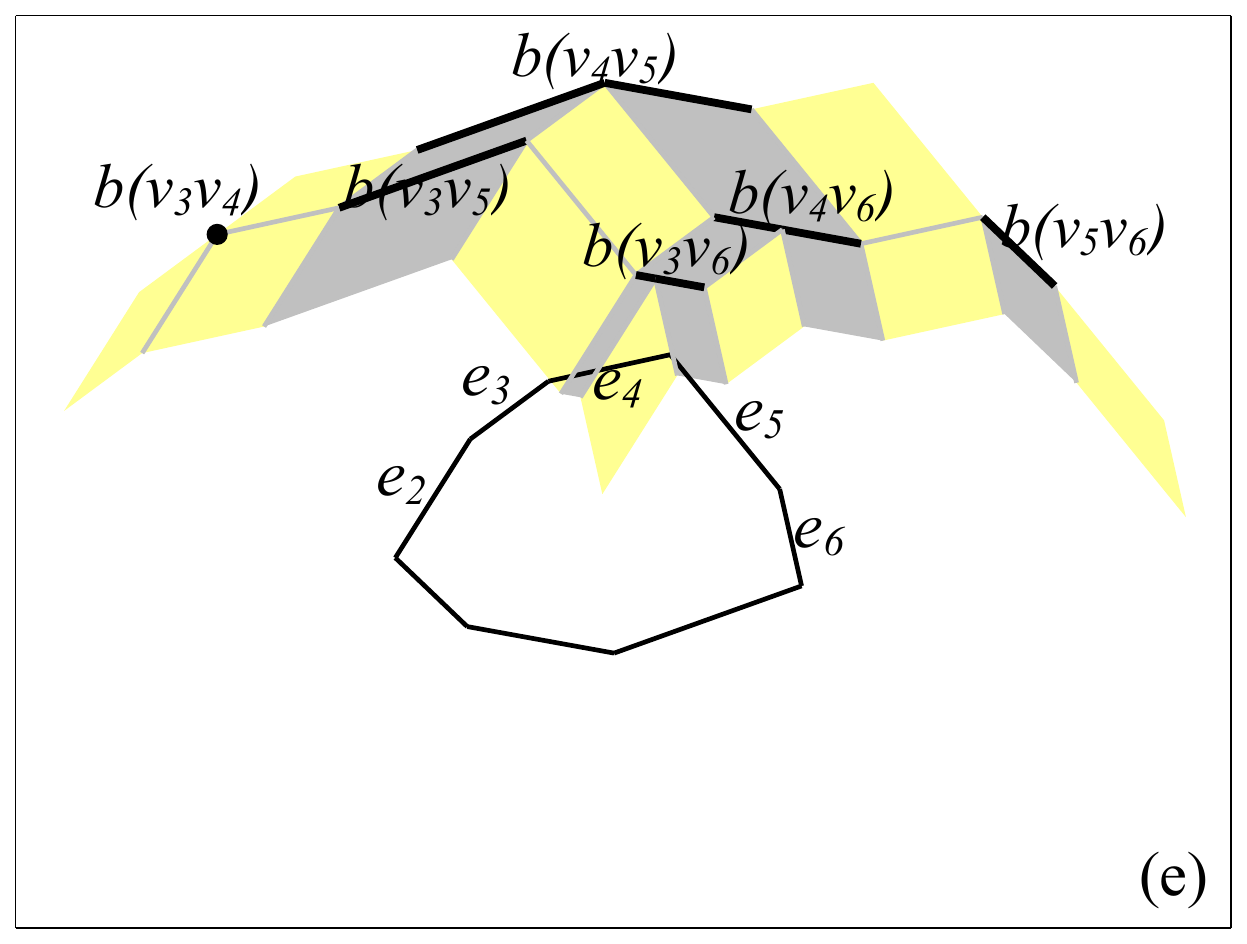}~~~~
\includegraphics[width=.31\textwidth]{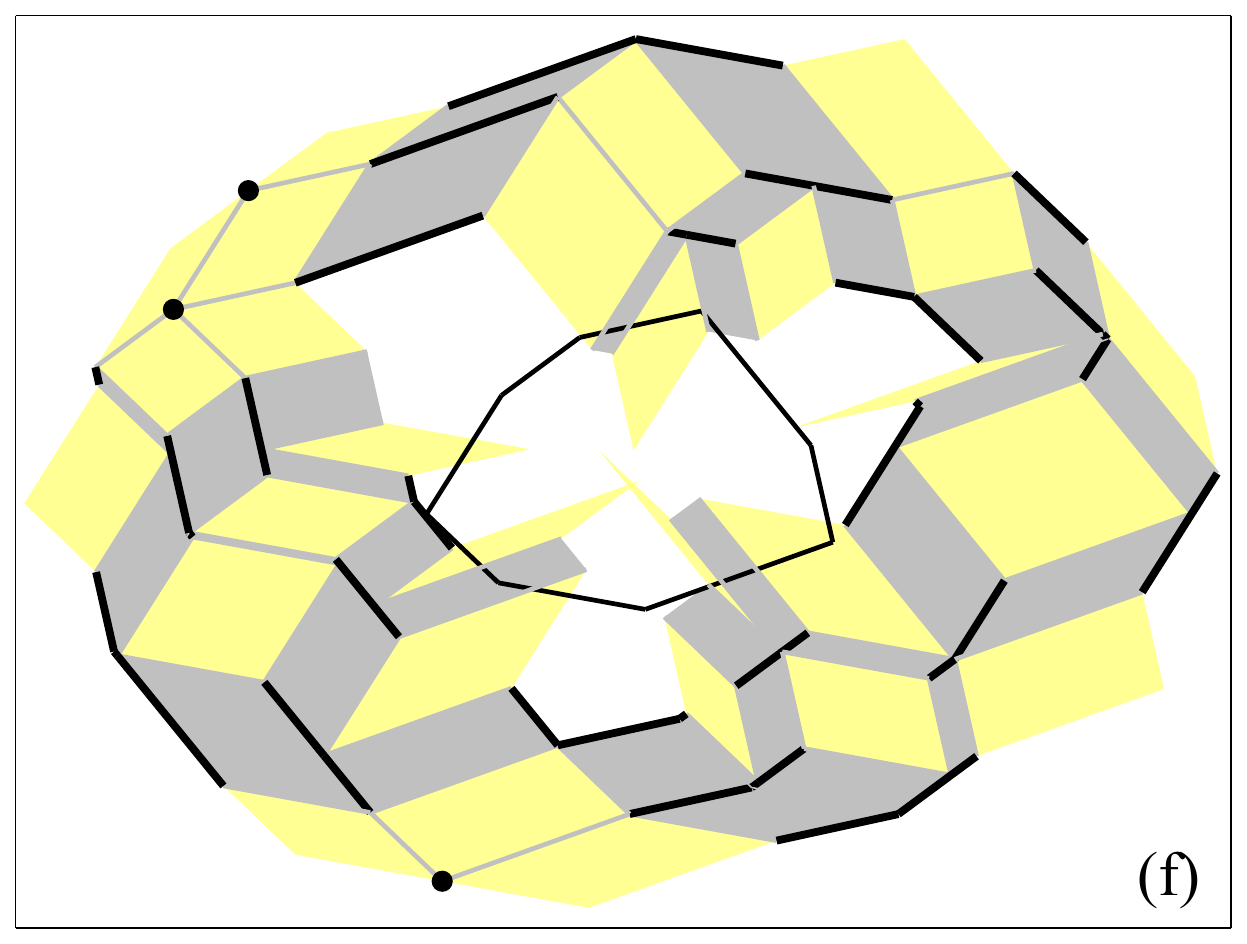}
\caption{Illustration of (\ref{def:block}). In this figure, notation $\block(u,u')$ is abbreviated as $b(uu')$.
    The region of $\block(u,u')$ is colored yellow, black, or gray, when $u,u'$ are both edges,
      both vertices, or one vertex and one edge.
}\label{fig:blocks_def}
\end{figure}

We illustrate the definition of the blocks by Figure~\ref{fig:blocks_def}.
In this figure, we consider a convex polygon with eight edges.
Picture~(a) draws $\block(e_2,e_6)$.
The blue region contains the midpoints of all those line segments connecting $e_2$ and $e_6$.
At each point in this region, draw the reflection of $Z_2^6$;
  the union of such reflections equals $f([e_6,\zeta(e_2,e_6),e_2])$, namely, $\block(e_2,e_6)$.
  It is a parallelogram with two sides congruent to $e_2$ and two sides congruent to $e_6$.
Picture~(b) draws not only $\block(e_2,e_6)$, but also $\block(e_2,v_6)$.
The blue segment contains the midpoints of all those line segments connecting $e_2$ and $v_6$.
At each point in this segment, draw the reflection of $\zeta(e_2,v_6)$;
   the union of such reflections equals $f([v_6,\zeta(e_2,v_6),e_2])$, namely, $\block(e_2,v_6)$.
It is again a parallelogram. This parallelogram has two sides congruent to $e_2$, and two sides congruent to $\zeta(e_2,v_6)$.
(However, because $\zeta(e_2,e_6)$ in general is a polygonal curve that consists of several (possibly zero) line segments,
   $\block(e_2,e_6)$ in general is a region that consists of several (possibly zero) parallelograms, each of which has two sides congruent to $e_2$.)
Picture~(c) and (d) respectively draw $\{\block(e_2,u')\mid e_2 \text{ is chasing }u'\}$ and $\{\block(u,e_6)\mid u \text{ is chasing }e_6\}$.
   Notice that $\block(e_2,v_4)$ is a line segment, because $\zeta(e_2,v_4)=[Z_2^3 \circlearrowright Z_2^4]$ contains only a single point.
Picture~(e) draws $\{\block(u,u')\mid \text{$u,u'$ are in $(v_2\circlearrowright v_7)$, and $u$ is chasing $u'$}\}$, where
  we only show labels of the black blocks.
Notice that $\block(v_3,v_4)$ is a single point because $\zeta(v_3,v_4)$ is a single point.
Finally, picture~(f) draws all the blocks.

The notion ``\emph{reflection}'' is formally defined as follows.
Given a figure $F$ and a point $O$,
  figure $F$'s reflection with respect to $O$ is another figure which is congruent to $F$ and is centrally-symmetric to $F$ with respect to $O$.

Figure~\ref{fig:sectors-nestp-sigma}~(a) below draws the $2n$ sectors for the example given in Figure~\ref{fig:blocks_def}.
This picture is drawn according to (\ref{def:sector}) as follows.
Consider any unit pair $(u,u')$ in which $u$ is chasing $u'$.
   As shown in Figure~\ref{fig:blocks_def},
      each point $X$ in $\block(u,u')$ is a reflection of some point $X_2$ in $\zeta(u,u')$.
   If $X_2$ is from unit $u$, then we put $X$ into $\sector(u)$.

\begin{figure}[h]
\begin{subfigure}[b]{.34\textwidth}
\centering \includegraphics[width=1.05\textwidth]{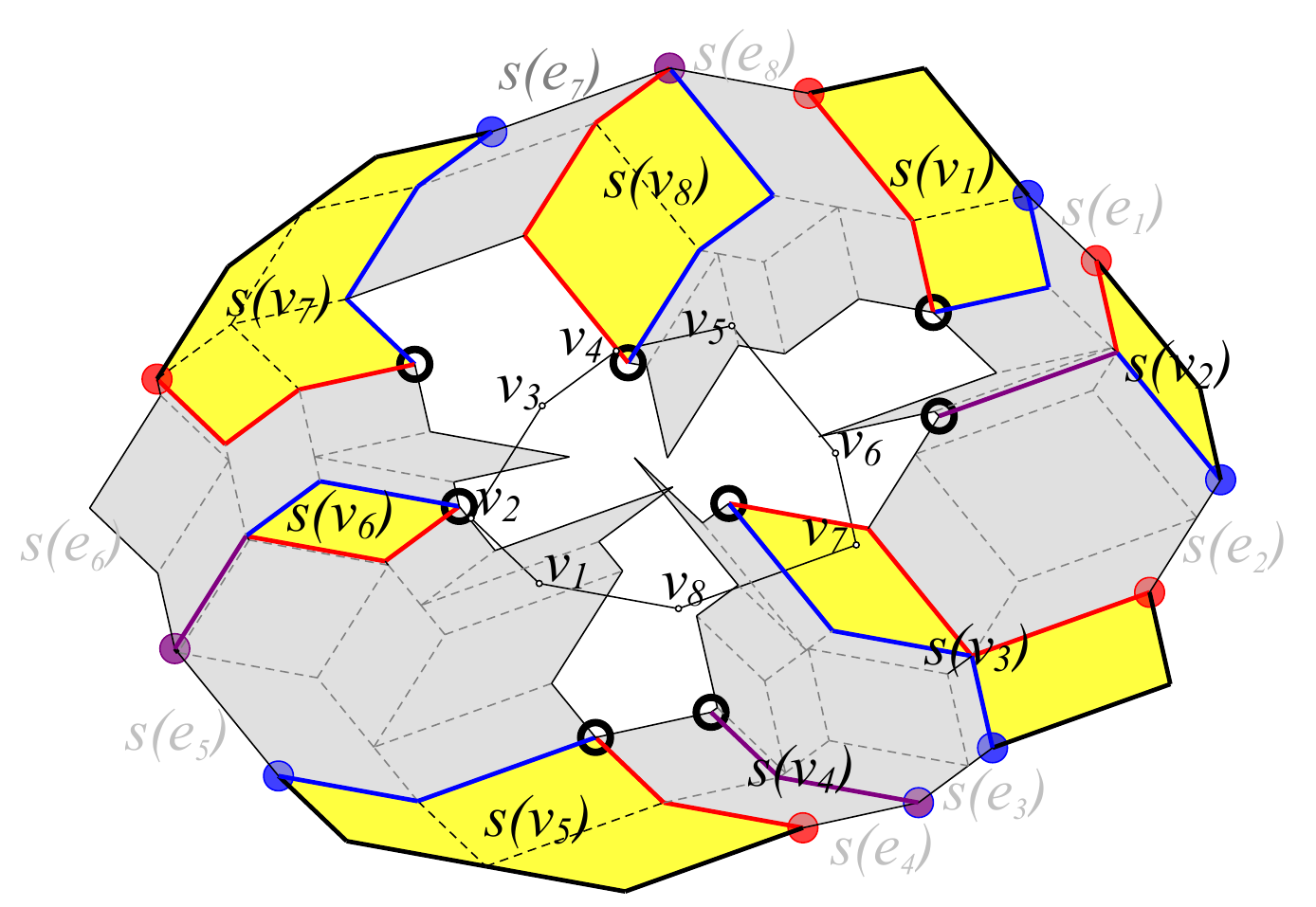}\caption{Sectors of $f(\T)$}
\end{subfigure}
\begin{subfigure}[b]{.34\textwidth}
\centering \includegraphics[width=0.91\textwidth]{NestP.pdf}\caption{$\Nest(P)$}
\end{subfigure}
\begin{subfigure}[b]{.32\textwidth}
\centering \includegraphics[width=0.95\textwidth]{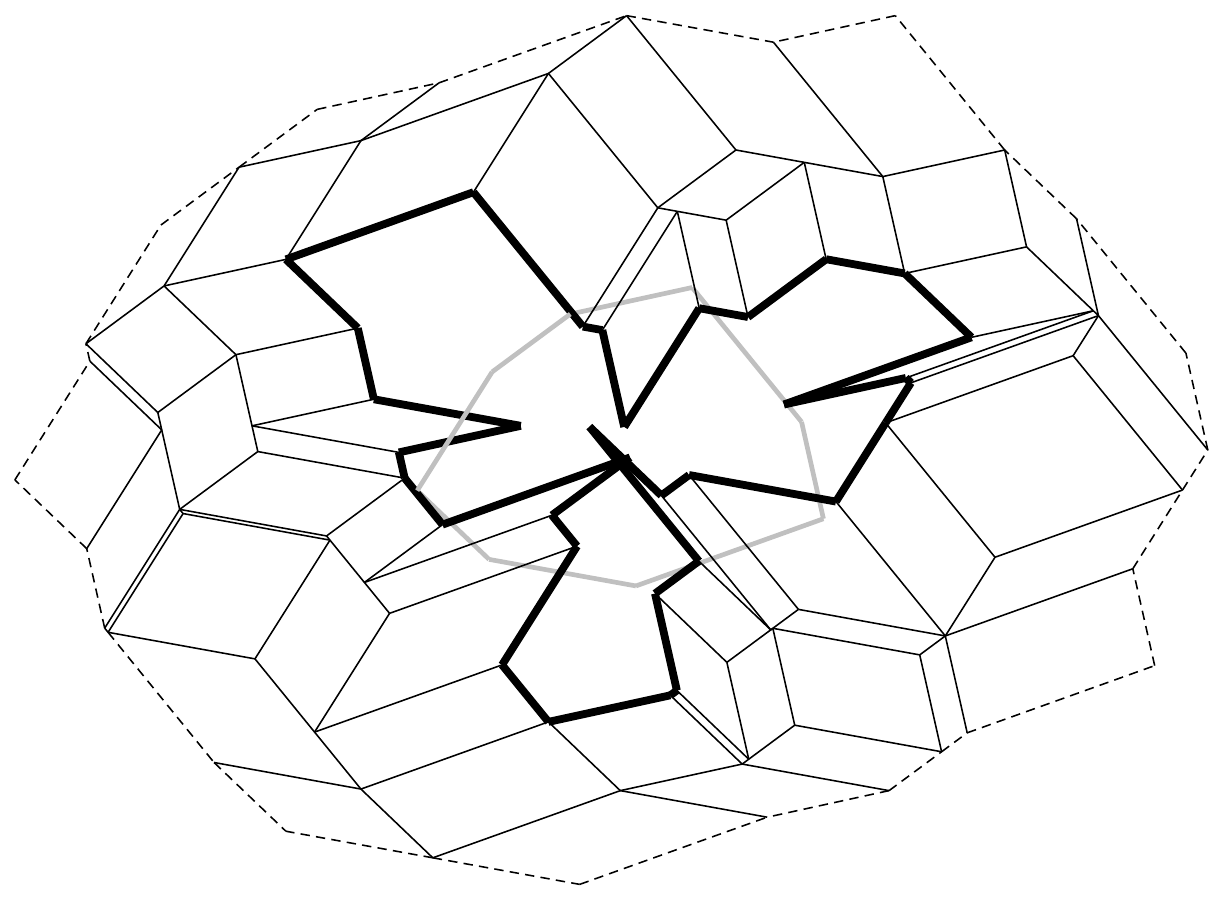}\caption{Inner and outer boundaries of $f(\T)$}
\end{subfigure}
\caption{Picture~(a) illustrates (\ref{def:sector}).
In this picture, $\sector(w)$ is abbreviated as $s(w)$ and
  its region is colored yellow or gray, distinguished by whether $w$ is a vertex or an edge.
Picture~(b) draws $\Nest(P)$ defined below.
The bolded and dashed (closed) curves in picture~(c) indicate the inner and outer boundaries of $f(\T^P)$ defined below.}\label{fig:sectors-nestp-sigma}
\end{figure}

It is worthwhile to mention that among the $2n$ sectors, those in $\{\sector(V) \mid V \hbox{ is a vertex}\}$ are more important (in this manuscript).
  We will study them more than the other sectors (especially in sections~\ref{sect:techover} and \ref{sect:fT-major}).

\subsection{An introduction of $\Nest(P)$ and some additional terminologies}\label{subsect:nestp}

It is not difficult to see that \emph{each block is a connected area bounded by a polygonal boundary}.
  An explicit definition of the boundary will be given in subsection~\ref{subsect:pre-borders-sigmap}.
  Gladly, the sectors admit the same property; at least, (i) \emph{when $w$ is a vertex, $\sector(w)$ is a connected area bounded by a polygonal boundary.}
  (When $w$ is an edge, it is also true; yet this result will neither be proved, nor be applied.)
  However, our proof of fact~(i) is complicated and is deferred for a while (sketched in section~\ref{sect:techover} and fully given in section~\ref{sect:fT-major}).
  In the proof, we will explicitly define two simple polygonal curves $\LVS$ and $\RVS$ for each vertex $V$ (colored red and blue in Figure~\ref{fig:sectors-nestp-sigma}~(a)) and prove that they are on the boundary of $\sector(V)$.
Based on those boundaries of the blocks and sectors, we can define $\Nest(P)$.

\begin{description}
\item [$\Nest(P)$.] We define $\Nest(P)$ as the union of $\LVS,\RVS \mid V\in \{v_1,\ldots,v_n\}$ and the boundaries of the blocks.
    (Alternatively, we could define $\Nest(P)$ as the union of the boundaries of the blocks and the boundaries of the sectors.
    Yet in this way we have to define the boundary of $\sector(w)$ for every edge $w$, which would be more complicated than the case where $w$ is a vertex.
    So we do not use this definition. The equivalence of the two definitions is useless and unimportant; so proof omitted.)
    On the terminology, since $\Nest(P)$ captures all the subregions of $f(\T)$ including blocks and sectors, it can be called a \textbf{skeleton} of $f(\T)$.
    Moreover, it is an \textbf{arrangement} \cite{BergCG} of several line segments, each of which is parallel to an edge of $P$;
      see Figure~\ref{fig:sectors-nestp-sigma}~(b).
\end{description}

\begin{remark}
Though $\Nest(P)$ plays an important role in this manuscript, we are not in a hurry to give its full description (which contains
  the definition of $\LVS,\RVS$ for example),
   because for describing our main results in the upcoming subsection, we only need those objects that are already well-defined, e.g., the blocks and sectors, and $\T$, but not $\Nest(P)$.
        Besides, it is unwise to give the full description here for the conciseness of this introduction.

\medskip Yet, why do we bother to formally define $\Nest(P)$ below?
First, we have to formally define the boundaries of the blocks and $\sector(V)$ for any vertex $V$ in order to prove our main results,
   and defining $\Nest(P)$ does not require any extra work.
Second, having a well-defined overall structure $\Nest(P)$ makes it much easier to abstract our work.
   Briefly, our main result (Theorem~\ref{thm:nestp} below) states that $\Nest(P)$ has a surprising good interaction with $\partial P$,
     whereas our secondary result (Theorem~\ref{thm:nestp-location} below) states that some location queries on $\Nest(P)$ can be answered efficiently.
\end{remark}

\begin{description}
\item [On the directions of the line segments in $\Nest(P)$.] Recall that the boundary-portions of $P$ are all directional.
    We also regard the boundaries of the blocks and the boundaries of the sectors ($\LVS$ and $\RVS$)
      as directional. So all the line segments in $\Nest(P)$ are directional.
    The following principle is applied in defining the directions: If a part of boundary (of a block or a sector) $\alpha$ is a copy of
        a boundary-portion $\beta$ (perhaps rearranged after the copy), the direction of $\alpha$ will conform with the direction of $\beta$.
        See an example in the first picture of Figure~\ref{fig:examples-regular}. The details can be found at the place
          where we define the boundaries of the blocks as well as $\LVS$ and $\RVS$.

\item [On the size of $\Nest(P)$.]
    The size of $\Nest(P)$ is the number of line segments in $\Nest(P)$,
      which is obviously $\Omega(n^2)$ since there are $\Omega(n^2)$ blocks.
    It can also be bounded by $O(n^2)$ easily; see a proof in the appendix.
      (But the fact that size of $\Nest(P)$ is $O(n^2)$ is unimportant for understanding our two results.)
\end{description}

\begin{figure}[h]
  \centering \includegraphics[width=.45\textwidth]{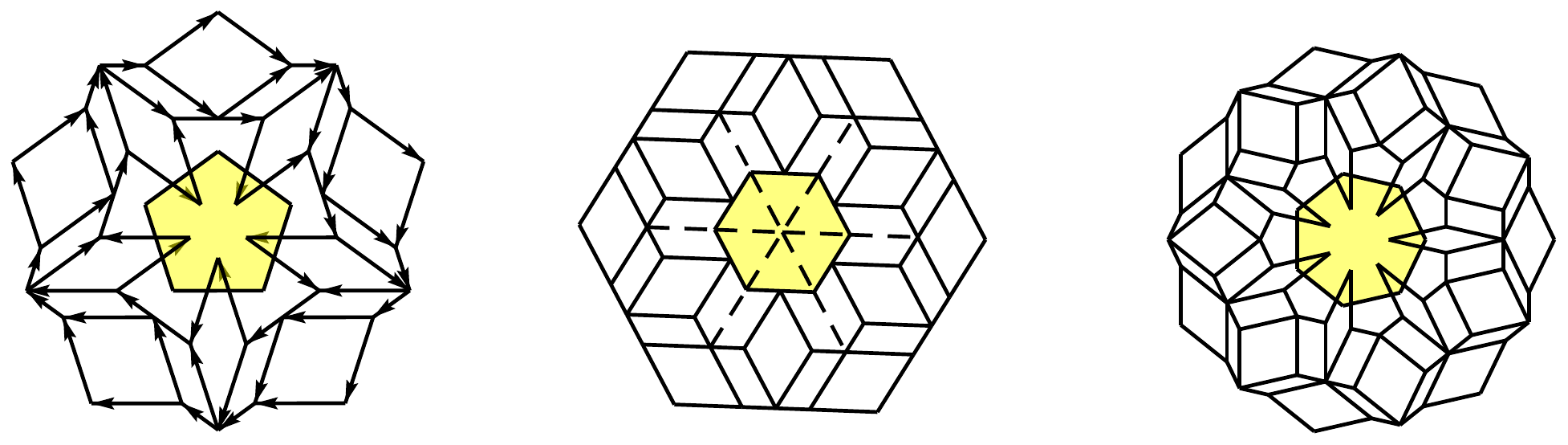}\\
  \caption{Examples of $\Nest(P)$ for regular $n$-side polygon for $n=5,6,7$.}\label{fig:examples-regular}
\end{figure}

Below we give some more notations and terminologies that are important for describing our results.
\begin{description}
\item [Subset $\T^*$.] Let $\T^*$ denote the subset of $\T$ that is mapped to $\partial P$ under $f$.
\item [The inner boundary $\sigma P$ of $f(\T)$.]
As the reader may have noticed, by the definition of chasing, $f(\T)$ should be annular --
    a connected region bounded by two disjoint closed curves, one of which contained in the other; see Figure~\ref{fig:sectors-nestp-sigma}~(a).
 Thus $f(\T)$ has an \emph{outer boundary} and an \emph{inner boundary}; see Figure~\ref{fig:sectors-nestp-sigma}~(c).
  We denote the inner one by $\sigma P$. Its direction conforms with the clockwise order.
  A formal definition is given in subsection~\ref{subsect:pre-borders-sigmap}.
\item [Interleaving.] Given two oriented closed curves, we say they \emph{interleave} if starting from any intersection between them,
  no matter we travel around which of them of a cycle, we meet all their intersections in identical order.
\end{description}

\subsection{Our results}\label{sect:main-nestp}

\newcommand{\BD}{\textsc{Block-disjointness} }
\newcommand{\IP}{\textsc{Interleavity-of-$f$} }
\newcommand{\MF}{\textsc{Monotonicity-of-$f$} }
\newcommand{\RF}{\textsc{Reversiblity-of-$f$} }
\newcommand{\SM}{\textsc{Sector-monotonicity} }
\newcommand{\SC}{\textsc{Sector-continuity} }

\begin{theorem}[Main result: Six structural properties of $\Nest(P)$]\label{thm:nestp} See Figure~\ref{fig:sectors-nestp-sigma}.
\begin{description}
\item[Block-disjointness.] The intersection of any pair of blocks lies in the interior of $P$.

    Be aware that it does \textbf{not} state that all blocks are pairwise-disjoint.
\item[Interleavity-of-$f$.] The inner boundary of $f(\T)$ (i.e.\ the curve $\sigma P$) interleaves $\partial P$.
\item[Reversibility-of-$f$.] Function $f$ is a bijection from  $\T^*$ to its image set $f(\T^*)=f(\T)\cap \partial P$.

    Henceforth, we denote the reverse function of $f$ on $f(\T)\cap \partial P$ by $f^{-1}$.
    Moreover, denote the 1st, 2nd, and 3rd dimension of $f^{-1}(X)$ by $f^{-1}_1(X),f^{-1}_2(X)$ and $f^{-1}_3(X)$ respectively.
\item[Monotonicity-of-$f$.] Function $f^{-1}_2$, a mapping from $f(\T)\cap \partial P$ to $\partial P$ defined above, is ``circularly monotone'':
    If a point $X$ travels around $f(\T)\cap \partial P$ in clockwise, $f^{-1}_2(X)$ would shift in clockwise around $\partial P$ non-strictly,
    and moreover, when $X$ has traveled exactly a cycle, $f^{-1}_2(X)$ would also have traveled exactly a cycle.
\item[Sector-monotonicity.] The $2n$ districts $\sector(v_1)\cap \partial P$, $\sector(e_1)\cap \partial P$, \ldots, $\sector(v_n)\cap \partial P$, $\sector(e_n)\cap \partial P$ are pairwise-disjoint and arranged in clockwise order around $\partial P$.
\item[Sector-continuity.] For any vertex $V$, the intersection between $\sector(V)$ and $\partial P$ is continuous.
\end{description}
\end{theorem}

Our proof of these properties is inevitably lengthy, due to the intricacy of $\Nest(P)$.
However, it is highly modularized and easy to follow.
We sketch the proof in section~\ref{sect:techover} and then give the full proof in sections~\ref{sect:lemmas-proof}, \ref{sect:fT-major}, and \ref{sect:othertwo}.

\begin{remark}
Among the elements in $\T$, those in $\T^*$ deserve special attention --
  all of the properties above essentially concern $f(\T^*)$, rather than $f(\T)$.
  As such, $\Nest(P)$ has a nice interaction with $\partial P$ (beyond our anticipation).
\end{remark}

\begin{description}
\item [Two more interesting properties.] 1. \emph{If we travel along the (oriented) segments in $\Nest(P)$ (see the first picture of Figure~\ref{fig:examples-regular}) of one cycle, starting and terminating at the same node, the total distance would be 3 times of the perimeter of $P$ no matter which path we choose.} 2. \emph{For any non-degenerate linear transformation $\Gamma$, it holds that $\Gamma(\Nest(P))=\Nest(\Gamma(P))$.}
    These properties are not applied in the paper; their easy proofs are omitted.
\end{description}

\begin{theorem}[Secondary result: efficient locations on $\Nest(P)$]\label{thm:nestp-location}
  We can answer each of the following location queries, in (amortized) $O(\log^2n)$ time,
    without a full construction of $\Nest(P)$ which takes $\Omega(n^2)$ time.
\begin{description}
\item[Sector-intersect-Units Query.] Given any vertex $V$, find the interval of units that intersect $\sector(V)$.\\
    \textbf{Note}: Due to \SC, those units intersect $\sector(V)$ are indeed an interval of units.
\item[Vertex-in-Sector Query.] Given any vertex $V$, find $w$ so that $\sector(w)$ contains $V$.\\
    \textbf{Note}: There is at most one such $w$ due to \SM. If there is no such $w$, output NIL.
\item[Vertex-in-Block Query.] Given any vertex $V$, find  $(u,u')$ so that $\block(u,u')$ contains $V$.\\
    \textbf{Note}: There is at most one such $(u,u')$ due to \BD. If no such pair exists, output NIL.
\end{description}
\end{theorem}

The proof of Theorem~\ref{thm:nestp-location} is given in section~\ref{sect:location-answers}, and sketched in section~\ref{sect:techover}.

\begin{remark}\label{remark:proof-need-bq}
Among others, \BD, \IP, and Vertex-in-Block query are the most difficult to prove. Our proofs of these three results
   utilize a group of auxiliary objects called \emph{bounding-quadrants}
     (introduced in subsection~\ref{subsect:bounding-quadrants}), each of which is a relax of a block.
   These auxiliary objects are so important that
        it is no exaggerate to credit the most nontrivial step within the entire proof to the introduction of these objects.
\end{remark}

\paragraph{Related work.} $\Nest(P)$ is novel and has few related work.
It has a similar appearance as \emph{Zonotopes} \cite{LecturesPolytopes} considering that many groups of parallel lines exist.
  It resembles \emph{Voronoi Diagrams} \cite{BergCG} since they are arrangements of line segments.
Whereas a Voronoi Diagram cares the distance, $\Nest(P)$ cares the distance-product.

\subparagraph*{Organization of this paper.}
    Section~\ref{sect:pre} formally defines some geometric objects, including the boundaries of the blocks, the inner boundary of $f(\T)$ (i.e., $\sigma P$), and the bounding-quadrants mentioned in Remark~\ref{remark:proof-need-bq}.
    It also states seven interesting lemmas that are crucial to our final proof. The proofs of these lemmas are deferred to section~\ref{sect:lemmas-proof}.
     Section~\ref{sect:techover} outlines our techniques for proving Theorems~\ref{thm:nestp} and \ref{thm:nestp-location}, and
        its subsequent sections provide the details.

\smallskip See Figure~\ref{fig:flow} in appendix for the key geometric objects studied in this manuscript and their relations.
\clearpage

\section{Preliminaries: some rigorous definitions and some important lemmas}\label{sect:pre}

The following notions will be frequently applied in this section and henceforth.
\begin{description}
\item[Region $u\oplus u'$.] When unit $u$ is chasing $u'$, we denote
    $u\oplus u'=\{(X+X')/2\mid X\in u, X'\in u'\}$.
    In particular, for $(u,u')=(e_i,e_j)$ where $e_i\prec e_j$, region $u\oplus u'=e_i\oplus e_j$ is a parallelogram as shown in Figure~\ref{fig:terms}~(a).
\item[$k$-scaling.] Given a figure $F$, a point $O$, and a ratio $k>0$, we define the \emph{$k$-scaling} of $F$ with respect to $O$ as the figure $F'$, which contains point $X$ if and only if $F$ contains $(X-O)/k+O$.  See Figure~\ref{fig:terms}~(b).
\item[Inferior portions.] We call $[v_i\circlearrowright v_{j+1}]$ an \emph{inferior portion} if and only if $e_i\preceq e_j$; see Figure~\ref{fig:terms}~(c) and (d).
\end{description}

\begin{figure}[h]
\centering\includegraphics[width=.8\textwidth]{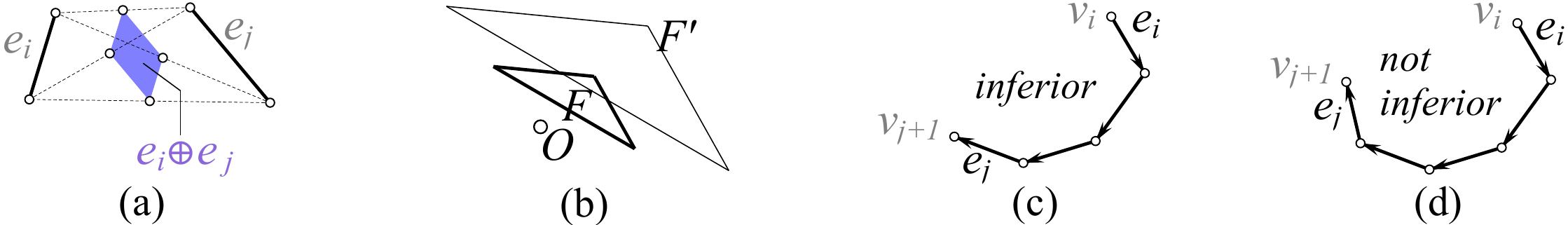}
\caption{Picture (a) shows $u\oplus u'$ for which $u=e_i$ and $u'=e_j$ are both edges; (b) shows 2-scaling of $F$ with respect to $O$; (c) shows a typical inferior portion; whereas (d) shows a boundary-portion that is not inferior.}\label{fig:terms}
\end{figure}

\subsection{The boundaries of the blocks and the inner boundary of $f(\T)$}\label{subsect:pre-borders-sigmap}

Before defining the boundaries of the blocks, we state two formulas of $\block(u,u')$.
\begin{align}
\block(u,u')&={\bigcup}_{X\in u\oplus u'}\hbox{the reflection of $\zeta(u,u')$ with respect to }X.  \label{eqn:block_reflect}\\[5pt]
\block(u,u')&={\bigcup}_{X\in \zeta(u,u')}\hbox{the $2$-scaling of $u\oplus u'$ with respect to }X.  \label{eqn:block_scale}
\end{align}

\noindent \emph{Proof}. \vspace{-22pt}
\[\begin{split}
\block(u,u')&={\bigcup}_{X_3\in u,X_1\in u',X_2\in\zeta(u,u')}f(X_1,X_2,X_3)\\
&={\bigcup}_{X_3\in u,X_1\in u'}{\bigcup}_{X_2\in \zeta(u,u')}\hbox{the reflection of $X_2$ with respect to $(X_3+X_1)/2$}\\
&={\bigcup}_{X_3\in u,X_1\in u'}\hbox{the reflection of $\zeta(u,u')$ with respect to $(X_3+X_1)/2$}\\
&={\bigcup}_{X\in u\oplus u'}\hbox{the reflection of $\zeta(u,u')$ with respect to $X$}.
\end{split}\]
\[\begin{split}
\block(u,u')&={\bigcup}_{X_3\in u,X_1\in u',X_2\in\zeta(u,u')}f(X_1,X_2,X_3)\\
&={\bigcup}_{X_2\in \zeta(u,u')}{\bigcup}_{X_3\in u,X_1\in u'}
    \hbox{the $2$-scaling of $(X_3+X_1)/2$ with respect to $X_2$}\\
&={\bigcup}_{X_2\in \zeta(u,u')}\hbox{the $2$-scaling of $u\oplus u'$ with respect to $X_2$}.
\end{split}\]
\qed

In the following we define the \emph{borders} of a block. The boundary of a block is the union of its borders.

\begin{definition}[Borders and boundaries of the blocks]\label{def:block-borders}
Assume $u$ is chasing $u'$.
\begin{itemize}
\item[Case~1] $(u,u')=(e_i,e_j)$. See Figure~\ref{fig:block-borders}~(a).\\
    By (\ref{eqn:block_scale}), $\block(e_i,e_j)$ is the $2$-scaling of $e_i\oplus e_j$ with respect to $Z_i^j$ --
       a parallelogram whose sides are congruent to $e_i$ or $e_j$.
    Each side of this parallelogram is called a \emph{border} of $\block(e_i,e_j)$.
    Those sides that are congruent to $e_i$ have the same direction as $e_i$.
    Those sides that are congruent to $e_j$ have the same direction as $e_j$.
\item[Case~2] $(u,u')=(v_i,v_j)$. See Figure~\ref{fig:block-borders}~(d).\\
    By (\ref{eqn:block_reflect}), $\block(v_i,v_j)$ is the reflection of $\zeta(v_i,v_j)$ with respect to $(v_i+v_j)/2$.
    This curve is referred to as the unique \emph{border} of $\block(v_i,v_j)$, directed from the reflection of $Z_{i-1}^{j-1}$ to the reflection of $Z_i^j$.
\item[Case~3] $(u,u')=(v_i,e_j)$. See Figure~\ref{fig:block-borders}~(b).\\
    In this case, by (\ref{eqn:block_reflect}) and (\ref{eqn:block_scale}), $\block(v_i,e_j)$ is the region bounded by four curves:

    \quad the $2$-scalings of segment $v_i\oplus e_j$ with respect to $Z_{i-1}^j$ and $Z_i^j$, respectively, and

    \quad the reflections of $\zeta(v_i,e_j)$ with respect to $(v_i+v_j)/2$ and $(v_i+v_{j+1})/2$, respectively.

    Each of these four curves is called a \emph{border} of $\block(v_i,e_j)$.
    The first two borders have the same direction as $e_j$; whereas the directions of the last two borders are from the reflection of $Z_{i-1}^j$ to the reflection of $Z_i^j$.
\item[Case~4] $(u,u')=(e_i,v_j)$. See Figure~\ref{fig:block-borders}~(c). This case is symmetric to Case~3; so we can define four borders.
\end{itemize}
\end{definition}

\begin{figure}[h]
  \centering \includegraphics[width=.95\textwidth]{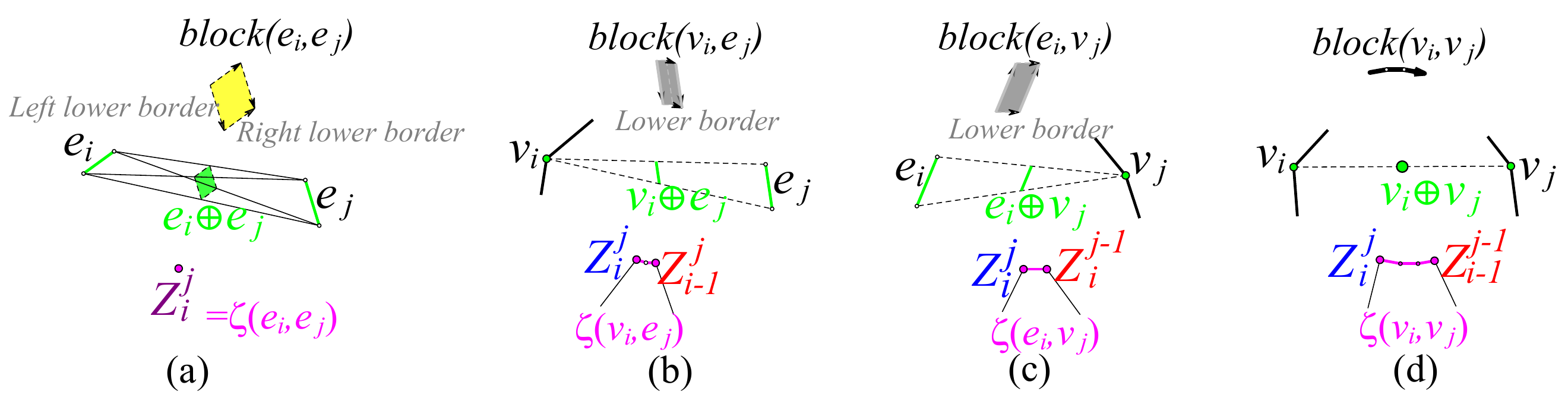}\\
  \caption{Illustration of the borders of the blocks. The directions of borders are indicated by the arrows.
     The ``left lower border'', ``right lower border'', and ``lower border'' marked in this figure
        are introduced in Definition~\ref{def:lower-borders}.}\label{fig:block-borders}
\end{figure}

\begin{definition}\label{def:lower-borders}
The following terms are applied for defining $\sigma P$ in the next. See Figure~\ref{fig:block-borders}.
\begin{itemize}
\item The \emph{left lower border} of $\block(e_i,e_j)$ refers to the $2$-scaling of $v_i\oplus e_j$ with respect to  $Z_i^j$.
\item The \emph{right lower border} of $\block(e_i,e_j)$ refers to the $2$-scaling of $e_i\oplus v_{j+1}$ with respect to  $Z_i^j$.
\item The \emph{lower border} of $\block(v_i,e_j)$ refers to the reflection of $\zeta(v_i,e_j)$ with respect to $(v_i+v_{j+1})/2$.
\item The \emph{lower border} of $\block(e_i,v_j)$ refers to the reflection of $\zeta(e_i,v_j)$ with respect to $(v_i+v_j)/2$.\medskip
\end{itemize}
\end{definition}

\subparagraph{Outline for defining $\sigma P$.}
We first introduce a group of blocks called \emph{frontier blocks} and define the \emph{bottom borders of the frontier blocks}.
Briefly, the frontier blocks are those that lie at the inner side of $f(\T)$.
   We will see the frontier blocks have an intrinsic order according to their definitions.
We then define the concatenation of the bottom borders of the frontier blocks (in the intrinsic order), which is a closed polygonal curve, to be $\sigma P$.
See Figure~\ref{fig:sigmaP-functionG}~(a).

\medskip To define the frontier blocks, we define a circular list of unit pairs, called \emph{frontier-pair-list}.
It is defined as $\mathsf{FPL}$ generated by Algorithm~\ref{alg:FPL-def}. See Figure~\ref{fig:sigmaP-functionG}~(b) for an illustration.
According to the frontier-pair-list, we can then find out all the frontier blocks ---
    $\block(u,u')$ is \emph{frontier} if and only if $(u,u')$ belongs to this circular list.

\begin{algorithm}[h]
\caption{An algorithm for defining $\mathsf{FPL}$}\label{alg:FPL-def}
Let $\mathsf{FPL}$ be empty, let $i=1$, and let $e_j$ be edge that $e_1\prec e_j$ but $e_{j+1}\prec e_1$;\\
\Repeat{$i=1$ and $e_j$ is the edge that $e_1\prec e_j$ but $e_{j+1}\prec e_1$}{
    Add unit pair $(e_i,e_j)$ to the tail of $\mathsf{FPL}$;\\
    \textbf{if} $e_i\prec e_{j+1}$ \textbf{then} Add unit pair $(e_i,v_{j+1})$ to the tail of $\mathsf{FPL}$ and increase $j$ by $1$;\\
    \Else{
        \textbf{if} $i+1\neq j$ \textbf{then} Add unit pair $(v_{i+1},e_j)$ to the tail of $\mathsf{FPL}$ and increase $i$ by $1$;\\
        \textbf{else} Add unit pair $(v_{i+1},v_{j+1})$ to the tail of $\mathsf{FPL}$ and increase $i,j$ both by $1$;
    }
}
\end{algorithm}

\begin{definition}\label{def:bottom-borders}
Suppose $(u,u')\in \mathsf{FPL}$. The \emph{bottom border} of $\block(u,u')$ is defined as follows.
\begin{itemize}
\item If $u,u'$ are vertices, the unique border of $\block(u,u')$ is also the \emph{bottom border} of $\block(u,u')$.
\item If $u,u'$ are an edge and a vertex, the lower border of $\block(u,u')$ is also the \emph{bottom border} of $\block(u,u')$.
\item If $u,u'$ are edges, e.g.\ $u=e_i,u'=e_j$, we define the \emph{bottom border} of $\block(u,u')$ to be
\[\begin{cases}
  \hbox{an empty set,}& \hbox{if }(e_{i-1},e_j)\in \mathsf{FPL}, (e_i,e_{j+1})\in \mathsf{FPL}.\\
  \hbox{its right lower border,}& \hbox{if }(e_{i-1},e_j)\in \mathsf{FPL}, (e_i,e_{j+1})\notin \mathsf{FPL};\\
  \hbox{its left lower border,}& \hbox{if }(e_{i-1},e_j)\notin \mathsf{FPL}, (e_i,e_{j+1})\in \mathsf{FPL};\\
  \hbox{concatenation of its two lower borders,}& \hbox{if }(e_{i-1},e_j)\notin \mathsf{FPL}, (e_i,e_{j+1})\notin \mathsf{FPL};
\end{cases}\]
\end{itemize}
\end{definition}

\begin{figure}[h]
\begin{subfigure}[b]{0.33\textwidth}
\centering \includegraphics[width=.8\textwidth]{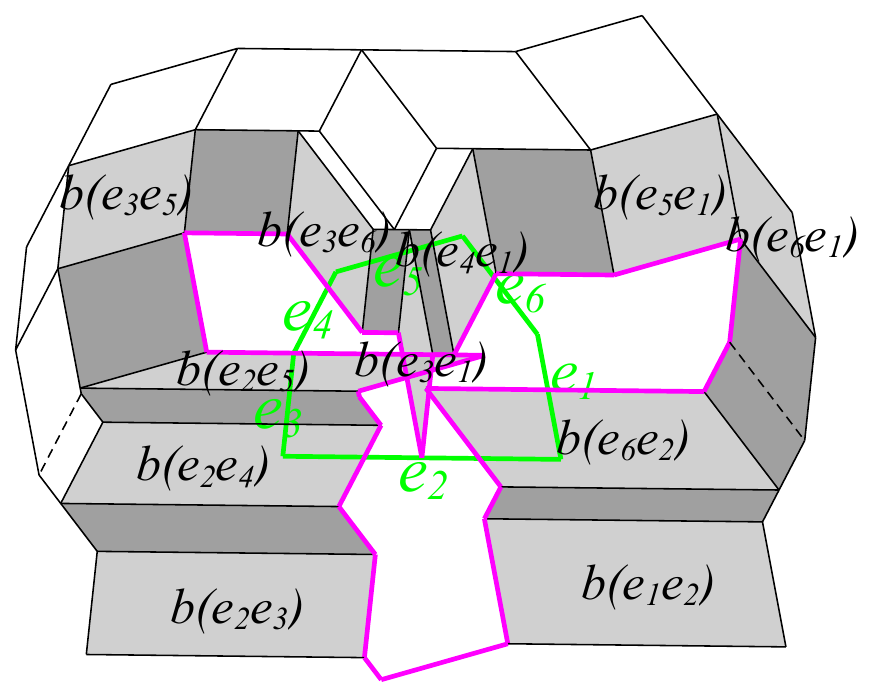}\caption{Frontier blocks}
\end{subfigure}
\begin{subfigure}[b]{0.33\textwidth}
\centering \includegraphics[width=.62\textwidth]{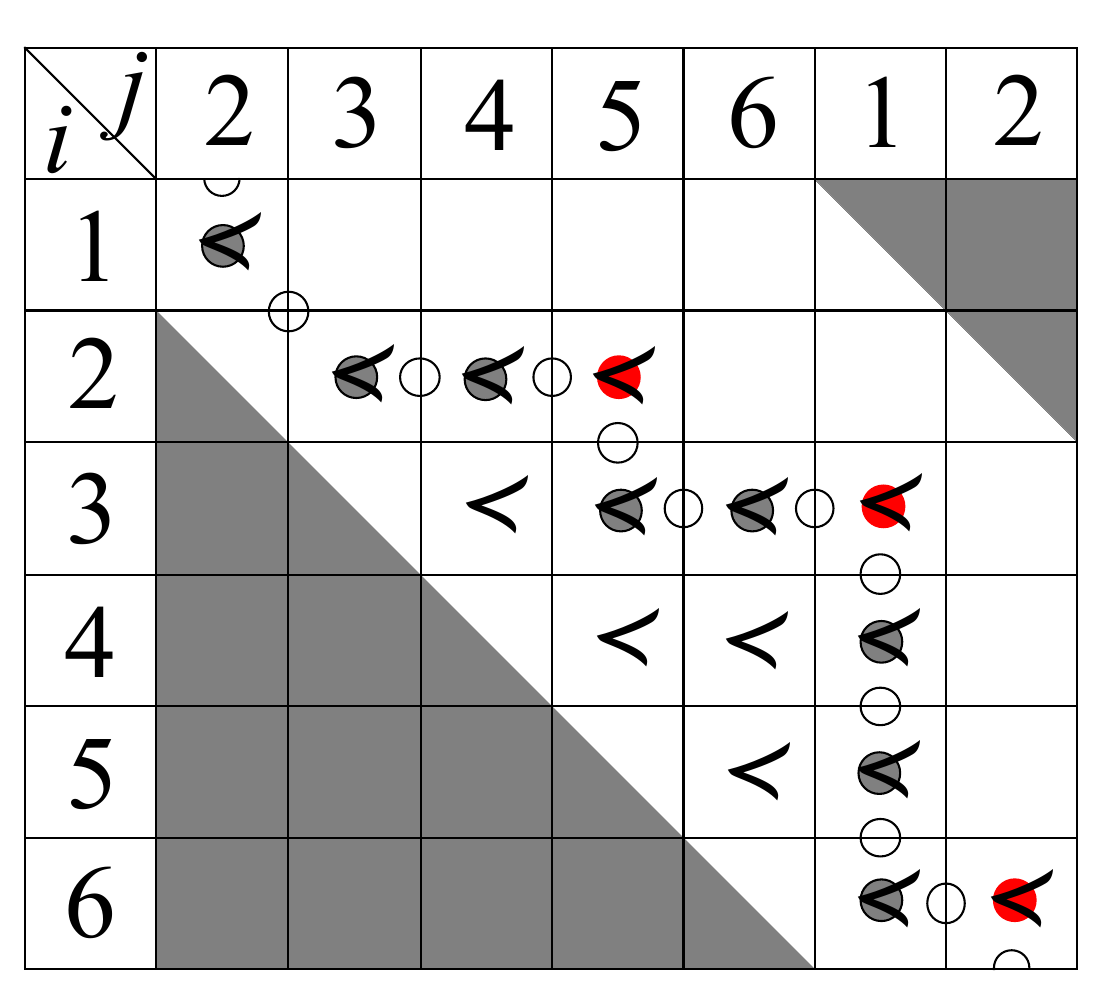}\caption{Frontier-pair-list}
\end{subfigure}
\begin{subfigure}[b]{0.33\textwidth}
\centering \includegraphics[width=.75\textwidth]{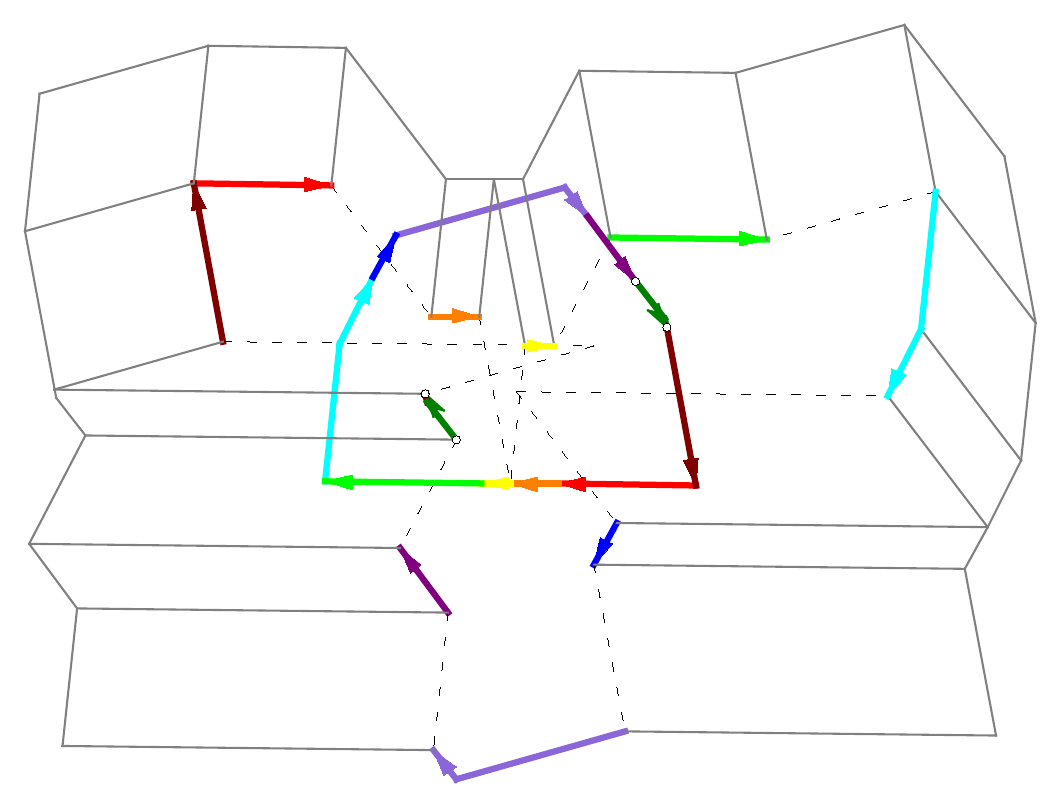}\caption{Function $g$}
\end{subfigure}
\caption{Picture (a) shows a convex polygon with six edges and draws all the blocks for this polygon.
        The frontier blocks are colored dark and light gray. Their bottom borders are pink.
    Picture (b) illustrates the frontier-pair-list, which indicates the frontier blocks.
        The table exhibits the chasing relation between the edges of $P$, where the solid circles indicate the edge pairs in the frontier-pair-list, and the hollow ones indicate other unit pairs in this list.
    Picture (c) illustrates a function $g$ from $\sigma P$ to $\partial P$ that will be introduced in the next subsection.} \label{fig:sigmaP-functionG}
\end{figure}

The following fact is easy to check: \emph{The starting point of the bottom border of $\block(u_{i+1},u'_{i+1})$ is the terminal point of the bottom border of $\block(u_i,u'_i)$, when $(u_i,{u'}_i),(u_{i+1},{u'}_{i+1})$ are adjacent pairs in the frontier-pair-list.} Trivial proof omitted.
We define the concatenation of the bottom borders (based on the order of $\mathsf{FPL}$) to be $\sigma P$.

\begin{note}
When $(e_{i-1},e_j)\notin \mathsf{FPL}$ and $(e_i,e_{j+1})\notin \mathsf{FPL}$, the bottom border of $\block(e_i,e_j)$ does not contain
     the common endpoint of its two lower borders ---  because these lower borders are open segments.
 So, in Figure~\ref{fig:sigmaP-functionG}, the lowermost corner of $\block(e_3,e_1)$, the leftmost corner of $\block(e_6,e_2)$,
   and the rightmost corner of $\block(e_2,e_5)$ are not contained in the bottom borders.
 Therefore, \textbf{none of these corner points are contained in $\sigma P$}.
This is important for understanding \textsc{Interleavity-of-$f$}.
 For example, if the lowermost corner of $\block(e_3,e_1)$ in Figure~\ref{fig:sigmaP-functionG}~(a) was counted as
    an intersecting point of $\partial P\cap \sigma P$, the \IP is wrong.
\end{note}

\subsection{Miscellaneous}
\newcommand{\LRF}{\textsc{Local-reversibility of $f$ }}
\newcommand{\LMF}{\textsc{Local-monotonicity of $f$ }}
In this subsection we state three lemmas (whose proofs are deferred to section~\ref{sect:lemmas-proof}) and introduce some notations.

Denote
 \begin{equation}\label{def:T(u,u')}\T(u,u')=\{(X_1,X_2,X_3)\in \T \mid X_3\in u,X_1\in u'\} = [u',\zeta(u,u'),u].\end{equation}

\begin{lemma}[\textsc{Local-reversibility} of $f$]\label{lemma:LRF}
Function $f$ is a bijection from $\T(u,u')$ to $\block(u,u')$.
\end{lemma}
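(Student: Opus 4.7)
The plan has two halves: surjectivity is immediate from the defining equation~(\ref{def:block}), $\block(u,u')=f(\T(u,u'))$, so the substance of the lemma is the injectivity of $f$ on $\T(u,u')$.

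For injectivity I would start with an algebraic reduction. Suppose $(X_1,X_2,X_3)$ and $(X'_1,X'_2,X'_3)$ are two triples in $\T(u,u')$ with the same image $Y$ under $f$, and write $M=\M(X_3,X_1)$ and $M'=\M(X'_3,X'_1)$ (both in $u\oplus u'$). Then $Y=2M-X_2=2M'-X'_2$ rearranges to $2(M-M')=X_2-X'_2$. Once I establish $M=M'$, Fact~\ref{fact:oplus} recovers $(X_3,X_1)=(X'_3,X'_1)$, whence also $X_2=X'_2$; so the task reduces to proving $M=M'$.

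I would then split by the type of $(u,u')$. Two cases are essentially immediate: if $(u,u')=(e_i,e_j)$ then $\zeta(e_i,e_j)=\{Z_i^j\}$ is a singleton (since $back(e_i)=forw(e_i)=e_i$, similarly for $e_j$), so $X_2=X'_2=Z_i^j$ and $M=M'$; if $(u,u')=(v_i,v_j)$ then $u\oplus u'=\{\M(v_i,v_j)\}$ is a singleton, giving $M=M'$ directly. The substance lies in the mixed cases; WLOG take $(u,u')=(v_i,e_j)$, the case $(e_i,v_j)$ being symmetric. Here $u\oplus u'$ is a non-degenerate open segment parallel to $e_j$, so $M-M'$ is parallel to $e_j$, and hence so is $X_2-X'_2$. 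If $M\neq M'$, then $X_2$ and $X'_2$ would be two distinct points of $\zeta(v_i,e_j)$ whose difference is parallel to $e_j$; so the whole injectivity in the mixed case is equivalent to the following \textbf{Monotonicity Claim}: no two distinct points of $\zeta(v_i,e_j)$ have difference parallel to $e_j$, or equivalently, $X_2\mapsto d_{\el_j}(X_2)$ is injective on $\zeta(v_i,e_j)$.

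To prove the Monotonicity Claim I would first use the pairwise-nonparallel assumption to argue that $d_{\el_j}(\cdot)$ is strictly unimodal along $\partial P\setminus e_j$: strictly increasing on the clockwise arc $[v_{j+1}\circlearrowright\D_j]$, strictly decreasing on $[\D_j\circlearrowright v_j]$, with unique maximiser $\D_j$. Hence the claim reduces to showing that $\zeta(v_i,e_j)=[Z_{i-1}^j\circlearrowright Z_i^j]$ lies inside one of these two monotone sub-arcs. I would chain the arc-location bounds of Fact~\ref{fact:dist-unique-location}---which give $Z_{i-1}^j\in[\D_{i-1}\circlearrowright\D_j]$ and $Z_i^j\in[\D_i\circlearrowright\D_j]$, so both $Z$-points lie clockwise before $\D_j$ on their respective arcs---with the bi-monotonicity of Fact~\ref{fact:Z_bi-monotonicity} applied to $s=i-1,\,t=j$, which localises both $Z$-points in $\rho=[v_{j+1}\circlearrowright v_{i-1}]$ and orders them as $Z_{i-1}^j\leq_\rho Z_i^j$. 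Assembling these inclusions should confirm that the clockwise arc from $Z_{i-1}^j$ to $Z_i^j$ never crosses the extremal point $\D_j$, and so lies on a single monotone sub-arc.

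The main obstacle I expect is exactly this last arc-inclusion step: juggling the arc locations produced by Facts~\ref{fact:dist-unique-location} and~\ref{fact:Z_bi-monotonicity} against the position of $\D_j$, and checking that $\zeta(v_i,e_j)$ stays on one monotone side of $\D_j$. The algebraic reduction and the two degenerate cases $(e_i,e_j)$ and $(v_i,v_j)$ are routine; by contrast, the arc-inclusion is where the chasing condition and the pairwise-nonparallelism of edges contribute essential structural content.
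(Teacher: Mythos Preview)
Your proposal is correct and follows essentially the same approach as the paper: both reduce injectivity in the mixed case $(v_i,e_j)$ to the fact that $d_{\el_j}$ is injective on $\zeta(v_i,e_j)$, which holds because $\zeta(v_i,e_j)\subseteq[v_{j+1}\circlearrowright\D_j]$. The only differences are organizational---the paper splits first on whether $X_2=X'_2$ rather than on the type of $(u,u')$, and it asserts the arc-inclusion directly from Fact~\ref{fact:dist-unique-location} without invoking bi-monotonicity.
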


\begin{definition}[$f^{-1}_{u,u'}()$ and $f^{-1,2}_{u,u'}()$]\label{def:f-local-reverse}
By Lemma~\ref{lemma:LRF}, there is a reverse function of $f$ on $\block(u,u')$, denoted by $f^{-1}_{u,u'}()$.
Moreover, notice that $f^{-1}_{u,u'}(X)$ is a tuple of three points; we denote the second point by $f^{-1,2}_{u,u'}(X)$.
\end{definition}

\subparagraph{Extend the domain of $f^{-1,2}_{u,u'}$.}
By Definition~\ref{def:f-local-reverse}, the domain of $f^{-1,2}_{u,u'}$ is $\block(u,u')$,
  which does not contain the lower border(s) of $\block(u,u')$ in general (unless both $u,u'$ are vertices).
    For convenience, we extend the domain of $f^{-1,2}_{u,u'}$ to include the lower order(s).
        Take any point $X$ from the lower border(s) of $\block(u,u')$.
        If $u=e_i,u'=e_j$, we define $f^{-1,2}_{u,u'}(X)=Z_i^j$.
        If $u=v_i,u'=e_j$, we define $f^{-1,2}_{u,u'}(X)=X'$, where $X'\in \zeta(v_i,e_j)$ is the reflection of $X$ with respect to $(v_i+v_{j+1})/2$.
        If $u=e_i,u'=v_j$, we define $f^{-1,2}_{u,u'}(X)=X'$, where $X'\in \zeta(e_i,v_j)$ is the reflection of $X$ with respect to $(v_i+v_j)/2$.
        If $u=v_i,u'=v_j$, the value of $f^{-1,2}_{u,u'}(X)$ is already defined.

\begin{definition}[Function $g$: $\sigma P\rightarrow \partial P$]\label{def:g}
For any point $X$ in $\sigma P$, assuming it comes from the bottom border of frontier block $\block(u,u')$, we define $g(X)=f^{-1,2}_{u,u'}(X)$.
  See Figure~\ref{fig:sigmaP-functionG}~(c) for an illustration.
\end{definition}

\begin{lemma}[\textsc{Local-monotonicity} of $f$]\label{lemma:LMF}
When a point $X$ travels along some boundary-portion of $P$ within $\block(u,u')$ (in clockwise),
$f^{-1,2}_{u,u'}(X)$ will move along $\partial P$ in clockwise (in the non-strict manner).
\end{lemma}

\begin{description}
\item[Interleaving.]
Assume $\mathcal{C}$ is an oriented yet not closed curve.
We say $\mathcal{C}$ \emph{interleaves} $\partial P$, if it is disjoint with $\partial P$ or the following holds.
Starting from their first intersection point (the first one that will be encountered when we travel along $\mathcal{C}$ in its positive direction),
  regardless of whether we travel along $\mathcal{C}$ in its positive direction or along $\partial P$ in clockwise,
we encounter the intersection points between $\mathcal{C}$ and $\partial P$ in the same order.
\end{description}

For example, in Figure~\ref{fig:definition-interleave}~(a), $\mathcal{C}$ interleaves $\partial P$.
In Figure~\ref{fig:definition-interleave}~(b), $\mathcal{C}$ does not interleave $\partial P$.

Also, recall the definition of \emph{interleaving between two oriented closed curves} in subsection~\ref{subsect:nestp}.

\begin{figure}[b]
\begin{minipage}[b]{.37\textwidth}
\centering\includegraphics[width=.9\textwidth]{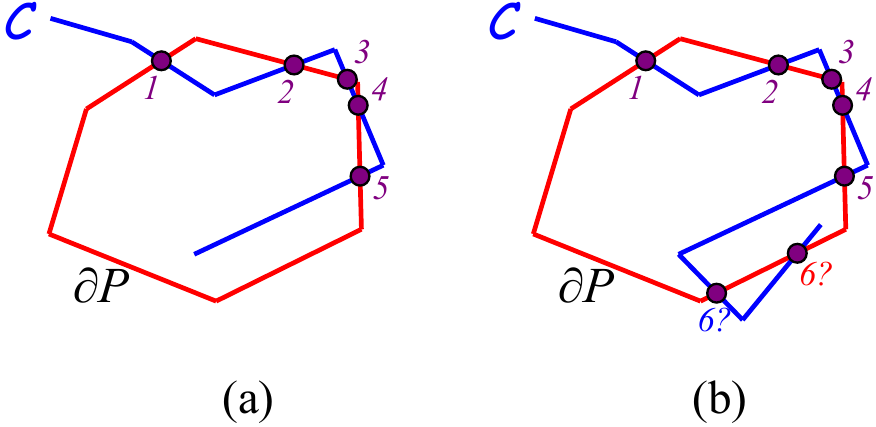}
\caption{interleaving and not interleaving.}\label{fig:definition-interleave}
\end{minipage}
\begin{minipage}[b]{.63\textwidth}
\centering\includegraphics[width=.9\textwidth]{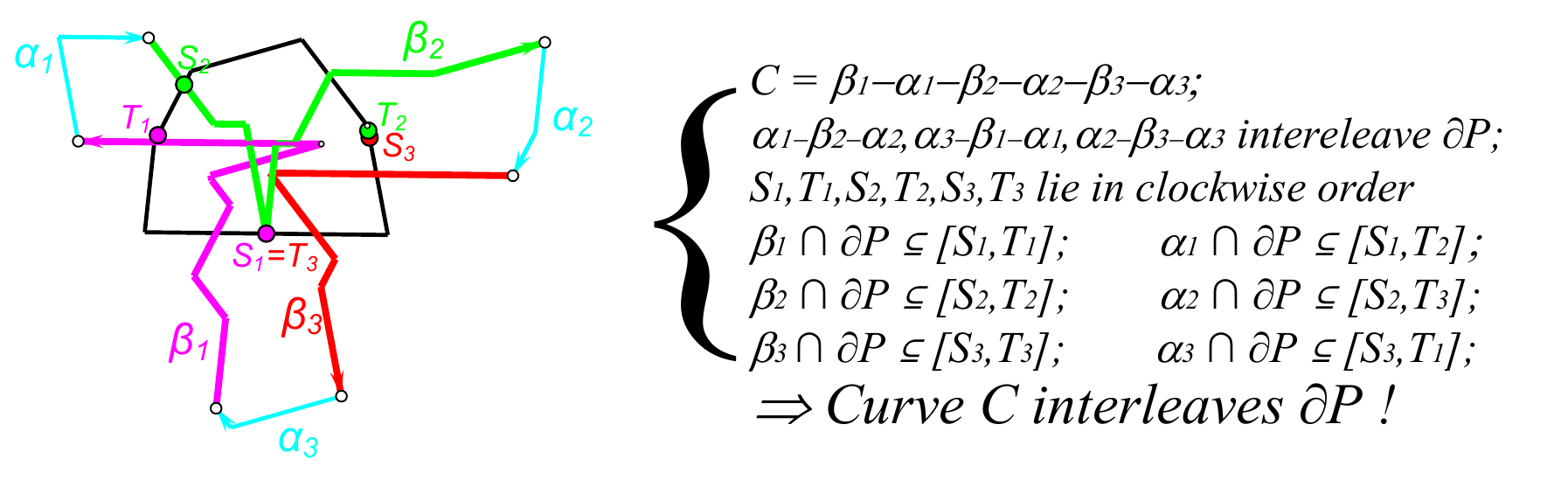}
\caption{Illustration of Lemma~\ref{lemma:interleave}. In this example, $q=3$.}\label{fig:outline_interleave}
\end{minipage}
\end{figure}

\begin{lemma}\label{lemma:interleave}
Assume a given oriented closed curve $\mathcal{C}$ is cut into $2q$ ($q\geq 3$) fragments: $\beta_1, \alpha_1, \ldots, \beta_q, \alpha_q$, such that
\begin{equation}\label{eqn:local-interleave}
\text{for $1\leq i\leq q$, the concatenation of $\alpha_{i-1},\beta_i,\alpha_i$ interleaves $\partial P$ (where $\alpha_0=\alpha_q$)}.
\end{equation}
Further assume that we can find $2q$ points $S_1,T_1,\ldots,S_q,T_q$ lying in clockwise order around $\partial P$ which ``delimitate'' the $2q$ fragments, by which we mean that
\begin{eqnarray}
\text{for $1\leq i\leq q$, the intersections between $\beta_i$ and $\partial P$ are contained in $[S_i \circlearrowright T_i]$, \text{ and}}  \label{eqn:delimitate-beta}\\
\text{for $1\leq i\leq q$, the intersections between $\alpha_i$ and $\partial P$ are contained in $[S_i \circlearrowright T_{i+1}]$.} \label{eqn:delimitate-alpha}
\end{eqnarray}
Then, the given curve $\mathcal{C}$ interleaves $\partial P$. See an example in Figure~\ref{fig:outline_interleave}.
\end{lemma}

\subsection{Introduction of the bounding-quadrants for the blocks}\label{subsect:bounding-quadrants}

\newcommand{\qd}{\mathsf{quad}}
\newcommand{\bp}{\langle\qd\rangle}
\begin{definition}[$\qd_i^j$ \& $\bp_i^j$]\label{def:quad}~ Assume $e_i\preceq e_j$. let $M=(v_i+v_{j+1})/2$.
\begin{itemize}
\item[Case1:] $e_i\prec e_j$; see Figure~\ref{fig:bounding-quadrants}~(a).
Make two rays at $M$,
  one with the opposite direction to $e_j$ whereas the other with the same direction as $e_i$.
Denote by $\qd_i^j$ the \textbf{open} region bounded by these two rays (which lies on the left of $\overrightarrow{v_iv_{j+1}}$),
  and denote by $\bp_i^j$ the intersection between $\qd_i^j$ and $\partial P$.
\item[Case2:] $e_i=e_j$; see Figure~\ref{fig:bounding-quadrants}~(b). Denote by $\qd_i^j$ the \textbf{open} half-plane that is bounded by the extended line of $e_i$ and lies to the left of $e_i$,
and denote by $\bp_i^j$ the midpoint of $e_i$.
\end{itemize}
Moreover, recall the backward and forward edges of units. For unit pair $(u,u')$ in which $u$ is chasing $u'$, define
\begin{equation}\label{eqn:def_bb}
\qd_u^{u'} := \qd~_{forw(u)}^{back(u')}; \qquad \bp_u^{u'}:= \bp_{forw(u)}^{back(u')}. \qquad (\text{Notice that $forw(u)\preceq back(u')$.})
\end{equation}
\end{definition}

\textbf{Note}: We regard the half-plane $\qd_i^i$ as a special quadrant whose apex lies at the midpoint of $e_i$; thus all the regions in $\{\qd_i^j \mid e_i\preceq e_j\}$ are quadrants in the plane.
The regions in $\{\bp_i^j \mid e_i\preceq e_j\}$ are boundary-portions of $P$.
It is worthwhile to point out that $\bp_i^j$ always contains $\qd_i^j\cap \partial P$ for $e_i\preceq e_j$ by the above definition.

\begin{figure}[h]
\centering \includegraphics[width=.75\textwidth]{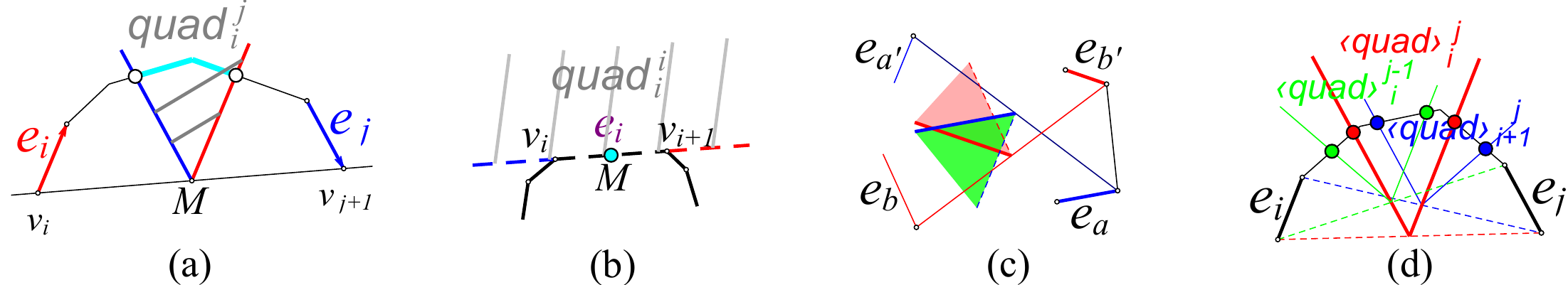}
\caption{Definition and two properties of the bounding-quadrants (Lemmas~\ref{lemma:br_peculiar} and \ref{lemma:br_monotone}).}\label{fig:bounding-quadrants}
\end{figure}

\begin{lemma}\label{lemma:block-in-quad}
For unit pair $(u,u')$ in which $u$ is chasing $u'$, region $\block(u,u')$ is contained in $\qd_u^{u'}$.
\end{lemma}

According to Lemma~\ref{lemma:block-in-quad}, we call $\qd_u^{u'}$ the \emph{bounding-quadrant} of $\block(u,u')$ henceforth.

\smallskip For the next three lemmas, recall that the borders of blocks are directional, as shown in Figure~\ref{fig:block-borders},
  and recall the relation $\leq_\rho$ defined on boundary-portion $\rho$.
    Also, recall $[v_i\circlearrowright v_{j+1}]$ is an inferior portion if and only if $e_i\preceq e_j$.

\begin{lemma}\label{lemma:border-monotone}
Suppose a point $X$ travels along some border of $\block(u,u')$ (in its positive direction) and suppose we stand both in $P$ and in the opposite quadrant of $\qd_u^{u'}$. Then, $X$ moves in clockwise order (\textbf{strictly}) around us.
\end{lemma}
\textbf{Note}: Since $\qd_u^{u'}$ is open, its opposite quadrant is also regarded as \textbf{open}; so it does not contain its boundary.

\begin{lemma}[\textbf{Peculiarity of $\qd$}]\label{lemma:br_peculiar}
For any $a,a',b,b'$ such that
$e_a\preceq e_{a'}, e_b\preceq e_{b'}\text{ and that } e_a,e_{a'},e_b,e_{b'}$ are not contained in any inferior portion,
  $\qd_a^{a'}\cap \qd_b^{b'}$ lies in the interior of $P$, as illustrated in Figure~\ref{fig:bounding-quadrants}~(c).
\end{lemma}

\begin{lemma}[\textbf{Monotonicity of $\bp$}]\label{lemma:br_monotone}
See Figure~\ref{fig:bounding-quadrants}~(d). Assume $e_i\prec e_j$. Denote $\rho=[v_i\circlearrowright v_{j+1}]$. Then,
\begin{equation*}
(\bp_i^{j-1}).s \leq_\rho (\bp_i^j).s \leq_\rho (\bp_{i+1}^j).s \text{~~and~~}
 (\bp_i^{j-1}).t \leq_\rho (\bp_i^j).t \leq_\rho (\bp_{i+1}^j).t,
\end{equation*}
where $\gamma.s$ and $\gamma.t$ denote the starting and terminal point of $\gamma$.
Moreover, for a list of $m$ boundary-portions $\bp_{u_1}^{u'_1},\ldots,\bp_{u_m}^{u'_m}$,
  their starting points lie in clockwise order and so do their terminal points,
  provided that
(1) $u_1,\ldots,u_m$ lie in clockwise order,
(2) $u'_1,\ldots,u'_m$ lie in clockwise order, and
(3) $u_k$ is chasing $u'_k$ for $1\leq k\leq m$.
\end{lemma}
\textbf{Note}: When some objects are said lying in clockwise order, some of them are allowed to coincide.\smallskip

Lemmas~\ref{lemma:br_peculiar} and \ref{lemma:br_monotone}  are proved in subsection~\ref{subsect:br_properties}.
Lemmas~\ref{lemma:block-in-quad} and \ref{lemma:border-monotone} are proved in subsection~\ref{subsect:br-relation-block}.

\section{Technique overview}\label{sect:techover}

In this section, we sketch our proofs of Theorem~\ref{thm:nestp} and Theorem~\ref{thm:nestp-location}.

\medskip Figure~\ref{fig:proof_outline}
  shows how the lemmas given in the last section are applied in our proof of Theorem~\ref{thm:nestp}
  and illustrates the interconnections between the first five properties of $f(\T)$.
The last property \SC is quite independent from these five properties and is not drawn in this figure.
As marked in the figure, \BD and \IP are strongly connected. Indeed, we will see their proofs are analogous.
The easy proofs of \RF and \SM will not only be sketched but completed in this section.

\tikzstyle{LEMMA} = [rectangle, minimum width=3cm, minimum height=.5cm,text centered, draw=black]
\tikzstyle{THEOREM} = [thick, rectangle, rounded corners, minimum width=3cm, minimum height=1cm, text centered, draw=black, fill=green!30]
\tikzstyle{KernelTHEOREM} = [thick, rectangle, rounded corners, minimum width=3cm, minimum height=1cm, text centered, draw=black,fill=blue!30]
\tikzstyle{arrow} = [black, thick,->,>=stealth]
\tikzstyle{connected} = [purple, very thick, -,>=stealth]
\begin{figure}[h]
\centering
\begin{tikzpicture}[node distance=1.6cm]
\small
\node (MonoBorder)  [LEMMA]{
    $\begin{array}{c}$Relations between blocks $\\$and bounding-quadrants$\\$(Lemmas~\ref{lemma:block-in-quad} and \ref{lemma:border-monotone})$\end{array}$};
\node (GlobalLemma) [LEMMA, below of=MonoBorder,yshift=-0.1cm]{
    $\begin{array}{c}$Peculiarity of the$\\$bounding-quadrants$\\$(Lemma~\ref{lemma:br_peculiar})$\end{array}$};
\node (MonoChi)     [LEMMA, below of=GlobalLemma,yshift=-0.1cm]{
    $\begin{array}{c}$Monotonicity of the $\\$bounding-quadrants$\\$(Lemma~\ref{lemma:br_monotone})$\end{array}$};
\node (InterleaveLemma)[LEMMA, below of=MonoChi,yshift=0.4cm]{Lemma~\ref{lemma:interleave}};
\node (BlocksFun)   [KernelTHEOREM, right of=MonoBorder,xshift=3.8cm,yshift=-1.1cm]{
    $\begin{array}{c}$\BD$\end{array}$};
\node (SigmaPMono)  [KernelTHEOREM, below of=BlocksFun,yshift=-0.3cm]{
    $\begin{array}{c}$\textsc{Interleavity-of-}$f\end{array}$};
\node (PReve_f)     [THEOREM, right of=BlocksFun,xshift=3.75cm,yshift=0.85cm]{$\RF$};
\node (SMono_f)     [THEOREM, below of=PReve_f,yshift=-0.2cm]{
    $\begin{array}{c}$\textsc{Monotonicity-of-}$f\end{array}$};
\node (SectorsMono) [THEOREM, below of = SMono_f,yshift=-0.2cm]{$\SM$};
\node (LRF) [LEMMA, above of=BlocksFun,yshift=0cm]{
    $\begin{array}{c}$\LRF$\\$(Lemma~\ref{lemma:LRF})$\end{array}$};
\node (LMF) [LEMMA, below of=SigmaPMono,yshift=0cm]{
    $\begin{array}{c}$\LMF$\\$(Lemma~\ref{lemma:LMF})$\end{array}$};
\draw [arrow](LRF) -- (PReve_f);
\draw [arrow](LMF) -- (SMono_f);
\draw [arrow](InterleaveLemma) -- (SigmaPMono);
\draw [arrow](GlobalLemma) -- (BlocksFun);
\draw [arrow](GlobalLemma) -- (SigmaPMono);
\draw [arrow](MonoChi) -- (SigmaPMono);
\draw [arrow](MonoBorder) -- (BlocksFun);
\draw [arrow](MonoBorder) -- (SigmaPMono);
\draw [arrow](BlocksFun) -- node {\textcolor[rgb]{1.00,0.50,0.00}{easy}} (PReve_f);
\draw [arrow](BlocksFun) -- (SMono_f);
\draw [arrow](SigmaPMono) -- (SMono_f);
\draw [arrow](SMono_f) -- node[anchor=mid] {\textcolor[rgb]{1.00,0.50,0.00}{easy}} (SectorsMono);
\draw [connected] (BlocksFun) -- node[anchor=mid] {strongly connected} (SigmaPMono);
\end{tikzpicture}
\setcaptionwidth{.85\textwidth}
\caption{Interconnections between the five properties (not including \SC).}\label{fig:proof_outline}
\end{figure}

\subsection*{Technique overview for proving the first five properties.}

\begin{definition}\label{def:extremal}
Denote by $e_i\nprec e_j$ if $e_i$ is not chasing $e_j$.
Edge pair $(e_c,e_{c'})$ is \emph{extremal}, if $e_c\prec e_{c'}$ and
the inferior portion $[v_c\circlearrowright v_{c'+1}]$ is not contained in any other inferior portions.
Equivalently, if $e_c\prec e_{c'},e_{c-1}\nprec e_{c'},e_c\nprec e_{c'+1}$.
For example, in Figure~\ref{fig:sigmaP-functionG}~(b), the edge pairs indicated by the red solid circles are extremal.
\end{definition}

For any extremal pair $(e_c,e_{c'})$, denote
\[\Delta(c,c'):=\left\{(u,u') \left | \begin{array}{c}\text{unit $u$ is chasing $u'$, and } forw(u),back(u')\in \{e_c,e_{c+1},\ldots,e_{c'}\}\end{array} \right. \right\}.\]

\begin{proof}[Proof of \BD (sketch)]
We regard $(\block(u,u'),\block(v,v'))$ as a \textbf{\emph{local pair}} if
the four edges $forw(u),back(u'),forw(v),back(v')$ are contained in an inferior portion;
and as a \textbf{\emph{global pair}} otherwise.

It suffices to prove \BD if we can prove the following results:
\begin{itemize}
\item[(I)] \emph{For a global pair $\block(u,u'),\block(v,v')$, their intersection lies in the interior of $P$.}
\item[(II)] \emph{For a local pair $\block(u,u'),\block(v,v')$, their intersection is always empty.}
\end{itemize}

We can employ the bounding-quadrants to prove (I) as follows.

Because these blocks are global, $forw(u)$, $back(u')$, $forw(v)$, $back(v')$ not contained in any inferior-portion.
This means $\qd_{forw(u)}^{back(u')} \cap \qd_{forw(v)}^{back(v')}$ lies in the interior of $P$, according to Lemma~\ref{lemma:br_peculiar}.
According to Lemma~\ref{lemma:block-in-quad},
  $\block(u,u')\cap \block(v,v')\subset \qd_u^{u'}\cap \qd_v^{v'}=\qd_{forw(u)}^{back(u')} \cap \qd_{forw(v)}^{back(v')}.$
Together, we get (I).

Be aware of the following equivalent form of (II): \emph{For every extremal pair $(e_c,e_{c'})$, the blocks in $\{\block(u,u')\mid (u,u')\in \Delta(c,c')\}$ are pairwise-disjoint}. See Figure~\ref{fig:sketch-key-properties}~(a) for an illustration of such blocks.
Surprisingly, the monotonicity of the borders stated in Lemma~\ref{lemma:border-monotone} somehow implies this result.
Details are given in section~\ref{sect:fT-major}.
\end{proof}

\begin{figure}[h]
\begin{minipage}[b]{.32\textwidth}
\includegraphics[width=.98\textwidth]{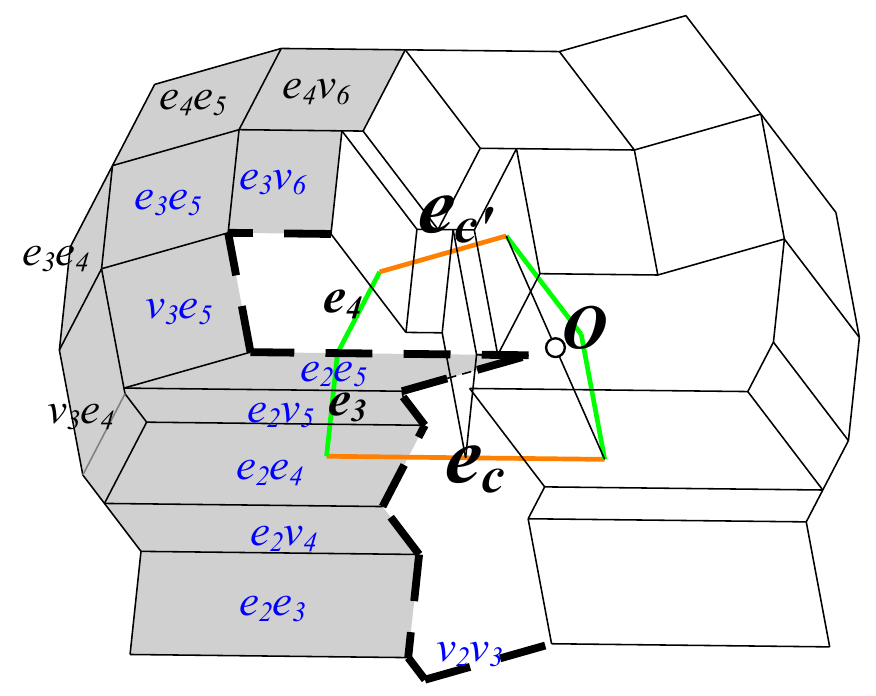}
\subcaption{Sketch 1.}
\end{minipage}
\begin{minipage}[b]{.32\textwidth}
\includegraphics[width=.98\textwidth]{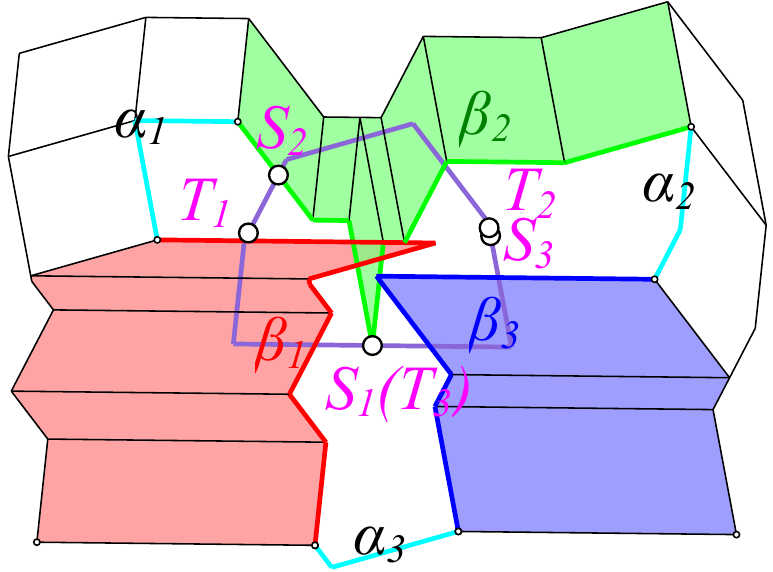}
\subcaption{Sketch 2.}
\end{minipage}
\begin{minipage}[b]{.32\textwidth}
\includegraphics[width=.98\textwidth]{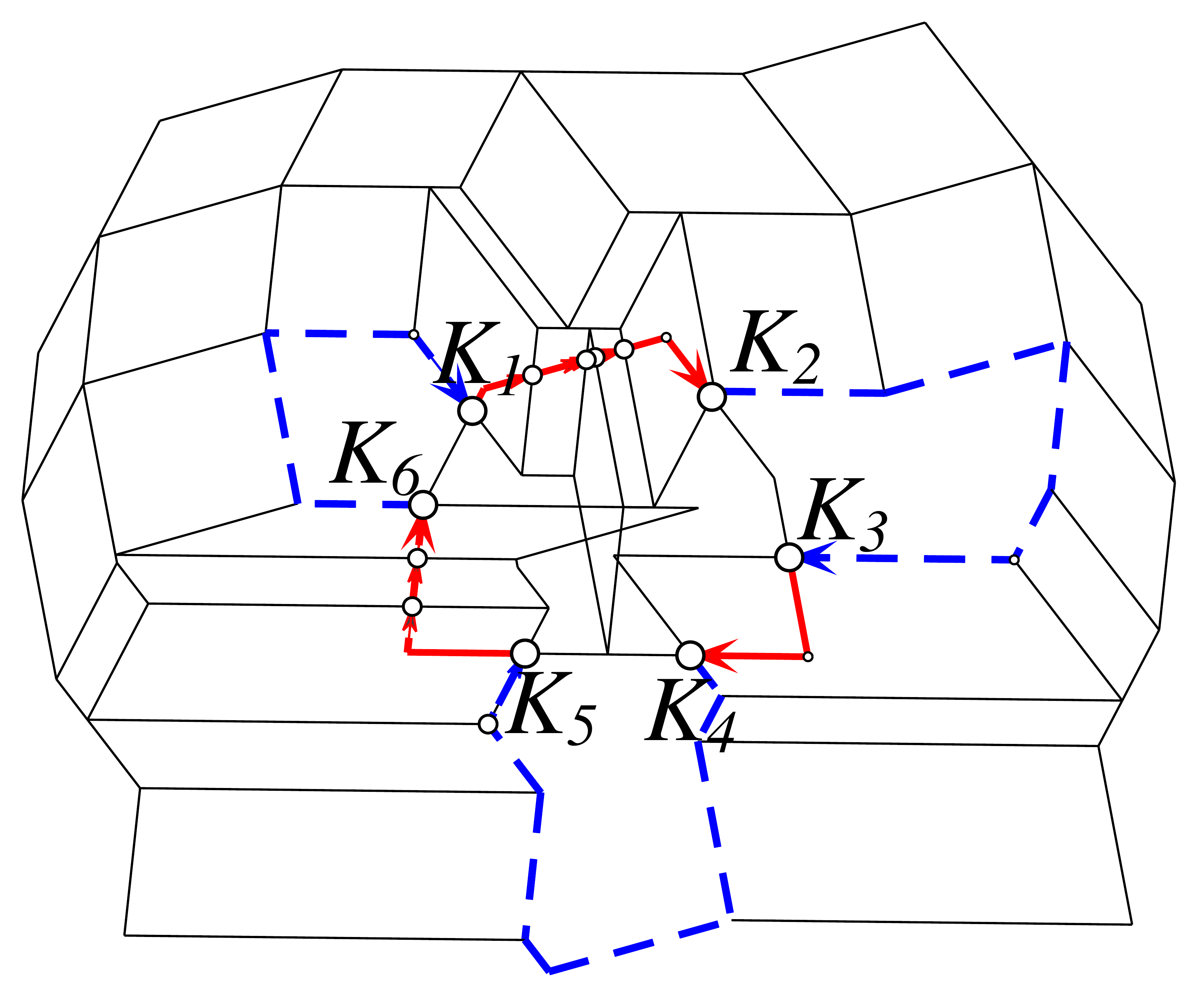}
\subcaption{Sketch 3.}
\end{minipage}
\caption{Illustrations of the proof of the properties of $f(\T)$.}\label{fig:sketch-key-properties}
\end{figure}

\begin{proof}[Proof of \IP (sketch )]
In general, we use Lemma~\ref{lemma:interleave} to prove that $\sigma P$ interleaves $\partial P$.

First, we cut $\sigma P$ into $2q$ fragments $\beta_1$, $\alpha_1$, $\ldots$, $\beta_q$, $\alpha_q$, where $q$ denotes the number of extremal pairs.
For every extremal pair $(e_c,e_{c'})$,
   denote by $\sigma(c,c')$ the concatenation of the bottom borders of the frontier blocks
     in $\{\block(u,u')\mid (u,u')\in\Delta(c,c')\}$; as shown by the dashed polygonal curve in Figure~\ref{fig:sketch-key-properties}~(a).
Denote by $(e_{c_1},e_{c'_1}), \ldots, (e_{c_q},e_{c'_q})$ an enumeration of the extremal pairs (in the order given by the frontier-pair-list).
  We define $\alpha_i$ to be $\sigma(c_i,{c'}_i)\cap \sigma(c_{i+1},{c'}_{i+1})$;
         and $\beta_i$ to be $\sigma(c_i,c'_i)-\alpha_{i-1}-\alpha_{i}$.
           See Figure~\ref{fig:sketch-key-properties}~(b) for an illustration.

As an easy corollary of the monotonicity of the borders stated in Lemma~\ref{lemma:border-monotone}, $\sigma(c_i,{c'}_i)$ interleaves $\partial P$.
In other words, the first condition (\ref{eqn:local-interleave}) listed in Lemma~\ref{lemma:interleave} holds, since
   the concatenation of $\alpha_{i-1},\beta_i,\alpha_i$ equals $\sigma(c_i,{c'}_i)$.

We then select $2q$ points $S_1,T_1,\ldots,S_q,T_q$ from $\partial P$.
       Denote by $\block(u,u')$ and $\block(v,v')$ the first and last frontier blocks whose bottom borders contribute to $\beta_i$.
       We define the starting point of $\bp_u^{u'}$ and the terminal point of $\bp_v^{v'}$ to be $S_i$ and $T_i$, respectively.
Applying the properties of the bounding-quadrants (Lemmas~\ref{lemma:br_peculiar} and \ref{lemma:br_monotone}),
     it can be proved that $S_1,T_1,\ldots,S_q,T_q$ lie in clockwise order around $\partial P$,
        and moreover, the conditions (\ref{eqn:delimitate-beta}) and (\ref{eqn:delimitate-alpha}) listed in Lemma~\ref{lemma:interleave} hold,
    namely, $\beta_i\cap \partial P\subset [S_i \circlearrowright T_i]$ and $\alpha_i\cap \partial P\subset [S_i \circlearrowright T_{i+1}]$.

At last, apply Lemma~\ref{lemma:interleave} and we obtain that $\sigma P$ interleaves $\partial P$. See the details in section~\ref{sect:fT-major}.
\end{proof}

As a summary, the \IP follows from two facts:
\emph{(I) locally, every local fraction of $\sigma P$ (i.e. $\sigma(c_i,c'_i)$) interleaves $\partial P$}, and
\emph{(II) globally, all these local fractions are well-scattered around $\partial P$.}\medskip

The \RF follows from the \BD and \LRF.

\begin{proof}[Proof of \RF]
Recall $\T(u,u')$ in (\ref{def:T(u,u')}).
For each unit pair $(u,u')$ in which $u$ is chasing $u'$, we call $\T(u,u')$ a \emph{component} of $\T$.
Notice that each element of $\T$ belongs to exactly one component.

Now, consider any two elements in subset $\T^*$.
  If they belong to the same component, their images under function $f$ are distinct according to the \LRF (Lemma~\ref{lemma:LRF}).
  If they belong to distinct components, their images under $f$ do not coincide, since otherwise
     there would be two distinct blocks with an intersection on the boundary of $P$, which contradicts the \textsc{Block-disjointness}.
  Therefore, $f$ is a bijection from $\T^*$ to $f(\T^*)$.
\end{proof}

\begin{proof}[Proof of \MF (sketch)]
See Figure~\ref{fig:sketch-key-properties}~(c).
Let $K_1,\ldots,K_m$ denote all the intersection points between $\sigma P$ and $\partial P$,
  where $K_1,\ldots,K_m$ lie in clockwise order around $\partial P$.
These intersection points divide $\partial P$ into $m$ boundary-portions, each of which is called a \emph{$K$-portion}.
We state three crucial observations:
\begin{enumerate}
\item[(i)] The outer boundary of $f(\T)$ is a simple closed curve whose interior contains $P$.
\item[(ii)] Points $f^{-1}_2(K_1),\ldots,f^{-1}_2(K_m)$ lie in clockwise order around $\partial P$.
\item[(iii)] Function $f^{-1}_2()$ is monotone on any $K$-portion that lies in $f(\T)$.
\end{enumerate}

The proof of observation~(i) is trivial.
   Observation~(ii) follows from three facts:
     (a) It always holds that $g(K_i)=f^{-1}_2(K_i)$ (recall $g$ in Definition~\ref{def:g});
     (b) $g$ is a circularly monotone function from $\sigma P$ to $\partial P$; and
     (c) Points $K_1,\ldots,K_m$ lie in clockwise order around $\sigma P$.
   Facts~(a) and (b) are due to the definition of $g$.
   Fact (c) follows from the \IP and the assumption that $K_1,\ldots,K_m$ lie in clockwise order around $\partial P$.
Observation~(iii) follows from \LMF (Lemma~\ref{lemma:LMF}), which states a similar property about $f^{-1,2}_{u,u'}()$.

Combining the \IP with observation (i), a $K$-portion either lies in or lies outside $f(\T)$.
Now, suppose a point $X$ travels in $f(\T)\cap \partial P$ in clockwise.
    Observation (iii) assures that $f^{-1}_2(X)$ is monotone inside each $K$-portion.
    Observation (ii) assures that it is monotone between the $K$-portions. Details are given in section~\ref{sect:othertwo}.
\end{proof}

The \SM follows from the \MF immediately.

\begin{proof}[Proofs of \SM]
For any unit $w$, we have
\[\begin{split}
\sector(w)\cap \partial P &= \left\{f(X_1,X_2,X_3)\mid (X_1,X_2,X_3)\in \T^*, X_2\in w\right\}\\
&= \left\{Y\in f(\T)\cap \partial P\mid  f^{-1}_2(Y)\in w\right\} =\left\{Y\in f(\T)\cap \partial P\mid \unit(f^{-1}_2(Y))=w\right\}.
\end{split}\]

Consider the points in $f(\T)\cap \partial P$. Clearly, $\unit(f^{-1}_2())$ is a function on these points that maps them to the $2n$ units of $P$.
Following from the \MF, function $\unit(f^{-1}_2())$ is circularly monotone on these points.
So, $\unit(f^{-1}_2())$ implicitly divides $f(\T)\cap \partial P$ into $2n$ parts which are pairwise-disjoint and lie in clockwise order around $\partial P$.
Moreover, according to the above equation, these $2n$ parts are precisely $\sector(v_1)\cap \partial P$, $\sector(e_1)\cap \partial P$, \ldots, $\sector(v_n)\cap \partial P$, $\sector(e_n)\cap \partial P$.
Therefore, we obtain the \SM.
\end{proof}

\subsection*{Technique overview for defining $\LVS,\RVS$ and for proving the \SC}\label{subsect:sector-continue}

Assume $V$ is a fixed vertex of $P$. In the following, we depict the general shape of $\sector(V)$ and define explicitly its two boundaries $\LVS,\RVS$ mentioned in section~\ref{sect:introduction}, and then sketch how do we prove the \textsc{Sector-continuity}.

\smallskip Recall  $u\oplus u'=\{(X+X')/2\mid X\in u, X'\in u'\}$. The following formula of $\sector(V)$ follows from (\ref{def:T}) and (\ref{def:sector}):
\begin{equation}\label{eqn:sector-zeta}
\sector(V)= \text{2-scaling of } \bigg({\bigcup}_{ \text{$u$ is chasing $u'$, and $\zeta(u,u')$ contains $V$}} u\oplus u'\bigg)\text{ with respect to }V.
\end{equation}

\newcommand{\LV}{\mathcal{L}_V}
\newcommand{\RV}{\mathcal{R}_V}
\newcommand{\MID}{\mathsf{mid}_V}
\newcommand{\MIDSCALE}{\mathsf{mid}^\star_V}

Based on (\ref{eqn:sector-zeta}), in order to know the shape of $\sector(V)$, it is important to study the structure of $\Lambda_V$, where
\begin{equation}\label{def:Lambda_V}
    \Lambda_V:=\{(u,u')\mid u\text{ is chasing }u'\text{, and }\zeta(u,u')\text{ contains }V\}.
\end{equation}

In the following we first introduce some symbols or notions and then introduce a structural property of $\Lambda_V$.
\begin{description}
\item [Delimiting edges $e_{s_V}$ and $e_{t_V}$.]
        Two particular edges of $P$ will be specified as $e_{s_V}$ and $e_{t_V}$, for each vertex $V$.
        Note that the exact values of $s_V$ and $t_V$ however will not be defined in this overview section,
           because the definition is involved and needs additional notations to explain, as we will see in section~\ref{sect:othertwo};
           and also because there is no need to know the exact values $s_V$ and $t_V$
                    for understanding the following notations and facts within this section.
                    Indeed, only in the proofs of the following facts will we look at the exact values of $s_V,t_V$.

        We abbreviate $s_V,t_V$ as $s,t$ when $V$ is clear, and call $e_s,e_t$ the \emph{delimiting edges}. Two crucial facts are:
\begin{itemize}
\item[(i)] \emph{$e_s\preceq e_t$. Moreover, the inferior portion $[v_s\circlearrowright v_{t+1}]$ does not contain $V$.}
\item[(ii)] \emph{If $(u,u')\in \Lambda_V$, units $u,u'$ both lie in $[v_s\circlearrowright v_{t+1}]$.} (This suggests the name ``delimiting''.)
\end{itemize}
\item [Incident relation.] Each unit has two \emph{incident units}: $e_i$ has $v_i,v_{i+1}$; whereas $v_i$ has $e_{i-1},e_{i}$. \\
        In particular, pay attention that $e_i,e_{i+1}$ are \textbf{not} regarded as incident.
\item [$\Delta_V$ -- A superset of $\Lambda_V$.]
   Notice that there is an order, denoted by `$<$', among the units in $[v_s\circlearrowright v_{t+1}]$ conforming to the clockwise order.
    We denote $\Delta_V= \left\{(u,u') \mid  u,u'\in [v_s\circlearrowright v_{t+1}], u<u', \text{and they are not incident} \right\}.$\\
    \textbf{Note}: By fact~(ii), $\Delta_V$ is a superset of $\Lambda_V$.\\
    \textbf{Note}: By fact~(i), $e_s\preceq e_t$, hence those regions $u\oplus u' \mid (u,u')\in \Delta_V$ are pairwise-disjoint. See Figure~\ref{fig:mid_V}~(a).
\item[Roads \& routes.]
    See Figure~\ref{fig:mid_V}~(b).
    For any $(e_i,v_j)\in \Delta_V$, we assume segment $e_i\oplus v_j$ has \emph{the same} direction as $e_i$;
    for any $(v_i,e_j)\in \Delta_V$, we assume segment $v_i\oplus e_j$ has \emph{the opposite} direction to $e_j$; and we refer to these segments as \emph{roads}.
    See Figure~\ref{fig:mid_V}~(c). By concatenating several roads, we may obtain directional polygonal curves starting from $(v_s+v_{t+1})/2$ to some point in $[v_s \circlearrowright v_{t+1}]$;
        such curves are called \emph{routes}. \smallskip
\item[$\LV,\RV$.] Denote $\rho=[v_{t+1} \circlearrowright v_s]$. For any edge pair $(e_i,e_j)$ in $\Delta_V$, we mark region $e_i\oplus e_j$ by `-' if $Z_i^j <_\rho V$;
`+' if $V <_\rho Z_i^j$; and `0' if $V= Z_i^j$.
    By the bi-monotonicity of the $Z$-points (Fact~\ref{fact:Z_bi-monotonicity}),
   there exists a unique route, $\LV$, which separates the regions marked by `-' from the regions marked by `+/0'.
  Similarly, there exists a unique route, $\RV$, which separates the regions marked by `+' from the regions marked by `-/0'.\smallskip
\item[A region $\MID$.] By the definitions of $\LV,\RV$ and bi-monotonicity of the $Z$-points (Fact~\ref{fact:Z_bi-monotonicity}),
  the region bounded by $\LV,\RV$ and $\partial P$ is well-defined (see Figure~\ref{fig:mid_V}~(f)); we denote it by $\MID$.\\
  \textbf{Note}: To be more clear, $\MID$ contains its two boundaries $\LV$ and $\RV$.\\
  \textbf{Note}: For the special case $s=t$, we will have $\LV=\RV=\MID=v_s\oplus v_{t+1}$.
\item [$\LVS,\RVS,$ \& $\MIDSCALE$.] Denote the 2-scalings of $\MID,\LV,\RV$ with respect to $V$ by $\MIDSCALE,\LVS,\RVS$ respectively.
\end{description}

\begin{figure}[t]
\centering\includegraphics[width=.7\textwidth]{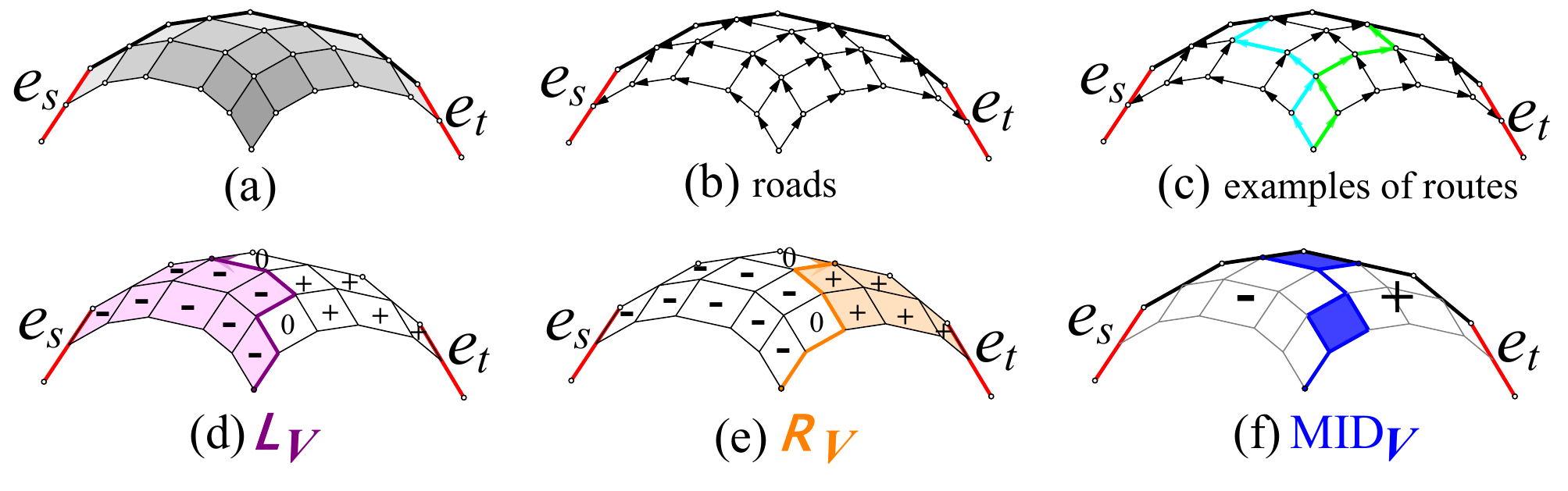}
\caption{Illustration of the definition of curves $\LV,\RV$ and region $\MID$.}\label{fig:mid_V}
\end{figure}

We now describe the aforementioned structural property of $\Lambda_V$.
\begin{enumerate}
\item[(iii)] Because $\MID$ is the union of several (disjoint) regions $u\oplus u'$ for which $(u,u')\in \Delta_V$,
                it defines a subset of $\Delta_V$. To be clear, this subset contains $(u,u')$ if and only if $u\oplus u'\subseteq \MID$.
        Now, remove from this subset those unit pairs $(u,u')$ in which $u$ is not chasing $u'$,
          the remaining subset (of $\Delta_V$) is exactly $\Lambda_V$.
\end{enumerate}

The proof of fact~(iii) again requires the definitions of $e_s$ and $e_t$ and is thus omitted in this overview section.

\medskip Notice that $\sector(V)= \text{2-scaling of } ({\bigcup}_{(u,u')\in \Lambda_V} u\oplus u')\text{ with respect to }V$, by combining (\ref{eqn:sector-zeta}) with (\ref{def:Lambda_V}).
Now, connecting this equation of $\sector(V)$ with the structural property of $\Lambda_V$ given in fact~(iii), we get:
\begin{enumerate}
\item[(iv)] Region $\sector(V)$ equals the 2-scaling of $(\MID-\epsilon_V)$ with respect to $V$,
        where $\epsilon_V$ denotes the union of those $u\oplus u'$ for which $(u,u')\in \Delta_V$ and $u$ is not chasing $u'$.
\end{enumerate}

We remark that $\epsilon_V$ and $\MID$ are such regions whose shapes are easy to understand.
   Therefore, based on fact~(iv) it is easy to understand what the shape of $\sector(V)$ could be.\medskip

\subparagraph{Two boundaries of $\sector(V)$ and the proof of \SC (overview).}
Following fact~(iv), we can further prove that \emph{the closed set of $\sector(V)$ equals $\MIDSCALE$}, which implies
that $\MIDSCALE$ and $\sector(V)$ have the same boundaries.
Further since $\LVS,\RVS$ are boundaries of $\MIDSCALE$, they are boundaries of $\sector(V)$ indeed as we mentioned.
Moreover, we observe that either of $\LVS,\RVS$ has at most one intersection with $\partial P$ (see Figure~\ref{fig:sectors-nestp-sigma}~(a)); so $\MIDSCALE=[\LVS\cap \partial P\circlearrowright \RVS\cap \partial P]$.
Following this equation and the fact that $\MIDSCALE$ is the closed set of $\sector(V)$, we see $\sector(V)\cap \partial P$ is a boundary-portion from $\LVS\cap \partial P$ to $\RVS\cap \partial P$. Thus \SC holds.

\begin{remark}
Although the particular values of $s_V,t_V$ are of no use for understanding the contents sketched in this overview section,
    defining $s_V,t_V$ is the most nontrivial step in the entire proof of the \textsc{Sector-continuity}.
\end{remark}

\subparagraph*{A slightly modification on the directions of the roads.}
Recall that $\Nest(P)$ is defined as the union of the boundaries of all blocks and $\mathcal{L}^\star_{v_1},\ldots,\mathcal{L}^\star_{v_n}$ and $\mathcal{R}^\star_{v_1},\ldots,\mathcal{R}^\star_{v_n}$, where $\mathcal{L}^\star_V,\mathcal{R}^\star_V$ are 2-scalings of $\LV,\RV$ with respect to $V$,
  and where $\LV,\RV$ consist of several roads.
  Previously we assume road $e_i\oplus v_j$ has the same direction as $e_i$, whereas road $v_i\oplus e_j$ has the \textbf{opposite} direction to $e_j$.
  This is convenient for proving the \textsc{Sector-continuity}.
  However, when defining $\Nest(P)$, it is more reasonable to assume that $v_i\oplus e_j$ has the \textbf{same} direction as $e_j$.

\subparagraph*{Open problem.} Can we find a structure similar to $\Nest(P)$ in a higher dimensional space?\medskip

\subsection*{Technique overview for proving Theorem~\ref{thm:nestp-location}}\label{subsect:preprocess}

Theorem~2 states that we can answer in (amortized) $O(\log ^2n)$ time \emph{Sector-intersect-Units}, \emph{Vertex-in-Sector}, and \emph{Vertex-in-Block},
which ask the interval of units that intersect $\sector(V)$, or the sector or the block that contains $V$.

\subparagraph*{1. Compute the endpoints of $\sector(V)\cap \partial P$ for a given vertex $V$.}
According to the proof sketch of \SC above, computing the endpoints of $\sector(V)\cap \partial P$  reduces to
        computing $\LVS\cap \partial P$ and $\RVS\cap \partial P$.
Recall that either of $\LVS,\RVS$ is a polygonal curve and has at most one intersection with $\partial P$.
Moreover, it is not difficult to generate an arbitrary edge of $\LVS$ (or $\RVS$), and in $O(\log n)$ time decide whether it lies inside, outside, or intersects $P$.
  Therefore, we can compute $\LVS\cap \partial P$ (or $\RVS\cap \partial P$) by a binary search in $O(\log^2n)$ time.

\subparagraph*{2. Compute the units that intersect $\sector(V)$ for a given vertex $V$.}
Let $u_L,u_R$ respectively be the unit containing $\LVS\cap \partial P$ and the unit containing $\RVS\cap \partial P$, which can be computed when we compute $\LVS\cap \partial P$ and $\RVS\cap \partial P$.
By \textsc{Sector-continuity}, the units that intersect $\sector(V)$ are (roughly) the units from $u_L$ to $u_R$ in clockwise order. (This holds almost in every case. A degenerate case is discussed in subsection~\ref{subsect:alg_S}).

\subparagraph*{3. Compute the respective sectors that contain each vertex.}
We build two groups of \emph{event-points}:
the points in $\{\LVS\cap \partial P,\RVS\cap \partial P\mid V\text{ is a vertex of }P\}$, and the intersections between $\sigma P$ and $\partial P$.
Two tags, \emph{current-tag} and \emph{future-tag}, are assigned to each event-point;
  the former one indicates the sector containing this event-point;
  the latter one indicates the sector containing the boundary-portion starting from this event-point and terminating at its clockwise next event-point.
By a \textbf{sweeping} around $\partial P$, we find the sector containing each vertex utilizing the tags.

\newcommand{\cell}{\mathsf{cell}}

\subparagraph*{4. Compute the block containing $V$ for a given vertex $V$.}\label{sect:alg_B} This is the most nontrivial part in our algorithm.

Given $V$ which lies in $\sector(w)$, where $w$ is given, we shall find $(u^*_1,u^*_2)$ so that
        $\block(u^*_1,u^*_2)$ contains $V$.

\medskip Similarly as in the proof of \SC, we need to introduce a few notations first.
(For \textsc{Sector-continuity}, we search $(u,u')$ such that $V\in \zeta(u,u')$;
  and here we search $(u,u')$ such that $V\in \block(u,u')$.)

\medskip Let $[X\circlearrowright X')$ denote $[X\circlearrowright X']-\{X'\}$ and let $(X\circlearrowright X']$ denote $[X\circlearrowright X']-\{X\}$.

\begin{description}
\item [Delimiting edges $e_{p_V},e_{q_V}$.]
Two particular edges of $P$ will be specified as $e_{p_V}$ and $e_{q_V}$, for each vertex $V$. Note
that the exact values of $p_V$ and $q_V$ however will not be defined in this overview section, because the definition is
involved and needs additional notations to explain, as we will see in subsection~\ref{subsect:alg_B-pq}; and also because there is no
need to know the exact values $p_V$ and $q_V$ for understanding the following notations and facts within this section.
Indeed, only in the proofs of the following facts will we look at the exact values of $p_V,q_V$.

We abbreviate $p_V,q_V$ as $p,q$ when $V$ is clear, and call $e_p,e_q$ the \emph{delimiting edges}. Two crucial facts are:
\begin{enumerate}
\item[(i)] \emph{$e_p\prec e_{q}$. Moreover, the boundary-portion $(v_p\circlearrowright v_{q+1})$ contains $V$.}
\item[(ii)] \emph{$u^*_1 \in [v_p\circlearrowright V)$, and $u^*_2\in (V\circlearrowright v_{q+1}]$.} (This suggests the name ``delimiting''.)
 \end{enumerate}

\item[Alive pairs.] For any unit pair $(u,u')$ such that $u$ is chasing $u'$, we regard it as \emph{alive} if it satisfies the conditions given in fact~(ii).
        In other words, it is alive if $u\in [v_p\circlearrowright V)$ and $u'\in (V\circlearrowright v_{q+1}]$.

\item[Active pairs.] For any alive pair $(u,u')$, we regard it as \emph{active} if $\zeta(u,u')\text{ intersects }w$.
    In particular, $(u^*_1,u^*_2)$ is active due to fact~(ii) and $V\in \sector(w)$.
     This means we only need to search $(u^*_1,u^*_2)$ among the active pairs.

    Figure~\ref{fig:cells-layers}~(a) draws all the blocks of the alive pairs, where the colored blocks are blocks of the active pairs.

\end{description}

\begin{figure}[h]
 \begin{minipage}[b]{.4\textwidth}
  \centering\includegraphics[width=.67\textwidth]{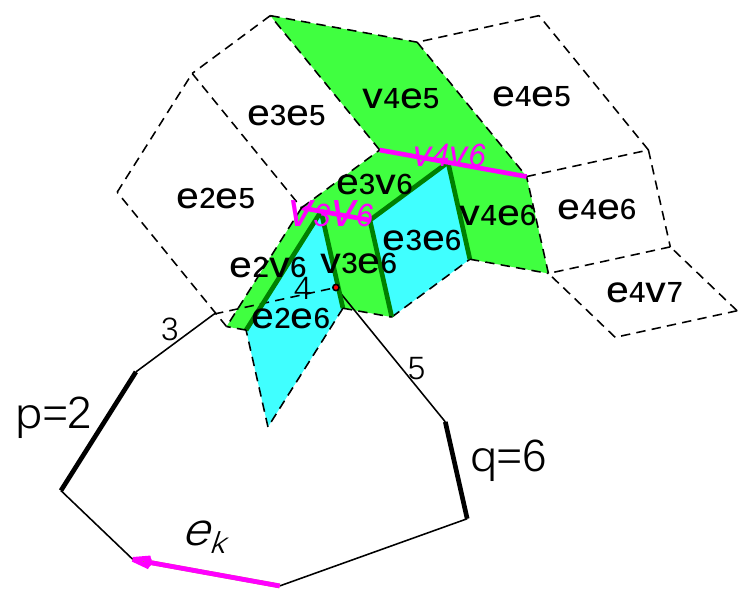}
  \end{minipage}
 \begin{minipage}[b]{.6\textwidth}
  \centering\includegraphics[width=.78\textwidth]{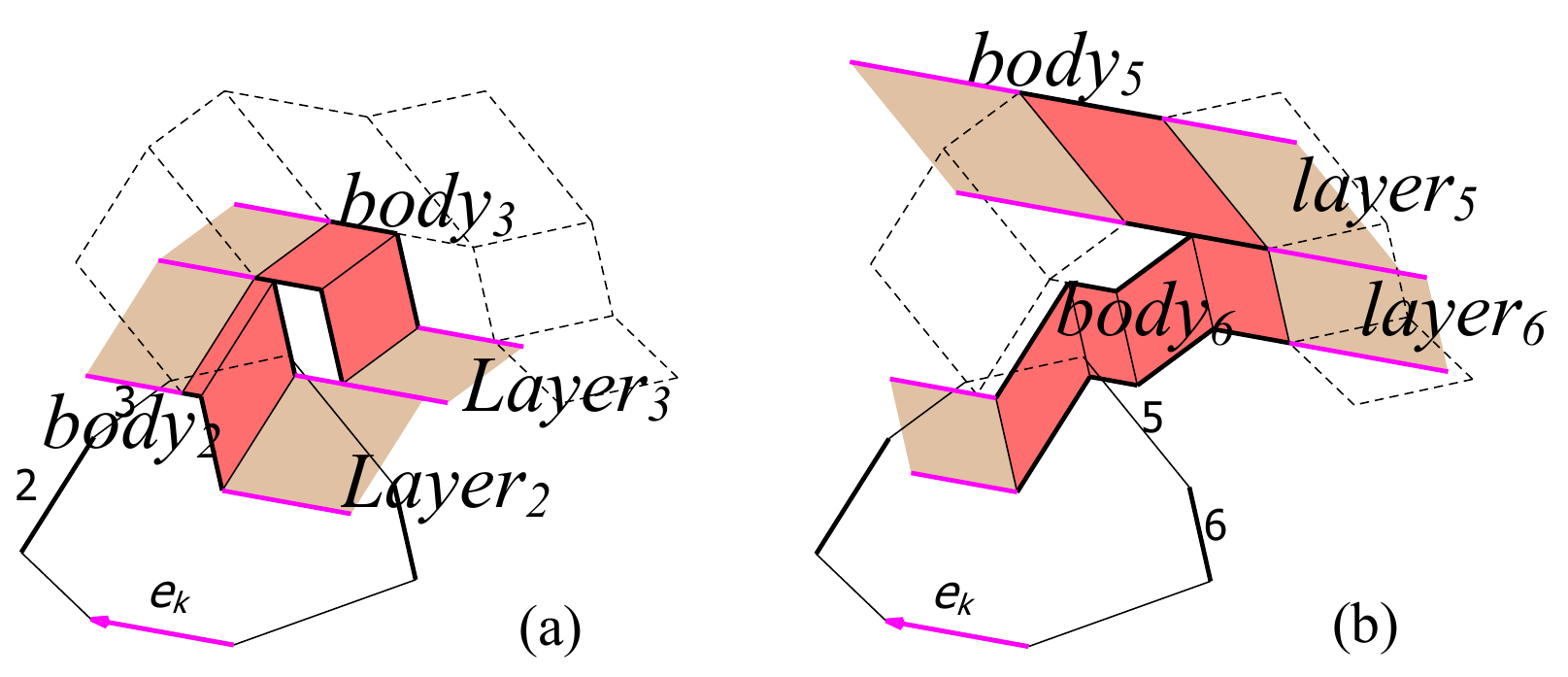}
 \end{minipage}
 \caption{Illustration of cells and layers.}\label{fig:cells-layers}
\end{figure}

Next, we have to discuss two cases depending on whether $w$ is an edge or a vertex. The edge case is more typical and the vertex case can be regarded as a degenerate case. Assume in the following $w$ is an edge, e.g.\ $w=e_k$.

\begin{description}
\item[Cells \& layers.] For each active pair $(u,u')$, define a region $\cell(u,u'):=\block(u,u')\cap \sector(w)$ and call it a \emph{cell}.
    For each edge $e_j$ in $(v_p\circlearrowright v_{q+1})$,
        we define a region $\mathsf{layer}_j$, called a \emph{layer}, which contains all the cells parallel to $e_j$ together with two more infinite strip region parallel to $e_k$ as shown Figure~\ref{fig:cells-layers}~(b).
\end{description}
\textbf{Note}: The ``cells'' and ``layers'' are counterparts of the ``roads'' and ``routes'' (recall them in Figure~\ref{fig:mid_V}).

\subparagraph*{Algorithm for computing $u^*_1$ and $u^*_2$ (overview).} We prove a monotonicity between the cells within the same layer and a monotonicity between the different layers.
  Based on these monotonicities,
     in $O(\log n)$ time we can determine the relative position between any layer and $V$.
      So, in $O(\log^2n)$ time we find the layer that contains $V$.
      In another $O(\log n)$ time we find the cell (within this layer) that contains $V$, which must be $\cell(u^*_1,u^*_2)$.

\begin{remark}\label{remark:symmetric}
In our algorithm, computing the block containing $V$ is \textbf{highly symmetric} to computing the endpoints of $\sector(V)\cap \partial P$.
The biggest challenges in both of them lie in defining the delimiting edges.
Both of them take $O(\log^2n)$ time for each query.
Moreover, $(s,t)$ and $(p,q)$ can both be computed in $O(\log n)$ time (not shown in this section).
This symmetry suggests that if one of them can be optimized to $O(\log n)$ time, so may the other.

The tricky definitions of $s,t$ and $p,q$ are given in subsection~\ref{subsect:singlesector_st} and subsection~\ref{subsect:alg_B-pq}.
To define $p,q$ we need to apply once again the bounding-quadrants of the blocks and their properties introduced in subsection~\ref{subsect:bounding-quadrants}.

\smallskip In the future, we would like to study whether the term $O(\log^2n)$ can be optimized to $O(\log n)$ applying \cite{TPStechnique}.
\end{remark}
\clearpage

\section{Proofs of the lemmas stated in section~\ref{sect:pre}}\label{sect:lemmas-proof}

This section introduces some facts about the $Z$-points
  and proves the seven lemmas stated in section~\ref{sect:pre}.

\newcommand{\pl}{\mathsf{p}}
\smallskip Given point $X$ and edge $e_i$, denote by $\pl_i(X)$ the unique line at $X$ that is parallel to $e_i$.

When a point $X$ lies in $\partial P$, we abbreviate $back(\unit(X)),forw(\unit(X))$ as $back(X),forw(X)$ respectively.

\subsection{Several facts about the $Z$-points (in addition to Fact~\ref{fact:Z_bi-monotonicity})}

\newcommand{\D}{\mathsf{D}}
\newcommand{\I}{\mathsf{I}}   
Denote by $\D_i$ the unique vertex with the largest distance to $\el_i$. The uniqueness follows from the pairwise-nonparallel assumption of edges.
Denote by $\I_{i,j}$ the intersection of $\el_i$ and $\el_j$.

\begin{fact}\cite{arxiv:n2}\label{fact:dist-unique-location}
Point $Z_i^j$ lies in $[\D_i\circlearrowright\D_j]\cap (v_{j+1}\circlearrowright v_i)$.
Moreover, if it lies in some edge $e_k$, it lies at $(\I_{i,k}+\I_{j,k})/2$.
\end{fact}

\begin{fact}\label{fact:Z_advanced_bound}
Point $Z_i^j$ lies in or on the boundary of the opposite quadrant of $\qd_i^j$.
\end{fact}

\begin{proof}
See Figure~\ref{fig:Z-zeta-oppoquad}~(a). Let $M=(v_i+v_{j+1})/2$.
Denote by $H_1$ and $H_2$ the closed half-plane bounded by $\pl_i(M)$ and containing $v_{j+1}$,
and the closed half-plane bounded by $\pl_{j}(M)$ and containing $v_i$.
We shall prove that $Z_i^j\in H_1\cap H_2$.
Denote $e_b=back(Z_i^j)$. Because $Z_i^j$ has the largest distance-product to $(\el_i,\el_j)$ in $P$,
it has a larger distance-product to $(\el_i,\el_j)$ than all the other points on $e_b$.
Applying this observation with the concavity of $\dist_{\el_i,\el_j}()$ on segment $\overline{\I_{j,b}\I_{i,b}}$ (see Lemma~3 in \cite{arxiv:n2}),
   $|\I_{i,b}Z_i^j|\geq \frac{1}{2}|\I_{j,b}\I_{i,b}|$. So $d_{\el_i}(Z_i^j)\geq \frac{1}{2}d_{\el_i}(\I_{j,b})$.
Further since $\frac{1}{2}d_{\el_i}(\I_{j,b})\geq \frac{1}{2}d_{\el_i}(v_{j+1})=d_{\el_i}(M)$, we get $d_{\el_i}(Z_i^j)\geq d_{\el_i}(M)$.
Thus $Z_i^j\in H_1$. Symmetrically, $Z_i^j\in H_2$.
\end{proof}

\begin{figure}[h]
\centering \includegraphics[width=.7\textwidth]{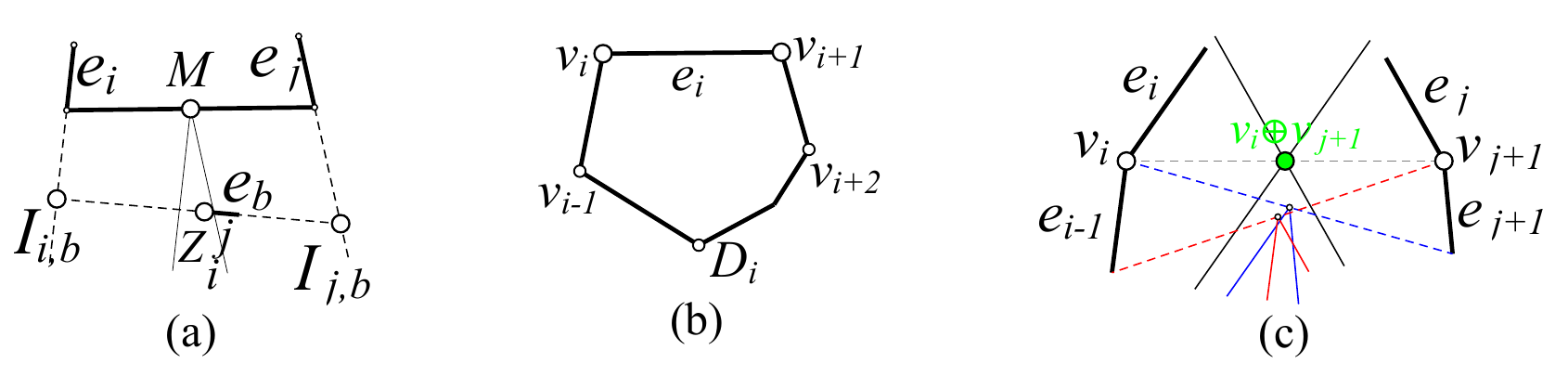}
\caption{Picture (a) illustrates the proof of Fact~\ref{fact:Z_advanced_bound}; whereas (b) and (c) illustrate the proof of
        Fact~\ref{fact:zeta-quad}.}\label{fig:Z-zeta-oppoquad}
\end{figure}

\begin{fact}\label{fact:zeta-quad}
When $v_i$ is chasing $v_{j+1}$, boundary-portion $\zeta(v_i,v_{j+1})$ lies in the opposite quadrant of $\qd_i^j$.
\end{fact}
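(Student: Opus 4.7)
Plan. Let $M = \M(v_i, v_{j+1})$ denote the apex of $\qd_i^j$, and let $Q$ denote its opposite quadrant. Writing $Q = H_1 \cap H_2$, where $H_1$ is the closed half-plane bounded by $\pl_i(M)$ containing $v_{j+1}$ (equivalently $\{X : d_{\el_i}(X) \geq \tfrac{1}{2} d_{\el_i}(v_{j+1})\}$) and $H_2$ is the closed half-plane bounded by $\pl_j(M)$ containing $v_i$ (i.e.\ $\{X : d_{\el_j}(X) \geq \tfrac{1}{2} d_{\el_j}(v_i)\}$), my strategy is first to show that the two endpoints $Z_{i-1}^j$ and $Z_i^{j+1}$ of the arc $\zeta(v_i, v_{j+1}) = [Z_{i-1}^j \circlearrowright Z_i^{j+1}]$ both lie in $Q$, then to invoke convexity of $Q$ together with the bi-monotonicity of the $Z$-points to lift this to the whole arc.

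Endpoint membership. Unpacking the hypothesis ``$v_i$ is chasing $v_{j+1}$'' gives the two chasing relations $e_{i-1} \prec e_j$ and $e_i \prec e_{j+1}$. Applying Fact~\ref{fact:Z_advanced_bound} to the pair $(e_{i-1}, e_j)$ yields $d_{\el_j}(Z_{i-1}^j) \geq \tfrac{1}{2} d_{\el_j}(v_{i-1})$, and $e_{i-1} \prec e_j$ is equivalent to $d_{\el_j}(v_{i-1}) > d_{\el_j}(v_i)$, placing $Z_{i-1}^j$ in $H_2$. Symmetrically, Fact~\ref{fact:Z_advanced_bound} on $(e_i, e_{j+1})$ together with $e_i \prec e_{j+1}$ (which rewrites as $d_{\el_i}(v_{j+1}) < d_{\el_i}(v_{j+2})$) yields $Z_i^{j+1} \in H_1$. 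The two ``crossed'' inclusions, $Z_{i-1}^j \in H_1$ and $Z_i^{j+1} \in H_2$, are the technical crux. For these, I would reuse the concavity machinery underlying Fact~\ref{fact:Z_advanced_bound}: letting $e_b$ be the edge containing $Z_{i-1}^j$, Fact~\ref{fact:dist-unique-location} gives $Z_{i-1}^j = \M(\I_{i-1,b}, \I_{j,b})$ on $\el_b$, so $d_{\el_i}(Z_{i-1}^j) = \tfrac{1}{2}\bigl(d_{\el_i}(\I_{i-1,b}) + d_{\el_i}(\I_{j,b})\bigr)$ once one checks that the two intersections lie on the same side of $\el_i$ as $P$. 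The bound $d_{\el_i}(\I_{j,b}) \geq d_{\el_i}(v_{j+1})$, in direct analogy with the last line of the proof of Fact~\ref{fact:Z_advanced_bound}, then forces $d_{\el_i}(Z_{i-1}^j) \geq \tfrac{1}{2} d_{\el_i}(v_{j+1}) = d_{\el_i}(M)$. This last bound exploits that the location $e_b \in [\D_{i-1} \circlearrowright \D_j]$ (again by Fact~\ref{fact:dist-unique-location}) places $\I_{j,b}$ on the far side of $v_{j+1}$ along $\el_j$. The symmetric argument gives $Z_i^{j+1} \in H_2$.

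Extending to the full arc. Because $Q$ is convex and $P$ is convex, $\partial P \cap Q$ is a single connected sub-arc of $\partial P$. The vertices $v_i$ and $v_{j+1}$ are excluded from $Q$ (since $d_{\el_i}(v_i) = 0 < \tfrac{1}{2} d_{\el_i}(v_{j+1})$ and $d_{\el_j}(v_{j+1}) = 0 < \tfrac{1}{2} d_{\el_j}(v_i)$), and by Fact~\ref{fact:dist-unique-location} both endpoints of $\zeta$ lie in $(v_{j+1} \circlearrowright v_i)$. The bi-monotonicity of $Z$-points (Fact~\ref{fact:Z_bi-monotonicity}) then shows $Z_{i-1}^j$ precedes $Z_i^{j+1}$ in clockwise order along this boundary-portion, so the clockwise traversal $\zeta$ never leaves $(v_{j+1} \circlearrowright v_i)$ and in particular never visits $v_i$ or $v_{j+1}$. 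Since both endpoints already sit in the connected sub-arc $\partial P \cap Q$ and the complementary sub-arc $\partial P \setminus Q$ necessarily contains $v_i$ and $v_{j+1}$, the clockwise traversal cannot escape $\partial P \cap Q$, giving $\zeta \subseteq Q$. The anticipated obstacle is the ``crossed'' endpoint inclusions in the previous paragraph: obtaining $d_{\el_i}(\I_{j,b}) \geq d_{\el_i}(v_{j+1})$ cleanly from the chasing hypotheses, rather than via a case split on which sub-arc of $\partial P$ the edge $e_b$ lands in.
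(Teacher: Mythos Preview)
Your overall plan coincides with the paper's: show that the two endpoints $Z_{i-1}^{j}$ and $Z_i^{j+1}$ of $\zeta(v_i,v_{j+1})$ lie in the opposite quadrant, then use the bi-monotonicity of the $Z$-points together with the fact that $\partial P$ meets an open convex region in a single sub-arc to carry the inclusion to the whole arc. The paper's Step~(i)--(ii) in Case~2 is exactly your ``extending to the full arc'' paragraph.

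Where you diverge is in the endpoint argument. You split $Q=H_1\cap H_2$ and treat the two half-planes separately; the ``aligned'' inclusions (e.g.\ $Z_{i-1}^j\in H_2$) drop out of Fact~\ref{fact:Z_advanced_bound} as you say, but the ``crossed'' ones lead you to the midpoint formula $Z_{i-1}^j=\M(\I_{i-1,b},\I_{j,b})$ and the bound $d_{\el_i}(\I_{j,b})\ge d_{\el_i}(v_{j+1})$, which you correctly flag as the obstacle and which also tacitly assumes $Z_{i-1}^j$ lands in an edge rather than at a vertex. The paper sidesteps this entirely with a one-line nesting argument: since $e_{i-1}\prec e_j$, the apex $\M(v_{i-1},v_{j+1})$ of the opposite quadrant of $\qd_{i-1}^{j}$ lies strictly inside the opposite quadrant of $\qd_i^{j}$, and because the boundary rays point in directions $e_j$ and $-e_{i-1}$ (both of which only increase $d_{\el_i}$ and $d_{\el_j}$ from there), the whole closed opposite quadrant of $\qd_{i-1}^{j}$ sits inside the open opposite quadrant of $\qd_i^{j}$. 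Fact~\ref{fact:Z_advanced_bound} then places $Z_{i-1}^j$ in the smaller quadrant and hence in the larger one, handling $H_1$ and $H_2$ simultaneously. This buys you both strictness (the target quadrant is open, so your $\ge$ inequalities would otherwise need sharpening) and uniformity (no edge/vertex case split for $Z$).

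Two smaller points. First, the paper treats $i=j$ separately: there $\qd_i^j$ is a half-plane and the conclusion follows directly from Fact~\ref{fact:dist-unique-location} without any quadrant comparison; your $H_1,H_2$ description degenerates in that case. Second, your computational route can be pushed through, but completing it cleanly essentially amounts to reproving the quadrant containment coordinate-wise; the paper's formulation is the more economical packaging of the same geometry.
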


\begin{proof}
Case~1: $i=j$. See Figure~\ref{fig:Z-zeta-oppoquad}~(b). By Fact~\ref{fact:dist-unique-location},
$Z_{i-1}^i\in (v_{i+1}\circlearrowright \D_i]$ whereas $Z_i^{i+1}\in [\D_i\circlearrowright v_i)$.
This means $[Z_{i-1}^i\circlearrowright Z_i^{i+1}]\subset (v_{i+1}\circlearrowright v_i)$, i.e.\ $\zeta(v_i,v_{i+1})\subset (v_{i+1}\circlearrowright v_i)$.
Further since $(v_{i+1}\circlearrowright v_i)$ is contained in the opposite quadrant of $\qd_i^i$,
   boundary-portion $\zeta(v_i,v_{i+1})$ is contained in the opposite quadrant of $\qd_i^i$.

\medskip \noindent Case~2: $i\neq j$. See Figure~\ref{fig:Z-zeta-oppoquad}~(c).
Let $\gamma$ be the intersection of $\partial P$ and the opposite quadrant of $\qd_i^j$.
Let $\rho=[v_{j+1}\circlearrowright v_i]$.
We claim: (i) \emph{$Z_{i-1}^j$ and $Z_i^{j+1}$ both lie in $\gamma$}, and (ii) \emph{$Z_{i-1}^j \leq_\rho Z_i^{j+1}$}.

Clearly, $\gamma$ is contained in $\rho$.
So (i) and (ii) together imply that $Z_{i-1}^j \leq_\gamma Z_i^{j+1}$.
This means $[Z_{i-1}^j \circlearrowright Z_i^{j+1}]$, i.e. $\zeta(v_i,v_{j+1})$, lies in $\gamma$ and hence in the opposite quadrant of $\qd_i^j$.

\medskip  Claim (ii) follows from Fact~\ref{fact:Z_bi-monotonicity} -- the bi-monotonicity of the $Z$-points.  We prove claim (i) in the following.
   We only show the proof of point $Z_{i-1}^j$; the proof of the other point $Z_i^{j+1}$ is symmetric and omitted.
    Since $v_i$ is chasing $v_{j+1}$, we get $e_{i-1}\prec e_j$.
    This implies that point $(v_{i-1}+v_{j+1})/2$ lies in the opposite quadrant of $\qd_i^j$.
    Further since this point is the apex of the opposite quadrant of $\qd_{i-1}^{j}$,
       we get (a): the opposite quadrant of $\qd_{i-1}^{j}$ and its boundary are contained in the opposite quadrant of $\qd_i^j$.
    By Fact~\ref{fact:Z_advanced_bound}, we get (b): $Z_{i-1}^j$ lies in or on the boundary of the opposite quadrant of $\qd_{i-1}^j$.
    Together, $Z_{i-1}^j$ lies in the opposite quadrant of $\qd_i^j$ and thus lies in $\gamma$.
\end{proof}

\subsection{\LRF and \LMF (Lemmas~\ref{lemma:LRF} and \ref{lemma:LMF})}

\begin{proof}[Proof of Lemma~\ref{lemma:LRF}]
Take distinct tuples $A=(A_1,A_2,A_3)$ and $B=(B_1,B_2,B_3)$ from $\T(u,u')$.
According to (\ref{def:T(u,u')}), we have: (i) $A_3\in u,A_1\in u'$; (ii) $B_3\in u,B_1\in u'$; and (iii) $A_2,B_2\in \zeta(u,u')$.
We shall prove that $f(A)\neq f(B)$.

\smallskip First, assume $A_2=B_2$ and $(A_1,A_3) \neq (B_1,B_3)$.
    Because $(A_1,A_3) \neq (B_1,B_3)$ and by (i) and (ii), $A_1A_3$ and $B_1B_3$ are distinct line segments connecting $u,u'$.
    Further since all the line segments connecting $u,u'$ have different midpoints, $(A_3+A_1)/2\neq (B_3+B_1)/2$.
    Therefore, $A_3+A_1-A_2\neq B_3+B_1-B_2$, namely, $f(A)\neq f(B)$.

\smallskip Next, consider the case where $A_2\neq B_2$. By (iii), $\zeta(u,u')$ contains both $A_2$ and $B_2$ and hence is not a single point.
      This means at least one of $u,u'$ is a vertex.
    If $u,u'$ are both vertices, $(A_1+A_3)/2=u \oplus u'=(B_1+B_3)/2$,
     and combining this with $A_2\neq B_2$ would result $f(A)\neq f(B)$.
    So, we assume $u,u'$ are an edge and a vertex.
    Without loss of generality, assume $(u,u')=(v_i,e_j)$; the case $(u,u')=(e_i,v_j)$ is symmetric.

    We know $A_2$ and $B_2$ both lie in $\zeta(v_i,e_j)$ according to (iii), whereas
      $\zeta(v_i,e_j)=[Z_{i-1}^j\circlearrowright Z_i^j]\subseteq [v_{j+1}\circlearrowright \D_j]$ according to Fact~\ref{fact:dist-unique-location}.
    Moreover, all the points in $[v_{j+1}\circlearrowright \D_j]$ have different distances to $\el_j$.
    Altogether, $A_2$ and $B_2$ have different distances to $\el_j$.
    However, $(A_1+A_3)/2$ and $(B_1+B_3)/2$ have the same distance to $\el_j$ because they both lie in $v_i\oplus e_j$.
    Together, $f(A)$ and $f(B)$ have different distances to $\el_j$. Therefore, $f(A)\neq f(B)$.
\end{proof}
\smallskip

\begin{proof}[Proof of Lemma~\ref{lemma:LMF}]
Denote $(J_X,K_X,L_X)=f^{-1}_{u,u'}(X)$ for any point $X$ in $\block(u,u')$.
Note that $J_X\in u',K_X\in \zeta(u,u')$ and $L_X\in u$ since $(J_X,K_X,L_X)\in \T(u,u')$.
We shall prove that when $X$ travels (in clockwise) along a boundary-portion $\rho$ that lies in $\block(u,u')$,
 function $K_X$ moves along $\partial P$ in clockwise non-strictly.

The result is trivial for the case where $u,u'$ are both edges. Because in this case $K_X$ is invariant. It always equals $Z_u^{u'}$.
The result on the case where $u,u'$ are both vertex is easy to prove, because in this case $\block(u,u')$ is a curve,
  and we only need to show that when $X$ moves along $\block(u,u')$, function $K_X$ moves along $\zeta(u,u')$ in clockwise, which is obvious.
In the following, we focus on the case where $u,u'$ are a vertex and an edge.

Without loss of generality, assume $(u,u')=(v_i,e_j)$.
    Abbreviate $d_{\el_j}()$ by $d()$. We first state three arguments.
    \begin{enumerate}
    \item[(i)] When point $X$ travels along $\rho$ in clockwise, $d(X)$ (non-strictly) decreases.
    \item[(ii)] For any point $X$ in $\block(v_i,e_j)\cap \partial P$, the sum $d(X)+d(K_X)$ is a constant.
    \item[(iii)] If the position of a dynamic point $Y$ is restricted to $\zeta(v_i,e_j)$,
        and we observe that $d(Y)$ (non-strictly) increases during the movement of $Y$,
          we can conclude that $Y$ moves in clockwise (non-strictly) along $\zeta(v_i,e_j)$.
    \end{enumerate}
    Altogether, we can obtain the monotonicity property of $K_X$ as follows.
            Since $X$ travels along $\rho$,
                $d(X)$ non-strictly decreases due to (i).
            So, $d(K_X)$ non-strictly increases due to (ii).
            Finally, applying (iii) for $Y=K_X$, point $K_X$ travels along $\zeta(v_i,e_j)$ in clockwise non-strictly.
            We prove (i), (ii), and (iii) in the following. \smallskip

    \noindent \emph{Proof of (i):} It suffices to prove two facts:
    (a) When $X$ travels along $[v_i \circlearrowright v_{j+1}]$ in clockwise, $d(X)$ non-strictly decreases.
    (b) When a boundary-portion $\rho$ lies $\block(v_i,e_j)$, it must lie in $[v_i \circlearrowright v_{j+1}]$.
    Because $v_i$ is chasing $e_j$, we get $e_i\prec e_j$, which implies (a).
    The proof of (b) is given in the following.

    See Figure~\ref{fig:block-borders}~(b).
    Let $H$ denote the half-plane delimited by the extended line of $\overline{v_iv_{j+1}}$ and not containing $e_i$.
    By Fact~\ref{fact:dist-unique-location}, $Z_i^j$ and $Z_{i-1}^j$ both lie in $(v_{j+1}\circlearrowright v_i)$.
    Therefore, $\zeta(v_i,e_j)=[Z_i^j\circlearrowright Z_{i-1}^j]$ is contained in $H$.
    Since $\zeta(v_i,e_j)\subset H$ whereas $v_i\oplus e_j$ lies in the opposite half-plane of $H$, applying (\ref{eqn:block_reflect}), $\block(v_i,e_j)$ lies in the opposite half-plane of $H$.
    So, $\block(v_i,e_j)\cap \partial P\subseteq [v_i \circlearrowright v_{j+1}]$.
    Further since $\rho\subseteq \block(v_i,e_j)\cap \partial P$, we get (b).\medskip

    \noindent \emph{Proof of (ii):} Since $f(J_X,K_X,L_X)=X$, we get $(X+K_X)/2=(J_X+L_X)/2$.
    Because $J_X\in u'$ and $L_X\in u$, point $(J_X+L_X)/2\in u\oplus u'=v_i\oplus e_j$.
    Therefore, $(X+K_X)/2\in v_i\oplus e_j$. Hence $d((X+K_X)/2)$ is a constant.
    Since $X,K_X\in \partial P$, they lie on the same side of $\el_j$, so $d(X)+d(K_X)=2d((X+K_X)/2)$. Together, we get (ii).\medskip

    \noindent \emph{Proof of (iii):} By Fact~\ref{fact:dist-unique-location}, $\zeta(v_i,e_j)=[Z_{i-1}^j\circlearrowright Z_i^j]\subseteq [v_{j+1}\circlearrowright \D_j]$, which implies that $d(Y)$ strictly increases when $Y$ travels along $\zeta(v_i,e_j)$. This simply implies (iii).
\end{proof}

\subsection{Peculiarity and monotonicity of the bounding-quadrants (Lemmas~\ref{lemma:br_peculiar} and \ref{lemma:br_monotone})}\label{subsect:br_properties}

\newcommand{\hp}{\mathsf{hp}}
We add one more notation before the next proofs.
For edge pair $(e_i,e_j)$ such that $e_i\preceq e_j$, denote by $\hp_i^j$ the \textbf{open} half-plane delimited
  by the extended line of $\overline{v_iv_{j+1}}$ and lies on the left side of
    $\overrightarrow{v_iv_{j+1}}$. Notice that $\qd_i^j\subseteq \hp_i^j$.

\begin{proof}[Proof of Lemma~\ref{lemma:br_peculiar}]
Assume that $e_a\preceq e_{a'}, e_b\preceq e_{b'}$ and $e_a,e_{a'},e_b,e_{b'}$ are not contained in any inferior portion.
We shall prove that $\qd_a^{a'}\cap \qd_b^{b'}$ lies in the interior of $P$.
We have to analyze several cases.

In the first three cases shown below, $\qd_a^{a'}\cap\qd_b^{b'}$ will be empty and hence it will lie in the interior of $P$.
\begin{itemize}
\item[Case~1] $a=a'$. Since $e_a,e_{a'},e_b,e_{b'}$ are not contained in any inferior portion, $e_a\prec e_b$ and $e_{b'}\prec e_a$.
    See Figure~\ref{fig:bb-peculiar-trivial}~(a).
    Denote by $H$ the half plane delimited by $\pl_a(M)$ and on the right of $e_a$, where $M$ is the apex of $\qd_{b}^{b'}$.
    Since $M=(v_b+v_{b'+1})/2$, this point lies in or on the right of $e_a$.
    Therefore, (1) $H$ is disjoint with $\qd_a^{a'}$.
    Since $e_a\prec e_b$, the part of boundary of $\qd_{b}^{b'}$ that is parallel to $e_b$ lies in $H$.
    Since $e_{b'}\prec e_a$, the part of boundary of $\qd_{b}^{b'}$ that is parallel to $e_{b'}$ also lies in $H$.
    Together, the entire boundary of $\qd_{b}^{b'}$ lies in $H$, and hence (2) $\qd_{b}^{b'}\subseteq H$.
    Combining (1) and (2), the subregion $\qd_b^{b'}$ of $H$ is disjoint with $\qd_a^{a'}$.
\item[Case~2] $b=b'$. This case is symmetric to Case~1.
\item[Case~3] $e_a,e_{a'},e_b,e_{b'}$ are distinct and lie in clockwise order on $\partial P$. See Figure~\ref{fig:bb-peculiar-trivial}~(b).
    Make four rays $r_a,r_{a'},r_b,r_{b'}$, which originate at $v_{a'+1}$, $v_a$, $v_{b'+1}$, $v_b$, respectively,
       and which have their directions the same as $e_a$, $e_{a'}$, $e_b$, $e_{b'}$, respectively.
    Let $\Pi_1$ be the region bounded by $r_{a'},\overline{v_av_{a'+1}},r_a$ and containing $\qd_a^{a'}$.
    Let $\Pi_2$ be the region bounded by $r_{b'},\overline{v_bv_{b'+1}},r_b$ and containing $\qd_b^{b'}$.
    Assume that $\Pi_1,\Pi_2$ do not contain their boundaries.
    Since $e_a,e_{a'},e_b,e_{b'}$ are not containing in any inferior portion, $e_{b'}\prec e_a$ whereas $e_{a'}\prec e_b$.
    This implies that $\Pi_1,\Pi_2$ are disjoint.
    Therefore, $\qd_a^{a'},\qd_b^{b'}$ are disjoint, since they are respectively subregions of $\Pi_1,\Pi_2$.
\end{itemize}

\begin{figure}[h]
\begin{minipage}[b]{0.5\textwidth}
\flushleft \includegraphics[width=.95\textwidth]{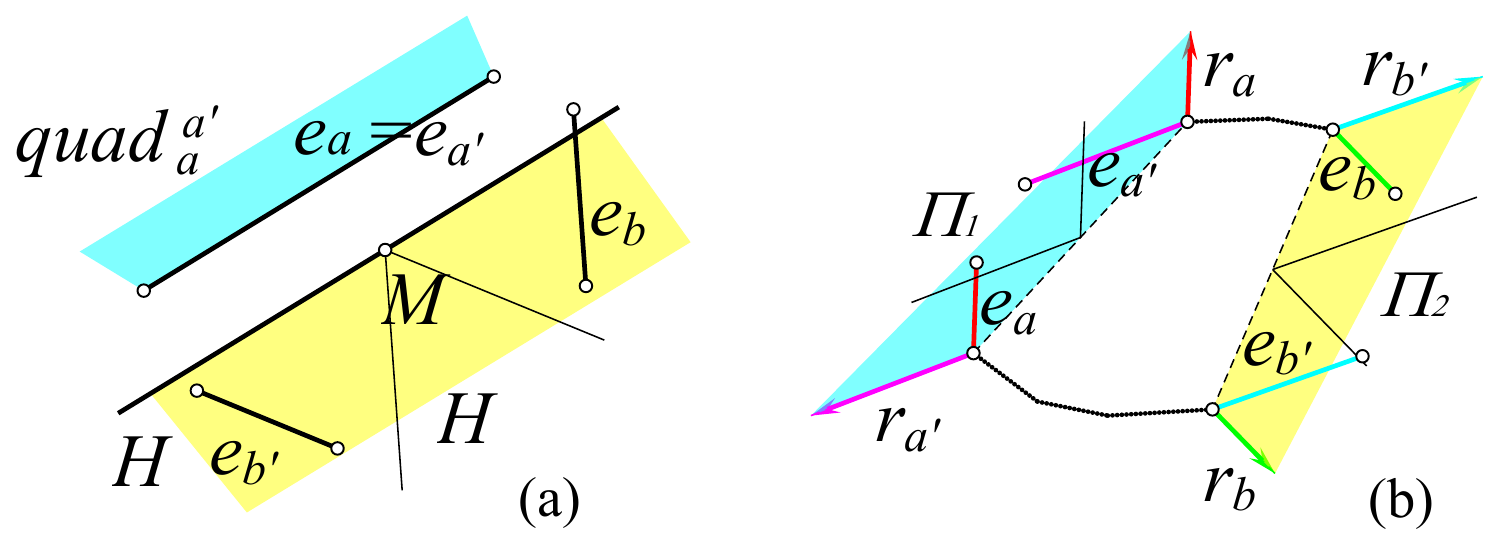}
\caption{Trivial cases in proving the peculiarity.}\label{fig:bb-peculiar-trivial}
\end{minipage}
\begin{minipage}[b]{0.5\textwidth}
\flushright \includegraphics[width=.95\textwidth]{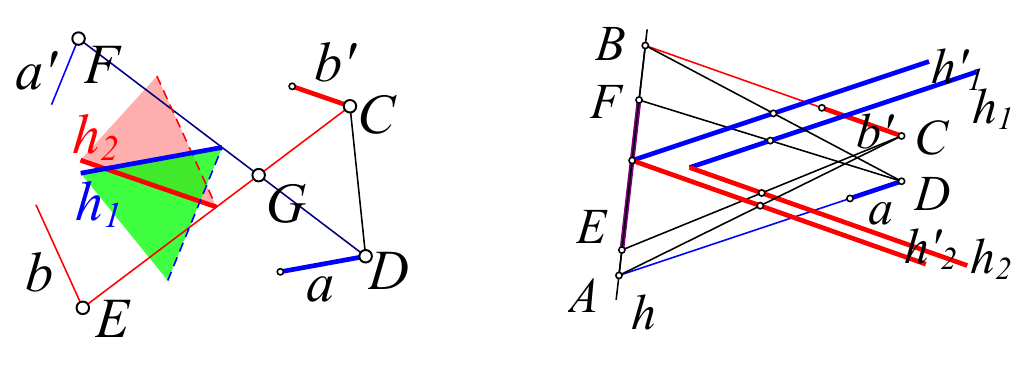}
\caption{Nontrivial cases in proving the peculiarity.}\label{fig:bb-peculiar}
\end{minipage}
\end{figure}

When none of the preceding (trivial) cases occur, only two cases remain:
either $e_a\prec e_b\preceq e_{a'}\prec e_{b'}\prec e_a$, or, $e_b\prec e_a\preceq e_{b'}\prec e_{a'}\prec e_b$.
Assume the first case occurs without loss of generality; the other case is symmetric.

See Figure~\ref{fig:bb-peculiar}. Let $C=v_{b'+1},D=v_a,E=v_b,F=v_{a'+1}$. Let $G$ denote the intersection of $CE$ and $DF$.
Obviously, $\triangle EFG$ lies in $P$.
So, proving that $\qd_a^{a'}\cap \qd_b^{b'}$ lies in the interior of $P$ reduces to proving that it lies in the interior of $\triangle EFG$, which further reduces to proving the following two facts:
\begin{enumerate}
\item[i.] $\qd_a^{a'}\cap \qd_b^{b'}$ lies in both $\hp_a^{a'}$ and $\hp_b^{b'}$. (See the definition of $\hp$ at the beginning of this subsection.)
\item[ii.] $\qd_a^{a'}\cap \qd_b^{b'}$ lies in half-plane $h$, where $h$ denotes the \textbf{open} half-plane bounded by the extended line of $\overline{EF}$ and containing $G$. (So, $h$ is the complementary half-plane of
    $\hp_b^{a'}$.)
\end{enumerate}

We have $\qd_a^{a'}\subseteq \hp_a^{a'}$ and $\qd_b^{b'}\subseteq \hp_b^{b'}$, which imply (i).
One of the two (open) half-planes defining $\qd_a^{a'}$ is parallel to $e_a$, denoted by $h_1$.
One of the two (open) half-planes defining $\qd_b^{b'}$ is parallel to $e_{b'}$, denoted by $h_2$.
See $h_1,h_2$ in the figure.
Clearly, $\qd_a^{a'}\cap \qd_b^{b'}\subseteq h_1\cap h_2$. So proving (ii) reduces to proving that $h_1\cap h_2\subseteq h$.

\smallskip Next, see the right picture of Figure~\ref{fig:bb-peculiar}.
Assume that the extended line of $\overline{EF}$ intersects $\el_a,\el_{b'}$ at $A,B$ respectively.
Denote by $h'_1$ the open half-plane bounded by $\pl_{a}((D+B)/2)$ and containing $e_a$, and
$h'_2$ the open half-plane bounded by $\pl_{b'}((A+C)/2)$ and containing $e_{b'}$.
Because $P$ is convex, points $E,F$ both lie in $\overline{AB}$.
Since $F$ lie in $\overline{AB}$, we know $(D+F)/2$ lies in or on the boundary of $h'_1$,
which implies $h_1\subseteq h'_1$ because $h_1$ is parallel to $h'_1$ and has $(D+F)/2$ on its boundary.
Since $E$ lie in $\overline{AB}$, we know $(C+E)/2$ lies in or on the boundary of $h'_2$,
which implies $h_2\subseteq h'_2$ because $h_2$ is parallel to $h'_2$ and has $(C+E)/2$ on its boundary.
It remains to show that $h'_1\cap h'_2\subseteq h$. By the definition of $h'_1,h'_2$, their boundaries pass through $(A+B)/2$.
Combining this with the facts that $h'_1$ is parallel to $e_a$ and $h'_2$ is parallel to $e_{b'}$, as well as $e_{b'}\prec e_a$, we obtain that $h'_1\cap h'_2\subseteq h$.
\end{proof}

\begin{figure}[h]
  \centering\includegraphics[width=.79\textwidth]{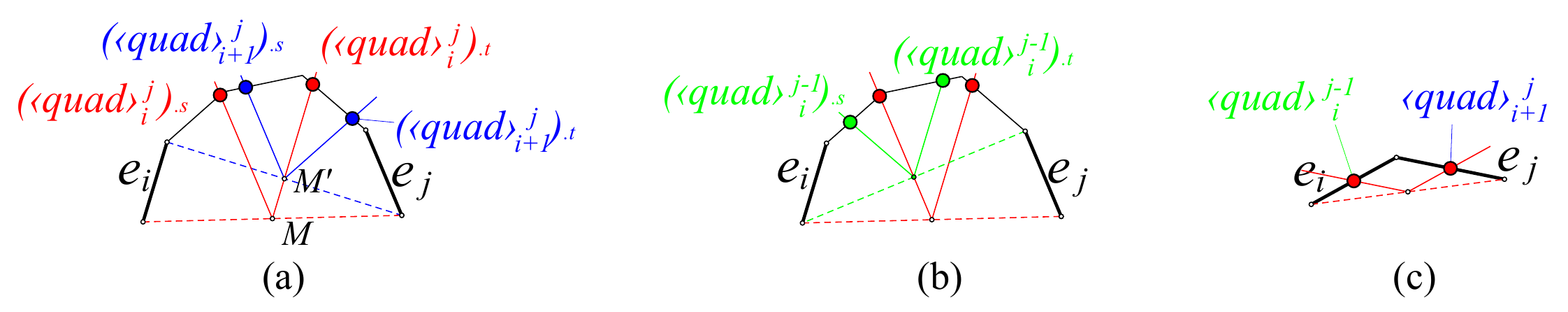}
  \caption{Illustration of the proof of the monotonicity of $\bp$.}\label{fig:bb-monotone}
\end{figure}

\begin{proof}[Proof of Lemma~\ref{lemma:br_monotone}]
(Recall this lemma contains two parts.) First, for $e_i\prec e_j$ and $\rho=[v_i\circlearrowright v_{j+1}]$, we shall prove:
\begin{eqnarray}
(\bp_i^{j-1}).s \leq_\rho (\bp_i^j).s \leq_\rho (\bp_{i+1}^j).s,\label{eqn:quad_s_monotone}\\
(\bp_i^{j-1}).t \leq_\rho (\bp_i^j).t \leq_\rho (\bp_{i+1}^j).t.\label{eqn:quad_t_monotone}
\end{eqnarray}

When $j=i+1$, the following (trivial) facts together imply (\ref{eqn:quad_s_monotone}) and (\ref{eqn:quad_t_monotone}). See Figure~\ref{fig:bb-monotone}~(c).

(i) $\bp_i^{j-1}$ contains a single point, which is the midpoint of $e_i$.

(ii) $\bp_{i+1}^j$ contains a single point, which is the midpoint of $e_j$.

(iii) $\bp_i^j$ starts at the midpoint of $e_i$ and terminates at the midpoint of $e_j$.

\medskip Now, assume $j\neq i+1$. See Figure~\ref{fig:bb-monotone}~(a). Let $M=(v_i+v_{j+1})/2$ and $M'=(v_{i+1}+v_{j+1})/2$.

\emph{Compare $(\bp_i^j).s$ and $(\bp_{i+1}^j).s$.}
Clearly, their distance to $\el_j$ are respectively equal to the distance from $M,M'$ to that line.
Moreover, since $MM'$ is parallel to $e_i$ whereas $e_i\prec e_j$, point $M$ is further to $\el_j$ than $M'$.
Together, $(\bp_i^j).s$ is further to $\el_j$ than $(\bp_{i+1}^j).s$.
This means $(\bp_i^j).s \leq_\rho (\bp_{i+1}^j).s$.

\emph{Compare $(\bp_i^j).t $ and $(\bp_{i+1}^j).t$.} Because segments
$M'(\bp_i^j).t$ and $M'(\bp_{i+1}^j).t$ are parallel to $e_i,e_{i+1}$ respectively,
  whereas $e_i\prec e_{i+1}$, it follows that $(\bp_i^j).t\leq_\rho (\bp_{i+1}^j).t$.

\smallskip Symmetrically, $(\bp_i^{j-1}).s\leq_\rho (\bp_i^j).s$ and $(\bp_i^{j-1}).t\leq_\rho (\bp_i^j).t$. See Figure~\ref{fig:bb-monotone}~(b).

\medskip Next, consider the second part of the lemma. For a list of $m$ boundary-portions $\bp_{u_1}^{u'_1},\ldots,\bp_{u_m}^{u'_m}$ where
(1) units $u_1,\ldots,u_m$ lie in clockwise order,
(2) units $u'_1,\ldots,u'_m$ lie in clockwise order, and
(3) $u_k$ is chasing $u'_k$ for $1\leq k\leq m$,
  we shall prove that their starting points lie in clockwise order and so do their terminal points.

\smallskip Denote $a_k=forw(u_k)$ and $a'_k=back(u'_k)$ for $1\leq k\leq m$.
According to the assumptions (1), (2), and (3), we have:
  (i) the edges $a_1,\ldots,a_m$ lie in clockwise order,
        and (ii) the edges ${a'}_1,\ldots,{a'}_m$ lie in clockwise order,
        and (iii) $a_k\preceq {a'}_k$ for $1\leq k\leq m$.
According to these facts and by applying inequalities (\ref{eqn:quad_s_monotone}) and (\ref{eqn:quad_t_monotone}), we get
\begin{itemize}
\item the starting points of $\bp_{a_1}^{a'_1},\ldots,\bp_{a_m}^{a'_m}$ lie in clockwise order around $\partial P$, and
\item the terminal points of $\bp_{a_1}^{a'_1},\ldots,\bp_{a_m}^{a'_m}$ lie in clockwise order around $\partial P$.
\end{itemize}
We complete the proof of the second part by recalling (\ref{eqn:def_bb}), which defines $\bp_{u_k}^{u'_k}:=\bp_{a_k}^{a'_k}$.
\end{proof}

\subsection{Two relations between blocks and bounding-quadrants (Lemmas~\ref{lemma:block-in-quad} and \ref{lemma:border-monotone})}\label{subsect:br-relation-block}

\begin{proof}[Proof of Lemma~\ref{lemma:block-in-quad}]
We shall prove $\block(u,u')\subset \qd_u^{u'}$. Recall $\qd_u^{u'}:=\qd_{forw(u)}^{back(u')}$. So, it reduces to proving
\begin{align*}
\block(e_i,e_j)&\subset \qd_i^j, & \block(v_i,v_{j+1})&\subset \qd_i^j, & \block(e_i,v_{j+1})&\subset \qd_i^j, & \block(v_i,e_j)&\subset \qd_i^j.
\end{align*}

\begin{itemize}
\item Proof of $\block(e_i,e_j)\subset \qd_i^j$. See Figure~\ref{fig:singleblock_Bregion}~(a).
    By Fact~\ref{fact:Z_advanced_bound}, $Z_i^j$ lies in or on the boundary of the opposite quadrant of $\qd_i^j$.
    By the definition of $\qd_i^j$, clearly $e_i\oplus e_j\subset \qd_i^j$.
    Together, the $2$-scaling of $e_i\oplus e_j$ with respect to $Z_i^j$, which equals $\block(e_i,e_j)$ due to (\ref{eqn:block_scale}), is contained in $\qd_i^j$.

\item Proof of $\block(v_i,v_{j+1})\subset \qd_i^j$. See Figure~\ref{fig:singleblock_Bregion}~(b).
    By Fact~\ref{fact:zeta-quad}, $\zeta(v_i,v_{j+1})$ lies in the opposite quadrant of $\qd_i^j$.
    So, its reflection with respect to $(v_i+v_{j+1})/2$, which equals $\block(v_i,v_{j+1})$ due to (\ref{eqn:block_reflect}), is in $\qd_i^j$.

\begin{figure}[h]
\centering \includegraphics[width=.75\textwidth]{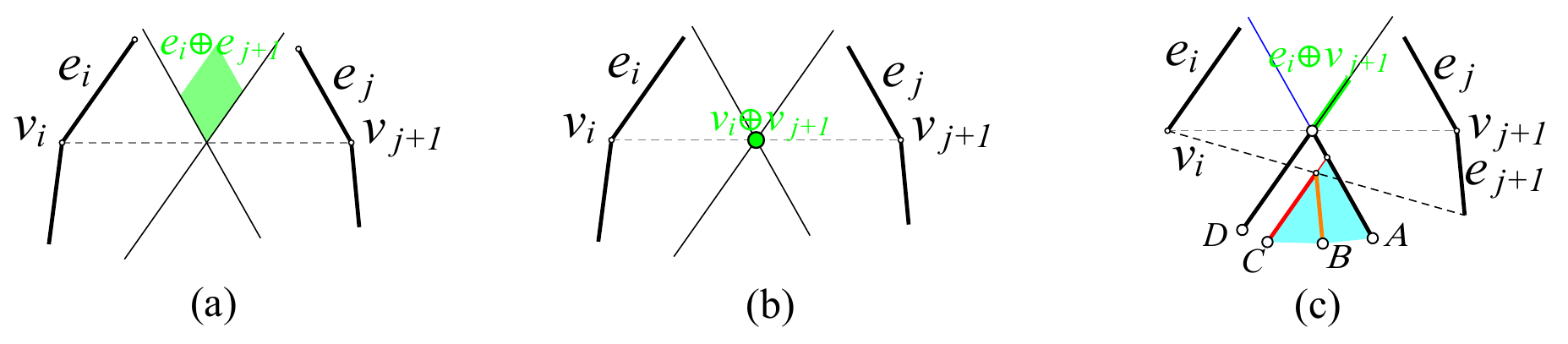}
\caption{Illustration of the proof of Lemma~\ref{lemma:block-in-quad}}\label{fig:singleblock_Bregion}
\end{figure}

\item Proof of $\block(e_i,v_{j+1})\subset \qd_i^j$. See Figure~\ref{fig:singleblock_Bregion}~(c).
   Denote by $H_1$ the closed half-plane delimited by line $\pl_{j}((v_i+v_{j+1})/2)$ and not containing $e_j$,
   and $H_2$ the closed half-plane delimited by line $\pl_{i}((v_i+v_{j+2})/2)$ and not containing $e_i$.
  The colored region in Figure~\ref{fig:singleblock_Bregion}~(c) shows $H_1\cap H_2$.
   We claim (i) $\zeta(e_i,v_{j+1})\subset H_1\cap H_2$.
    As a corollary of (i), the reflection of $\zeta(e_i,v_{j+1})$ with respect to any point in $e_i\oplus v_{j+1}$ lies in $\qd_i^j$.
       By (\ref{eqn:block_reflect}), $\block(e_i,v_{j+1})$ is the union of such reflections of $\zeta(e_i,v_{j+1})$.
        Together, $\block(e_i,v_{j+1})\subset \qd_i^j$.

   We prove (i) in the following. Notice that the intersection between $\partial P$ and the opposite quadrant of $\qd_i^j$ is a boundary-portion,
   denoted by $(A\circlearrowright D)$.
   Similarly, the intersection between $\partial P$ and the opposite quadrant of $\qd_i^{j+1}$ is a boundary-portion,
   denoted by $(B\circlearrowright C)$. We point out three facts:

   \quad (a) $[B\circlearrowright C]\subset [A\circlearrowright D]$.
   (b) $Z_i^j\in [A\circlearrowright D]$ and $Z_i^{j+1}\in [B\circlearrowright C]$.
   (c) $Z_i^j\leq_\gamma Z_i^{j+1}$, where $\gamma = [A\circlearrowright D]$.

    Fact~(a) follows from the fact that $e_i\prec e_{j+1}$, which follows from the assumption that $e_i$ is chasing $v_{j+1}$.
    Fact~(b) is an application of Fact~\ref{fact:Z_advanced_bound}.
   Fact~(c) is an application of the bi-monotonicity of $Z$-points (Fact~\ref{fact:Z_bi-monotonicity}).
   Combining facts (a), (b), and (c) would result $[Z_i^j\circlearrowright Z_i^{j+1}]\subseteq [A\circlearrowright C]$, which implies (i).
\end{itemize}
The proof of the last formula $\block(v_i,e_j)\subset \qd_i^j$ is symmetric to the preceding proof and is omitted.
\end{proof}
\smallskip
We introduce one more fact about the $Z$-points before proving Lemma~\ref{lemma:border-monotone}.

\begin{fact}\label{fact:Z-ray-fb}
Consider any edge pair $(e_a,e_{a'})$ for which $e_a\prec e_{a'}$. Let $e_b=back(Z_a^{a'})$ and $e_f=forw(Z_a^{a'})$.
\begin{enumerate}
\item The ray $r$ which originates from $v_a$ and has direction opposite to $e_{a'}$
        (assume it does not include its originate) lies on the right of $e_k$,
            for every $e_k$ in $\{e_{a'},\ldots,e_b\}$. See Figure~\ref{fig:singleblock_borderM}~(a).
\item The ray $r$ which originates from $v_{a'+1}$ and has direction the same as $e_a$
        (assume it does not include its originate) lies on the right of $e_k$,
            for every $e_k$ in $\{e_f,\ldots,e_a\}$. See Figure~\ref{fig:singleblock_borderM}~(b).
\end{enumerate}
\end{fact}

These two claims are obviously symmetric; so we only show the proof of the first one in the following.

\begin{proof}
By Fact~\ref{fact:dist-unique-location}, $Z_a^{a'}\in (v_{a'+1}\circlearrowright v_a)$.
This implies that $e_b\neq e_{a'}$.
Moreover, we claim that $e_b$ cannot be chasing $e_{a'}$.
Suppose to the opposite that $e_b\prec e_{a'}$.
Then, $d_{a'}(v_b)>d_{a'}(Z_{a}^{a'})$ and $d_{a}(v_b)>d_{a}(Z_{a}^{a'})$,
  which means $v_b$ has a larger distance-product to lines $(\el_a,\el_{a'})$ than $Z_{a}^{a'}$,
  contradicting the definition of $Z_{a}^{a'}$.
Therefore, $e_{a'}\prec e_b$.

Consider any $e_k$ in $e_{a'+1},\ldots,e_b$.
As $e_{a'}\prec e_b$, it holds that (i) $e_{a'}\prec e_k$.
Since $P$ lies in the closed half-plane delimited by $\el_k$ and lying on the right of $e_k$,
  we get (ii) $v_a$ (namely, the originate of $r$) lies on the right of $e_k$ or lies in $\el_k$.
  Also recall that (iii) the direction of $e_{a'}$ is opposite to that of $r$.
  Altogether, ray $r$ lies on the right of $e_k$.

It is clear that $r$ lies on the right of $e_{a'}$ as well. So the result holds for $e_k\in \{e_{a'},\ldots,e_b\}$.
\end{proof}

\begin{figure}[h]
\centering \includegraphics[width=\textwidth]{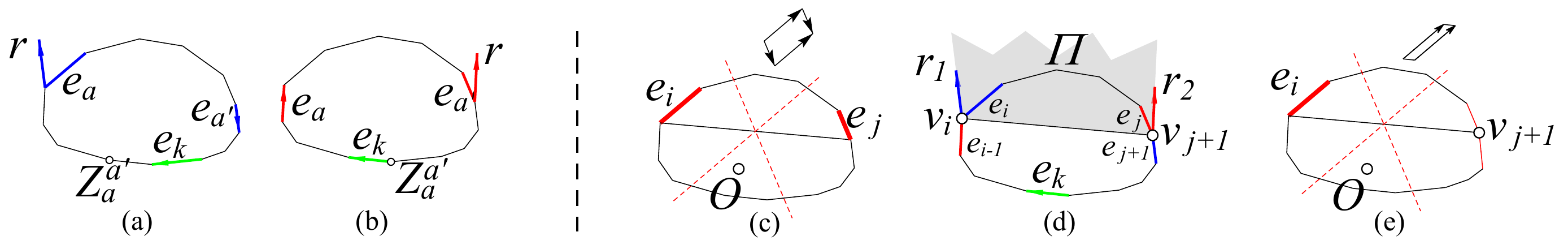}
\caption{Illustration of the proof of Lemma~\ref{lemma:border-monotone}}\label{fig:singleblock_borderM}
\end{figure}

\begin{proof}[Proof of Lemma~\ref{lemma:border-monotone}.]
Fix any point $O$ in $P$ that lies in the opposite quadrant of $\qd_u^{u'}$.
We shall prove:
(i) \emph{when a point $X$ travels along any border of $\block(u,u')$, it moves (strictly) in clockwise around $O$.}
\begin{itemize}
\item Case~1: \emph{both $u,u'$ are edges}.
        By Lemma~\ref{lemma:block-in-quad}, $\block(u,u')\subset\qd_u^{u'}$.
        As a corollary, the opposite quadrant of $\qd_u^{u'}$, including $O$, are on the right of each border of $\block(u,u')$ (see Figure~\ref{fig:singleblock_borderM}~(c)), which implies (i).

\item Case~2: \emph{both $u,u'$ are vertices, e.g., $(u,u')=(v_i,v_{j+1})$}.
    Recall that the unique border of $\block(u,u')$ equals the reflection of $\zeta(u,u')$ with respect to $(v_i+v_{j+1})/2$.
    Let $O'$ be the reflection of $O$ with respect to $(v_i+v_{j+1})/2$.
    It reduces to proving the mirror consequence as follows:

    \quad (i') \emph{when a point $X$ travels along $\zeta(u,u')$, it moves (strictly) in clockwise around $O'$}.

See Figure~\ref{fig:singleblock_borderM}~(d). It further reduces to proving that
        $O'$ lies on the right of any $e_k$ such that $e_k\cap \zeta(u,u')\neq \varnothing$.
        Let $r_1$ be the ray at $v_i$ which has opposite direction to $e_{j+1}$.
        Let $r_2$ be the ray at $v_{j+1}$ which has the same direction as $e_{i-1}$.
        Let $\Pi$ denote the region on the right of $r_1$, the left of $r_2$, and the left of $\overrightarrow{v_iv_{j+1}}$.

        Consider any $e_k$ that intersects $\zeta(u,u')$.
        Since $\zeta(u,u')$ starts at $Z_{i-1}^j$, edge $e_k$ is in $\{forw(Z_{i-1}^j),\ldots, e_{i-1}\}$.
        Applying Fact~\ref{fact:Z-ray-fb} (part 2) at $(a,a')=(i-1,j)$, ray $r_2$ is on the right of $e_k$.
        Since $\zeta(u,u')$ terminates at $Z_i^{j+1}$, edge $e_k$ is in $\{e_{j+1},\ldots, back(Z_i^{j+1})\}$.
        Applying Fact~\ref{fact:Z-ray-fb} (part 1) at $(a,a')=(i,j+1)$, ray $r_1$ is on the right of $e_k$.
        Together, $\Pi$ is on the right of $e_k$. (To be clear, assume $\pi$ contains $r_1,r_2$ but not segment $v_iv_{j+1}$.)

        Because $O$ lies in the opposite quadrant of $\qd_u^{u'}$,
          it lies on the right of $\overrightarrow{v_iv_{j+1}}$.
        Because $O$ lies in $P$, it lies on the right of $e_{i-1}$ (or on $\el_{i-1}$) and on the right of $e_{j+1}$ (or on $\el_{j+1}$).
        As a corollary, $O'\in \Pi$.

        Combining the above two results, $O'$ is on the right of $e_k$, thus we complete the proof for this case.

\item Case~3: $u,u'$ are an edge and a vertex, e.g.\ $u=e_i,u'=v_{j+1}$.
Recall that in this case $\block(u,u')$ has four borders; two of which are congruent to the only edge in $u,u'$;
  whereas the other two are reflections of $\zeta(u,u')$. See Figure~\ref{fig:singleblock_borderM}~(e).
    Applying the technique developed in Case~1, we can prove the part of result (i) that is associated with the former two borders.
    Applying the technique developed in Case~2, we can prove the part of result (i) that is associated with the latter two borders.
    (For proving the second part, a useful trick is that we only need to show $O$ lies on the right of each oriented segment of the lower border.
       This simply implies that $O$ lies on the right of each oriented segment of the other border reflected from $\zeta(u,u')$.)
    Since the proof is almost the same as the proofs for the above two cases, we do not give them in detail.
\end{itemize}
\end{proof}

\subsection{The lemma which shows that $\mathcal{C}$ interleaves $\partial P$ under some conditions (Lemma~\ref{lemma:interleave})}

\begin{proof}[Proof of Lemma~\ref{lemma:interleave}]
We refer to $\beta_1,\alpha_1,\ldots,\beta_q,\alpha_q$ as the 1st, 2nd, etc., the $2q$-th \emph{fragment}; $q\geq 3$. We shall prove that the concatenation of these $2q$ fragments, i.e. curve $\mathcal{C}$, interleaves $\partial P$, when the following conditions hold:\smallskip

\emph{(a) The concatenation of $\alpha_{i-1},\beta_i,\alpha_i$ interleaves $\partial P$ (for $1\leq i\leq q$).\quad
(b) There are $S_1,T_1,\ldots,S_q,T_q$ lying in clockwise order around $\partial P$ such that
    $\beta_i\cap \partial P\subset [S_i \circlearrowright T_i]$ whereas $\alpha_i\cap \partial P\subset [S_i \circlearrowright T_{i+1}]$ (for $1\leq i\leq q$).}\smallskip

We assume that at least one fragment in $\alpha_1,\ldots,\alpha_q$ intersects $\partial P$. The case where none of them intersects $\partial P$ is much easier (almost trivial) and can be proved similarly. Without loss of generality, assume that $\alpha_q$ intersects $\partial P$.

\smallskip Let $\mathcal{D}$ denote the concatenation of the first $2q-1$ fragments. Since $\mathcal{C}$ is the concatenation of $\alpha_q$ and $\mathcal{D}$, to show that it interleaves $\partial P$ reduces to proving three facts:\smallskip

(i) \emph{$\alpha_q$ interleaves $\partial P$}.\quad
(ii) \emph{$\mathcal{D}$ interleaves $\partial P$}. \quad
(iii) \emph{We can find two points $A,B$ on $\partial P$ such that the points in $\alpha_q\cap \partial P$ are restricted to $[A\circlearrowright B]$
                whereas the points in $\mathcal{D}\cap \partial P$ are restricted to $[B\circlearrowright A]$. }\smallskip

(Note that in (iii), we abuse the notation a little bit so that $[B\circlearrowright A]$ means the entire $\partial P$ when $A=B$ .)

\bigskip \noindent \emph{Proof of (i)}: This one simply follows from condition~(a).

\medskip \noindent \emph{Proof of (ii)}:
We need some notations. Regard $S_1$ as the starting point of $\partial P$.
For two points $A,A'$ on $\partial P$, we say that $A$ lies \emph{behind} $A'$ if $A=A'$ or, $A$ will be encountered later than $A'$ traveling around $\partial P$ starting from $S_1$.
We say that fragment $\gamma$ lies \emph{behind} fragment $\gamma'$, if all of the points in $\gamma\cap \partial P$ lie behind all of the points in $\gamma'\cap \partial P$.

According to condition (a), each fragment interleaves $\partial P$. Therefore, proving (ii) reduces to proving that
\[\text{the $k$-th fragment lies behind the first $k-1$ fragments, for $1<k<2q$.}\]
We prove this observation in the following.
\begin{itemize}
\item[Case~1:] $k=2$.
    Applying conditions (a) and (b), the concatenation of $\beta_1,\alpha_1$ interleaves $\partial P$
      and has all its intersections with $\partial P$ restricted to $[S_1\circlearrowright T_2]$.
    This means $\alpha_1$ (i.e.\ the 2nd fragment) lies behind $\beta_1$ (i.e.\ the 1st fragment).
\item[Case~2:] $k>2$ and $k$ is odd. Assume the $k$-th fragment is $\beta_i$.
    By condition (b), the first $k-2$ fragments have all their intersections with $\partial P$ restricted to $[S_1\circlearrowright T_{i-1}]$.
       However, $\beta_i\cap \partial P\subset [S_i \circlearrowright T_i]$. So, the $k$-th fragment $\beta_i$ lies behind the first $k-2$ fragments.
    Applying the technique for proving Case~1, fragment $\beta_i$ lies behind the $(k-1)$-th fragment $\alpha_{i-1}$.
    Together, the $k$-th fragment lies behind all the first $k-1$ fragments.
\item[Case~3:] $k>2$ and $k$ is even. Assume the $k$-th fragment is $\alpha_i$.
    Similar to Case~1, $\alpha_i$ lies behind the $(k-1)$-th and $(k-2)$-th fragments $\beta_i,\alpha_{i-1}$.
    Following the ideas given in Case~2, $\alpha_i$ lies behind the first $k-3$ fragments.
    (To be more clear, $\alpha_i\cap \partial P\subset [S_i\circlearrowright T_{i+1}]$ whereas the first $k-3$ fragments
       having all their intersections with $\partial P$ restricted to $[S_1\circlearrowright T_{i-1}]$.)
    Together, the $k$-th fragment lies behind all the first $k-1$ fragments.
\end{itemize}

\noindent \emph{Proof of (iii)}: We simply choose the first and last points of $\alpha_q\cap \partial P$ to be points $A,B$.
  Recall that $\alpha_q\cap \partial P\neq \varnothing$; so $A,B$ are well defined.
  Assume that $A\neq B$ in the following, otherwise fact~(iii) is trivial.

  By the definition of $A$ and $B$, points $\alpha_q\cap \partial P$ are contained in $[A\circlearrowright B]$.
  So, we only need to prove that \emph{$\mathcal{D}\cap \partial P \subset [B\circlearrowright A]$},
    namely, each fragment beside $\alpha_q$ has all its intersections with $\partial P$ restricted to $[B\circlearrowright A]$.

  \smallskip  By condition~(a), the concatenation of $\alpha_q,\beta_1,\alpha_1$, or $\alpha_{q-1},\beta_q,\alpha_q$ interleaves $\partial P$.
  So, for the four fragments $\alpha_1,\beta_1,\alpha_{q-1},\beta_q$, their intersections with $\partial P$ do not lie in $(A\circlearrowright B)$,
    and hence can only lie in $[B\circlearrowright A]$.

  \smallskip Next, consider any fragment $\gamma$ other than $\alpha_q,\beta_1,\alpha_1,\alpha_{q-1},\beta_q$. Applying condition~(b),
        the points in $\gamma\cap \partial P$ lie in $[S_2\circlearrowright T_{q-1}]$. Therefore, it reduces to proving that $[S_2\circlearrowright T_{q-1}]\subseteq [B\circlearrowright A]$. The proof is as follows.

  \medskip Applying condition~(b), $\alpha_q\cap \partial P$ are contained in $[S_q\circlearrowright T_1]$, and so $[A\circlearrowright B]\subseteq [S_q\circlearrowright T_1]$.

  Since $S_1,T_1,\ldots,S_q,T_q$ lie in clockwise order around $\partial P$, $[S_q\circlearrowright T_1]\subseteq [T_{q-1}\circlearrowright S_2]$.

 Together, $[A\circlearrowright B]\subseteq [T_{q-1}\circlearrowright S_2]$.  Equivalently, $[S_2\circlearrowright T_{q-1}]\subseteq [B\circlearrowright A]$.
\end{proof}

\section{Proofs of \BD and \IP}\label{sect:fT-major}

This section proves \BD and \IP (recall a sketch in section~\ref{sect:techover}).

\subsection{Preliminary: some observations}\label{subsect:extremal}

Recall that $(e_c,e_{c'})$ is \emph{extremal} if $e_c\prec e_{c'}$ and
   $[v_c\circlearrowright v_{c'+1}]$ is not contained in any other inferior portions,
   and recall that $\Delta(c,c')$ is defined as $\left\{(u,u')\mid \text{unit $u$ is chasing $u'$, and } forw(u),back(u')\in \{e_c,e_{c+1},\ldots,e_{c'}\} \right\}$.

\begin{fact}\label{fact:extremalpairs}
There exist at least three extremal pairs.
\end{fact}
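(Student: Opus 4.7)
My plan is to translate extremality into a condition on the outward-normal angles on $S^1$, then derive the lower bound by a short contradiction argument.

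Let $\alpha_c \in S^1$ denote the outward normal of $e_c$, oriented so that $\alpha_1 > \alpha_2 > \cdots > \alpha_n$ lie in clockwise cyclic order with exterior angles $\beta_c := \alpha_c - \alpha_{c+1}$ all in $(0, \pi)$ (by convexity) and summing to $2\pi$. A direct check shows $e_i \prec e_j$ iff $\alpha_i - \alpha_j \in (0, \pi)$ modulo $2\pi$. Unpacking the definition of extremality, $(e_c, e_{c'})$ is extremal iff the antipode $\alpha_c - \pi$ lies in the open normal-gap $G_{c'} := (\alpha_{c'+1}, \alpha_{c'})$ with $c$ being the (unique) index such that $\alpha_c$ sits just below $\alpha_{c'} + \pi$. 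In particular, extremal pairs are in bijection with the gaps $G_b$ that contain at least one of the $n$ antipodes $\alpha_1 - \pi, \ldots, \alpha_n - \pi$.

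Now suppose for contradiction that at most two gaps contain antipodes. If only one gap $G_{b^*}$ does, then the antipode $\alpha_{b^*} - \pi$ itself lies in $G_{b^*}$, whose left-endpoint condition $\alpha_{b^*+1} < \alpha_{b^*} - \pi$ yields $\beta_{b^*} > \pi$, contradicting convexity. If exactly two gaps $G_{b_1}, G_{b_2}$ contain antipodes, I cyclically relabel indices so that $b_1 = 1$ and $b_2 = k$ with $2 \leq k \leq n$ (so $\alpha_k \leq \alpha_2$). The antipode $\alpha_k - \pi$ must lie in $G_1 \cup G_k$: it cannot lie in $G_k$ by the same $\beta_k > \pi$ obstruction, and it cannot lie in $G_1 = (\alpha_2, \alpha_1)$ either, because the left-endpoint condition $\alpha_2 < \alpha_k - \pi$ combined with $\alpha_k \leq \alpha_2$ forces the impossible $\alpha_2 < \alpha_2 - \pi$. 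So at least three gaps contain antipodes, giving at least three extremal pairs.

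The main obstacle is the initial angular translation, especially verifying that every non-empty gap $G_b$ actually corresponds to an \emph{extremal} pair (not merely to some inferior portion). Once this is in place, the two contradictions are each one-line consequences of the convexity bound $\beta_c < \pi$.
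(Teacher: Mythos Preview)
Your approach is genuinely different from the paper's. The paper is constructive: it takes one extremal pair $(e_i,e_j)$, picks any edge $e_k$ outside the inferior portion $[v_i\circlearrowright v_{j+1}]$, observes that $e_j\prec e_k$ and $e_k\prec e_i$, and then extends each of $(e_k,e_i)$ and $(e_j,e_k)$ to an extremal pair; since the three resulting maximal inferior portions each omit one of $e_i,e_j,e_k$, they are pairwise distinct. Your angular reformulation---identifying extremal pairs with the normal-gaps $G_b$ that catch at least one antipode---is more structural, and the one-gap case is dispatched cleanly by the convexity bound $\beta_c<\pi$.

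That said, your two-gap case has a real gap. When you deduce ``the left-endpoint condition $\alpha_2<\alpha_k-\pi$'' from $\alpha_k-\pi\in G_1$, you are silently treating arc-membership on $S^1$ as a real-number inequality. This is only valid if the real number $\alpha_k-\pi$ (rather than $\alpha_k+\pi$) is the representative of the antipode lying in your fundamental domain, which amounts to $\alpha_1-\alpha_k<\pi$. Your relabeling does not ensure this: the clockwise arc $\beta_1+\cdots+\beta_{k-1}$ from $\alpha_1$ to $\alpha_k$ can exceed $\pi$, in which case the correct lift is $\alpha_k+\pi$ and the chain $\alpha_2<\alpha_k+\pi$, $\alpha_k\le\alpha_2$ yields only the harmless $\alpha_2<\alpha_2+\pi$. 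The repair is short: run the symmetric argument on $\alpha_1-\pi$ first. It cannot lie in $G_1$ (else $\beta_1>\pi$), so it lies in $G_k$; for $k<n$ this forces $\alpha_1-\pi<\alpha_k$, i.e.\ $\alpha_1-\alpha_k<\pi$, and then your inequality for $\alpha_k-\pi$ is legitimate (while $k=n$ gives $\beta_n>\pi$ outright). Without this step, the argument as written is incomplete.
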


\begin{proof}
Obviously there must be at least one extremal pair.
  This can be made stronger as follows. For any edge pair $(e_i,e_j)$ such that $e_i\prec e_j$,
    there is an extremal pair $(e_{i'},e_{j'})$ such that $[v_{i'}\circlearrowright v_{j'+1}]$ contains $e_i$ and $e_j$.

Assume $(e_i,e_j)$ is extremal.
  Choose any edge $e_k$ that does not lie in $[v_i\circlearrowright v_{j+1}]$.
  Obviously, \[\text{$e_i\prec e_j$, $e_j\prec e_k$ and $e_k\prec e_i$}.\]

Starting from $(e_k,e_i)$, we find extremal pair $(e_a,e_b)$ so that $[v_{a}\circlearrowright v_{b+1}]$ contains $e_k,e_i$.
  Notice that $[v_{a}\circlearrowright v_{b+1}]$ is inferior and thus cannot contain $e_j$.
Starting from $(e_j,e_k)$, we find extremal pair $(e_c,e_d)$ so that $[v_{c}\circlearrowright v_{d+1}]$ contains $e_j,e_k$.
  Notice that $[v_{c}\circlearrowright v_{d+1}]$ is inferior and thus cannot contain $e_i$.
Thus we get three extremal pairs.
\end{proof}

For any set $S$ of unit pairs, denote $BLOCK[S]=\{\block(u,u')\mid (u,u')\in S\}$.

\begin{lemma}\label{lemma:RayLemma}
Assume $(e_c,e_{c'})$ is extremal. Let $O=(v_c+v_{c'+1})/2$. Consider any $\block(u,u')\in BLOCK[\Delta(c,c')]$.
\begin{enumerate}
\item Region $\block(u,u')$ does not intersects the opposite quadrant of $\qd_c^{c'}$.
\item When point $X$ travels along any border of $\block(u,u')$, it moves in clockwise around $O$.
\end{enumerate}
\end{lemma}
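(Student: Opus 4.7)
Fix $(u,u')\in\Delta(c,c')$ and write $a=forw(u)$, $a'=back(u')$, so that $e_c\preceq e_a\preceq e_{a'}\preceq e_{c'}$. Let $O_a=\M(v_a,v_{a'+1})$ denote the apex of $\qd_a^{a'}$, and recall $O=\M(v_c,v_{c'+1})$ is the apex of $\qd_c^{c'}$. By Lemma~\ref{lemma:block-in-quad}, $\block(u,u')\subseteq\qd_u^{u'}=\qd_a^{a'}$. It therefore suffices to prove two things: (1$'$) that $\qd_a^{a'}$ is disjoint from the opposite quadrant of $\qd_c^{c'}$, and (2$'$) that $O\in P$ and $O$ lies in the opposite quadrant of $\qd_a^{a'}$. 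Claim~(1$'$) directly gives Part~1, while claim~(2$'$) together with Lemma~\ref{lemma:border-monotone} gives Part~2.

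For (1$'$) I plan to prove the stronger containment $\qd_a^{a'}\subseteq\overline{\qd_c^{c'}}$; since the opposite quadrant of $\qd_c^{c'}$ is open, this yields disjointness. I first locate $O_a$ using the decomposition
\[
O_a-O=\tfrac{1}{2}\Bigl(\sum_{i=c}^{a-1}\vec{e_i}\;-\;\sum_{i=a'+1}^{c'}\vec{e_i}\Bigr),
\]
and argue that each $\vec{e_i}$ with $c\le i\le a-1$, and each $-\vec{e_i}$ with $a'+1\le i\le c'$, lies in the closed cone spanned by $\vec{e_c}$ and $-\vec{e_{c'}}$ (the cone that, placed at $O$, defines $\overline{\qd_c^{c'}}$). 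This follows from the clockwise rotational ordering of edge vectors in a convex polygon, together with the fact that the total clockwise turning from $\vec{e_c}$ to $\vec{e_{c'}}$ is strictly less than $\pi$ --- a consequence of the extremality of $(e_c,e_{c'})$. Consequently $O_a\in\overline{\qd_c^{c'}}$. Moreover the two bounding rays of $\qd_a^{a'}$ leave $O_a$ in directions $\vec{e_a}$ and $-\vec{e_{a'}}$, which lie in the same closed cone by the same monotonicity, so both rays stay within $\overline{\qd_c^{c'}}$ and the full quadrant $\qd_a^{a'}$ does too.

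For (2$'$) the fact $O\in P$ is immediate from the convexity of $P$. For the quadrant condition, I compute
\[
O-O_a=\tfrac{1}{2}\Bigl(\sum_{i=a'+1}^{c'}\vec{e_i}\;-\;\sum_{i=c}^{a-1}\vec{e_i}\Bigr)
\]
and show that this vector is a non-negative combination of $\vec{e_{a'}}$ and $-\vec{e_a}$, i.e., that it lies in the cone at $O_a$ defining the opposite quadrant of $\qd_a^{a'}$. Each $\vec{e_i}$ with $a'+1\le i\le c'$ and each $-\vec{e_i}$ with $c\le i\le a-1$ lies in this sub-cone by the same rotational-monotonicity argument used in (1$'$), so their sum does as well. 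An appeal to Lemma~\ref{lemma:border-monotone} with the vantage point $O$ then completes Part~2.

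\textbf{Main obstacle.} The technical crux is justifying rigorously that the total clockwise turning from $\vec{e_c}$ to $\vec{e_{c'}}$ is strictly less than $\pi$ --- this is what keeps the cones above \emph{proper} so that non-negative combinations of vectors in the cone themselves lie in the cone. Extremality of $(e_c,e_{c'})$ is essential here: if this turning were at least $\pi$, then some edge lying outside $[v_c\circlearrowright v_{c'+1}]$ would sit on the wrong side of the chord $\overline{v_cv_{c'+1}}$, yielding a strictly larger inferior portion and contradicting extremality. A handful of boundary cases (such as $a=c$, $a'=c'$, or $u$ or $u'$ being vertex units so that some of the indexed sums above are empty) have to be verified separately, but each reduces to the same rotational monotonicity.
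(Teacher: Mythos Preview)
Your vector-decomposition approach is genuinely different from the paper's and is largely sound. The paper proves Part~1 by showing $\hp_a^{a'}$ misses an auxiliary triangular region $\Pi$ that contains the opposite quadrant of $\qd_c^{c'}$, and proves Part~2 by comparing distances $d_{\el_a}(O)$ and $d_{\el_{a'}}(O)$ to those at the apex $O_a$ to place $O$ strictly in the opposite quadrant of $\qd_a^{a'}$. Your cone-sum argument achieves the same placement more combinatorially, and your containment $\qd_a^{a'}\subseteq\overline{\qd_c^{c'}}$ is a tidy strengthening of what Part~1 needs.

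There is, however, a real gap at the boundary case $(u,u')=(e_c,e_{c'})$. Here $a=c$, $a'=c'$, both sums in your decomposition are empty, and $O=O_a$ sits at the \emph{apex} of $\qd_a^{a'}$, not in its open opposite quadrant. Lemma~\ref{lemma:border-monotone} therefore cannot be invoked, and ``the same rotational monotonicity'' does not apply. The paper treats this pair separately: since $\block(e_c,e_{c'})\subset\qd_c^{c'}$, the apex $O$ is on the correct side of every border (with equality possible along the two lower borders when $Z_c^{c'}$ lies on the boundary of the opposite quadrant), so the clockwise rotation holds, possibly non-strictly. You need an argument of this kind here. Relatedly, extremality is what guarantees that $(e_c,e_{c'})$ is the \emph{only} pair in $\Delta(c,c')$ with $(a,a')=(c,c')$: pairs like $(v_c,e_{c'})$ or $(e_c,v_{c'+1})$ would also satisfy $(forw(u),back(u'))=(e_c,e_{c'})$, but extremality rules them out of $\Delta(c,c')$ since $e_{c-1}\nprec e_{c'}$ and $e_c\nprec e_{c'+1}$. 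This is where extremality is actually used; by contrast, the ``$<\pi$ turning'' you attribute to extremality follows already from $e_c\prec e_{c'}$.

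A smaller point: in your argument for (2$'$) you only claim a \emph{non-negative} combination, which lands $O$ in the closed cone. To invoke Lemma~\ref{lemma:border-monotone} you need $O$ in the \emph{open} opposite quadrant. When $(a,a')\neq(c,c')$ at least one of your two sums is non-empty and each summand lies strictly inside the cone, so the conclusion does hold---but this strictness should be stated.
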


\begin{figure}[h]
\centering\includegraphics[width=.85\textwidth]{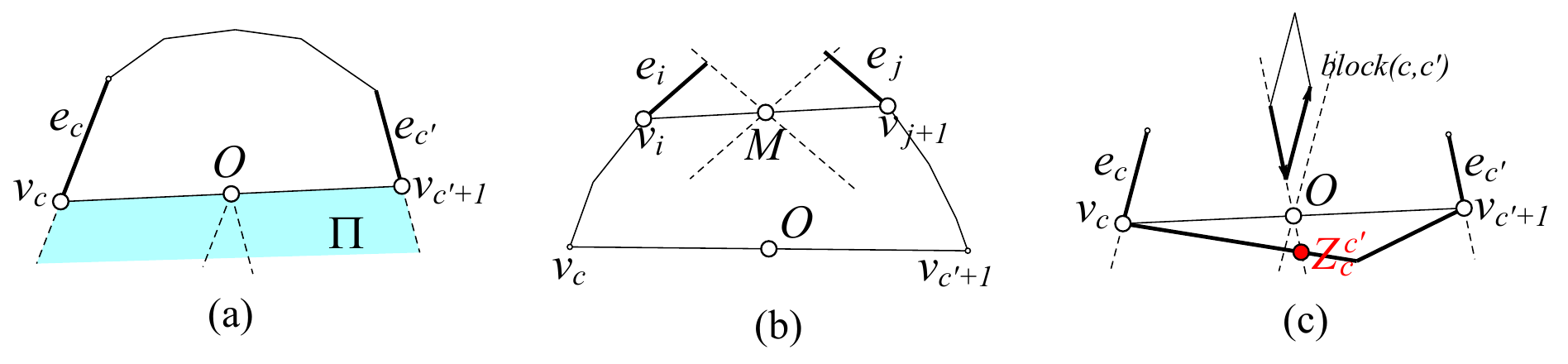}
\caption{Illustration of the proof of Lemma~\ref{lemma:RayLemma}.}\label{fig:raylemma}
\end{figure}

\begin{proof}
Let $e_i=forw(u)$ and $e_j=back(u')$. Because $(u,u')\in \Delta(c,c')$ and applying the definition of $\Delta(c,c')$,
 \begin{equation}\label{eqn:pRayLemma1}
e_i,e_j\text{ belong to }\{e_c,\ldots,e_{c'}\}\text{ and }e_i\preceq e_j.
\end{equation}

Recall that $\hp_i^j$ is the open half-plane delimited by the extended line of $\overline{v_iv_{j+1}}$ and lies on the left side of
    $\overrightarrow{v_iv_{j+1}}$.

Let $\Pi$ denote the region that lies on the right of $e_c$, $e_{c'}$ and  $\overrightarrow{v_cv_{c'+1}}$.
According to (\ref{eqn:pRayLemma1}) and the definition of $\hp^i_j$,
half-plane $\hp_i^j$ is disjoint with $\Pi$.
Further since $\block(u,u')\subset \qd_u^{u'}=\qd_i^j\subseteq\hp_i^j$,
    region $\block(u,u')$ is disjoint with $\Pi$.
Further since the opposite quadrant of $\qd_c^{c'}$ is a subregion of $\Pi$, we get part~1 of this lemma.

\medskip We prove part~2 in the following. First of all, assume that $(u,u')\neq (e_c,e_{c'})$.

When $(u,u')\neq (e_c,e_{c'})$, we claim that $(i,j)\neq (c,c')$.
Suppose $(i,j)=(c,c')$. Then, $(u,u')$ is one of $(e_c,e_{c'})$, $(e_c,v_{c'+1})$, $(v_c,e_{c'})$, $(v_c,v_{c'+1})$.
Since $(e_c,e_{c'})$ is extremal, $e_c$ is not chasing $v_{c'+1}$, $v_c$ is not chasing $e_{c'}$, and $v_c$ is not chasing $v_{c'+1}$.
Yet $u$ is chasing $u'$. Together, $(u,u')$ can only be $(e_c,e_{c'})$. This contradicts the assumption.

See Figure~\ref{fig:raylemma}~(b). Let $M=(v_i+v_{j+1})/2$. Consider the distance to $\el_j$. Because (\ref{eqn:pRayLemma1}),
$d_{\el_j}(v_c)\geq d_{\el_j}(v_i)$ and $d_{\el_j}(v_{c'+1})\geq d_{\el_j}(v_{j+1})$.
Notice that at least one of these inequalities is unequal since $(i,j)\neq (c,c')$. So,
\[d_{\el_j}(v_c)+d_{\el_j}(v_{c'+1})>d_{\el_j}(v_i)+d_{\el_j}(v_{j+1}).\]

The left and right sides equal to $2\cdot d_{\el_j}(O)$ and $2\cdot d_{\el_j}(M)$, respectively.
So $d_{\el_j}(O)>d_{\el_j}(M)$. Symmetrically, $d_{\el_i}(O)>d_{\el_i}(M)$.
Therefore, $O$ lies in the opposite quadrant of $\qd_i^j$, namely, it lies in the opposite quadrant of $\qd_u^{u'}$.
Further since $O\in P$, applying the monotonicity of the borders (Lemma~\ref{lemma:border-monotone}), we get part~2.

\smallskip When $(u,u')=(e_c,e_{c'})$, part~2 still holds.
    However, \emph{if $X$ travels along the two lower borders of $\block(e_c,e_{c'})$,
    it is possible that the orientation of $OX$ keeps invariant during the traveling process.}
    This occurs when $Z_c^{c'}$ lies on the boundary of the opposite quadrant of $\qd_c^{c'}$ as shown in Figure~\ref{fig:raylemma}~(c).
        (See Fact~\ref{fact:Z_advanced_bound} for more information.)
\end{proof}

\begin{note}
In most cases, $X$ moves in clockwise around $O$ \textbf{strictly}; namely, the orientation of $OX$ strictly increases. This is true for almost all borders; the only possible exceptions are the lower borders of $\block(e_{c},e_{c'})$.
\end{note}

\subsection{Proof of the \BD}

Recall \emph{local pairs} and \emph{global pairs} introduced in section~\ref{sect:techover}.
To prove the \BD, we prove:
\begin{itemize}
\item[(I)] When $\block(u,u'),\block(v,v')$ are a global pair, their intersection lies in the interior of $P$.
\item[(II)] When $\block(u,u'),\block(v,v')$ are a local pair, their intersection is empty!
\end{itemize}

As shown in section~\ref{sect:techover}, argument (I) easily follows from the peculiarity of the bounding-quadrants (Lemma~\ref{lemma:br_peculiar}).

Argument (II) is obviously equivalent to the following fact.

\begin{fact}\label{fact:local-area-disjoint}
For extremal pair $(e_c,e_{c'})$, the blocks in $BLOCK[\Delta(c,c')]$ are pairwise-disjoint.
\end{fact}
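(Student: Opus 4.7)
The plan is to use the point $O := \M(v_c, v_{c'+1})$ as a common pivot and reduce the disjointness of blocks to a statement about angular arcs at $O$. By Lemma~\ref{lemma:RayLemma}, every $B = \block(u,u')$ with $(u,u') \in \Delta(c,c')$ avoids the (open) opposite quadrant of $\qd_c^{c'}$ --- whose apex is exactly $O$ --- and moreover $\partial B$ is traversed clockwise around $O$. Combined with Lemma~\ref{lemma:block-in-quad}'s containment $B \subset \qd_u^{u'}$, each block sits inside a well-defined angular wedge at $O$, namely the one cut out by the two boundary rays of $\qd_u^{u'}$.

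First, I would associate to each $B = \block(u,u')$ an angular interval $I(B) \subseteq S^1$ at $O$, consisting of those directions $\theta$ in which the ray from $O$ meets $B$. The clockwise winding of $\partial B$ around $O$ ensures $I(B)$ is a single closed arc; and $B \subset \qd_u^{u'}$ forces $I(B)$ to lie inside the angular range of $\qd_u^{u'}$ at $O$. To prove the fact, it then suffices to show that for distinct pairs $(u_1,u'_1), (u_2,u'_2) \in \Delta(c,c')$, the arcs $I(B_1)$ and $I(B_2)$ meet at most at endpoints: disjoint interior arcs clearly imply disjoint blocks, since any interior point of $B_1 \cap B_2$ would be interior to both arcs.

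Next I would invoke Lemma~\ref{lemma:br_monotone}: the pairs in $\Delta(c,c')$ admit a natural clockwise cyclic order, induced by the positions of $forw(u)$ and $back(u')$ along $[v_c \circlearrowright v_{c'+1}]$, and along this order the angular ranges of the bounding-quadrants at $O$ are clockwise-monotone. For consecutive pairs in the cyclic order, the two bounding-quadrants share a common ray (parallel to a common edge of the two pairs), so the corresponding blocks can only touch along a shared lower border, not in their interiors; for non-consecutive pairs, the angular ranges are strictly separated, and hence so are the blocks.

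The main obstacle will be executing the last step rigorously across all four geometric types of unit pairs (edge-edge, edge-vertex, vertex-edge, vertex-vertex), and in particular verifying that the monotonicity of $\bp$, stated for edge pairs, lifts cleanly via $\qd_u^{u'} = \qd_{forw(u)}^{back(u')}$. I expect a case analysis mirroring the four-case geometric definition of the blocks in Subsection~\ref{subsect:misc-block-def} will be needed to identify, for each type, exactly which border of $B$ supplies which endpoint of $I(B)$ and to confirm that the endpoint matching between consecutive blocks in the cyclic order coincides with the shared-border structure of the frontier-pair-list.
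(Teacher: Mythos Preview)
Your angular-interval reduction has a genuine structural gap. You propose to linearly order the pairs in $\Delta(c,c')$ (``a natural clockwise cyclic order'') and argue that the arcs $I(B)$ are pairwise interior-disjoint along that order. But $\Delta(c,c')$ is a two-parameter family: both $forw(u)$ and $back(u')$ range independently over $\{e_c,\ldots,e_{c'}\}$, so it is a triangular grid of $\Theta((c'-c)^2)$ unit pairs, not a chain. There is no total cyclic order compatible with the bi-monotonicity of Lemma~\ref{lemma:br_monotone}, and in fact the arcs $I(B_1)$ and $I(B_2)$ for two blocks in $\Delta(c,c')$ can overlap substantially in their interiors even when the blocks themselves are disjoint. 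Concretely, $\block(e_i,e_j)$ and $\block(e_{i+1},e_{j+1})$ may well subtend overlapping angular ranges at $O$, since one sits ``radially behind'' the other rather than ``angularly beside'' it. So the implication ``interior-disjoint arcs $\Rightarrow$ disjoint blocks'' is vacuous here: the hypothesis fails.

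The paper's proof confronts exactly this two-dimensionality. It first proves an intermediate statement (Fact~\ref{fact:local-area-disjoint-intermed}) by induction on the number of edges in $(v_a\circlearrowright v_{a'+1})$: at each step it partitions the unit pairs into four groups $U_{00},U_{01},U_{10},U_{11}$ according to membership in two smaller $\mathsf{U}$-sets, handles three of them by the inductive hypothesis, and separates $R_{10}$ from $R_{01}$ by a single $\mathsf{SWEPT}_O$ argument using Lemma~\ref{lemma:RayLemma}. The angular idea you describe is precisely what powers that one separation, but it is applied only to two specific subfamilies at a time, never globally. Your proposal would need a comparable inductive or recursive decomposition before the angular argument becomes sound.
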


As mentioned in section~\ref{sect:techover}, we will prove Fact~\ref{fact:local-area-disjoint} by using the monotonicity of the borders (Lemma~\ref{lemma:border-monotone} and its successor Lemma~\ref{lemma:RayLemma}). First, we prove the following intermediate fact.

\begin{fact}\label{fact:local-area-disjoint-intermed}
For $(e_a,e_{a'})$ in $\Delta(c,c')$, all blocks in $BLOCK[\mathsf{U}(a,a')]$ are pairwise-disjoint, where
\[\mathsf{U}(a,a')=\left\{(u,u')\mid \text{$u$ is chasing $u'$, and $u,u'$ lie in $(v_a \circlearrowright v_{a'+1})$}\right\}.\]
\end{fact}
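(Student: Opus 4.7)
The plan is to mirror the proof of Lemma~\ref{lemma:RayLemma}, but with $(e_a,e_{a'})$ playing the role previously held by the extremal pair $(e_c,e_{c'})$, and $O := \M(v_a,v_{a'+1}) \in P$ as the reference point. Although $(e_a,e_{a'})$ need not be extremal in the polygon as a whole, the constraint that both units of every pair $(u,u')\in \mathsf{U}(a,a')$ lie in $(v_a\circlearrowright v_{a'+1})$ is exactly what Lemma~\ref{lemma:RayLemma}'s proof exploited about $(e_c,e_{c'})$.

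First, for any $(u,u')\in \mathsf{U}(a,a')$, set $e_i:=forw(u)$ and $e_j:=back(u')$. The constraint $u,u'\in(v_a\circlearrowright v_{a'+1})$ forces $e_a \preceq e_i$ and $e_j \preceq e_{a'}$, so the distance-product computation from Lemma~\ref{lemma:RayLemma}(ii) applies verbatim to give $d_{\el_j}(O)\ge d_{\el_j}(M)$ and $d_{\el_i}(O)\ge d_{\el_i}(M)$, where $M=\M(v_i,v_{j+1})$. This puts $O$ in the closed opposite quadrant of $\qd_u^{u'}$, with strict containment except for the single degenerate pair $(u,u')=(e_a,e_{a'})$. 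Applying Lemma~\ref{lemma:border-monotone} then shows that the borders of every non-degenerate block $\block(u,u')$ wind strictly clockwise around $O$; for the degenerate block $\block(e_a,e_{a'})$, the apex of its opposite quadrant is $O$ itself, so its two ``lower'' borders become segments emanating from $O$ while its two ``upper'' borders still wind strictly around $O$.

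Next, I associate to each pair $(u,u')\in \mathsf{U}(a,a')$ an angular interval $I(u,u')$ --- the set of directions from $O$ that meet $\block(u,u')$. Since each block lies inside its own wedge from $O$ (Step~2), it suffices to show these wedges are pairwise interior-disjoint. Tracing through the geometric definition in Section~\ref{subsect:misc-block-def}, the endpoints of each border of $\block(u,u')$ are either reflections of $Z$-points through points of $u\oplus u'$ or corners of the parallelogram $u\oplus u'$ scaled about a $Z$-point --- exactly the objects controlled by the bi-monotonicity of $Z$-points (Fact~\ref{fact:Z_bi-monotonicity}) and the monotonicity of bounding-quadrants (Lemma~\ref{lemma:br_monotone}). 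Combining these two monotonicities, the angular extremes of these border-endpoints, as $(u,u')$ varies over $\mathsf{U}(a,a')$, form a strictly clockwise-ordered sequence, which yields the desired angular disjointness; pairwise disjointness of the blocks then follows immediately.

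The main obstacle is Step~3. Because each border is a polygonal curve of up to four pieces whose type depends on whether $u,u'$ are edges or vertices, one faces a multi-case enumeration (four block types, many more pairs of types) to verify. The degenerate block $\block(e_a,e_{a'})$, whose opposite-quadrant apex coincides with $O$, is especially delicate: its effective ``wedge'' is bounded by the two segments from $O$ toward the $2$-scalings of $\M(v_a,v_{j+1})$ and $\M(v_{i+1},v_{a'+1})$ about $Z_a^{a'}$, and one must verify that no other block intrudes across these two segments. Orchestrating bi-monotonicity of $Z$-points and monotonicity of bounding-quadrants to force strict angular ordering across all these cases --- while correctly isolating $\block(e_a,e_{a'})$ --- is the technical heart of the proof.
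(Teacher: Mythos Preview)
Your Steps~1 and~2 correctly reproduce the setup: with $O=\M(v_a,v_{a'+1})$, every border of every block in $BLOCK[\mathsf{U}(a,a')]$ winds clockwise around $O$ (Lemma~\ref{lemma:border-monotone}), strictly except possibly on the two lower borders of $\block(e_a,e_{a'})$. The gap is in Step~3. The collection $\mathsf{U}(a,a')$ is a two-dimensional grid of unit pairs, not a list. Border-monotonicity tells you that advancing either coordinate of $(u,u')$ increases the angle from $O$, but it says nothing about pairs that are incomparable in the product order, e.g.\ $\block(e_i,e_{j+1})$ versus $\block(e_{i+1},e_j)$. In fact global angular disjointness is false: once the tiling is established, a generic ray from $O$ through the region $\bigcup_{(u,u')\in\mathsf{U}(a,a')}\block(u,u')$ crosses several blocks stacked at different depths, and all of their wedges $I(u,u')$ contain that direction. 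The bi-monotonicity of $Z$-points and the monotonicity of $\bp$ yield only a partial angular order; no case analysis on block types will manufacture a total one.

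The paper proceeds instead by induction on the number of edges in $(v_a\circlearrowright v_{a'+1})$. It splits $\mathsf{U}(a,a')=U_{00}\cup U_{01}\cup U_{10}\cup U_{11}$ according to membership in $\mathsf{U}(a,a'-1)$ and $\mathsf{U}(a+1,a')$; the inductive hypothesis handles disjointness within $U_{10}\cup U_{11}$ and within $U_{01}\cup U_{11}$, and $U_{00}$ is four blocks checked directly. The only cross-comparison needing a fresh argument is $R_{10}$ versus $R_{01}$, and here a \emph{single} angular cut suffices: with $B$ the shared endpoint of the lower borders of $\block(v_{a+1},e_{a'-1})$ and $\block(e_{a+1},v_{a'})$, every block in $U_{10}$ reaches $B$ by following directed borders (hence lies in $\mathsf{SWEPT}_O(A,B)$), and symmetrically $R_{01}\subset\mathsf{SWEPT}_O(B,A)$. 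The angular tool you identified is exactly right, but it is applied only to separate one row from one column of the grid at each inductive step, never to separate individual cells globally.
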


We introduce some notations for the proof of the above fact.
\begin{description}
\item[Tiling.] For a set $S$ of unit pairs, we call $BLOCK[S]$ a tiling if all the blocks in $BLOCK[S]$ are pairwise-disjoint.
\item[A quadrant region $\mathsf{SWEPT}_O(X,Y)$.]  For distinct points $O,X,Y$, imaging that a ray originated at $O$ rotates from $OX$ to $OY$ in clockwise, we denote by $\mathsf{SWEPT}_O(X,Y)$ the region swept by this ray.
\end{description}

\begin{figure}[h]
\begin{minipage}[b]{.5\textwidth}
\centering\includegraphics[width=.82\textwidth]{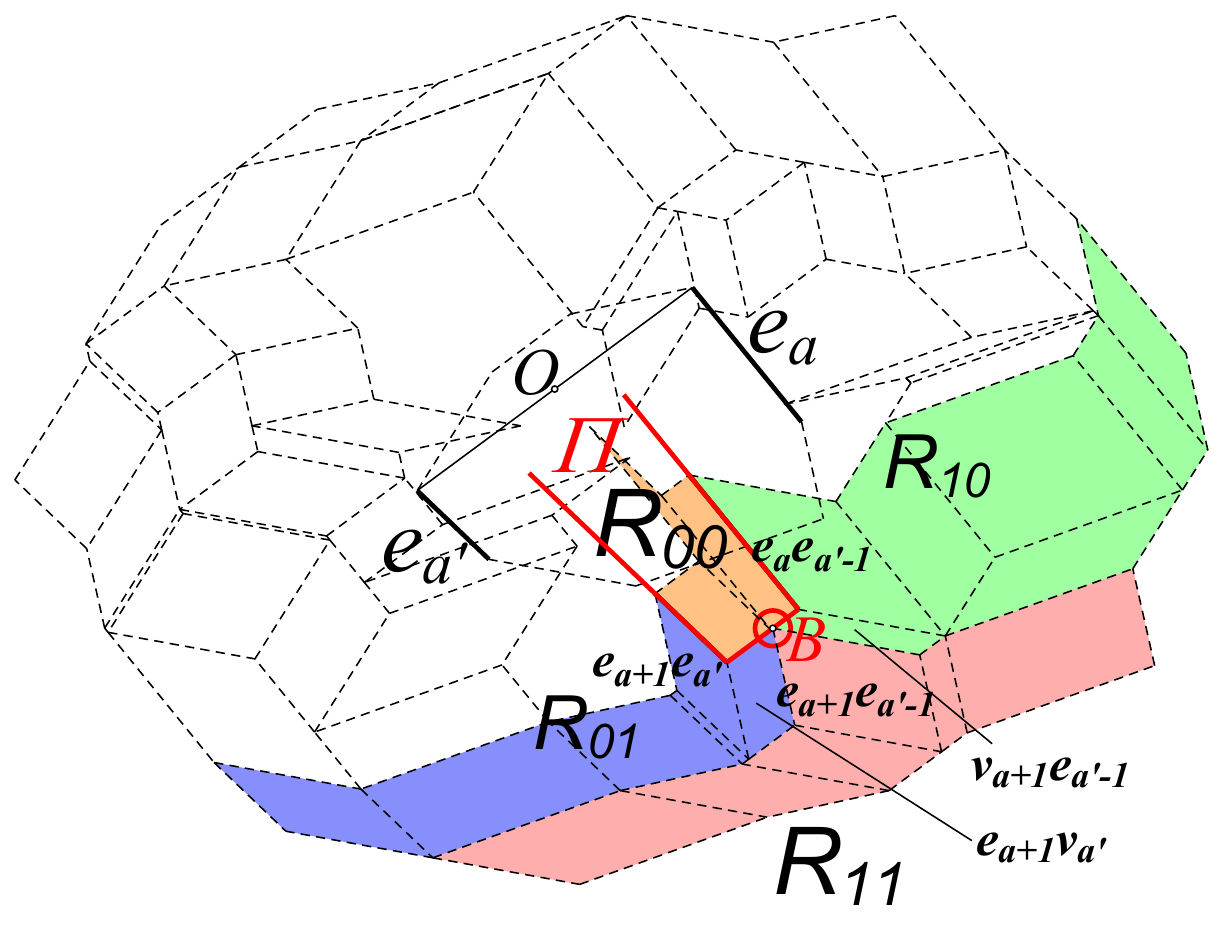}
\subcaption{}
\end{minipage}
\begin{minipage}[b]{.5\textwidth}
\centering\includegraphics[width=.82\textwidth]{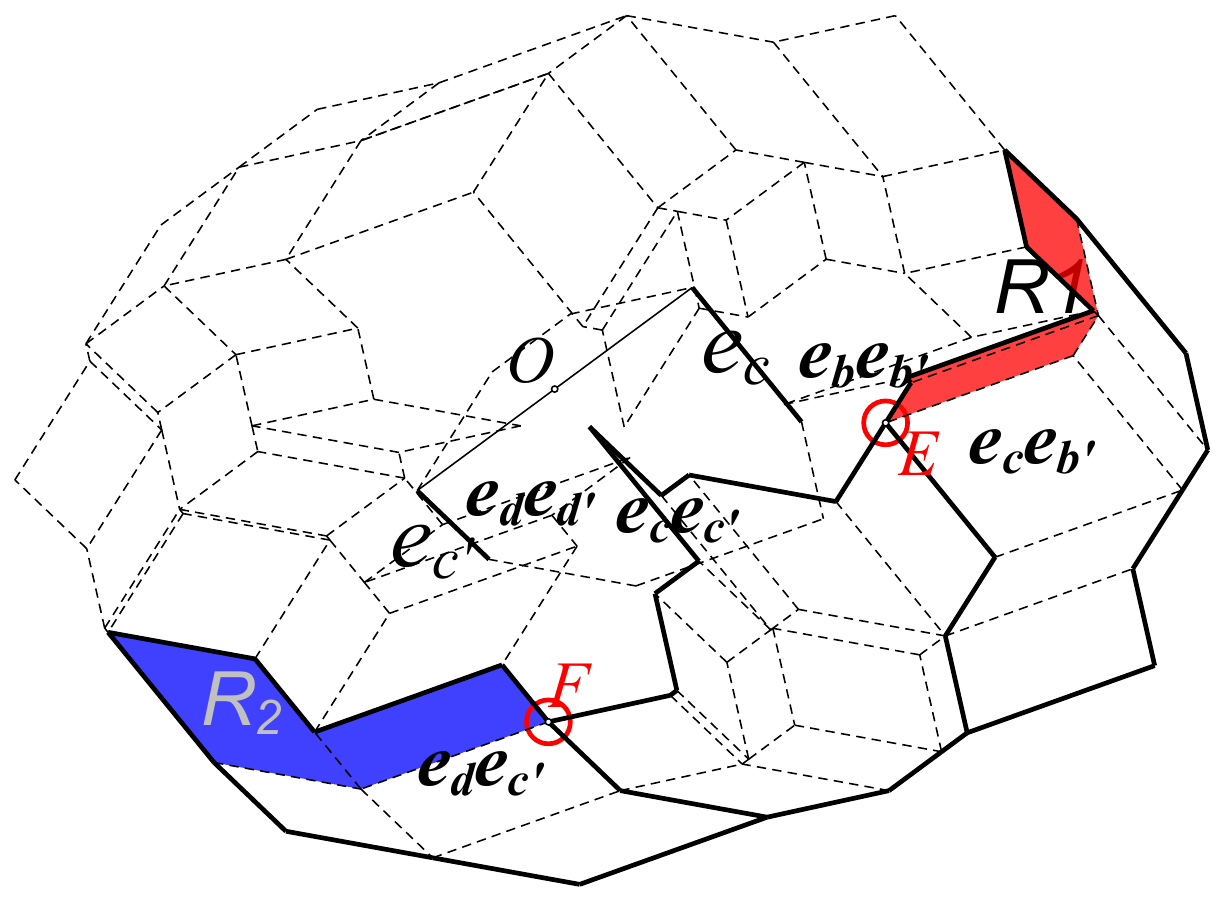}
\subcaption{}
\end{minipage}
\caption{Illustration of the proofs of the \BD}\label{fig:local_proof}
\end{figure}

\begin{proof}[Proof of Fact~\ref{fact:local-area-disjoint-intermed}]
We prove this fact by using induction on the number of edges $k$ in $(v_a\circlearrowright v_{a'})$.

\medskip \noindent \emph{Initial:} $k=2$, i.e., $a'=a+1$.
$BLOCK[\mathsf{U}(a,a')]$ contains only one block, $\block(e_a,e_{a+1})$, so the claim is trivial.

\medskip \noindent \emph{Induction:} $k>2$. Divide the unit pairs in $\mathsf{U}(a,a')$ into four parts distinguished by whether $\mathsf{U}(a,a'-1),\mathsf{U}(a+1,a')$ respectively contain them. Formally,
\begin{align*}
U_{10}&=\mathsf{U}(a,a'-1)-\mathsf{U}(a+1,a'),&U_{01}&=\mathsf{U}(a+1,a')-\mathsf{U}(a,a'-1),\\
U_{11}&=\mathsf{U}(a,a'-1)\cap \mathsf{U}(a+1,a'),&U_{00}&=\mathsf{U}(a,a')-\mathsf{U}(a,a'-1)-\mathsf{U}(a+1,a').
\end{align*}

By the induction hypothesis, $BLOCK[U_{01}]$, $BLOCK[U_{10}]$, $BLOCK[U_{11}]$ are tilings.
Moreover, since $U_{00}=\{(e_a,e_{a'}),(v_{a+1},e_{a'}),(e_a,v_{a'}),(v_{a+1},v_{a'})\}$
    only contains four unit pairs, by the definition of the blocks, $BLOCK[U_{00}]$ is also a tiling (see Figure~\ref{fig:blocks_def}; details omitted).
So, we only need to prove that $R_{00},R_{01},R_{10},R_{11}$ are pairwise-disjoint,
where $R_{00},R_{01},R_{10},R_{11}$ denotes the regions occupied by $BLOCK[U_{00}]$, $BLOCK[U_{01}]$, $BLOCK[U_{10}]$, $BLOCK[U_{11}]$, respectively.
It further reduces to proving the following four statements:\smallskip

(i) $R_{11},R_{10}$ are disjoint. \qquad (ii) $R_{11},R_{01}$ are disjoint.\medskip

(iii) $R_{01},R_{10}$ are disjoint. ~~\quad (iv) $R_{00}$ is disjoint with the other three regions.\medskip

See Figure~\ref{fig:local_proof}~(a).
Statement~(i) holds because $BLOCK[\mathsf{U}(a,a'-1)]$ is a tiling. Statement~(ii) holds because $BLOCK[\mathsf{U}(a+1,a')]$ is a tiling.
We prove the other two statements in the following. \medskip

\noindent \emph{Proof of (iii):}
Let $O=(v_a+v_{a'+1})/2$.
Let $A$ be an arbitrary point in the opposite quadrant of $\qd_c^{c'}$, and
let $B$ be the terminal point of the lower border of $\block(v_{a+1},e_{a'-1})$;
equivalently, $B$ is the starting point of the lower border of $\block(e_{a+1},v_{a'})$.
Our key observation is the following:
    \begin{equation}\label{eqn:key-obs-in-local}
        R_{10}\subset \mathsf{SWEPT}_O(A,B), \text{ whereas }R_{01}\subset \mathsf{SWEPT}_O(B,A).
    \end{equation}

The part $R_{10}\subset \mathsf{SWEPT}_O(A,B)$ is due to two reasons.
  1. When a point $X$ travels along the oriented borders in $BLOCK[U_{10}]$, it eventually reaches $B$.
  2. While $X$ is tracking down these borders, $OX$ keeps rotating in clockwise according to Lemma~\ref{lemma:RayLemma}.
The part $R_{01}\subset \mathsf{SWEPT}_O(B,A)$ is due to two similar reasons.

Further since $\mathsf{SWEPT}_O(B,A)$ is disjoint with $\mathsf{SWEPT}_O(A,B)$, we obtain statement~(iii).

\medskip \noindent \emph{Proof of (iv):}
Let $\Pi$ denote the region bounded by:
  $\C_1$ - the right lower border of $\block(e_a,e_{a'-1})$,
  $\C_2$ - the left lower border of $\block(e_{a+1},e_{a'})$, and
  $\C_3$ - the lower border of $\block(v_{a+1},v_{a'})$.
Observe that (1) $R_{00}$ is contained in $\Pi$; and
  (2) the united region of $R_{10},R_{01},R_{11}$ is also bounded by $\C_1,\C_2$ and $\C_3$ and hence is disjoint with $\Pi$.
Together, we get statement~(iv). The proofs of the last two observations are trivial yet burdensome and hence are omitted.
\end{proof}

Next, we prove Fact~\ref{fact:local-area-disjoint} and thus complete our proof of the \textsc{Block-disjointness}.

\begin{proof}[Proof of Fact~\ref{fact:local-area-disjoint}]
For convenience, let $(e_b,e_{b'}),(e_d,e_{d'})$ respectively denote the previous and next extremal pair of $(e_c,e_{c'})$ in the frontier-pair-list. We divide $\Delta(c,c')$ into three parts:
\[
U_1 = (\Delta(c,c')-\mathsf{U}(c,c'))\cap \mathsf{U}(b,b'), \quad
U_2 = (\Delta(c,c')-\mathsf{U}(c,c'))\cap \mathsf{U}(d,d'), \quad
U_3 = \mathsf{U}(c,c').
\]

It is obvious that $\Delta(c,c')=U_1\cup U_2\cup U_3$.
See Figure~\ref{fig:local_proof}~(b) for an illustration, where $R_1,R_2$ respectively indicate the regions occupied by $BLOCK[U_1]$, $BLOCK[U_2]$.

By Fact~\ref{fact:local-area-disjoint-intermed}, $BLOCK[\mathsf{U}(b,b')],BLOCK[\mathsf{U}(c,c')],BLOCK[\mathsf{U}(d,d')]$ are all tilings.
Therefore, $BLOCK[U_1]$, $BLOCK[U_2]$, and $BLOCK[U_3]$ are tilings. So, we only need to prove:
\begin{enumerate}
\item[(a)] Each block in $BLOCK[U_1]$ is disjoint with each block in $BLOCK[\Delta(c,c')-U_1]$.
\item[(b)] Each block in $BLOCK[U_2]$ is disjoint with each block in $BLOCK[\Delta(c,c')-U_2]$.
\end{enumerate}
We only show the proof of (a); the proof of (b) is symmetric. Clearly, (a) follows from
\begin{enumerate}
\item[(a1)] Each block in $BLOCK[U_1]$ is disjoint with each block in $BLOCK[\Delta(c,c')-\mathsf{U}(b,b')]$.
\item[(a2)] Each block in $BLOCK[U_1]$ is disjoint with each block in $BLOCK[\mathsf{U}(b,b')-U_1]$.
\end{enumerate}

\noindent Proof of (a1): Let $O=(v_c+v_{c'+1})/2$ and let $E$ be the common endpoint of the two lower borders of $\block(e_c,e_{b'})$.
Lemma~\ref{lemma:RayLemma} implies that
$BLOCK[U_1]$ lie in $\mathsf{SWEPT}_O(A,E)$ whereas $BLOCK[\Delta(c,c')-\mathsf{U}(b,b')]$ lie in $\mathsf{SWEPT}_O(E,A)$
(the proof is similar to the proof of the key observation (\ref{eqn:key-obs-in-local}) of Fact~\ref{fact:local-area-disjoint-intermed}).
Thus we obtain (a1).

\medskip \noindent Proof of (a2): Since $BLOCK[\mathsf{U}(b,b')]$ is a tiling, we have (a2).
\end{proof}

\subsection{Proof of the \IP}

\paragraph*{Step~1. Definition of $q$, the $2q$ fragments, and the $2q$ points $S_1,T_1,\ldots,S_q,T_q$.}

We choose $q$ to be the number of extremal pairs. Fact~\ref{fact:extremalpairs} states that $q\geq 3$.
Denote the $q$ extremal pairs in clockwise order by $(e_{c_1},e_{c'_1}), \ldots, (e_{c_q},e_{c'_q}).$

\smallskip Recall $\Delta(c,c')$ and recall the bottom borders of frontier blocks in subsection~\ref{subsect:pre-borders-sigmap}.
For any extremal pair $(e_c,e_{c'})$, denote
$\sigma(c,c')= \text{the concatenation of the bottom borders of the frontier blocks in } BLOCK[\Delta(c,c')]$,
  which is a \emph{directional} polygonal curve and is a fraction of $\sigma P$.
  The dashed curve in Figure~\ref{fig:sketch-key-properties}~(a) illustrates $\sigma(c,c')$.\smallskip

For each $1\leq i\leq q$, we define two fragments:
\begin{eqnarray}
\alpha_i=\text{the fragment of $\sigma P$ that is contained in both $\sigma(c_i,{c'}_i)$ and $\sigma(c_{i+1},{c'}_{i+1})$.}\\
\beta_i=\text{the fragment that belongs to $\sigma(c_i,c'_i)$ but does not belong to $\alpha_{i-1}$ or $\alpha_i$.}
\end{eqnarray}

See the left picture of Figure~\ref{fig:sigmaP_monotone} for an illustration of $\beta_1,\alpha_1,\ldots,\beta_q,\alpha_q$.

\begin{figure}[h]
\centering\includegraphics[width=.88\textwidth]{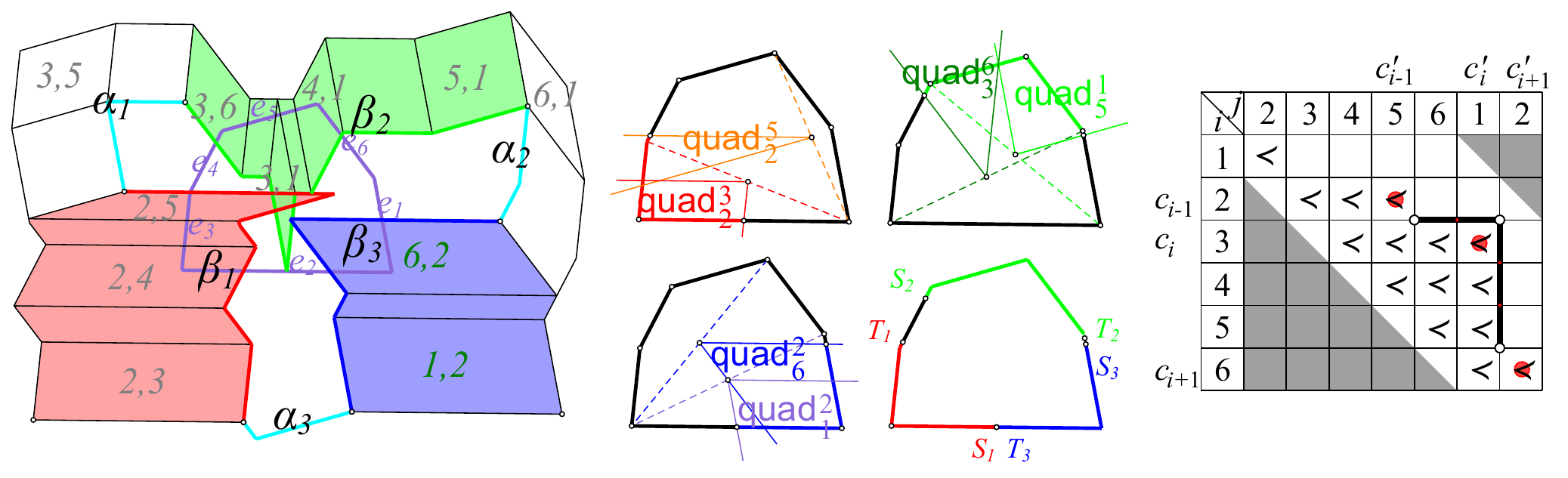}
\setcaptionwidth{.8\textwidth}
\caption{Illustration of the proof of the \IP.}\label{fig:sigmaP_monotone}
\end{figure}

For each extremal pair $(e_{c_i},e_{c'_i})$, define
\begin{equation}\label{eqn:aa'bb'}
(a_i,a'_i):=(c_i,c'_{i-1}+1),\qquad (b_i,b'_i):=(c_{i+1}-1,c'_i).
\end{equation}

\begin{fact}\label{fact:aba'b'}
$\beta_i$ begins with the bottom border of $\block(e_{a_i},e_{a'_i})$ and ends with the bottom border of $\block(e_{b_i},e_{b'_i})$.
\end{fact}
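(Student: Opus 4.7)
The plan is to characterize the frontier blocks comprising $\beta_i$ via their $(forw(u),back(u'))$ coordinates, and then use the monotone structure of the frontier-pair-list (FPL) walk to pinpoint its first and last elements.

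First I would inspect Algorithm~\ref{alg:FPL-def} to observe that the FPL is a cyclic sequence in which each transition either increments $forw(u)$ by one edge, or increments $back(u')$ by one edge, or (at a corner step) increments both. Consequently $(forw(u),back(u'))$ traverses the edges of $P$ cyclically and monotonically around the FPL. Combined with definition~(\ref{def:delta(c,c')}), a frontier block $(u,u')$ belongs to $\Delta(c,c')$ iff both $forw(u),back(u')\in\{e_c,\ldots,e_{c'}\}$; by monotonicity these blocks form a cyclically contiguous sub-sequence of the FPL, so $\sigma(c,c')$ is exactly the concatenation of their bottom borders in that order.

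Next, using the clockwise ordering of extremal pairs (so $c_{i-1}<c_i<c_{i+1}$ and $c'_{i-1}<c'_i<c'_{i+1}$ cyclically), the defining condition ``$(u,u')\in\Delta(c_i,c'_i)\setminus\bigl(\Delta(c_{i-1},c'_{i-1})\cup\Delta(c_{i+1},c'_{i+1})\bigr)$'' for a block of $\beta_i$ simplifies to
\[
forw(u)\in\{e_{c_i},\ldots,e_{c_{i+1}-1}\},\qquad back(u')\in\{e_{c'_{i-1}+1},\ldots,e_{c'_i}\}.
\]
By monotonicity of the walk, the first frontier block of $\beta_i$ in FPL order minimizes $(forw,back)$ in this rectangle and the last maximizes it; the candidates are the edge--edge pairs $(e_{c_i},e_{c'_{i-1}+1})$ and $(e_{c_{i+1}-1},e_{c'_i})$. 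Both are genuine chasing pairs by extremal inclusion: the inferior portion $[v_{c_i},v_{c'_{i-1}+2}]$ sits inside the extremal portion $[v_{c_i},v_{c'_i+1}]$ (since $c'_{i-1}+1\le c'_i$), giving $e_{c_i}\prec e_{c'_{i-1}+1}$, and symmetrically $e_{c_{i+1}-1}\prec e_{c'_i}$. Moreover, starting at the extremal $(e_{c_{i-1}},e_{c'_{i-1}})$, the algorithm performs forw-steps (or corner-steps when the two indices are adjacent) until it reaches the first edge pair $(e_a,e_{c'_{i-1}})$ for which a back-step is allowed, and that very step produces $(e_a,e_{c'_{i-1}+1})$ as the first edge pair of $\beta_i$.

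The main obstacle is to verify that this first index $a$ equals $c_i$, i.e.\ that no smaller index works. Suppose, for contradiction, that some $a$ with $c_{i-1}<a<c_i$ satisfies $e_a\prec e_{c'_{i-1}+1}$. Then the inferior portion $[v_a,v_{c'_{i-1}+2}]$ would be contained in some maximal inferior portion $[v_{c_k},v_{c'_k+1}]$; the constraint $c_k\le a<c_i$ forces $k<i$, whereas $c'_k\ge c'_{i-1}+1>c'_{i-1}$ forces $k>i-1$, contradicting the cyclic clockwise ordering of the extremals. A symmetric analysis (using the back-step that immediately follows $\beta_i$'s last edge pair and is superseded by the next forw-step into $\alpha_i$) identifies $(b_i,b'_i)=(c_{i+1}-1,c'_i)$, completing the proof.
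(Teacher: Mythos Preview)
The paper does not actually prove this fact; it simply asserts that it ``simply follows the definition'' and points to Figure~\ref{fig:sigmaP_monotone}. Your argument supplies exactly the verification the paper omits, and the approach---the rectangle characterization of $\beta_i$ in $(forw,back)$ coordinates, the use of FPL monotonicity to see that the first and last frontier pairs in that rectangle are edge--edge pairs, and the extremality contradiction pinning down $a=c_i$---is correct and is the natural way to unpack the paper's one-line claim.

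One small point worth making explicit: when you write that $[v_{c_i},v_{c'_{i-1}+2}]$ ``sits inside'' the extremal portion $[v_{c_i},v_{c'_i+1}]$ and conclude $e_{c_i}\prec e_{c'_{i-1}+1}$, you are tacitly using $c_i\le c'_{i-1}$, i.e.\ that consecutive extremal portions overlap in at least one edge. This is true, and in fact follows from the very contradiction you set up: if $c_i>c'_{i-1}$, take $k=c'_{i-1}$; then $c_{i-1}<k<c_i$ and $e_k\prec e_{k+1}=e_{c'_{i-1}+1}$ always holds, so the inferior portion $[v_k,v_{c'_{i-1}+2}]$ must sit inside some extremal $[v_{c_m},v_{c'_m+1}]$ with $c_m\le k<c_i$ and $c'_m>c'_{i-1}$, forcing $m<i$ and $m\ge i$ simultaneously. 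With $c_i\le c'_{i-1}$ in hand, your containment step is justified (using that the set $\{j:e_{c_i}\prec e_j\}$ is a contiguous cyclic interval), and the rest of the argument goes through cleanly.
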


This fact simply follows the definition of $\beta_i$ and is illustrated by the right picture of Figure~\ref{fig:sigmaP_monotone}.\smallskip

Recall $\bp$ in (\ref{eqn:def_bb}). Using $a_i,a'_i,b_i,b'_i$ defined above, we further define
$2q$ points $S_1,\ldots,S_q$, $T_1,\ldots,T_q$ as follows. See the middle picture of Figure~\ref{fig:sigmaP_monotone} for an illustration.
\begin{equation}\label{def:XY}
  S_i= \text{the starting point of }\bp_{a_i}^{a'_i}, \quad T_i= \text{the terminal point of }\bp_{b_i}^{b'_i}.
\end{equation}

\paragraph*{Step~2. Verify these $2q$ points lying in order and prove the three conditions (\ref{eqn:local-interleave}), (\ref{eqn:delimitate-beta}), and (\ref{eqn:delimitate-alpha}).}

\begin{proof}[Proof of the fact that $S_1,T_1,\ldots,S_q,T_q$ lie in clockwise order around $\partial P$.]
Consider any pair of neighboring extremal pairs $(e_{c_i},e_{c'_i}),(e_{c_{i+1}},e_{c'_{i+1}})$.
Following Definition~\ref{def:extremal} and the definition in (\ref{eqn:aa'bb'}), we have
  \[\text{edges $e_{b_i},e_{b'_i},e_{a_{i+1}},e_{a'_{i+1}}$ are not in any inferior portion}.\]

According to this observation and the peculiarity of the bounding-quadrants (Lemma~\ref{lemma:br_peculiar}),
  $\bp_{b_i}^{b'_i}$ and $\bp_{a_{i+1}}^{a'_{i+1}}$ are disjoint (although their endpoints may coincide), for any $i~(1\leq i\leq q)$.
Combining this with (\ref{def:XY}) and the monotonicity of $\bp$ (Lemma~\ref{lemma:br_monotone}),
  the $q$ boundary-portions $(S_1 \circlearrowright T_1),\ldots,(S_q \circlearrowright T_q)$ are pairwise-disjoint and lie in clockwise order.
Therefore, $S_1,T_1,\ldots,S_q,T_q$ lie in clockwise order.
\end{proof}

\begin{proof}[Proof of condition~(\ref{eqn:local-interleave})]
Notice that the concatenation of $\alpha_{i-1},\beta_i,\alpha_i$ is exactly $\sigma(c_i,{c'}_i)$.
We shall prove that for each extremal pair $(e_c,e_{c'})$, the curve $\sigma(c,c')$ interleaves $\partial P$.
For ease of discussion, assume that
$\sigma(c,c')$ and $\partial P$ have a finite number of intersecting points.
Denote these points by $\I_1,\ldots,\I_x$, and assume that
\[\text{(i) $I_1,\ldots,I_x$ are sorted according to their priority on $\sigma(c,c')$.}\]
Denote $O=(v_c+v_{c'+1})/2$. Since (i) and by applying Lemma~\ref{lemma:RayLemma},
  rays $OI_1,\ldots,OI_x$ are in clockwise order. Further, because $O$ lies in $P$ , we get
\[\text{(ii) $\I_1,\ldots,\I_x$ lie in clockwise order around $\partial P$}.\]
Since $\I_1,\ldots,\I_x$ are all the intersecting points between $\sigma(c,c')$ and $\partial P$,  facts~(i) and (ii) together imply that: starting from $I_1$, regardless of whether we travel along $\sigma(c,c')$ (following its positive direction) or travel around $\partial P$ (in clockwise), we meet the points in $\sigma(c,c')\cap \partial P$ in identical order.
In other words, $\sigma(c,c')$ interleaves $\partial P$.
\end{proof}

Recall condition~(\ref{eqn:delimitate-beta}) states that $\beta_i\cap \partial P\subset [S_i\circlearrowright T_i]$.

\begin{proof}[Proof of condition~(\ref{eqn:delimitate-beta})]
Consider any frontier block whose bottom border is a fraction of $\beta_i$, e.g.\ $\block(u,u')$,
we shall prove that (I) the intersecting points between $\partial P$ and the bottom border of $\block(u,u')$ lie in $[S_i\circlearrowright T_i]$.

Denote by $\overline{\bp}_u^{u'}$ the closed set of $\bp_u^{u'}$, which contains $\bp_u^{u'}$ and its endpoints.
By Lemma~\ref{lemma:block-in-quad}, $\block(u,u')\subset \qd_u^{u'}$. So, the bottom border of $\block(u,u')$ lie in the closed set of $\qd_u^{u'}$.
    Therefore, the intersecting points between $\partial P$ and the bottom border of $\block(u,u')$ are contained in $\overline{\bp}_u^{u'}$.
    By the monotonicity of the $\bp$ (Lemma~\ref{lemma:br_monotone}) and the definition (\ref{def:XY}) of $S_i,T_i$, we get $\overline{\bp}_u^{u'}\subseteq [S_i\circlearrowright T_i]$. Together, we get (I).
\end{proof}

Condition~(\ref{eqn:delimitate-alpha}) can be proved the same way as condition~(\ref{eqn:delimitate-beta}); so proof omitted.

\paragraph*{Step~3. Complete the proof of \IP by applying Lemma~\ref{lemma:interleave}.}
Since the conditions of Lemma~\ref{lemma:interleave} are all satisfied, applying this lemma we obtain that $\sigma P$ interleaves $\partial P$.

\section{Proof: two more properties of $\Nest(P)$ (\MF \& \SC)}\label{sect:othertwo}

This section proves \MF and \SC (recall a sketch in section~\ref{sect:techover}).

\subsection{Proof of the \MF}\label{subsect:monoto-f}

We first give some terminologies and then prove the \MF.

See Figure~\ref{fig:outline_f-monotone}.
Let $K_1,\ldots,K_m$ denote all the intersecting points between $\sigma P$ and $\partial P$,
  and assume that they lie in clockwise order around $\partial P$.
Points $K_1,\ldots,K_m$ divide $\partial P$ into $m$ portions; each of which is called a \emph{$K$-portion}.

\begin{description}
\item[Top borders.]
    Recall the four borders of $\block(e_i,e_{i+1})$.
      We define the \emph{top border} of $\block(e_i,e_{i+1})$ as the concatenation of those two borders that are not the lower borders
       (recall the lower borders in subsection~\ref{subsect:pre-borders-sigmap}).
    Recall the unique border of $\block(v_i,v_{i+1})$. We define this border to be the \emph{top border} of $\block(v_i,v_{i+1})$.
\item[Outer boundary of $f(\T)$.] See Figure~\ref{fig:outerboundary}.
  We define the \emph{outer boundary} of $f(\T)$ as the concatenation of the top borders of $\block(e_1,e_2), \block(v_2,v_3), \ldots, \block(e_n,e_1), \block(v_1,v_2)$, which is closed by observing that the terminal point of the top border of one block equals the starting point of the top border of its next block.
\end{description}

\begin{figure}[h]
\begin{minipage}[b]{.5\textwidth}
\centering\includegraphics[width=.5\textwidth]{NestP-fmonotone.pdf}
\caption{Illustration of the $K$-points and $K$-portions.}\label{fig:outline_f-monotone}
\end{minipage}
\begin{minipage}[b]{.5\textwidth}
\centering\includegraphics[width=.5\textwidth]{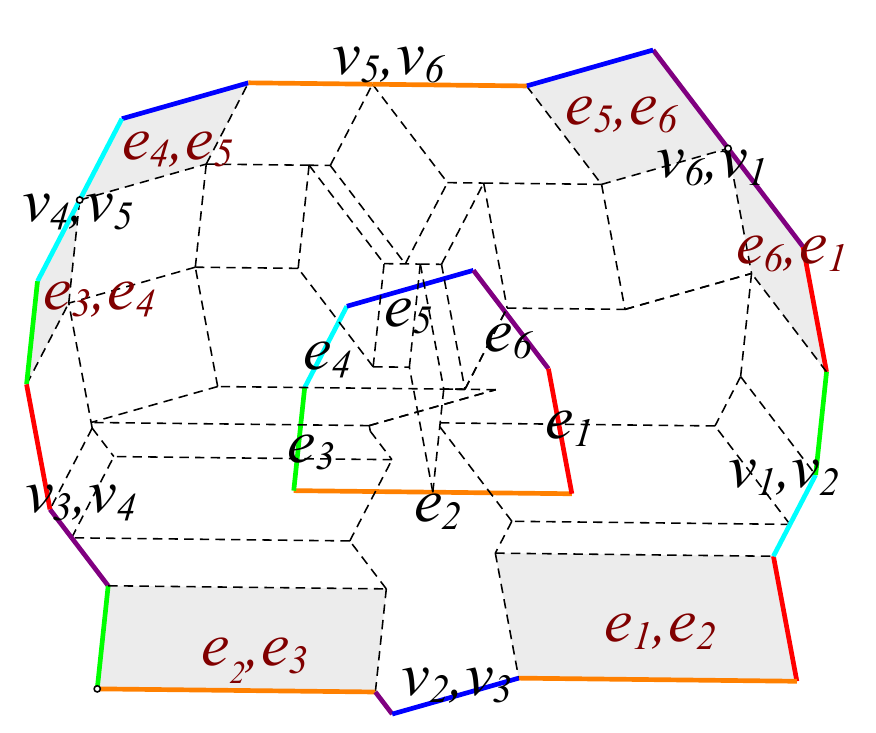}
\caption{Illustration of the outer boundary.}\label{fig:outerboundary}
\end{minipage}
\end{figure}

\begin{fact}\label{fact:outer-boundary-outside}
(1) All the top borders defined above lie outside $P$.
(2) The outer boundary of $f(\T)$ is a simple closed curve whose interior contains $P$.
(3) For any $K$-portion, it either lies in $f(\T)$ or lies outside $f(\T)$.
\end{fact}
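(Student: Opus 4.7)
My plan is to dispose of (1) by a reflection-across-an-edge-line argument, to obtain (2) by chaining endpoints together with a monotonicity-around-$O$ argument, and to deduce (3) as a purely topological corollary of (1) and (2).

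For (1), let $H_i^-$ denote the open half-plane bounded by $\el_i$ that does not contain $P$. The key observation is: if a point $W$ on a top border can be written as $W=2M-Z$ with $M\in \el_i$ and $Z\in P\setminus \el_i$, then $W$ is the reflection of $Z$ across a point of $\el_i$, hence lies in $H_i^-$ and so is strictly outside $P$. For $\block(v_i,v_{i+1})$ the top border is the block itself, whose points are $W=v_i+v_{i+1}-Z=2\M(v_i,v_{i+1})-Z$ with $Z\in \zeta(v_i,v_{i+1})$; Fact~\ref{fact:dist-unique-location} and the pairwise-nonparallel assumption force $\zeta(v_i,v_{i+1})\subset(v_{i+1}\circlearrowright v_i)$, which is disjoint from $\el_i$, so the observation applies. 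For $\block(e_i,e_{i+1})$, the two top borders meet at the corner $B=2v_{i+1}-Z_i^{i+1}$; the one adjacent to $A=2\M(v_i,v_{i+1})-Z_i^{i+1}$ has points $W=2M-Z_i^{i+1}$ with $M$ ranging over the segment from $\M(v_i,v_{i+1})$ to $v_{i+1}$, a subset of $\el_i$, and $Z_i^{i+1}\notin \el_i$ (else the distance product at $Z_i^{i+1}$ would vanish, contradicting maximality). The twin top border is symmetric via $\el_{i+1}$.

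For (2), a direct substitution shows the listed top borders chain end-to-end: the top border of $\block(v_i,v_{i+1})$ terminates at $2\M(v_i,v_{i+1})-Z_i^{i+1}=A$, the starting corner of the top border of $\block(e_i,e_{i+1})$, which in turn ends at $C=2\M(v_{i+1},v_{i+2})-Z_i^{i+1}$, the starting point of the top border of $\block(v_{i+1},v_{i+2})$. This yields a continuous closed curve. Simplicity and enclosure of $P$ will follow from showing that the outer boundary winds around any interior reference point $O\in P$ exactly once clockwise. For an edge-type top border parallel to $e_i$ and lying in $H_i^-$, clockwise rotation around $O$ is immediate because $O$ sits in the complementary half-plane. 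For a vertex-type top border, which is the reflection of the clockwise-traced portion $\zeta(v_i,v_{i+1})$ through $\M(v_i,v_{i+1})$, I rewrite $\overrightarrow{OX}=2\overrightarrow{O\M(v_i,v_{i+1})}-\overrightarrow{OY}$ and reduce to a rotation statement about $Y$ around the reflected point $O'=2\M(v_i,v_{i+1})-O$; since $O'\in H_i^-$, which is precisely the opposite quadrant of $\qd_i^i$, a Lemma~\ref{lemma:border-monotone}-style application delivers the clockwise rotation. A total clockwise sweep of exactly $2\pi$ around $O$ then gives simultaneously both simplicity and the fact that $O$, and hence all of $P$, lies in the bounded interior.

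For (3), each $K$-portion is a connected sub-arc of $\partial P$ whose relative interior is disjoint from $\sigma P$ (by the defining property of $K$-points) and disjoint from the outer boundary (by (1), since the outer boundary is strictly outside $P$). Since $f(\T)$ is a closed annular region whose topological boundary is $\sigma P\cup$(outer boundary), the relative interior of any $K$-portion sits in a single connected component of the complement of this boundary, and hence the $K$-portion lies wholly in $f(\T)$ or wholly outside.

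I expect the main obstacle to be the rotation argument in (2): the half-plane bounds alone do not preclude a top border of $\block(e_i,e_{i+1})$ crossing a far-away vertex-type top border, since distinct $H_i^-$'s may overlap. The reduction via $O'$ is clean, but one must verify strict angular monotonicity across the transitions at the shared corners $A,B,C$; this is where Fact~\ref{fact:Z_advanced_bound} (placing $Z_i^{i+1}$ in the opposite quadrant of $\qd_i^{i+1}$) must be woven in carefully to rule out the angular velocity stalling at a corner.
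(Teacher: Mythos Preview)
Your arguments for (1) and (3) are essentially the paper's: for (1) the paper just cites Lemma~\ref{lemma:block-in-quad} (which is exactly your reflection observation packaged as $\block(u,u')\subset\qd_u^{u'}$), and for (3) both you and the paper invoke the annular shape of $f(\T)$ and connectedness. One small slip: you write ``$O'\in H_i^-$, which is precisely the opposite quadrant of $\qd_i^i$,'' but $H_i^-$ \emph{is} $\qd_i^i$, not its opposite; the opposite quadrant is the half-plane containing $P$, where $O$ itself already sits, so Lemma~\ref{lemma:border-monotone} applies to $O$ directly without the reflection detour.

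For (2) your route diverges from the paper's, and here there is a real gap. Strict angular monotonicity around $O$ along each top border does \emph{not} by itself force the outer boundary to be simple: a closed curve with strictly increasing polar angle has total angular increment $2\pi k$ for some integer $k\ge1$, and for $k\ge2$ such a curve necessarily self-intersects (an IVT argument on $r(\alpha)-r(\alpha+2\pi)$). You never argue that $k=1$; the corner analysis you flag only addresses local strictness, not the global count. One could try to pin down $k$ via a homotopy to $\partial P$, but that is extra work. The paper sidesteps all of this: since \textsc{Block-disjointness} is already established by this point, any two distinct blocks can meet only in the interior of $P$, so their top borders (which by (1) lie entirely outside $P$) cannot meet at all, and simplicity of the concatenation is immediate. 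That is the intended one-line argument, and it is worth adopting.
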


\begin{proof}
(1) The top border of $\block(v_i,v_{i+1})$ is $\block(v_i,v_{i+1})$ itself, and by Lemma~\ref{lemma:block-in-quad} it lies in $\qd_i^i$ and hence lies outside $P$.
The top border of $\block(e_i,e_{i+1})$ is the concatenation of two borders;
  one is parallel to $e_i$ and the other is parallel to $e_{i+1}$.
Following from the fact $\block(e_i,e_{i+1})\subset\qd_i^{i+1}$,
  the former border lies on the left of $e_i$ whereas the latter one lies on the left of $e_{i+1}$; so both borders lie outside $P$.
Therefore, the top borders lie outside $P$.

\smallskip\noindent (2) We already know the outer boundary is closed.
Combining part~1 with the \textsc{Block-disjointness}, two top borders do not intersect.
Therefore, the outer boundary is simple and contains $P$ in its interior (see Figure~\ref{fig:outerboundary}).

\smallskip\noindent (3) Clearly, $f(\T)$ has an annular shape which is bounded by its inner and outer boundaries.
  Combining the above result on the outer boundary with the \textsc{Interleavity-of-$f$} of the inner boundary, we obtain part~3.
\end{proof}

\paragraph*{Extend the definition of $f_2^{-1}$.}
Recall $f_2^{-1}()$ in Theorem~\ref{thm:nestp}.
So far, it is defined on $f(\T)\cap \partial P$, but \textbf{not} on the $K$-points. This is implied by Fact~\ref{fact:fK-notdefinedyet} below.
However, to prove the monotonicity of $f_2^{-1}()$, it is convenient to also define $f_2^{-1}$ on the $K$-points.
Below we show a natural way to extend the definition of $f_2^{-1}$ on to those $K$-points.\smallskip

Consider any $K$-point $K_i$. Assume it comes from the bottom border of $\block(u,u')$. Recall the extended definition of $f^{-1,2}_{u,u'}()$ above Definition~\ref{def:g}. We simply define $f^{-1}_2(K_i):=f^{-1,2}_{u,u'}(K_i)$.

\begin{fact}\label{fact:fK-notdefinedyet}
$K_1,\ldots,K_m$ are not contained in $f(\T)\cap \partial P$.
\end{fact}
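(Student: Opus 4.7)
Proof plan. I will argue by contradiction: assume some $K = K_j$ lies in $f(\T)$. Since $K \in \partial P$, the \RF yields a unique preimage $(X_1, X_2, X_3) \in \T^*$ with $f(X_1, X_2, X_3) = K$; setting $u = \unit(X_3)$ and $u' = \unit(X_1)$, this places $K \in \block(u, u')$. Simultaneously, $K \in \sigma P$ means $K$ lies on the bottom border of some frontier block $\block(u^\star, {u^\star}')$. I will split on whether the two block labels coincide.

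The first subcase is $(u, u') = (u^\star, {u^\star}')$. Here I plan to show that a block's bottom border is disjoint from the block itself (when the block has a $2$-dimensional interior). For $(u, u') \in \{(e_i, e_j), (e_i, v_j), (v_i, e_j)\}$, the formula (\ref{eqn:block_scale}) represents $\block(u, u')$ as a union of $2$-scalings of the open region $u \oplus u'$; since the edges of $P$ are open by convention (Section~\ref{sect:preliminiary}), the bottom-border pieces (namely $2$-scalings of $v_i \oplus e_j$-type open segments about $Z$-points, or reflections of $\zeta$ around endpoints of $u \oplus u'$) correspond exactly to the limit positions where $X_1$ or $X_3$ reaches a vertex not in its open edge, and are therefore not attained inside $\block(u, u')$. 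Thus $K$ cannot simultaneously be on the bottom border and have a preimage in $\T(u, u')$. The remaining vertex-vertex possibility is ruled out by inspecting Algorithm~\ref{alg:FPL-def}: every frontier vertex-vertex pair has the form $(v_k, v_{k+1})$, and Lemma~\ref{lemma:block-in-quad} puts $\block(v_k, v_{k+1}) \subset \qd_k^k$, the open half-plane strictly on the exterior side of $\el_k$. Hence $\block(v_k, v_{k+1}) \cap \partial P = \emptyset$, contradicting $K \in \block(u, u') \cap \partial P$.

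The second subcase is $(u, u') \neq (u^\star, {u^\star}')$. The vertex-vertex argument above already forbids $(u^\star, {u^\star}')$ from being vertex-vertex (else $K$, lying on the bottom border which in that case equals the block, would be in $\block(u^\star, {u^\star}') \cap \partial P = \emptyset$). So $\block(u^\star, {u^\star}')$ is $2$-dimensional, and $K$ is a limit point of its open $2$-dimensional interior. Since $K \in \block(u, u')$, any sufficiently small open disk $D$ around $K$ meets $\block(u, u')$; combined with $K$ being on the topological boundary of $\block(u^\star, {u^\star}')$, $D$ also meets the interior of $\block(u^\star, {u^\star}')$. Thus $\block(u, u') \cap \block(u^\star, {u^\star}')$ contains points arbitrarily close to $K$. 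By the \BD, every such intersection point lies in the interior of $P$; hence a neighborhood of $K$ in $\block(u^\star, {u^\star}')$ is trapped in the interior of $P$. However, using that $K$ is a smooth point of $\partial P$ lying on the topological boundary of the $2$-dimensional $\block(u^\star, {u^\star}')$, the interior of this block accumulates at $K$ from both sides of $\partial P$ locally, producing points of $\block(u^\star, {u^\star}') \cap \block(u, u')$ on the exterior side of $P$ — contradicting \BD.

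The main obstacle will be making rigorous exactly which points lie in each block in the mixed open/closed cases from Subsection~\ref{subsect:misc-block-def} (edge endpoints excluded, $\zeta$-endpoints included), which is what powers both subcases. The vertex-vertex case is delicate because there the block is $1$-dimensional and coincides with its own bottom border, so I rely essentially on Algorithm~\ref{alg:FPL-def} and Lemma~\ref{lemma:block-in-quad} to sidestep it; the second subcase's neighborhood-and-\BD argument is clean provided one checks that a $2$-dimensional block's interior indeed straddles $\partial P$ near a point of its bottom border that sits on $\partial P$ — a local convexity/parametrization check I expect to be routine but that must be spelled out.
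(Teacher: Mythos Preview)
Your first subcase and the vertex-vertex handling are correct and essentially match the paper: the paper too observes that a frontier vertex-vertex pair must be $(v_k,v_{k+1})$ with $\block(v_k,v_{k+1})$ lying outside $P$, and that for the remaining frontier blocks the bottom border is not contained in the block because it corresponds to $X_1$ or $X_3$ reaching an endpoint of an open edge. The paper then writes ``This further implies that $K_i$ is not contained in $f(\T)$. (Some details are omitted for simplicity.)'' --- so your Case~2 is precisely the spot the paper waves its hands over, and it is fair that you try to fill it in.

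Your Case~2 argument, however, does not go through as written. The inference ``$D$ meets $\block(u,u')$ and $D$ meets the interior of $\block(u^\star,{u^\star}')$, hence $\block(u,u')\cap\block(u^\star,{u^\star}')$ has points arbitrarily close to $K$'' is a non-sequitur: two sets can each meet a small disk without meeting each other. You need $K$ in the \emph{interior} of $\block(u,u')$ so that some disk $D\subset\block(u,u')$; but this fails outright when $(u,u')$ is vertex-vertex (the block is a curve), and can also fail for vertex-edge or edge-vertex pairs, since those blocks include two of their four borders (the $2$-scalings of $u\oplus u'$ about the endpoints of $\zeta(u,u')$) and $K$ may sit on one of those. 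The follow-up claim that the interior of $\block(u^\star,{u^\star}')$ ``accumulates at $K$ from both sides of $\partial P$'' is likewise not automatic: it needs the bottom-border segment through $K$ to be transversal to $\partial P$ at $K$, which you have not argued and which is delicate when $K$ happens to be a vertex of $P$.

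A route more in line with the paper's machinery is to avoid the neighborhood argument entirely. For blocks local to the same extremal pair as $(u^\star,{u^\star}')$, Fact~\ref{fact:local-area-disjoint} already gives pairwise disjointness, and the inductive tiling in the proof of Fact~\ref{fact:local-area-disjoint-intermed} shows that the bottom border of a frontier block is the innermost boundary of the whole local tiling, hence not contained in any other local block. For a non-local block $\block(u,u')$, Lemma~\ref{lemma:block-in-quad} gives $K\in\qd_u^{u'}$, while $K$ lies in the closure of $\qd_{u^\star}^{{u^\star}'}$; the peculiar property (Lemma~\ref{lemma:br_peculiar}) then forces $K$ into the interior of $P$, contradicting $K\in\partial P$.
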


\begin{proof}
Without loss of generality, assume some $K$-point $K_i$ comes from the bottom border of the frontier block $\block(u,u')$.
We first argue that $u,u'$ cannot both be vertices. For a contradiction, suppose they are both vertices.
Then, $u'$ must be the clockwise next vertex of $u$ since $\block(u,u')$ is a frontier block (this can be observed from Algorithm~\ref{alg:FPL-def}).
Then, by Fact~\ref{fact:outer-boundary-outside}, $\block(u,u')$ lies outside $P$, so its bottom border (which is the block itself)
  has no intersection with $\partial P$.
This means $K_i$ cannot come from the bottom border of $\block(u,u')$. Contradictory.

Therefore, $u,u'$ comprise at least one edge.
    Then, according to Definitions~\ref{def:lower-borders} and \ref{def:bottom-borders},
     the lower borders and bottom border of $\block(u,u')$ are not contained $\block(u,u')$.
     This further implies that $K_i$ is not contained in $f(\T)$.
\end{proof}

\begin{proof}[Proof of the \MF]
We shall prove the circularly monotone property  of $f^{-2}(X)$. We state two observations first.
    (Recall function $g$ from $\sigma P$ to $\partial P$ in Definition~\ref{def:g}.)\smallskip

(i) \emph{The value of $f^{-1}_2()$ is continuous at the $K$-points.}\smallskip

(ii) \emph{Function $g$ is circularly monotone on curve $\sigma P$} (see Figure~\ref{fig:sigmaP-functionG}~(c)). \smallskip

The way we extend $f^{-1,2}_{u,u'}()$ onto the lower border of $\block(u,u')$ implies observation~(i).
Also according to this extension, $f^{-1,2}_{u,u'}()$ is monotone on the lower border of $\block(u,u')$,
  which implies observation~(ii).

\smallskip We then state two more observations.\smallskip

(iii) \emph{Points $f^{-1}_2(K_1),\ldots,f^{-1}_2(K_m)$ lie in clockwise order around $\partial P$.}\smallskip

(iv) \emph{Function $f^{-1}_2()$ is monotone on any $K$-portion that lies in $f(\T)$. In other words,
  when point $X$ travels along such a $K$-portion, $f^{-1}_2(X)$ moves in clockwise order around $\partial P$ non-strictly.}\smallskip

Proof of (iii): Since $K_1,\ldots,K_m$ lie in clockwise around $\partial P$, they lie in clockwise around $\sigma P$ due to the \IP,
  and thus $g(K_1),\ldots,g(K_m)$ lie in clockwise around $\partial P$ according to observation~(ii).
  Furthermore, notice that $f^{-1}_2(K_i)=g(K_i)$, we obtain observation~(iii).\smallskip

Proof of (iv): This observation follows from the \LMF (Lemma~\ref{lemma:LMF}); because when $X$ travels along a $K$-portion that lies in $f(\T)$, it travels inside some blocks (see Figure~\ref{fig:outline_f-monotone}).

\smallskip We now complete the proof. By Fact~\ref{fact:outer-boundary-outside}, $f(\T)\cap \partial P$ consists of those $K$-portions lying in $f(\T)$.
  Imagine that a point $X$ travels around $f(\T)\cap \partial P$ in clockwise;
    observation~(iv) assures that $f^{-1}_2(X)$ is monotone inside each $K$-portion,
    whereas observations~(i) and (iii) assure that $f^{-1}_2(X)$ is monotone between the $K$-portions.
\end{proof}


\subsection{Proof of the \SC}\label{sect:sector-continue}

\paragraph*{Outline of this subsection.} Assume $V$ is a fixed vertex.
\begin{enumerate}
\item  Prove that  $\sector(V)$ equals the 2-scaling of $({\bigcup}_{ (u,u')\in \Lambda_V} u\oplus u')$ with respect to $V$,
       where $$\Lambda_V =\{(u,u')\mid u \text{ is chasing }u', \text{ and }\zeta(u,u')\text{ contains }V \}.$$
        Moreover, define delimiting edges $e_{s_V},e_{t_V}$.
\item Prove a few crucial properties of $e_{s_V},e_{t_V}$. (see subsection~\ref{subsect:singlesector_st})
\item Define $\LV$, $\RV, \MID$ based on $e_{s_V}$ and $e_{t_V}$ (as shown in section~\ref{sect:techover}). Then,
    prove a structural property of $\Lambda_V$ applying $\MID$. (see subsection~\ref{subsect:MID-Lambda})
\item Define $\MIDSCALE,\LVS,\RVS$ as the 2-scalings of $\MID,\LV,\RV$ with respect to $V$, respectively.
    Prove that $\MIDSCALE$ is the closed set of $\sector(V)$.
    Based on this result, prove that $\LVS,\RVS$ are boundaries of $\sector(V)$
      and further prove the \SC. (see subsection~\ref{subsect:singlesector_finalproof})
\end{enumerate}

\begin{proof}[Proof of the above formula of $\sector(V)$.]
\[\begin{split}
\sector(V)=& f\left(\{(X_1,X_2,X_3)\in \T \mid X_2= V\}\right) \quad \text{(By definition (\ref{def:sector}))}\\
=& f\left({\bigcup}_{\text{$u$ is chasing $u'$}}\left\{(X_1,X_2,X_3)\mid X_1\in u', X_2= V,X_2\in \zeta(u,u'),X_3\in u\right\}\right)\\
=& f\left({\bigcup}_{u \text{ is chasing }u', V\in \zeta(u,u')}\left\{(X_1,V,X_3)\mid X_3\in u,X_1\in u'\right\}\right)\\
=& {\bigcup}_{(u,u')\in \Lambda_V} f(\{(X_1,V,X_3)\mid X_3\in u,X_1\in u'\})\\
=& {\bigcup}_{(u,u')\in \Lambda_V}\text{2-scaling of $(u\oplus u')$ with respect to $V$}\\
=&\text{2-scaling of $\bigg({\bigcup}_{(u,u')\in \Lambda_V}u\oplus u'\bigg)$ with respect to $V$.}
\end{split}
\]
\end{proof}

\begin{description}
\item [A relation $\leq_V$ between the edges.] We say $e_i$ is \emph{smaller than} $e_j$ or $e_j$ is \emph{larger than} $e_i$ (with respect to $V$),
   if $e_i$ would appear earlier than $e_j$ when we enumerate all the edges of $P$ in clockwise order from $forw(V)$ to $back(V)$.
Denote by $e_i\leq_V e_j$ if $e_i$ is smaller than or identical to $e_j$.
\end{description}
\newcommand{\UZ}{\omega}

The following definition is crucial to this subsection.
\begin{definition}[Delimiting edges $e_{s_V}$ \& $e_{t_V}$]\label{def:st}
Recall that $\D_i$ is the furthest vertex to $\el_i$. For any edge $e_i$, denote
\begin{equation}
\begin{array}{ccccc}
\UZ^+_i & = & \bigcup_{e_j: e_i\prec e_j}[v_{i+1}\circlearrowright Z_i^j] & = & [v_{i+1}\circlearrowright Z_i^{back(\D_i)}],\\
\UZ^-_i & = & \bigcup_{e_k: e_k\prec e_i}[Z_k^i \circlearrowright v_i]    & = & [Z_{forw(\D_i)}^i \circlearrowright v_i].
\end{array}
\end{equation}

We define $e_{s_V},e_{t_V}$ respectively as the smallest edge $e_i$ such that $\UZ^+_i$ contains $V$,
    and the largest edge $e_i$ such that $\UZ^-_i$ contains $V$.
    Throughout this subsection, ``smallest'' or ``largest'' is with respect to relation $\leq_V$ introduced above.
See Figure~\ref{fig:def_st} for an illustration for this definition.
We abbreviate $s_V,t_V$ as $s,t$.
\end{definition}

\textbf{Note}: Clearly, $V\in \UZ^+_{back(V)}$, so there is at least one element in $\{\UZ^+_i\}$ which contains $V$. So $s_V$ is well defined.\smallskip

\textbf{Note}: Clearly, $V\in \UZ^-_{forw(V)}$, so there is at least one element in $\{\UZ^-_i\}$ which contains $V$. So $t_V$ is well defined.

\begin{figure}[h]
\begin{minipage}{0.66\textwidth}
\centering \includegraphics[width=.85\textwidth]{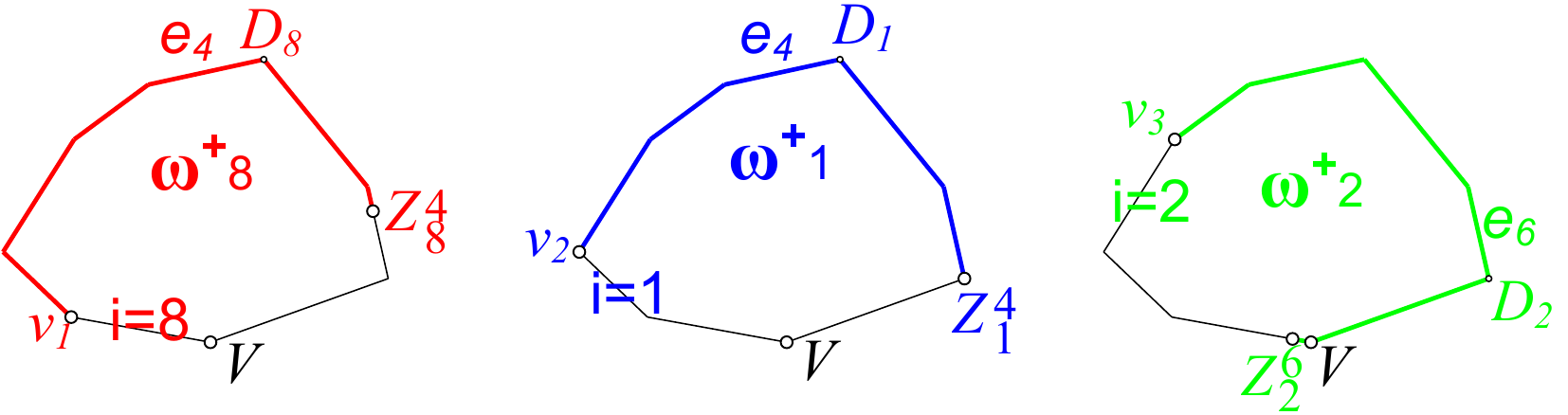}
\end{minipage}
\begin{minipage}{0.25\textwidth}
\centering\begin{tabular}{|c|}
  \hline
  $\UZ^+_8 = [ v_1 \circlearrowright Z_8^4]$ \\ \hline
  $\UZ^+_1 = [ v_2 \circlearrowright Z_1^4]$ \\ \hline
  $\UZ^+_2 = [ v_3 \circlearrowright Z_2^6]$ \\
  \hline
\end{tabular}
\end{minipage}\\ ~\\

\begin{minipage}{0.66\textwidth}
\centering \includegraphics[width=.85\textwidth]{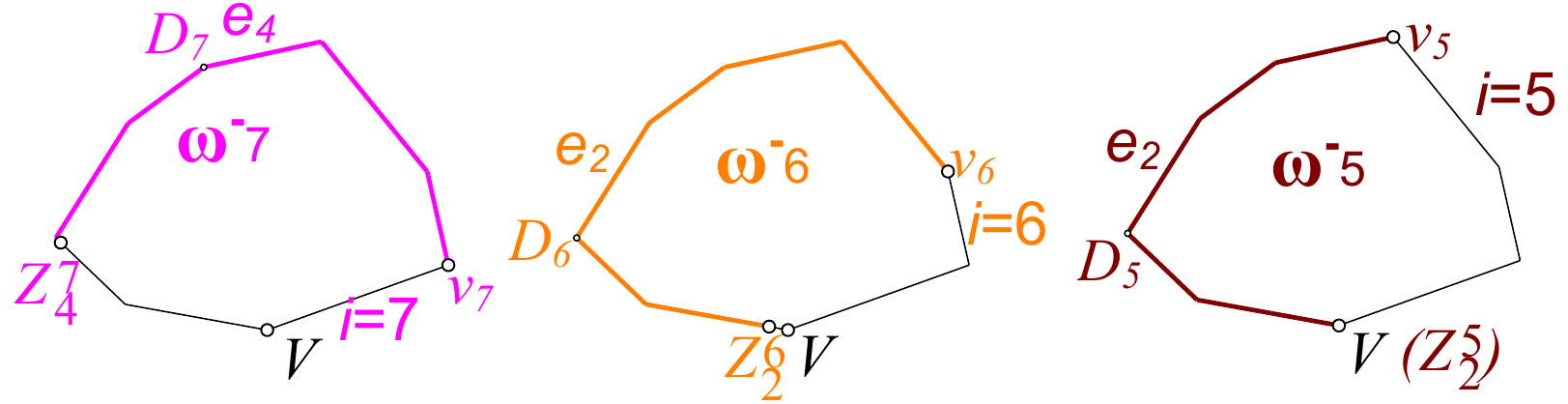}
\end{minipage}
\begin{minipage}{0.25\textwidth}
\centering\begin{tabular}{|c|}
  \hline
  $\UZ^-_7 = [ Z_4^7\circlearrowright v_7]$ \\ \hline
  $\UZ^-_6 = [ Z_2^6\circlearrowright v_6]$ \\ \hline
  $\UZ^-_5 = [ Z_2^5\circlearrowright v_5]$ \\
  \hline
\end{tabular}
\end{minipage}
\caption{Demonstration of the definitions of $s_V$ and $t_V$. Here, $s_V=2,t_V=5$.}\label{fig:def_st}
\end{figure}

\subsubsection{Crucial properties of $e_{s_V},e_{t_V}$}\label{subsect:singlesector_st}

\begin{fact}\label{fact:st1}
$e_s\preceq e_t$ and the inferior portion $[v_s\circlearrowright v_{t+1}]$ does not contain $V$.
\end{fact}

\begin{proof} Assume $V=v_1$ for simplicity. This proof is divided into three parts as shown below.\smallskip

\noindent \textbf{Part 1): prove that $e_s\leq_V e_t$.} To this end, we first state an observation:\smallskip

\quad (i) $e_s\leq_V e_{s^*}$ and $e_{t^*}\leq_V e_t$, where $e_{s^*}=forw(\D_n)$ and $e_{t^*}=back(\D_1)$.\smallskip

\noindent \emph{Proof of (i)}: See Figure~\ref{fig:s_chasing_t}~(a).
By Fact~\ref{fact:dist-unique-location}, $Z_{s^*}^{n}$ lies in $(V\circlearrowright v_{s^*})$.
Therefore, $V\in [v_{s^*+1} \circlearrowright Z_{s^*}^{n}]$.
Moreover, $[v_{s^*+1} \circlearrowright Z_{s^*}^{n}]\subseteq \UZ^+_{s^*}$ by the definition of $\UZ^+_{s^*}$.
Therefore, $V\in \UZ^+_{s^*}$, which implies $e_s\leq_V e_{s^*}$ due to the definition of $s$.
Symmetrically, $V\in \UZ^-_{t^*}$ and thus $e_{t^*}\leq_V e_t$.

\begin{figure}[b]
\centering \includegraphics[width=.62\textwidth]{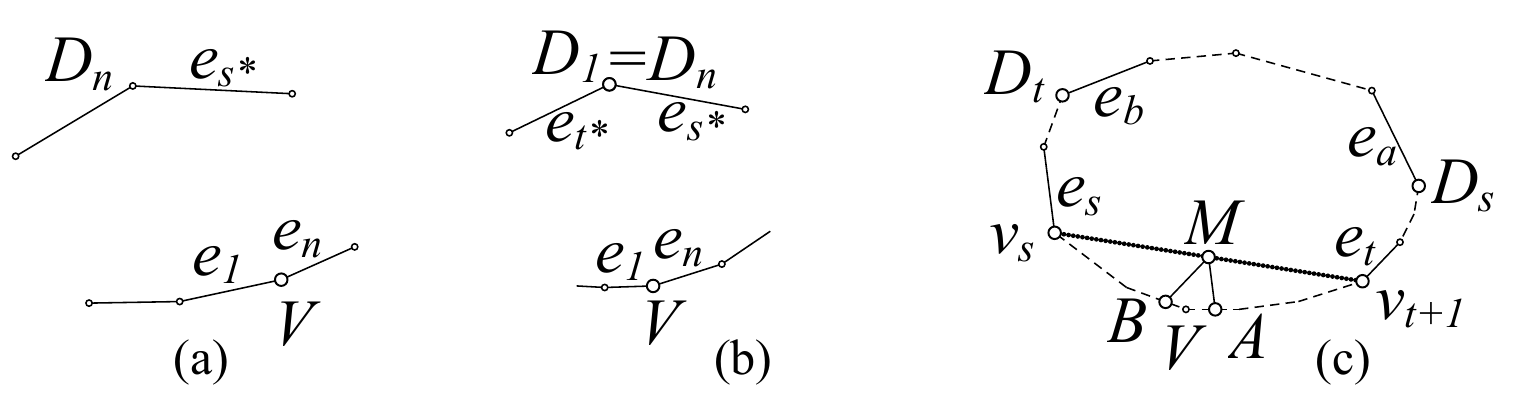}
\caption{Illustration of the proof of the relation between $e_s,e_t$ and $V$}\label{fig:s_chasing_t}
\end{figure}

We now use the above observation to prove that $e_s\leq_V e_t$.
\begin{itemize}
\item[Case~1] $\D_1\neq \D_n$. In this case $e_{s^*}\leq_V e_{t^*}$. Combining with observation (i), we get $e_s\leq_V e_t$.
\item[Case~2] $\D_1=\D_n$. See Figure~\ref{fig:s_chasing_t}~(b).
    In this case $Z_{t^*}^{s^*}$ is defined since $e_{s^*}$ is the next edge of $e_{t^*}$.
\begin{itemize}
\item [Case~2.1] $Z_{t^*}^{s^*}$ lies in $[V \circlearrowright \D_1]$. In this subcase, we argue that $e_s\leq_V e_{t^*}$.
 Since $V\in [\D_1\circlearrowright Z_{t^*}^{s^*}]$, whereas $[\D_1\circlearrowright Z_{t^*}^{s^*}]=[v_{t^*+1}\circlearrowright Z_{t^*}^{s^*}]\subseteq \UZ^+_{t^*}$,
 we get $V\in \UZ^+_{t^*}$, which implies that $e_s\leq_V e_{t^*}$ according to the definition of $s$.
 Combining $e_s\leq_V e_{t^*}$ with $e_{t^*}\leq_V e_t$ stated in observation (i), we get $e_s\leq_V e_t$.
\item [Case~2.2] $Z_{t^*}^{s^*}$ lies in $[\D_1 \circlearrowright V]$. In this subcase, we argue that $e_{s^*}\leq_V e_t$.
    The proof is symmetric to Case~2.1 and omitted.
  Combining $e_{s^*}\leq_V e_t$ with $e_s\leq_V e_{s^*}$ stated in observation (i), we get $e_s\leq_V e_t$.
\end{itemize}
\end{itemize}

\noindent \textbf{Part 2): prove that $[v_s\circlearrowright v_{t+1}]$ does not contain $V$.}

By the definition of $\UZ^+$, we get $V\notin\UZ^+_{forw(V)}$, which means $e_s\neq forw(V)$, i.e.\ $V\neq v_s$.
Symmetrically, $V\neq v_{t+1}$.
In addition, applying $e_s\leq_V e_t$, we get $V\notin (v_s\circlearrowright v_{t+1})$. Altogether, $V\notin [v_s\circlearrowright v_{t+1}]$.

\medskip \noindent \textbf{Part 3): prove that $e_s\preceq e_t$.}
For a contradiction, suppose that $e_t\prec e_s$, as shown in Figure~\ref{fig:s_chasing_t}~(c).

Denote $e_a=back(\D_s)$ and $e_b=forw(\D_t)$.
If $\D_s\neq \D_t$, denote $\rho=[\D_s\circlearrowright \D_t]$; otherwise, let $\rho$ denote the entire boundary of $P$ and assume that it starts and terminates at $\D_s$. Consider points $Z_s^a$ and $Z_b^t$, which lie in $\rho$ according to Fact~\ref{fact:Z_bi-monotonicity}.
We claim that (I) $Z_b^t\leq_\rho Z_s^a$ and (II) $Z_s^a<_\rho Z_b^t$, which lead to a contradiction.

\medskip \noindent \emph{Proof of (I).}
By definition of $s$, we have $V\in\UZ^+_{s}=[v_{s+1}\circlearrowright Z_s^a]$. This means $V\leq_\rho Z_s^a$.
By definition of $t$, we have $V\in \UZ^-_{t}=[Z_b^t\circlearrowright v_t]$. This means $Z_b^t\leq_\rho V$.
Together, we get (I).

\medskip \noindent \emph{Proof of (II).} Let $M=(v_s+v_{t+1})/2$.
Recall that $\pl_i(X)$ denotes the unique line at point $X$ that is parallel to $e_i$.
Let $A$ be the intersection of $\pl_{s}(M)$ and $[v_{t+1}\circlearrowright v_s]$, and $B$ the intersection of $\pl_{t}(M)$ and $[v_{t+1}\circlearrowright v_s]$.
We claim that $Z_s^a<_\rho A<_\rho B<_\rho Z_b^t$, which implies (II).
The inequality $A<_\rho B$ follows from the assumption $e_t\prec e_s$.
We prove $Z_s^a<_\rho A$ in the next paragraph; the proof of $B<_\rho Z_b^t$ is symmetric and omitted.

\medskip \noindent \emph{Proof of $Z_s^a<_\rho A$.} Denote by $h$ the open half-plane delimited by $\pl_{s}(M)$ and containing $v_{t+1}$.
By the definition of $\D_s$, it is further than $v_{t+1}$ on the distance to $\el_s$, so the mid point of $v_s$ and $\D_s$ is contained in $h$.
  Therefore the opposite quadrant of $\qd_s^a$, together with its boundary, are contained in $h$.
However, $Z_s^a$ lies in or on the boundary of the opposite quadrant of $\qd_s^a$ (by Fact~\ref{fact:Z_advanced_bound}).
Therefore, $Z_s^a\in h$, which implies that $Z_s^a<_\rho A$.
\end{proof}

\begin{fact}\label{fact:st2}
For $(u,u')\in \Lambda_V$, units $u,u'$ both lie in $[v_s\circlearrowright v_{t+1}]$.
\end{fact}

\begin{proof}
Let $e_a=back(u),e_{a'}=back(u'),e_b=forw(u),e_{b'}=forw(u')$.
Assume $(u,u')\in \Lambda_V$. Then, $V\in \zeta(u,u')$.
Notice that $V\in\zeta(u,u')=[Z_a^{a'}\circlearrowright Z_b^{b'}]\subseteq [v_{b+1}\circlearrowright Z_b^{b'}]\subseteq \UZ^+_b$.
So, $V\in \UZ^+_b$. This implies that $e_s\leq_V e_b$ by the definition of $s$.
Symmetrically, $V\in \UZ^-_{a'}$, which implies that $e_{a'} \leq_V e_t$ by the definition of $t$.

Since $u$ is chasing $u'$, we have $forw(u)\preceq back(u')$.
Together, $e_s\leq_V forw(u)\preceq back(u')\leq_V e_t$.
Further since $e_s\preceq e_t$ (By Fact~\ref{fact:st1}), $e_s\leq_V forw(u)\leq_V back(u')\leq_V e_t$,
hence $u$ and $u'$ both lie in $[v_s\circlearrowright v_{t+1}]$.
\end{proof}

\begin{fact}\label{fact:UZ-monotone}
(1) $V\in\UZ^+_i$ if and only if $e_s\leq_V e_i$.
(2) $V\in\UZ^-_j$ if and only if $e_j\leq_V e_t$,
\end{fact}

\begin{proof}
We only give the proof of Claim~1. Claim~2 is symmetric.

The ``only if'' part follows from the definition of $s$. We shall prove that $V\in \UZ^+_i$ for $e_i\in \{e_s,e_{s+1},\ldots,back(V)\}$.
We prove it by induction. Recall that $\UZ^+_i = [v_{i+1}\circlearrowright Z_i^{back(\D_i)}]$.
Initially, let $i=s$. We know $[v_{s+1}\circlearrowright Z_s^{back(\D_s)}]$ contains $V$ by the definition of $s$.
Next, consider $\UZ^+_{i+1}=[v_{i+2}\circlearrowright Z_{i+1}^{back(\D_{i+1})}]$.
See Figure~\ref{fig:st-properties}~(a). By the bi-monotonicity of the $Z$-points, $Z_{i+1}^{back(\D_{i+1})}$ lies in
  $[Z_i^{back(\D_i)}\circlearrowright v_{i+1}]$. This implies that $\UZ^+_{i+1}$ contains $V$.
\end{proof}

\begin{figure}[h]
\centering\includegraphics[width=0.7\textwidth]{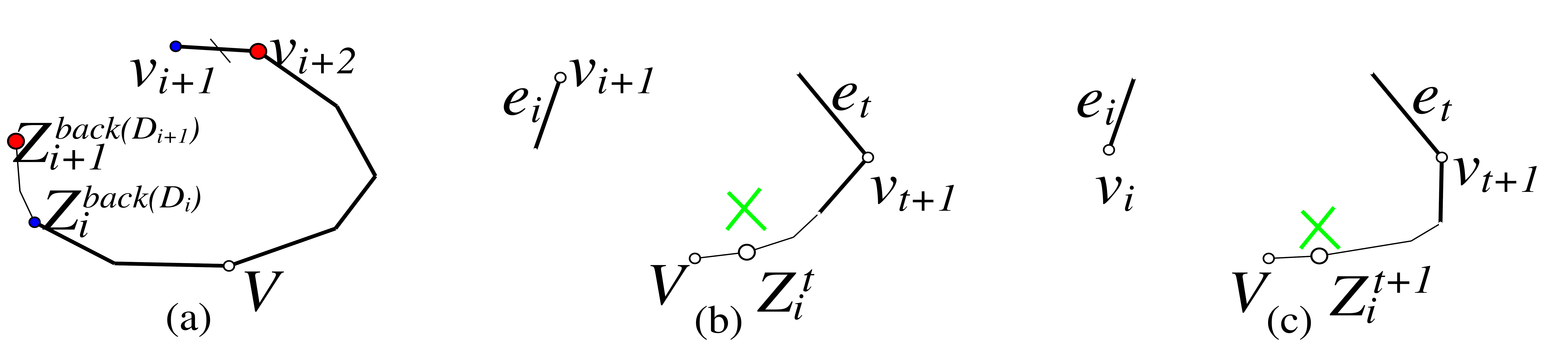}
\caption{Illustrations of the proofs of Fact~\ref{fact:UZ-monotone} and Fact~\ref{fact:st-detailed-facts}.}\label{fig:st-properties}
\end{figure}

\subsubsection{Definition of $\MID$ and a structural property of $\Lambda_V$}\label{subsect:MID-Lambda}

Recall set $\Delta_V$, region $\MID$ and its two boundaries $\LV,\RV$ defined in section~\ref{sect:techover} (see Figure~\ref{fig:mid_V}).

\begin{lemma}\label{lemma:MID-zeta} $(u,u')\in \Lambda_V$ if and only if
    \quad $(u,u')\in\Delta_V, u\oplus u'\subseteq \MID, \text{ and $u$ is chasing $u'$}$.
\end{lemma}
\textbf{Note:} Although $e_s\preceq e_t$, set $\Delta_V$ sometimes may contain unit pair $(u,u')$ such that $u$ is not chasing $u'$.
For example, $(v_s,v_{t+1})\in \Delta_V$, but it is possible that $v_s$ is not chasing $e_{t+1}$.

\smallskip Before proving Lemma~\ref{lemma:MID-zeta}, we state some additional observations of $s,t$ in Fact~\ref{fact:st-detailed-facts}.

\begin{fact}\label{fact:st-detailed-facts} Denote $\rho =[v_{t+1}\circlearrowright v_s]$.
\begin{enumerate}
\item[1.] For $e_i$ in $[v_s\circlearrowright v_{t+1}]$ and $e_i\neq e_t$, when $Z_i^t <_\rho V$, we have $e_i\prec e_{t+1}$.
\item[2.] For $e_i$ in $[v_s\circlearrowright v_{t+1}]$, when $e_i\prec e_{t+1}$, we have $Z_i^{t+1}\in (V\circlearrowright v_i)$.
\item[3.] For $e_j$ in $[v_s\circlearrowright v_{t+1}]$ and $e_j\neq e_s$, when $Z_s^j >_\rho V$, we have $e_{s-1}\prec e_j$.
\item[4.] For $e_j$ in $[v_s\circlearrowright v_{t+1}]$, when $e_{s-1}\prec e_j$, we have $Z_{s-1}^j\in (v_{j+1}\circlearrowright V)$.
\end{enumerate}
\end{fact}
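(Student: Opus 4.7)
Parts~(3) and~(4) are mirror images of~(1) and~(2) obtained by reversing the orientation of $\partial P$ and interchanging the roles of $e_s,e_t$ and of $\UZ^+,\UZ^-$; the definitions in Definition~\ref{def:st} are completely symmetric under this reflection, so I describe only the proofs of~(1) and~(2). Both will rely on three ingredients: the maximality clause in Definition~\ref{def:st} defining $e_t$, the bi-monotonicity of the $Z$-points (Fact~\ref{fact:Z_bi-monotonicity}), and the location bound $Z_i^j\in(v_{j+1}\circlearrowright v_i)$ from Fact~\ref{fact:dist-unique-location}.

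\paragraph*{Part~(2).} Abbreviate $e_m:=forw(\D_{t+1})$. Since $V\neq v_{t+1}$ by~(\ref{eqn:stV}), in the ordering $\leq_V$ of Definition~\ref{def:st} we have $e_t<_V e_{t+1}$, so the maximality clause defining $e_t$ forces $V\notin\UZ^-_{t+1}=[Z_m^{t+1}\circlearrowright v_{t+1}]$; equivalently, $V$ appears strictly clockwise-before $Z_m^{t+1}$ when traveling along $\partial P$ starting at $v_{t+2}$. The hypothesis $e_i\prec e_{t+1}$ combined with $e_i\in[v_s\circlearrowright v_{t+1}]$ gives $e_m\preceq e_i\prec e_{t+1}$, so bi-monotonicity applied within the window $[v_{t+2}\circlearrowright \D_{t+1}]$ places $Z_m^{t+1}\leq Z_i^{t+1}$ there. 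Splicing these placements, the clockwise order along $\partial P$ starting at $v_{t+1}$ reads $V$, $Z_m^{t+1}$, $Z_i^{t+1}$, $\D_{t+1}$, $\ldots$, $v_i$; in particular $Z_i^{t+1}\in(V\circlearrowright v_i)$.

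\paragraph*{Part~(1) and the main obstacle.} I argue by contradiction: assume $e_i\not\prec e_{t+1}$. Since $e_i\in\{e_s,\ldots,e_t\}$ we have $e_i\neq e_{t+1}$, hence $e_{t+1}\prec e_i$; geometrically this locates $v_i\in[v_{t+2}\circlearrowright \D_{t+1})$, and combining with $e_i\in[v_s\circlearrowright v_{t+1}]$ forces $\D_{t+1}\in(v_i\circlearrowright v_{t+1}]\subseteq[v_s\circlearrowright v_{t+1}]$. I then aim to derive $V\leq_\rho Z_i^t$, contradicting the hypothesis. Writing $e_{m'}:=forw(\D_t)$, bi-monotonicity applied to $(e_{m'},e_t)$ and $(e_i,e_t)$ yields $Z_{m'}^t\leq_\rho Z_i^t$, while $V\in\UZ^-_t=[Z_{m'}^t\circlearrowright v_t]$ places $V$ clockwise-after $Z_{m'}^t$. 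The main obstacle is to pin $V$ at or before $Z_i^t$ (rather than merely somewhere after $Z_{m'}^t$): this amounts to excluding the middle sub-portion using the maximality condition $V\notin\UZ^-_{t+1}$ together with the placement of $\D_{t+1}$ inside $[v_s\circlearrowright v_{t+1}]$ derived above. The difficulty is that the three chasing relations $e_i\prec e_t$, $e_t\prec e_{t+1}$, and the assumed $e_{t+1}\prec e_i$ form a 3-cycle, so the relative positions of $\D_t,\D_{t+1},v_i,V$ on $\partial P$ admit several configurations; a case split on whether $\D_t=\D_{t+1}$ and on the clockwise order of $v_i,\D_t,\D_{t+1}$ should handle each case, invoking bi-monotonicity once more to transport the maximality at index $t+1$ into the desired comparison on $Z_i^t$.
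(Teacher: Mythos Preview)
Your handling of Part~(2) is in the same spirit as the paper's: both pivot on the maximality clause for $e_t$, namely $V\notin\UZ^-_{t+1}$. The paper argues by a direct contradiction (if $Z_i^{t+1}\notin(V\circlearrowright v_i)$ then $[Z_i^{t+1}\circlearrowright v_{t+1}]\subseteq\UZ^-_{t+1}$ contains $V$), while you route through bi-monotonicity via the extreme point $Z_m^{t+1}$; the paper's version is shorter and avoids your unverified side-claim $e_m\preceq e_i$, but the two are essentially equivalent.

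For Part~(1), however, you are working much harder than necessary and in the wrong direction. Your plan tries to reach the contradiction through the $\UZ^-$ side (maximality at index $t{+}1$), which is why you run into the 3-cycle and the proposed case split. The paper instead uses the $\UZ^+$ side, and the whole argument collapses to two lines once you notice the following: the hypotheses $e_i\prec e_t$ and $e_i\nprec e_{t+1}$ say exactly that $d_{\el_i}(v_t)<d_{\el_i}(v_{t+1})$ and $d_{\el_i}(v_{t+1})>d_{\el_i}(v_{t+2})$, so by convexity $v_{t+1}$ is the global maximizer of $d_{\el_i}$, i.e.\ $\D_i=v_{t+1}$. Hence $back(\D_i)=e_t$ and $\UZ^+_i=[v_{i+1}\circlearrowright Z_i^t]$. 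Now Fact~\ref{fact:UZ-monotone} (already proved, using only $e_s\leq_V e_i$) gives $V\in\UZ^+_i$, while the hypothesis $Z_i^t<_\rho V$ says precisely that $V\notin[v_{i+1}\circlearrowright Z_i^t]$---a direct contradiction, with no case analysis and no appeal to $\D_t$, $\D_{t+1}$, or bi-monotonicity at all. The missing idea in your plan is the identification $\D_i=v_{t+1}$; once you have it, everything you wrote about the ``main obstacle'' becomes unnecessary.
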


We only show the proof of parts~1 and 2. Parts~3 and 4 are symmetric to parts~1 and 2, respectively.

\begin{proof}[Proof of part~1.] For a contradiction, suppose that $e_i\nprec e_{t+1}$, as shown in Figure~\ref{fig:st-properties}~(b).
This means $\D_i=v_{t+1}$, which further implies $\UZ^+_i=[v_{i+1}\circlearrowright Z_i^t]$.
Combining this equation with the assumption $Z_i^t <_\rho V$, we see $\UZ^+_i$ does not contain $V$,
Since $e_i$ lies in $[v_s\circlearrowright v_{t+1}]$, we know $e_s\leq_V e_i$ and so $\UZ^+_i$ contains $V$ by Fact~\ref{fact:UZ-monotone}. Contradictory.
\end{proof}

\begin{proof}[Proof of part~2.] For a contradiction, suppose that $Z_i^{t+1}$ does not lie in $(V\circlearrowright v_i)$.
Then, it must lie in $[v_{t+1} \circlearrowright V]$ according to Fact~\ref{fact:dist-unique-location}, as shown in Figure~\ref{fig:st-properties}~(c).
So $[Z_i^{t+1}\circlearrowright v_t]$ contains $V$.
Further since $[Z_i^{t+1}\circlearrowright v_t]\subseteq \UZ^-_{t+1}$,
  we get $V\in \UZ^-_{t+1}$.
This means $e_t$ is not the largest edge such that $\UZ^-_{t}$ contains $V$, contradicting the definition of $e_t$.
\end{proof}

\begin{proof}[Proof of Lemma~\ref{lemma:MID-zeta}]
According to Fact~\ref{fact:st2}, $\Lambda_V\subseteq \Delta_V$.
Thus it reduces to proving that for each $(u,u')\in\Delta_V$ such that $u$ is chasing $u'$,
\begin{equation}\label{eqn:zeta-MID}
\zeta(u,u')\text{ contains }V \quad \Leftrightarrow \quad u\oplus u' \subseteq \MID.
\end{equation}

 First, consider the trivial case $s=t$, where $(v_s,v_{t+1})$ is the only element in $\Delta_V$ so that $u$ is chasing $u'$
 ($v_i$ is always chasing $v_{i+1}$.) By definition, $\MID$ equals $v_s\oplus v_{t+1}$. It remains to show that $\zeta(v_s,v_{t+1})$ contains $V$.
By Fact~\ref{fact:st-detailed-facts}.2 and \ref{fact:st-detailed-facts}.4,
$Z_{s}^{t+1}\in (V \circlearrowright v_s)$ whereas $Z_{s-1}^t\in (v_{t+1}\circlearrowright V)$.
So, $V\in [Z_{s-1}^t \circlearrowright Z_{s}^{t+1}]$, i.e.\ $V\in \zeta(v_s,v_{t+1})$.

\begin{figure}[h]
\centering\includegraphics[width=\textwidth]{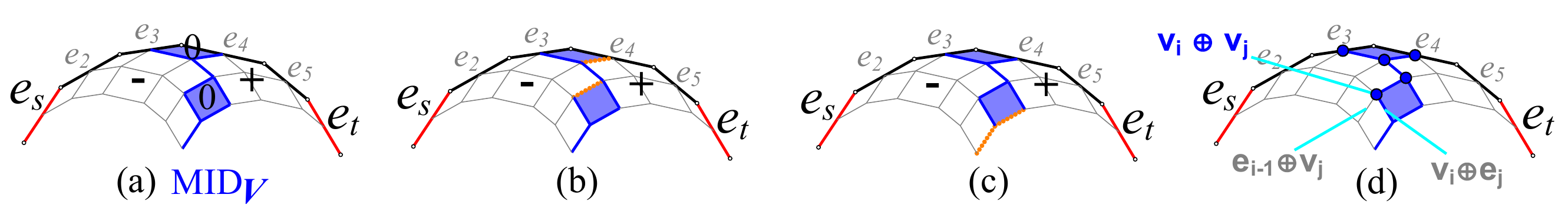}
\caption{Illustration of Statement~(\ref{eqn:zeta-MID}).}\label{fig:MID-zeta}
\end{figure}

Assume now $s\neq t$. Let $\rho=[v_{t+1} \circlearrowright v_s]$. Take an arbitrary unit pair $(u,u')\in \Delta_V$ such that $u$ is chasing $u'$.
\begin{itemize}
\item[Case~1:] \emph{$u,u'$ are both edges}.
    By definition of $\MID$, $u\oplus u'\subseteq \MID$ $\Leftrightarrow$ [$u\oplus u'$ is marked by `0'] $\Leftrightarrow$ $Z_u^{u'}=V$.
    By definition of $\zeta(u,u')$ in (\ref{eqn:zeta_chasing}), $V\in \zeta(u,u')$ $\Leftrightarrow$ $Z_u^{u'}=V$. Together, (\ref{eqn:zeta-MID}) holds.

\item[Case~2.1:] \emph{$(u,u')=(e_i,v_j)$ where $j\neq t+1$}. See the dotted segments in Figure~\ref{fig:MID-zeta}~(b).
    We have $u\oplus u' \subseteq \MID$ $\Leftrightarrow$
       [$e_i\oplus e_{j-1}$ is marked by `0/-' whereas $e_i\oplus e_j$ is marked by `0/+'] $\Leftrightarrow$
       [$Z_i^{j-1}\leq_\rho V\leq_\rho Z_i^j$] $\Leftrightarrow$
       $V\in \zeta(e_i,v_j)$. In particular, the first ``$\Leftrightarrow$'' is by the definition of $\MID$,
         the last ``$\Leftrightarrow$'' is by the definition of $\zeta(u,u')$.

\item[Case~2.2:] \emph{$(u,u')=(e_i,v_{t+1})$}. See the dotted segments in Figure~\ref{fig:MID-zeta}~(c).
    We have $u\oplus u'\subseteq \MID$ $\Leftrightarrow$ [$e_i\oplus e_t$ is marked by `0/-'] $\Leftrightarrow$ $Z_i^t \leq_\rho V$ $\Leftrightarrow$ $V\in \zeta(u,u')$.
    Here, the last ``$\Leftrightarrow$'' is non-obvious and is proved in the following.
    Since $u=e_i$ is chasing $u'=v_{t+1}$, we know $e_i\prec e_{t+1}$.
    Therefore, by Fact~\ref{fact:st-detailed-facts}.2, $Z_i^{t+1}\in (V\circlearrowright v_i)$.
        This implies that $Z_i^t \leq_\rho V$ $\Leftrightarrow$ $V\in [Z_i^t \circlearrowright Z_i^{t+1}]$.
        In other words, $Z_i^t \leq_\rho V$ $\Leftrightarrow$ $V\in \zeta(u,u')$.

\item[Case~2.3:] $u$ is a vertex and $u'$ is an edge. This is symmetric to Case~2.1 or Case~2.2.

\item[Case~3.1:] $(u,u')=(v_s,v_{t+1})$. (This does not necessarily occur since $v_s$ may not be chasing $v_{t+1}$.)\\
    Since $u$ is chasing $u'$, we have $e_{s-1}\prec e_t$ and $e_s\prec e_{t+1}$.
    By Fact~\ref{fact:st-detailed-facts}.2 and \ref{fact:st-detailed-facts}.4,
        $Z_{s-1}^t$ lies in $(v_{t+1}\circlearrowright V)$, whereas $Z_{s}^{t+1}$ lies in $(V \circlearrowright v_s)$.
    Therefore, $V$ lies in $[Z_{s-1}^t \circlearrowright Z_{s}^{t+1}]= \zeta(v_s,v_{t+1})$.
    Moreover, $v_s\oplus v_{t+1}$ must be contained in $\MID$. Thus (\ref{eqn:zeta-MID}) holds.

\item[Case~3.2:] $(u,u')=(v_i,v_j)$ where $i\neq s$ and $j\neq t+1$.
    See the dots in Figure~\ref{fig:MID-zeta}~(d) for illustrations.
    We have $v_i\oplus v_j\subseteq \MID$ $\Leftrightarrow$ [$e_{i-1}\oplus v_j\subseteq \MID$ or $v_i\oplus e_j\subseteq \MID$] $\Leftrightarrow$
        [$V\in\zeta(e_{i-1},v_j)$ or $V\in\zeta(v_i,e_j)$] $\Leftrightarrow$ [$V\in \zeta(v_i,v_j)$].
    The last second ``$\Leftrightarrow$'' applies the results in Case~2.
    The last ``$\Leftrightarrow$'' applies the fact that $\zeta(v_i,v_j)$ is the concatenation of $\zeta(e_{i-1},v_j)$ and $\zeta(v_i,e_j)$.

\item[Case~3.3:] [$u=v_s$ and $u'$ is a vertex other than $v_{t+1}$] or [$u'=v_{t+1}$ and $u$ is a vertex other than $v_s$].
        The proof of this case is similar to those of Case~3.1 and Case~3.2 and is omitted.
\end{itemize}
\end{proof}

\subsubsection{Proof of the enhanced version of \SC}\label{subsect:singlesector_finalproof}

\newcommand{\HS}{\frac{1}{2}\sector(V)}
For convenience, denote
\begin{equation}\label{eqn:HS-simple}
\HS= \text{$\frac{1}{2}$-scaling of $\sector(V)$ with respect to $V$} = {\bigcup}_{(u,u')\in \Lambda_V} u\oplus u'.
\end{equation}

\begin{lemma}\label{lemma:MID-connect-HS}
Let $\epsilon_V = \text{the union of }\{u\oplus u'\mid (u,u')\in \Delta_V, u\hbox{ is \textbf{not} chasing }u'\}$, then
\begin{equation}\label{eqn:HS-MID-epsilon}
\HS= \MID - \epsilon_V.
\end{equation}
Moreover, $\MID$ is the closed set of $\HS$. So $\MIDSCALE$ is the closed set of $\sector(V)$.
\end{lemma}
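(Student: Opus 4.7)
The plan is to establish $\HS = \MID - \epsilon_V$ by two inclusions and then deduce $\MID = \overline{\HS}$ by checking that $\epsilon_V \cap \MID$ is lower-dimensional; the scaled statement $\MIDSCALE = \overline{\sector(V)}$ then follows by applying the homeomorphism of 2-scaling about $V$.

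For the forward inclusion $\HS \subseteq \MID - \epsilon_V$, I take any $(u,u') \in \Lambda_V$. Fact~\ref{fact:sector-restrict} places both units in $[v_s \circlearrowright v_{t+1}]$, and the chasing relation supplies the order and non-incidence needed to conclude $(u,u') \in \Delta_V$. Lemma~\ref{lemma:MID-zeta} then yields $u \oplus u' \subseteq \MID$, and pairwise disjointness of the $\Delta_V$-pieces (Fact~\ref{fact:sector-pieces-disjoint}) shows $u \oplus u'$ is disjoint from every non-chasing piece, hence from $\epsilon_V$.

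The reverse inclusion $\MID - \epsilon_V \subseteq \HS$ reduces to a single tiling assertion: every $X \in \MID$ lies in a unique piece $u \oplus u'$ with $(u,u') \in \Delta_V$ and $u \oplus u' \subseteq \MID$. Given this, if $u$ chases $u'$ then Lemma~\ref{lemma:MID-zeta} returns $V \in \zeta(u,u')$ so that $(u,u') \in \Lambda_V$ and $X \in \HS$; otherwise $X \in \epsilon_V$ by definition. I would justify the tiling from the geometric construction of $\MID$: the bounding routes $\LV$ and $\RV$ are concatenations of road pieces ($e_i \oplus v_j$ or $v_i \oplus e_j$), the interior is filled by the `0'-marked parallelograms $e_i \oplus e_j$ (those with $Z_i^j = V$), the junctions are vertex pieces $v_i \oplus v_j$, and disjointness of $\Delta_V$-pieces prevents any 2D piece from straddling $\LV$ or $\RV$, so every point of $\MID$ sits in exactly one such piece contained in $\MID$.

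For the closure statement, $\MID$ is closed (it contains its bounding routes) and $\HS \subseteq \MID$, hence $\overline{\HS} \subseteq \MID$. Conversely, $\MID \setminus \HS = \MID \cap \epsilon_V$ consists only of lower-dimensional pieces: any 2D piece $e_i \oplus e_j \subseteq \MID$ must be `0'-marked, putting it in the chasing case with $V \in \zeta(e_i,e_j)$, hence already in $\HS$. The residue pieces (roads and isolated vertex midpoints) sit as topological boundaries of adjacent chasing pieces of $\HS$ in the tiling and are therefore limit points of $\HS$, yielding $\overline{\HS} = \MID$. The main obstacle will be making the tiling claim fully rigorous, in particular ruling out that a non-chasing edge pair $(e_i,e_j) \in \Delta_V$ with $e_j \prec e_i$, for which the `+', `-', `0' marking is not defined, could contribute a 2D region to $\MID$ invisible to the marking scheme; this should follow from a careful examination of how $\LV$ and $\RV$ are routed through the non-chasing configurations inside the inferior portion $[v_s \circlearrowright v_{t+1}]$.
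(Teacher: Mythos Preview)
Your argument for the equality $\HS=\MID-\epsilon_V$ is essentially the paper's: both reduce, via Fact~\ref{fact:sector-restrict} and the pairwise disjointness in Fact~\ref{fact:sector-pieces-disjoint}, to the piece-by-piece equivalence supplied by Lemma~\ref{lemma:MID-zeta}. The paper states the reduction slightly more compactly by observing that each of $\HS$, $\MID$, $\epsilon_V$ is a union of pieces from $\{u\oplus u'\mid (u,u')\in\Delta_V\}$, so equality of sets is equivalent to equality of the index sets.

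Where you diverge is in the closure step, and the ``main obstacle'' you flag is in fact a non-issue. The portion $[v_s\circlearrowright v_{t+1}]$ is inferior (since $e_s\preceq e_t$), so for any two non-incident edges $e_i,e_j$ in it with $e_j$ after $e_i$ one automatically has $e_i\prec e_j$; hence every edge--edge pair in $\Delta_V$ is chasing. Consequently a non-chasing pair $(u,u')\in\Delta_V$ must have $u=v_s$ or $u'=v_{t+1}$, so $u\oplus u'$ is at most a segment. This single observation dissolves your worry about a phantom 2D non-chasing piece and already gives that $\epsilon_V$ is $\leq 1$-dimensional.

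The paper exploits this more sharply: it splits $\epsilon_V=\epsilon_V^{(1)}\cup\epsilon_V^{(2)}$ according to whether $u=v_s$ or $u'=v_{t+1}$, observes that $\epsilon_V^{(1)}$ lies on the unique route $\alpha$ terminating at the midpoint of $e_s$ and $\epsilon_V^{(2)}$ on the route $\beta$ terminating at the midpoint of $e_t$, and then does a short case split on the sign of $Z_s^t$ relative to $V$ to see that $\MID-\epsilon_V$ has closure $\MID$. Your abstract ``lower-dimensional residue adjacent to chasing pieces'' argument reaches the same conclusion but leaves unverified that each such road actually borders a chasing piece of $\HS$ inside $\MID$; the paper's route-containment makes this immediate.
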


\begin{proof}
Regions $\HS,\MID,\epsilon_V$ are all unions of some regions in $\{u\oplus u'\mid (u,u')\in \Delta_V\}$.
  Obviously, the regions in $\{u\oplus u'\mid (u,u')\in \Delta_V\}$ are pairwise-disjoint.
Therefore, proving (\ref{eqn:HS-MID-epsilon}) reduces to proving that for any $(u,u')\in \Delta_V$,
\[
\begin{gathered}
u\oplus u'\subseteq \HS \Leftrightarrow [u\oplus u'\subseteq \MID\text{ and }u\oplus u'\nsubseteq \epsilon_V]; \text{equivalently,}\\
[(u,u')\in \Lambda_V] \Leftrightarrow [\text{$u$ is chasing $u'$ and $u\oplus u'\subseteq \MID$}],
\end{gathered}
\]
which holds according to Lemma~\ref{lemma:MID-zeta}.

\medskip Next, we show that $\MID$ is the closed set of $\HS$.

For simplification, assume that $s\neq t$; the case $s=t$ is trivial.
Denote
\[
\begin{gathered}
\epsilon^{(1)}_V= \text{the union of }\{u\oplus u'\mid (u,u')\in \Delta_V, u\hbox{ is not chasing }u', \text{ and }u=v_s\}, \\
\epsilon^{(2)}_V= \text{the union of }\{u\oplus u'\mid (u,u')\in \Delta_V, u\hbox{ is not chasing }u', \text{ and }u'=v_{t+1}\}.
\end{gathered}
\]

Because $e_s\preceq e_t$, when $(u,u')\in \Delta_V$ and $u\hbox{ is not chasing }u'$, either $u=v_s$ or $u'=v_{t+1}$.
Therefore, $\epsilon_V = \epsilon^{(1)}_V \cup \epsilon^{(2)}_V$.
In addition, we point out the following obvious facts.
\begin{itemize}
\item[(I)] $\epsilon^{(1)}_V\subseteq \alpha$, where $\alpha$ denotes the unique route that terminates at the midpoint of $e_s$.
\item[(II)] $\epsilon^{(2)}_V\subseteq \beta$, where $\beta$ denotes the unique route that terminates at the midpoint of $e_t$
\end{itemize}
\begin{figure}[h]
\centering\includegraphics[width=.75\textwidth]{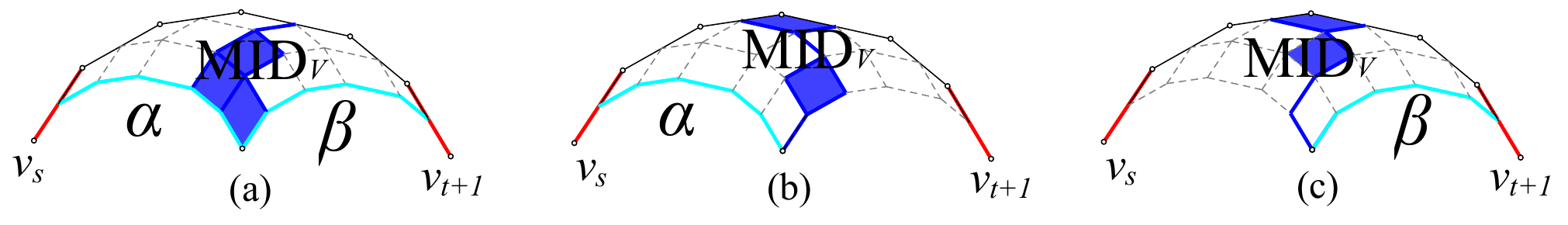}
\caption{$\MID$ is the closed set of $\HS$.}\label{fig:epsilon}
\end{figure}

Next, we discuss three different cases.
\begin{itemize}
\item[Case~1:] $Z_s^t=V$. See Figure~\ref{fig:epsilon}~(a). In this case, by the definition of $\MID$, for any subregion $R$ of $\alpha\cup \beta$, the closed set of $\MID-R$ equals $\MID$.
  In particular, $\epsilon_V$ is a subset of $\alpha\cup\beta$ by (I) and (II), so the closed set of $\MID-\epsilon_V$ (i.e.\ the closed set of $\HS$) is $\MID$.
\item[Case~2:] $Z_s^t<_\rho V$.
    By Fact~\ref{fact:st-detailed-facts}.1, $e_s\prec e_{t+1}$.
    So every unit in $[v_s\circlearrowright v_{t-1}]$ beside $v_s$ is chasing $v_{t+1}$. ($v_s$ may be chasing or not.)
    So, $\epsilon^{(2)}_V\subseteq \epsilon^{(1)}_V$.
    So, $\epsilon_V=\epsilon^{(1)}_V\cup \epsilon^{(2)}_V=\epsilon^{(1)}_V\subseteq \alpha$.
    Therefore, the closed set of $\MID-\epsilon_V$ (i.e.\ the closed set of $\HS$) is $\MID$. See Figure~\ref{fig:epsilon}~(b).
\item[Case~3:] $V<_\rho Z_s^t$. This case is symmetric to Case~2. See Figure~\ref{fig:epsilon}~(c). We omit its proof.
\end{itemize}
\end{proof}

Recall that the 2-scaling of $\LV,\RV,\MID$ with respect to $V$ are respectively defined to be $\LVS,\RVS,\MIDSCALE$.
In Figure~\ref{fig:sectors-nestp-sigma}, the red and blue curves indicate
    $\mathcal{L}^\star_{v_1},\ldots,\mathcal{L}^\star_{v_n}$ and
    $\mathcal{R}^\star_{v_1},\ldots,\mathcal{R}^\star_{v_n}$ respectively.

Since $\LV,\RV$ are boundaries of $\MID$, curves $\LVS,\RVS$ are boundaries of $\MIDSCALE$.
Further by Lemma~\ref{lemma:MID-connect-HS}, $\LVS,\RVS$ are boundaries of $\sector(V)$.
To prove the \textsc{Sector-continuity}, we prove the following enhanced lemma.

\begin{lemma}\label{lemma:sector-continuity}
If the common starting point of $\LVS,\RVS$ lies in $P$,
then, $\LVS$ has a unique intersecting point with $\partial P$ and so does $\RVS$
and $\sector(V)\cap \partial P$ is a boundary-portion from $\LVS\cap \partial P$ to $\RVS\cap \partial P$.
(This does not mean $\sector(V)\cap \partial P=[\LVS\cap \partial P\circlearrowright \RVS\cap \partial P]$; endpoints may not be contained.) Otherwise, $\sector(V)\cap \partial P$ is empty.
\end{lemma}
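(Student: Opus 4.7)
The plan is to exploit the boundary decomposition of $\MIDSCALE$, which by Lemma~\ref{lemma:MID-connect-HS} equals the closure of $\sector(V)$. From Definition~\ref{def:LRMID}, the boundary of $\MID$ consists of the routes $\LV, \RV$ and a sub-portion $\mu$ of $[v_s\circlearrowright v_{t+1}]$ lying between their terminal points; applying the 2-scaling about $V$, the boundary of $\MIDSCALE$ splits correspondingly into $\LVS$, $\RVS$, and $\mu^\star$, the 2-scaled image of $\mu$. Consequently $\sector(V)\cap\partial P$ coincides, modulo isolated points, with $\MIDSCALE\cap\partial P$, and the task reduces to analyzing how $\partial P$ meets these three pieces.

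The first step is to show that $\mu^\star$ lies strictly outside $P$. Since $V$ is a vertex of $P$ and $\mu\subseteq[v_s\circlearrowright v_{t+1}]$ excludes $V$ by (\ref{eqn:stV}), each $X\in\mu$ is a point of $\partial P$ distinct from $V$. By convexity of $P$, the closed segment from $V$ to $X$ lies in $P$ and its continuation past $X$ leaves $P$; since the 2-scaling of $X$ about $V$ is precisely the far endpoint of this doubled segment, it lies in the exterior of $P$.

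The second and main technical step is to show that each of $\LVS$ and $\RVS$ meets $\partial P$ in at most one point. Because $Y\in\LVS\cap\partial P$ corresponds bijectively to $\M(V,Y)\in\LV$ via the inverse 2-scaling, this is equivalent to showing that the route $\LV$ crosses the $\tfrac{1}{2}$-scaled image of $\partial P$ about $V$ at most once. I plan to exploit the zigzag structure of $\LV$ as a concatenation of roads, each parallel to some edge of $P$. The bi-monotonicity of the $Z$-points (Fact~\ref{fact:Z_bi-monotonicity}) underlying Definition~\ref{def:LRMID} forces the sequence of road directions along $\LV$ to respect a monotone order; combined with the convexity of the $\tfrac{1}{2}$-scaled polygon, this monotonicity should prevent $\LV$ from re-entering the scaled-down region once it has left. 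The delicate part is making this argument rigorous through a case analysis that compares each road's direction with the local tangent of the scaled $\partial P$, and this is the main obstacle of the proof.

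Granting these two steps, the conclusion follows quickly. The common starting point of $\LVS$ and $\RVS$ is the 2-scaling of $\M(v_s,v_{t+1})$ about $V$, while their terminal points lie on $\mu^\star$, which is outside $P$. If the starting point lies in $P$, then each of $\LVS,\RVS$ begins inside $P$ and ends outside, so the at-most-one-crossing property forces exactly one intersection with $\partial P$; the region $\MIDSCALE\cap P$ is then bounded by the initial portions of $\LVS,\RVS$ and a single arc of $\partial P$ connecting their unique crossings, which equals $\sector(V)\cap\partial P$. If the starting point lies outside $P$, then both curves start and end outside $P$; since any odd number of boundary crossings would toggle sides, the at-most-one constraint forces zero crossings. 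Combined with $V\notin\MIDSCALE$ (which rules out $P\subseteq\MIDSCALE$, using that $\MID$ lies in the convex hull of $[v_s\circlearrowright v_{t+1}]$ and hence does not contain $V$), this forces $\MIDSCALE\cap\partial P=\emptyset$, and hence $\sector(V)\cap\partial P=\emptyset$.
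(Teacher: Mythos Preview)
Your decomposition into three boundary pieces ($\LVS$, $\RVS$, $\mu^\star$) and your first step about $\mu^\star$ match the paper exactly. The conclusion you draw from the two steps is also essentially the same as the paper's argument (iii), and your extra remark ruling out $P\subseteq\MIDSCALE$ via $V\notin\MID$ is a nice point that the paper leaves implicit.

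The place where you diverge from the paper, and where you yourself flag ``the main obstacle'', is Step~2. You propose to prove the at-most-one-crossing property by arguing that the directions of successive roads along $\LV$ form a monotone angular sequence (via bi-monotonicity of $Z$-points) and then combining this with convexity of the scaled boundary. This might be made to work, but it is global and, as you note, delicate. The paper sidesteps this entirely with a much simpler \emph{per-road} argument: it shows that along \emph{each individual} scaled-road, once you have left $P$ you cannot re-enter. The key observation is that a road $e_i\oplus v_j$ (for $(e_i,v_j)\in\Delta_V$) has the direction of $e_i$, and its $2$-scaling about $V$ is a translate of $e_i$ lying on the right of $\overrightarrow{v_{t+1}v_i}$ (because $e_i\oplus v_j$ lies on that side while $V$ lies on the other). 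Since $e_i\prec e_t$, any translate of $e_i$ on that side, traversed in the direction of $e_i$, can exit $P$ but cannot re-enter it. The symmetric claim handles roads $v_i\oplus e_j$. Because the scaled-route terminates outside $P$ (by your Step~1) and no single road permits re-entry, the whole route crosses $\partial P$ at most once.

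So your plan is sound, but you are working harder than necessary in Step~2: replace the global monotone-direction argument by the local ``translate of an edge, chasing $e_t$, on the correct side of a chord'' observation, and the obstacle disappears.
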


\begin{proof}
Recall the roads and routes in section~\ref{sect:techover}.
For each road or route, call its 2-scaling with respect to $V$ a \emph{scaled-road} or \emph{scaled-route}.
Assume that each scaled-road (or scaled-route) has the same direction as its corresponding unscaled road (or route).
We have two observations:
\begin{itemize}
\item[(i)] The $2$-scaling of $[v_s \circlearrowright v_{t+1}]$ with respect to $V$ lies in the exterior of $P$.
\item[(ii)] If we travel along a given scaled-route, we eventually get outside $P$ and never return to $P$ since then.
Therefore, there is exactly one intersection between this scaled-route and $\partial P$ if its starting point lies inside $P$; and no intersection otherwise.
\end{itemize}

\noindent \emph{Proof of (i):} This observation follows from the relation $V\notin [v_s \circlearrowright v_{t+1}]$ stated in Fact~\ref{fact:st1}.

\medskip \noindent \emph{Proof of (ii):} Because all of the routes terminate at $[v_s \circlearrowright v_{t+1}]$, the scaled-routes terminate at the $2$-scaling of $[v_s \circlearrowright v_{t+1}]$ with respect to $V$.
Further by observation~(i), the scaled-routes terminate at the exterior of $P$.
Therefore, we will eventually get outside $P$ traveling along any scaled-route.
Moreover, any road $e_i\oplus v_j$ where $(e_i,v_j)\in
\Delta_V$ has a property that we do not return to $P$ from outside $P$ traveling along the 2-scaling of $e_i\oplus v_j$. This follows from
    observations~(ii.1) and (ii.2) below.
The roads in $\{v_i\oplus e_j\mid (v_i,e_j)\in \Delta_V\}$ also have this property due to similar reasons.
Further since the scaled-routes consist of these scaled-roads, we obtain observation~(ii).
\begin{itemize}
\item[(ii.1)] The 2-scaling of $e_i\oplus v_j$ with respect to $V$ is a translation of $e_i$ that lies on the right of $\overrightarrow{v_{t+1}v_i}$.
        (Regard that the translation of $e_i$ has the same direction as $e_i$.)
\item[(ii.2)] When we travel along a translation of $e_i$ that lies on the right of $\overrightarrow{v_{t+1}v_i}$, we will not go into $P$ from outside $P$.
\end{itemize}
\noindent \emph{Proof of (ii.1)}: $e_i\oplus v_j$ lies on the right of $\overrightarrow{v_{t+1}v_i}$, whereas $V$ lies on its left; thus we get observation~(ii.1).

\medskip \noindent \emph{Proof of (ii.2):} Since $e_s\preceq e_t$ and $(e_i,v_j)\in \Delta_V$, we get $e_i\prec e_t$, which implies observation~(ii.2).

\bigskip Let $S^\star_V$ denote the common starting point of all scaled-routes (including $\LVS$ and $\RVS$).
    Equivalently, $S^\star_V$ is the 2-scaling of $v_s\oplus v_{t+1}$ with respect to $V$. The following observation follows from observations~(i) and (ii).\smallskip
\begin{enumerate}
\item[(iii)] If $S^\star_V$ lies in $P$, then $\MIDSCALE\cap \partial P=[\mathcal{L}^\star_V\cap \partial P\circlearrowright \mathcal{R}^\star_V\cap \partial P]$;
    otherwise $\MIDSCALE\cap \partial P$ is empty.\smallskip
\end{enumerate}
\noindent \emph{Proof of (iii):} When $S^\star_V$ lies outside $P$, by (i) and (ii), all the boundaries that bound $\MIDSCALE$, including $\LVS,\RVS$ and a fraction of the 2-scaling of $[v_s\circlearrowright v_{t+1}]$ with respect to $V$, lie in the exterior of $P$.
Therefore $\MIDSCALE$ lies in the exterior of $P$, which implies that $\MIDSCALE\cap \partial P$ is empty.
When $S^\star_V$ lies in $P$, the boundaries of $\MIDSCALE$ have exactly two intersections with $\partial P$.
    Therefore, $\MIDSCALE \cap \partial P$ either equals $[\mathcal{L}^\star_V \cap \partial P\circlearrowright \mathcal{R}^\star_V\cap \partial P]$,
    or equals $[\mathcal{R}^\star_V \cap \partial P\circlearrowright \mathcal{L}^\star_V\cap \partial P]$.
    We argue that it does not equal the latter one.
    Notice that region $\MIDSCALE$ is always on our right side when we travel along $\mathcal{L}^\star_V$.
    This implies that $\MIDSCALE \cap \partial P\neq [\mathcal{R}^\star_V \cap \partial P\circlearrowright \mathcal{L}^\star_V\cap \partial P]$.

\medskip At last, we can see this lemma easily follows from observation~(iii) together with Lemma~\ref{lemma:MID-connect-HS}.
\end{proof}

\section{Computation -- answer the location queries mentioned in Theorem~\ref{thm:nestp-location}}\label{sect:location-answers}

\paragraph*{Outline.} Assume $V$ is a vertex.
\begin{enumerate}
\item Compute the two endpoints of $\sector(V)\cap \partial P$.  (see subsection~\ref{subsect:algorithms-endpoints})
\item Compute the (set of) units intersected by $\sector(V)$.  (see subsection~\ref{subsect:alg_S})
\item Compute $w$ such that $\sector(W)$ contains $V$.   (also see subsection~\ref{subsect:alg_S})
\item Compute $u,u'$ such that $\block(u,u')$ contains $V$. (see the last two subsections)
\end{enumerate}

\textbf{Hint.} Parts~1 and 4 (especially, part~4) are more nontrivial than parts~2 and 3.
We suggest that the reader skipping subsection~\ref{subsect:alg_S} for the first read.
(Moreover, notice that part~1 and part~4 are highly \textbf{symmetric}; see Remark~\ref{remark:symmetric}.)

\begin{lemma}[Lemma~7 of \cite{arxiv:n2}; Computational aspect of the $Z$-points]\label{lemma:Z-compute}~
\begin{enumerate}
\item For $(e_i,e_j)$ in which $e_i\prec e_j$,
    point $Z_i^j$ can be computed in $O(1)$ time given the unit containing this point.
\item Given $i,j,k$ such that $e_i\prec e_j$ and $v_k\in(v_{j+1} \circlearrowright v_i)$.
     There are three possibilities of $Z_i^j$ (due to Fact~\ref{fact:dist-unique-location}):
    (i) it equals $v_k$; (ii) it lies in $(v_{j+1} \circlearrowright v_k)$; or (iii) it lies in $(v_k \circlearrowright v_i)$.
    We can distinguish them in $O(1)$ time.
\item Given edge pairs $(a_1,b_1),\ldots,(a_m,b_m)$ such that $a_i\prec b_i~(1\leq i\leq m)$ and that $a_1,\ldots,a_m$ lie in clockwise order and $b_1,\ldots,b_m$ lie in clockwise order, we can compute $Z_{a_1}^{b_1},\ldots,Z_{a_m}^{b_m}$ altogether in $O(m+n)$ time.
\end{enumerate}
\end{lemma}

\subsection{Compute the two endpoints of $\sector(V)\cap \partial P$}\label{subsect:algorithms-endpoints}

Assume the reader is familiar with the notations and results given in subsection~\ref{sect:sector-continue}.
Especially, recall $\leq_V$, smallest and largest (with respect to $\leq_V$), $s$ and $t$, and the marks `-/+/0' of the regions $u\oplus u' \mid (u,u')\in \Delta_V$.

By Lemma~\ref{lemma:sector-continuity}, computing the endpoints of $\sector(V)\cap \partial P$ means computing $\LVS\cap \partial P$ and $\RVS\cap \partial P$.
In the following we show how do we compute $\LVS \cap \partial P$; computing $\RVS\cap \partial P$ is symmetric and is omitted.

Briefly, we shall find the segment piece of $\LVS$ that intersects $\partial P$ by a binary search, as outlined in section~\ref{sect:techover}.

\paragraph*{An explicit definition for $\LV$.}
To describe our algorithm, we need an explicit definition of $\LV$ as follows.

\begin{figure}[h]
\centering \includegraphics[width=.49\textwidth]{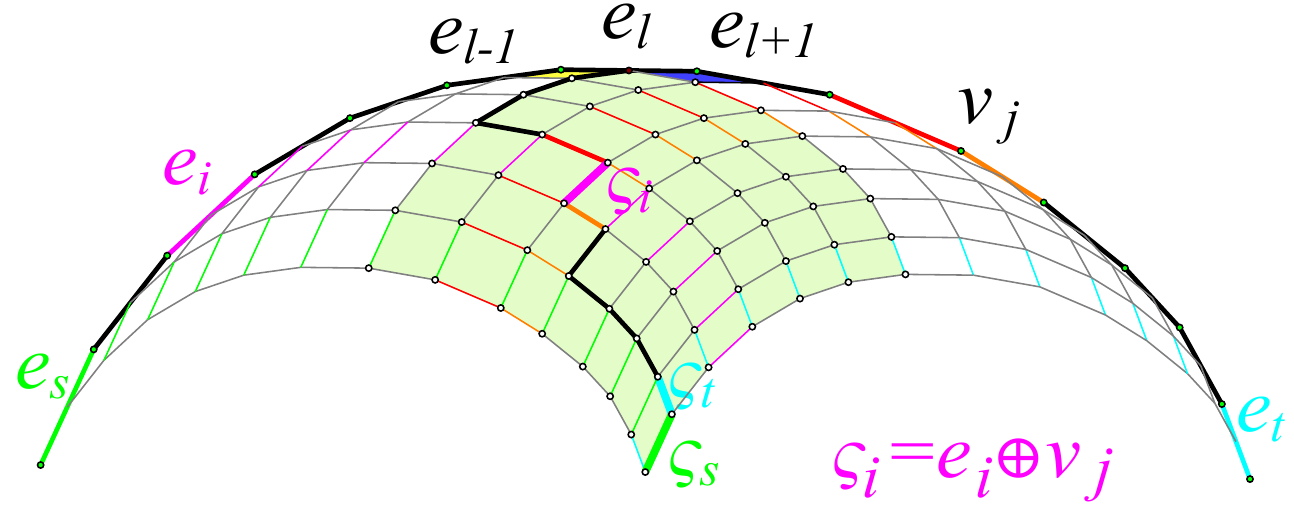}
\caption{Notations used in the algorithm for computing $\LVS\cap \partial P$.}\label{fig:sectorV_algorithm}
\end{figure}

We first introduce an edge $e_l$. See Figure~\ref{fig:sectorV_algorithm}. It is defined as the unique edge in $[v_s\circlearrowright v_{t+1}]$ such that
\begin{itemize}
\item[I] \emph{For $e_i$ such that $e_s\leq_V e_i \leq_V e_{l-1}$, region $e_i\oplus e_{i+1}$ is marked by `-'.}
\item[II] \emph{For $e_i$ such that $e_l\leq_V e_i \leq_V e_{t-1}$, region $e_i\oplus e_{i+1}$ is marked by `+/0'.}
\end{itemize}
Since $\LV$ divides the regions marked by `-' from those marked by `+/0', it terminates at the midpoint of $e_l$.\clearpage

We then introduce two types of roads.
\begin{description}
\item[$A$-type roads.] For any edge $e_i$ in $[v_s\circlearrowright v_l]$,
  let $e_j$ denote the smallest edge in $[v_{l+1}\circlearrowright v_{t+1}]$ such that $e_i\oplus e_j$ is marked by `0/+' (or denote $e_j=e_{t+1}$ if no such edge exists); define $\varsigma_i=e_i\oplus v_j$ and call it an \emph{$A$-type road}.
\item[$B$-type roads.] For any edge $e_j$ in $[v_{l+1}\circlearrowright v_{t+1}]$,
  let $e_i$ denote the smallest edge in $[v_s\circlearrowright v_l]$ such that $e_i\oplus e_j$ is marked by `0/+' (or denote $e_i=e_l$ if no such edge exists);
  define $\varsigma_j=v_i\oplus e_j$ and call it a \emph{$B$-type road}.
\end{description}
Explicitly, $\LV$ can be defined as the route that consists of all the $A$-type and $B$-type roads.

\medskip The following observations of these A-type and B-type roads are obvious.
\begin{itemize}
\item[A] \emph{The order of the $A$-type roads in $\LV$ is determined, and equals to $\varsigma_s,\varsigma_{s+1},\ldots,\varsigma_{l-1}$}.
\item[B] \emph{The order of the $B$-type roads in $\LV$ is determined, and equals to $\varsigma_{t},\varsigma_{t-1},\ldots,\varsigma_{l+1}$.}
\end{itemize}

\begin{lemma}\label{lemma:compute_route_endpoints}
\begin{enumerate}
\item We can compute $s,t,l$ in $O(\log n)$ time.
\item When $\varsigma_i$ is defined (in other words, $e_i$ lies in $[v_s\circlearrowright v_{t+1}]$ and $e_i\neq e_l$), we can compute the endpoints of $\varsigma_i$ in $O(\log n)$ time.
    Moreover, we can distinguish the following in $O(\log n)$ time:
    $\varsigma^\star_i$ intersects $\partial P$, or $\varsigma^\star_i$ lies in the interior of $P$, or $\varsigma^\star_i$ lies in the exterior of $P$,
      where $\varsigma^\star_i$ denotes the 2-scaling of $\varsigma_i$ with respect to $V$.
\item Recall that $S^\star_V$ denotes the starting point of $\LVS$. We can compute $S^\star_V$ in $O(1)$ time.
    Moreover, if $S^\star_V$ lies in $P$, we can compute $\mathcal{L}^\star_V\cap \partial P$ in $O(\log^2 n)$ time.
\end{enumerate}
\end{lemma}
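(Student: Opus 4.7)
The plan is to exploit two layers of binary search: one over the ordered sequence of roads comprising $\LV$, and, inside each probe, one further search over $\partial P$ to classify the scaled road $\varsigma^\star_i$. All monotonicities needed for correctness are already established earlier in the paper.

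For claim~1, I would compute $s$ and $t$ by binary search on the index $i$. By Fact~\ref{fact:UZ-monotone}, the set of edges $e_i$ with $V\in \UZ^+_i$ is exactly $\{e_i : e_s \leq_V e_i \leq_V back(V)\}$, so this set is contiguous in the order $\leq_V$, and $e_s$ is its smallest element. A single membership test $V\in \UZ^+_i=[v_{i+1}\circlearrowright Z_i^{back(\D_i)}]$ requires: (a) the vertex $\D_i$, which I would precompute once in $O(n)$ total, or obtain in $O(\log n)$ via binary search on distances along $\partial P$; (b) the point $Z_i^{back(\D_i)}$, which by Lemma~\ref{lemma:Z-compute}.1--2 can be located on $\partial P$ in $O(\log n)$ (binary-searching the containing unit and then applying the closed-form formula). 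Each probe is $O(\log n)$, so $s$ is found in $O(\log^2 n)$; in fact a slightly more careful scheme that threads a single binary search through both indices yields $O(\log n)$, which I would use to meet the stated bound. The index $t$ is symmetric, and $l$ admits the same style of binary search: by the bi-monotonicity of $Z$-points (Fact~\ref{fact:Z_bi-monotonicity}), the mark of $e_i\oplus e_{i+1}$ is monotone in $i$ (from `-' to `+/0' along $\leq_V$), so $l$ is the unique transition point.

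For claim~2, given a defined road $\varsigma_i$, computing its endpoints reduces to locating the index $j$ that defines it, namely the smallest $e_j$ on the opposite side of $e_l$ such that $e_i\oplus e_j$ is marked `0/+'. Again by the bi-monotonicity of $Z$-points, the mark is monotone in $j$, so a binary search using $O(\log n)$-time $Z$-point evaluations pins down $j$ in $O(\log n)$ time, after which the endpoints $\M(v_i,v_j)$ and $\M(v_{i+1},v_j)$ (or the symmetric pair) are available in $O(1)$. To classify the 2-scaling $\varsigma^\star_i$, observe that $\varsigma^\star_i$ is a segment parallel to $e_i$, hence intersects $\partial P$ in at most two points; I would find, in $O(\log n)$, the two points of $\partial P$ on the supporting line of $\varsigma^\star_i$ (binary-searching for the two tangency indices of the lifting parallel chord of $P$), and then compare these to the endpoints of $\varsigma^\star_i$ to decide inside / outside / crossing.

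For claim~3, the starting point $S^\star_V$ is the $2$-scaling of $v_s\oplus v_{t+1}=\M(v_s,v_{t+1})$ about $V$, i.e.\ $v_s+v_{t+1}-V$, which is $O(1)$. Assuming $S^\star_V\in P$, I would locate $\LVS\cap\partial P$ by binary search on the ordered road list $\varsigma^\star_s,\varsigma^\star_{s+1},\ldots,\varsigma^\star_{l-1},\varsigma^\star_t,\varsigma^\star_{t-1},\ldots,\varsigma^\star_{l+1}$ (orientation as in observations (a), (b) above the lemma). The key monotonicity that licenses the search is statement~(ii) inside the proof of Lemma~\ref{lemma:sector-continuity}: once a scaled route exits $P$ it never returns. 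Hence the sequence of statuses ``inside $P$'' $\to$ ``crossing'' $\to$ ``outside $P$'' along the road list is monotone, with exactly one crossing road. Standard binary search performs $O(\log n)$ status queries; each uses claim~2 in $O(\log n)$; the unique crossing point inside the located road is then an elementary line/boundary intersection computable in $O(\log n)$. The total cost is $O(\log^2 n)$.

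The main obstacle I expect is bookkeeping, not mathematics: the case analysis for identifying whether $\varsigma_i$ is of $A$-type or $B$-type (and its orientation under the 2-scaling) has to be merged cleanly into a single binary search so that monotonicity across the transition at $e_l$ is preserved. The correctness of each individual binary search follows from monotonicities already proved (Fact~\ref{fact:UZ-monotone}, Fact~\ref{fact:Z_bi-monotonicity}, and the non-return property in the proof of Lemma~\ref{lemma:sector-continuity}); the engineering challenge is to chain them so that nothing costs more than $O(\log^2 n)$ overall.
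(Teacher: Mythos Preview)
Your high-level plan matches the paper's: binary search for $s,t,l$; binary search for the index $j$ defining each road; then a binary search over the roads of $\LVS$ with each probe costing $O(\log n)$. The overall architecture is right.

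There is, however, a real gap in Claim~1 (and it recurs in Claim~2). You say each probe of the outer binary search costs $O(\log n)$ because you must ``locate $Z_i^{back(\D_i)}$ on $\partial P$'', and then you hand-wave about a ``more careful scheme that threads a single binary search through both indices'' to recover $O(\log n)$. This is not how the paper gets $O(\log n)$, and your threaded scheme is not spelled out. The key point you are missing is that you never need to \emph{compute} the $Z$-point: Lemma~\ref{lemma:Z-compute}.2 lets you compare $Z_i^j$ with any given vertex $v_k$ in $O(1)$ time. Hence the test ``$V\in\UZ^+_i$'' (and likewise each mark test $Z_i^{i+1}\lessgtr_\rho V$) is $O(1)$, not $O(\log n)$, so the plain binary search already achieves $O(\log n)$ for $s$, $t$, and $l$ (with the array $\D$ precomputed once). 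The same observation fixes Claim~2: the mark of $e_i\oplus e_j$ is an $O(1)$ test via Lemma~\ref{lemma:Z-compute}.2, so finding $j$ is $O(\log n)$ outright.

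For Claim~3, the paper does not merge the $A$-type and $B$-type roads into one binary search as you propose; it runs two separate subroutines, one assuming the crossing road is $A$-type and one assuming $B$-type, and exactly one succeeds. This sidesteps precisely the bookkeeping obstacle you flagged (monotonicity across the transition at $e_l$), at no asymptotic cost. Your approach would also work once that bookkeeping is sorted, but the paper's two-subroutine split is the cleaner fix. Also, for classifying $\varsigma^\star_i$ the paper simply tests whether each endpoint lies in $P$ (each an $O(\log n)$ point-in-convex-polygon query) rather than computing the supporting chord; both are valid.
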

\begin{proof}
\noindent 1. First, we show how do we compute $s$; the value of $t$ can be computed symmetrically.

We need two facts. (The first one is Fact~\ref{fact:UZ-monotone}~part~(1).)\smallskip

(i) \emph{For every edge $e_i$, it holds that $e_s\leq_V e_i$ if and only if $\UZ^+_i$ contains $V$.}\smallskip

(ii) \emph{Given an edge $e_i$, we can determine whether $\UZ^+_i$ contains $V$ in $O(1)$ time.}\smallskip

\noindent\emph{Proof of (ii).} To determine whether $\UZ^+_i$ contains $V$ is to determine the relation between $Z_i^j$ and $V$, where $e_j=back(\D_i)$, which can be determined in $O(1)$ time by Lemma~\ref{lemma:Z-compute}~part~2.\smallskip

Applying facts~(i) and (ii), $s$ can be determined in $O(\log n)$ time by a binary search.

\smallskip Next, we show how to compute $l$. By Lemma~\ref{lemma:Z-compute}, we can determine whether $e_i\oplus e_{i+1}$ is marked by `-', `0', or `+' in $O(1)$ time. So, based on properties I and II stated above, we can compute $l$ in $O(\log n)$ time by a binary search.

\medskip \noindent 2. First, we show how do we compute road $\varsigma_i$. Assume that $e_s\leq_V e_i\leq_V e_{l-1}$; otherwise $e_{l+1}\leq_V e_i\leq_V e_t$ and it is symmetric.
It reduces to finding the edge $e_j$ defined in the paragraph ``$A$-type roads'' above.
According to the bi-monotonicity of the $Z$-points (Fact~\ref{fact:Z_bi-monotonicity}), $e_j$ is the unique edge such that $e_i\oplus e_{j-1}$ is marked by `-' while $e_i\oplus e_j$ is marked by `+/0'.
As we can compute each mark in $O(1)$ time, we can search $j$ in $O(\log n)$ time.

After computing $\varsigma_i$, we easily obtain $\varsigma^\star_i$.
  We then distinguish the relation between $\varsigma^\star_i$ and $\partial P$.
First, compute whether the endpoints of $\varsigma^\star_i$ lie in $P$, which takes only $O(\log n)$ time because $P$ is convex.
If both the endpoints lie in $P$, segment $\varsigma^\star_i$ lies in $P$;
if both lie outside $P$, segment $\varsigma^\star_i$ lies outside $P$;
otherwise, $\varsigma^\star_i$ intersects $\partial P$.

\medskip \noindent 3. Finally, we show how do we compute the (potential) intersecting point $\LVS\cap \partial P$.

Since $\LV$ starts at $(v_s+v_{t+1})/2$,
  point $S^\star_V$ is at the 2-scaling of $(v_s+v_{t+1})/2$ with respect to $V$, which can be computed in $O(1)$ time.
Now, assume $S^\star_V$ lies in $P$, so $\LVS$ has one intersection with $\partial P$.

\smallskip We design two \emph{subroutines}: one assumes that there is an A-type road whose 2-scaling (with respect to $V$) intersects $\partial P$,
    and it seeks for this road.
  The other is symmetric. It assumes there is a B-type road whose 2-scaling (with respect to $V$) intersects $\partial P$ and seeks for this road.
    Clearly, one subroutine would success.

According to observations A and B stated above, the A-type roads $\varsigma_s,\varsigma_{s+1},\ldots,\varsigma_{l-1}$ are in order on $\LV$;
  so do the B-type roads. So, a binary search can be applied in designing our two subroutines.
Each searching step costs $O(\log n)$ time due to part~2 of this lemma; hence the total running time is $O(\log^2n)$.
\end{proof}

\subsection{Which units does $\sector(V)$ intersect \& which sector does $V$ lie in?}\label{subsect:alg_S}

Assuming the endpoints of $\sector(V)\cap \partial P$ are known for each vertex $V$,
  we now proceed to compute the interval of units that intersect $\sector(V)$ and the (unique) sector that contains $V$ for each vertex $V$.

\paragraph*{Compute the units that intersect $\sector(V)$.}
Let $u_L=\unit(\LVS\cap \partial P)$ and $u_R=\unit(\RVS\cap \partial P)$.
They can be computed while we compute the two endpoints of $\sector(V)\cap\partial P$.
In most cases the units that intersect $\sector(V)$ are the units (in clockwise) from $u_L$ to $u_R$.
  Exceptional cases are discussed in the following note.

\begin{note} Sometimes an endpoint of $\sector(V)\cap\partial P$ is not contained in the sector.
This is because $\sector(V)$ is not always a closed set (see Lemmas~\ref{lemma:MID-connect-HS} and \ref{lemma:sector-continuity}).
Under a degenerate case, this endpoint may happen to lie at a vertex $V^*$ of $P$,
and then, by definition, we could not include $V^*$ into the set of units that intersect $\sector(V)$.
\end{note}

\subsubsection*{Compute the sector that contains $V$ for each vertex $V$ by a \textbf{sweeping algorithm}.}

\newcommand{\Cur}{\mathrm{Current}}
\newcommand{\Fur}{\mathrm{Future}}

Recall the event-points and their two types of tags mentioned in section~\ref{sect:techover}.
We have two groups of \emph{event-points}.
One group contains the points in $\{\LVS\cap \partial P,\RVS\cap \partial P\}$;
and the other contains the intersecting points between $\sigma P$ and $\partial P$, namely, the $K$-points (recall the $K$-points in subsection~\ref{subsect:monoto-f}).
Notice that all the event-points lie in $\partial P$.

Below we show how to define our event-points and their tags precisely.
We use two procedures -- an adding and a removing procedure.
The removing procedure removes redundant event-points added in the first procedure.

\begin{figure}[b]
\begin{minipage}{.32\textwidth}
\centering\includegraphics[width=0.98\textwidth]{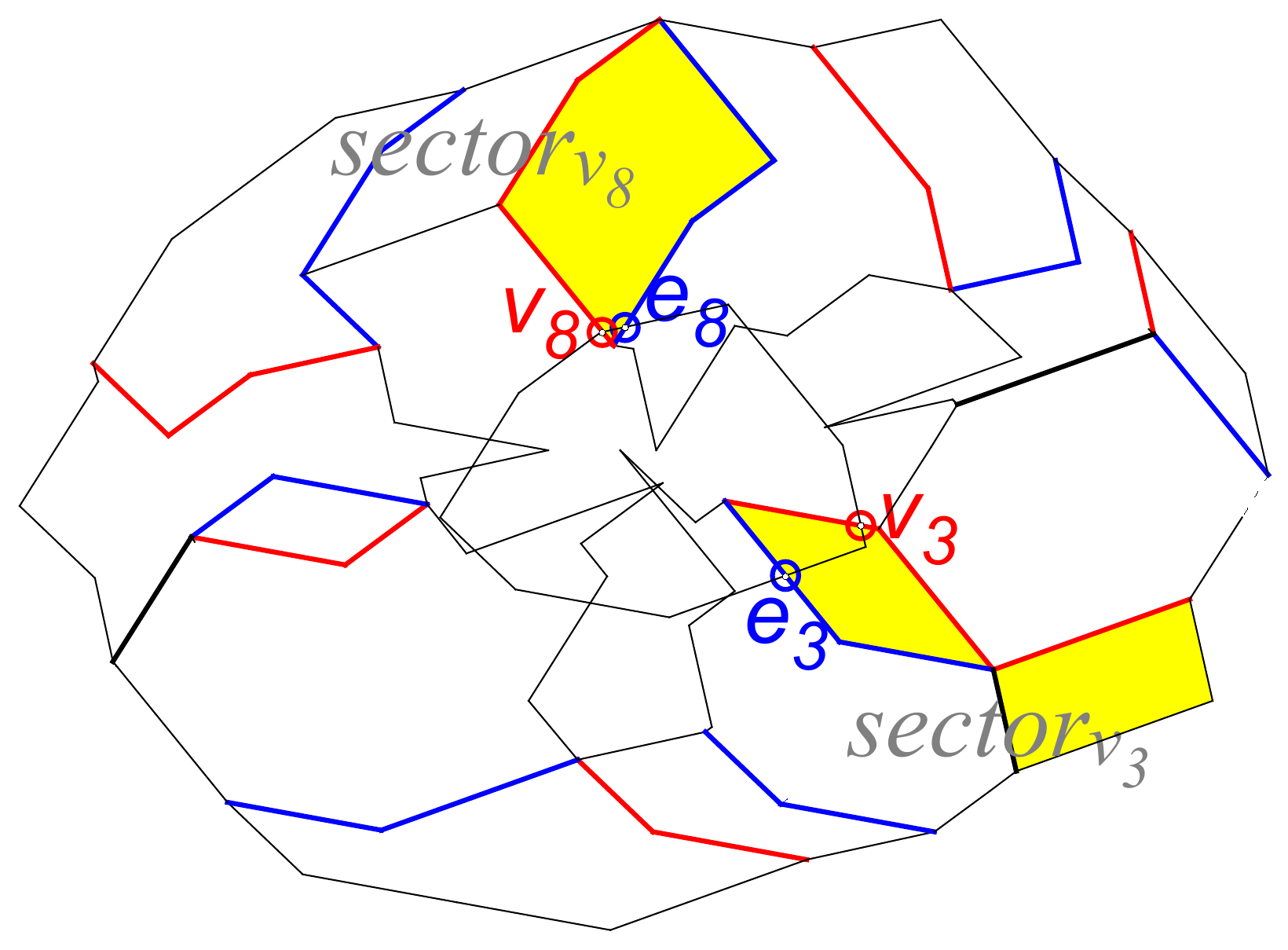}
\end{minipage}
\begin{minipage}{.32\textwidth}
\centering\includegraphics[width=0.98\textwidth]{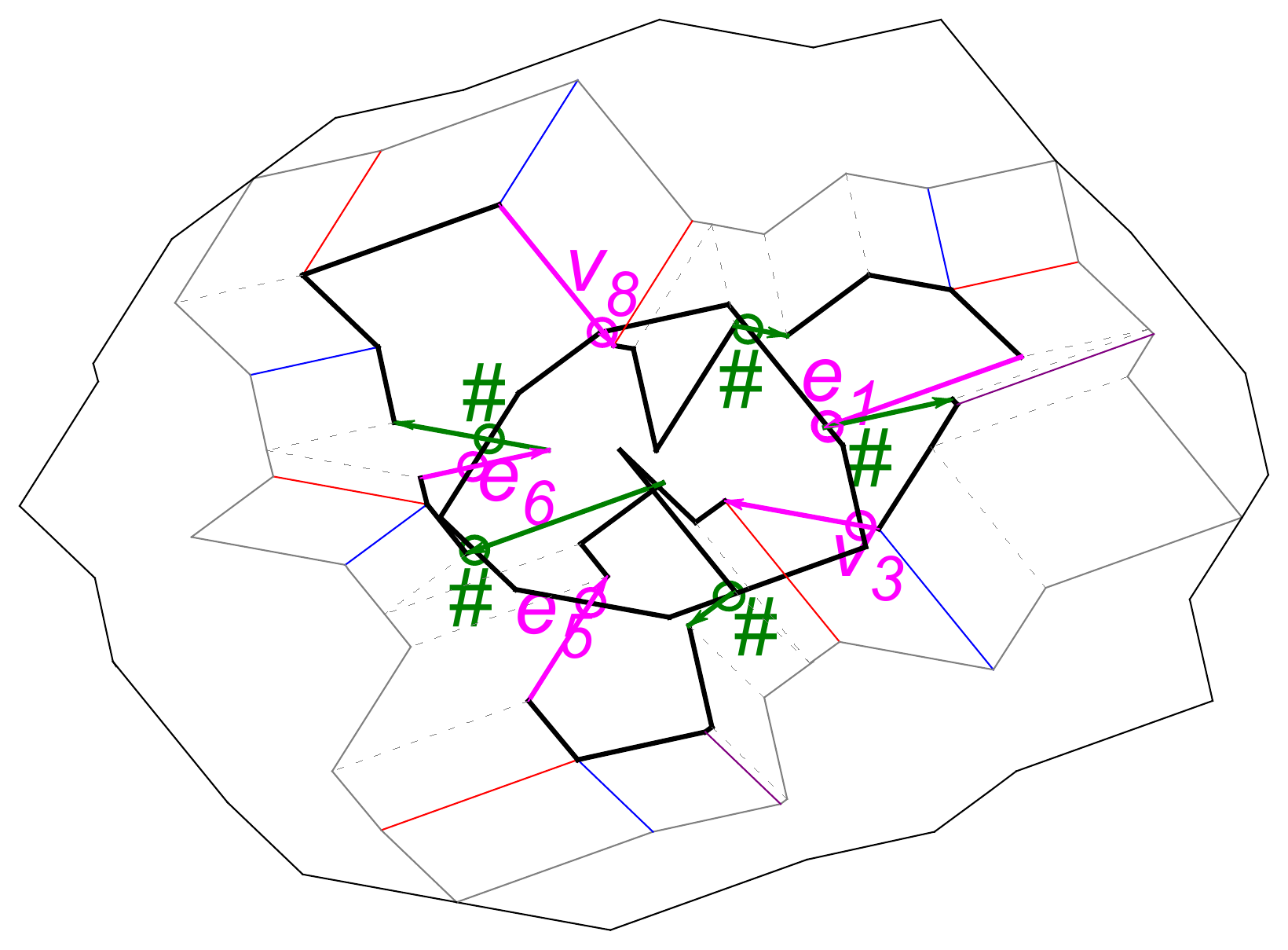}
\end{minipage}
\begin{minipage}{.32\textwidth}
\centering\includegraphics[width=\textwidth]{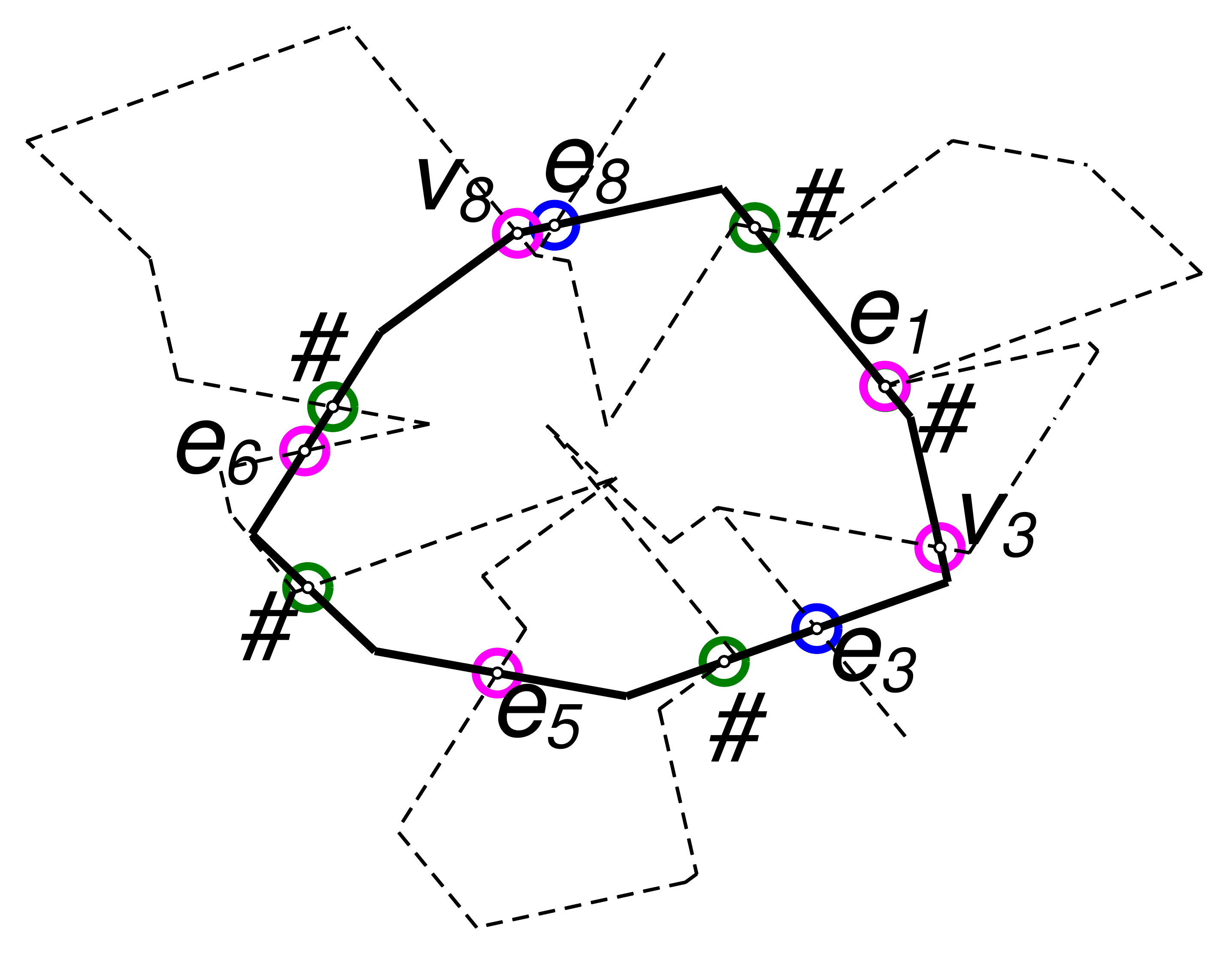}
\end{minipage}
\caption{Definition of the \emph{event-points}. Their \emph{future-tags} are labeled in the figure.}\label{fig:tags}
\end{figure}

\subparagraph{Adding procedure.} See Figure~\ref{fig:tags}. The left picture exhibits the event-points in Group~1 defined below;
    the middle one exhibits the event-points in Group~2 defined below.

\smallskip \noindent\emph{Group~1:} For every vertex $V$ for which $\sector(V)$ intersects $\partial P$, add two event-points $\LVS\cap \partial P$ and $\RVS\cap \partial P$, and define
\begin{equation}\label{def:tag1}
\begin{aligned}
\Cur(\LVS \cap \partial P)&=V, & \Fur(\LVS \cap \partial P)&=V, \\
\Cur(\RVS \cap \partial P)&=V, & \Fur(\RVS \cap \partial P)&=forw(V).
\end{aligned}
\end{equation}

\smallskip \noindent\emph{Group~2:}
For every $K_i\in \sigma P\cap \partial P$, add event-point $K_i$.
Notice that $\sigma P$ consists of several directional line segments.
Assume $K_i$ comes from directional line segment $\overrightarrow{AB}$ of $\sigma P$.
Recall $g$ in Definition~\ref{def:g}. Define
\begin{equation}\label{def:tag2}
\Cur(K_i)=\text{`\#'},
\qquad \Fur(K_i)=\begin{cases}
  \text{`\#'}, & \text{when } A\in P,B\notin P;\\
  \unit(g(K_i)), & \text{when }A\notin P,B\in P.
\end{cases}
\end{equation}

\textbf{Note}: The special symbol `\#' is introduced for indicating the outside of $f(\T)$.
When $\Cur(E)=\text{`\#'}$, no sector contains event-point $E$.
When $\Fur(E)=\text{`\#'}$, no sector contains $(E\circlearrowright E')$, where $E'$ denotes the clockwise next event-point of $E$.
The reason that $\Cur(K_i)$ should be `\#' is explained in Fact~\ref{fact:fK-notdefinedyet}.

\subparagraph{Removing procedure.}
        If there are multiple event-points locating at the same position, we keep only one of them according to the following priority:
        First, $\{\sigma P\cap \partial P\}$.
        Then, $\{\RVS\cap \partial P\}$.
        Last, $\{\LVS\cap \partial P\}$

\medskip As a consequence of the \SM and \IP, we get:
\begin{corollary}\label{corol:S}
Take any point $X$ in $\partial P$. If $X$ lies at some event-point $E$, it belongs to $\sector(\Cur(E))$.
Otherwise, it belongs to $\sector(\Fur(E^*))$, where $E^*$ is the closest event-point preceding $X$ in clockwise order.
\end{corollary}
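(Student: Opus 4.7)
The plan is to show that the ordered event-points partition $\partial P$ into arcs on each of which the question ``which sector contains this point?'' has a single constant answer, and that the tags defined in (\ref{def:tag1}) and (\ref{def:tag2}) record exactly that answer (with `\#' encoding ``not in $f(\T)$'' and hence ``not in any sector''). The two ingredients will be: (i) the event-points are precisely the places on $\partial P$ where the answer can change, and (ii) the specific tag values identify the correct post-transition sector.

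First I would verify the claim for $X$ lying at some event-point $E$, namely $X\in \sector(\Cur(E))$. For a Group~1 event-point $\LVS\cap \partial P$ or $\RVS\cap \partial P$, Lemma~\ref{lemma:sector-continuity} identifies these as the endpoints of the boundary-portion $\sector(V)\cap \partial P$, so each endpoint either belongs to $\sector(V)$ (when it does the current tag $V$ is correct) or is simultaneously a K-point and is removed by the collision procedure (handled below). For a K-point $K_i$, Fact~\ref{fact:fK-notdefinedyet} directly gives $K_i \notin f(\T)$, so $K_i$ belongs to no sector, matching $\Cur(K_i)=$`\#'.

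Next I would verify the claim for $X$ strictly between event-points. Let $\alpha$ be the open arc of $\partial P$ from $E^*$ to its clockwise next event-point $E^{**}$. Since $\alpha$ contains no event-point, it meets neither $\sigma P$ nor any of the curves $\LVS,\RVS$. By Fact~\ref{fact:outer-boundary-outside}.3, $\alpha$ lies either entirely inside or entirely outside $f(\T)$. In the outside case, no sector touches $\alpha$, and one checks that $\Fur(E^*)=$`\#' in precisely these configurations (this is exactly why the K-point future tag bifurcates on the orientation $A\in P$ vs.\ $A\notin P$, and why the $\LVS$/$\RVS$ endpoints bracket each connected piece of $\sector(V)\cap\partial P$). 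In the inside case, combining \SC with \SM forces $\alpha$ to lie in a single $\sector(w)\cap\partial P$, and the identity of $w$ matches $\Fur(E^*)$ by inspecting three cases: at $\LVS\cap\partial P$ we enter $\sector(V)$; at $\RVS\cap\partial P$ we leave $\sector(V)$ and by the cyclic order in \SM immediately enter $\sector(forw(V))$; and at an entering K-point, the extended definition of $f^{-1}_2$ on K-points coincides with $g$ (Definition~\ref{def:g}), so the sector containing the arc is $\sector(\unit(g(K_i)))$.

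Finally, the most delicate part is the removing procedure for coincident event-points, and this is where I expect the main obstacle. The case analysis must cover every way that $\LVS\cap\partial P$, $\RVS\cap\partial P$ and a K-point can collide at a common point of $\partial P$. The priority rules are engineered so that the retained event-point's tags remain correct: a K-point beats a Group~1 event-point because K-points lie outside $f(\T)$, so `\#' is the right current value at any shared position, and $\RVS$ beats $\LVS$ because moving past the shared position one passes from $\sector(V)$ directly into $\sector(forw(V))$, which is exactly the information needed by the subsequent arc. I would close the proof by exhausting this small list of collision patterns and observing that in every surviving configuration, the arc-by-arc analysis above applies verbatim.
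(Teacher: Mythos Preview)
Your proposal is essentially correct and follows the same route as the paper, which simply states that the corollary is ``a consequence of the \SM and \IP'' without spelling out any details; you have supplied exactly the case analysis (event-points versus arcs, Group~1 versus Group~2, inside versus outside $f(\T)$) that this one-line justification implicitly relies on. One small caution: your claim that a Group~1 endpoint ``either belongs to $\sector(V)$ or is simultaneously a K-point'' is not quite how the paper handles the boundary-inclusion subtlety---the Note following the algorithm explicitly acknowledges that an endpoint of $\sector(V)\cap\partial P$ may fail to lie in $\sector(V)$ (since the sector is not closed) without being a K-point, and resolves this by a convention rather than by a case split---but this does not affect the overall correctness of your argument.
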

\textbf{Note}: $X$ belongs to no sector when we say it belongs to $\sector(\text{`\#'})$.

\bigskip Our algorithm is as follows.
1. ADD: Find all the event-points and compute their tags.
2. SORT: Sort these points in clockwise order.
3. REMOVE: Remove the redundant event-points.
4. SWEEP: For each vertex, compute the closest event-point preceding it (in clockwise) and then compute the sector containing it by applying Corollary~\ref{corol:S}.

\medskip The event-points from Group~1 and their tags can be computed efficiently as shown in subsection~\ref{subsect:algorithms-endpoints}.
    In the following we show how to compute the event-points from Group~2 and compute their tags.

\begin{lemma}\label{lemma:sigmaP_O(n)}
The polygonal curve $\sigma P$ consists of $O(n)$ sides and can be computed in $O(n)$ time.
The points in $\sigma P\cap \partial P$ are of size $O(n)$ and can be computed in $O(n\log n)$ time.
Moreover, the future-tag of each such point can be computed in $O(1)$ amortized time.
(The current tags of these event-points are trivially `\#'; see (\ref{def:tag2})).
\end{lemma}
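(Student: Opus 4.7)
The plan has three components, mirroring the three claims of the lemma.

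First, to show that $\sigma P$ consists of $O(n)$ sides and is constructible in $O(n)$ time, I start from the frontier-pair-list $\mathsf{FPL}$. Algorithm~\ref{alg:FPL-def} strictly increases $i$ or $j$ (or both) at every iteration and runs exactly one cycle, so $|\mathsf{FPL}|=O(n)$ and there are $O(n)$ frontier blocks. The bottom border of an edge-edge frontier block contributes at most two line segments (its left/right lower borders). The bottom border of a frontier block that contains at least one vertex is the reflection (about a known midpoint) of some $\zeta(\cdot,\cdot)$, whose complexity equals the number of edges of $\partial P$ that the underlying $\zeta$-portion spans. Here the key is bi-monotonicity of $Z$-points (Fact~\ref{fact:Z_bi-monotonicity}): as $(u,u')$ ranges through $\mathsf{FPL}$ in order, the endpoints $Z_{back(u)}^{back(u')}$ and $Z_{forw(u)}^{forw(u')}$ of the successive $\zeta(u,u')$'s advance monotonically once around $\partial P$, so the $\zeta$-portions that appear in bottom borders tile a monotonically traversed subset of $\partial P$ without overlap. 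This charges each edge of $P$ only $O(1)$ times across all frontier-block bottom borders, yielding total complexity $O(n)$. To construct $\sigma P$ in $O(n)$ time, I first compute all $\D_i$'s and all the $Z$-points associated with pairs in $\mathsf{FPL}$ in total $O(n)$ time via Lemma~\ref{lemma:Z-compute}.3 (the $\mathsf{FPL}$ pairs satisfy the monotonicity hypothesis of that lemma), then run Algorithm~\ref{alg:FPL-def} (each step $O(1)$ given these precomputed quantities), and for each frontier block emit its bottom border by walking along the appropriate portion of $\partial P$ and reflecting it around the appropriate midpoint; the walks on $\partial P$ cumulate to $O(n)$ by the same monotonicity argument.

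Second, for $|\sigma P\cap \partial P|$ and its computation: every side of $\sigma P$ is a straight line segment, and $\partial P$ is the boundary of a convex polygon, so each side meets $\partial P$ in at most two points. Combined with the $O(n)$ bound on sides, this gives $|\sigma P\cap \partial P|=O(n)$. To compute them in $O(n\log n)$ time, I iterate over the $O(n)$ sides of $\sigma P$ and, for each, find its intersections with $\partial P$ by a standard binary search on the convex boundary in $O(\log n)$ time, totaling $O(n\log n)$.

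Third, the future-tag for each intersection $K_i$ is an $O(1)$ computation per point. By construction I know the directed side $\overrightarrow{AB}$ of $\sigma P$ that produced $K_i$ and the frontier block $\block(u,u')$ whose bottom border contains $\overrightarrow{AB}$. I test in $O(1)$ whether $A$ and $B$ lie in $P$ (each endpoint is the reflection or scaling of an explicitly known point of $\partial P$ about a known midpoint, and membership reduces to one half-plane test against the relevant edge of $P$). If $A\in P,B\notin P$, set $\Fur(K_i)=\text{`\#'}$. Otherwise $A\notin P,B\in P$ and $\Fur(K_i)=\unit(g(K_i))=\unit(f^{-1,2}_{u,u'}(K_i))$; by the case analysis of the extension of $f^{-1,2}_{u,u'}$ above Definition~\ref{def:g}, evaluating $f^{-1,2}_{u,u'}(K_i)$ is either trivial (a fixed $Z$-point), or a single reflection of $K_i$ around an explicit midpoint, hence $O(1)$, after which identifying the containing unit on $\partial P$ is also $O(1)$.

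The hard step is the $O(n)$ bound on the complexity of $\sigma P$. A priori the polygonal lower borders obtained by reflecting $\zeta$-portions could each be of size $\Omega(n)$, which would give $\Omega(n^2)$ total. What saves us is the bi-monotonicity of $Z$-points, which guarantees that successive $\zeta$-portions arising in $\mathsf{FPL}$ cover each boundary-edge of $P$ only $O(1)$ times in aggregate; formalizing this amortized argument carefully (and separately for the edge-vertex, vertex-edge, and vertex-vertex cases that appear in $\mathsf{FPL}$) is the main technical obstacle.
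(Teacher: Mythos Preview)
Your proposal is correct and follows essentially the same approach as the paper. The one place the paper is cleaner is the step you flag as the ``hard step'': rather than an amortized charging argument, the paper observes directly that between any two consecutive edge--edge pairs $(u_i,u'_i),(u_{i+1},u'_{i+1})$ in $\mathsf{FPL}$ there is exactly one intermediate pair, whose bottom border is the reflection of $[Z_{u_i}^{u'_i}\circlearrowright Z_{u_{i+1}}^{u'_{i+1}}]$, and that these arcs form an exact partition of $\partial P$; hence the non-edge--edge bottom borders have total complexity exactly $O(n)$ with no amortization needed.
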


\begin{proof}
Recall frontier-pair-list and bottom borders in subsection~\ref{subsect:pre-borders-sigmap}.
Proving that $\sigma P$ is of size $O(n)$ (namely, the bottom borders have in total $O(n)$ sides) reduces to showing that
(i) the bottom borders of blocks in $\{\block(u,u') \mid (u,u')\in \text{frontier-pair-list, } u,u' \text{ are both edges}\}$ have in total $O(n)$ sides, and
(ii) the bottom borders of blocks in $\{\block(u,u') \mid (u,u')\in \text{frontier-pair-list, at least one of } u,u' \text{ is a vertex}\}$ have in total $O(n)$ sides.

\smallskip \noindent Proof of (i): Clearly, the frontier-pair-list contains $O(n)$ unit pairs, and the bottom border of $\block(u,u')$ has at most two sides when $u,u'$ are both edges.
Therefore, we obtain (i).

\smallskip \noindent Proof of (ii): Let $(u_1,u'_1),\ldots,(u_m,u'_m)$ denote the sublist of the frontier-pair-list that contains all of the edge pairs.
Let $Z_i=Z_{u_i}^{u'_i}$ for short. Combining the following two observations, we obtain (ii).
1. For any two neighboring edge pairs, e.g.\ $(u_i,u'_i)$ and $(u_{i+1},u'_{i+1})$, there is another unit pair (denoted by $u,u'$) in the frontier-pair-list between $(u_i,u'_i)$ and $(u_{i+1},u'_{i+1})$ (see Figure~\ref{fig:sigmaP-functionG}~(b)), and the bottom border of $\block(u,u')$ is exactly the reflection of $[Z_i \circlearrowright Z_{i+1}]$.
2. Boundary-portions $[Z_1\circlearrowright Z_2],\ldots,[ Z_m\circlearrowright Z_1]$ form a partition of $\partial P$.
    This is because $Z_1,\ldots,Z_m$ lie in clockwise order $\partial P$, which is due to the bi-monotonicity of the $Z$-points (Fact~\ref{fact:Z_bi-monotonicity}), as shown in Figure~\ref{fig:sigmaP-functionG}~(c).

\smallskip Next, we show that $\sigma P$ can be computed in $O(n)$ time.
We compute $\sigma P$ in three steps; each costs $O(n)$ time.
(Step 1) Compute the frontier-pair-list by Algorithm~\ref{alg:FPL-def}.
(Step 2) Compute $Z_1,\ldots,Z_m$. This cost $O(n+m)=O(n)$ time by Lemma~\ref{lemma:Z-compute}~part~3 since they lie in clockwise order.
(Step 3) Construct each side in the bottom border of each frontier block. Each side can be constructed in $O(1)$ time according to the definition of bottom borders.

\smallskip To compute $\sigma P\cap \partial P$, we can try each side of $\sigma P$ and compute its intersection with $\partial P$. According to the common computational geometric result, by $O(n)$ time preprocessing, the intersection between a segment and the boundary of a fixed convex polygon $P$ can be computed in $O(\log n)$ time. Thus, this takes $O(n\log n)$ time.

\smallskip To compute the future-tag of each intersection $K_i$ in $\sigma P\cap \partial P$ reduces to
  computing $\unit(g(K_i))$, due to (\ref{def:tag2}). Notice that
    (1) $\unit(g(\cdot))$ has the property that it is identical within any side of $\sigma P$ (by the definition of $g$), and
    (2) while computing $\sigma P$, we can compute the value of $\unit(g(\cdot))$ for each side of $\sigma P$.
Therefore, by sweeping around $\sigma P$, we can compute $\unit(g(K_i))$ for all the intersections $K_i$ in $\sigma P\cap \partial P$ in linear time.
\end{proof}

\subparagraph*{Running time analysis of the sweeping algorithm.} The ADD step requires $O(n\log^2n)$ time for Group~1 (as shown in subsection~\ref{subsect:algorithms-endpoints}),
  and $O(n\log n)$ time for Group~2 (by Lemma~\ref{lemma:sigmaP_O(n)}).
Also according to Lemma~\ref{lemma:sigmaP_O(n)}, there are in total $O(n)$ event-points.
So the SORT step runs in $O(n\log n)$ time (or even in $O(n)$ time).
The REMOVE and SWEEP steps are trivial and they cost $O(n)$ time.
Therefore, the algorithm runs in $O(n\log^2n)$ time.

\begin{remark}
There is an $O(n)$ time algorithm for computing $\sigma P\cap \partial P$, improving the one given in the proof of Lemma~\ref{lemma:sigmaP_O(n)}.
We sketch it in the following. Initially, choose a pair of edges, one from $\sigma P$ and one from $\partial P$.
  In each iteration, compute their intersection and change one edge to its clockwise next one.
    The selection of edge-to-change follows a specific (and involved) rule.
We can prove that, by selecting good initial edges and rule, the algorithm does not miss any intersection in $\sigma P\cap\partial P$.
    The analysis is complicated. So we do not present it in detail.
\end{remark}


\subsection{Two delimiting edges $e_{p_V},e_{q_V}$ for block locating}\label{subsect:alg_B-pq}

Fix $V$ to be a vertex. Let $\block(u^*_1,u^*_2)$ denote the block containing $V$.
Recall section~\ref{sect:techover} for a sketch on how do we compute $(u^*_1,u^*_2)$.
It mentioned that we have to find two \textbf{delimiting edges} $e_{p_V},e_{q_V}$ so that
\begin{eqnarray}
& \text{$e_p\prec e_{q}$. Moreover, $(v_p\circlearrowright v_{q+1})$ contains $V$.} \label{eqn:pq1}\\
& \text{$u^*_1 \in [v_p\circlearrowright V)$, and $u^*_2\in (V\circlearrowright v_{q+1}]$}               \label{eqn:pq2}.
\end{eqnarray}

This subsection shows how to find $(e_{p_V},e_{q_V})$ satisfying (\ref{eqn:pq1}) and (\ref{eqn:pq2}) by utilizing the bounding-quadrants of the blocks.
Although it is rather short, this subsection contains the most important ideas for proving Theorem~\ref{thm:nestp-location}.

\smallskip Assume $V=v_i$ henceforth for convenience. Denote
\begin{equation}\label{eqn:nabla}
    \nabla_V:=\left\{(u,u')\mid \hbox{unit $u$ is chasing unit $u'$},
        u \in (\D_i\circlearrowright V),
        u'\in (V\circlearrowright \D_{i-1})\right\}.
\end{equation}

  \begin{fact}\label{fact:nabla}
$(u_1^*,u^*_2)\in \nabla_V$.
\end{fact}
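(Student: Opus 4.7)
The plan is to verify the three conditions defining $\nabla_V$ for the pair $(u_1^*,u_2^*)$. The chasing condition is immediate, since $\block(u_1^*,u_2^*)$ is defined in (\ref{def:block}) only when $u_1^*$ is chasing $u_2^*$. The substantive content lies in the two location bounds $u_1^*\in(\D_i\circlearrowright V)$ and $u_2^*\in(V\circlearrowright \D_{i-1})$, and I would derive both from the parallelogram structure supplied by $V\in\block(u_1^*,u_2^*)$.

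Since $V\in\block(u_1^*,u_2^*)=f(\T(u_1^*,u_2^*))$, the block definition gives a triple $(X_1,X_2,X_3)\in\T(u_1^*,u_2^*)$ with $f(X_1,X_2,X_3)=V$; unpacking, $X_1\in u_2^*$, $X_3\in u_1^*$, $X_2\in\zeta(u_1^*,u_2^*)$, and $VX_1X_2X_3$ is a parallelogram whose corners lie in clockwise order on $\partial P$. The parallelogram identity $V+X_2=X_1+X_3$, pushed through the function $d_{\el_i}(\cdot)$ --- which is affine on the closed half-plane containing $P$ by convexity --- yields
\[
d_{\el_i}(V)+d_{\el_i}(X_2)=d_{\el_i}(X_1)+d_{\el_i}(X_3),
\]
and since $V=v_i\in\el_i$ this collapses to $d_{\el_i}(X_3)\le d_{\el_i}(X_2)$.

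Next, I would invoke the standard fact that $d_{\el_i}$, restricted to $\partial P$, vanishes on $[v_i\circlearrowright v_{i+1}]$, is strictly increasing along $(v_{i+1}\circlearrowright \D_i)$ (because $e_i\prec e_j$ for every edge $e_j$ in this portion, and the pairwise-nonparallel assumption precludes flat edges), is strictly decreasing along $(\D_i\circlearrowright v_i)$, and attains its unique maximum at $\D_i$. A short preliminary check using the same identity rules out $X_3\in[v_i\circlearrowright v_{i+1}]$: this would force $X_1,X_2$ into $\el_i$ as well, collapsing the parallelogram into a segment. Thus $X_3$ lies in the far portion $(v_{i+1}\circlearrowright v_i)$. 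Since $X_3$ appears clockwise strictly after $X_2$, if $X_3$ were in $(v_{i+1}\circlearrowright \D_i]$ then $X_2$ would lie clockwise before $X_3$ in the strictly increasing segment, giving $d_{\el_i}(X_2)<d_{\el_i}(X_3)$, contradicting the inequality above; hence $X_3\in[\D_i\circlearrowright v_i)$. The endpoint $X_3=\D_i$ is in turn excluded because equality at the unique maximum forces $d_{\el_i}(X_2)=d_{\el_i}(\D_i)$ too, so $X_2=\D_i=X_3$, collapsing two adjacent corners of the parallelogram. Therefore $X_3\in(\D_i\circlearrowright V)$, and consequently $u_1^*=\unit(X_3)\in(\D_i\circlearrowright V)$.

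A completely symmetric argument --- replacing $\el_i$ by $\el_{i-1}$, whose zero set on $\partial P$ is $[v_{i-1}\circlearrowright v_i]$ and whose unique maximum on $\partial P$ is $\D_{i-1}$, and using that $V=v_i\in\el_{i-1}$ --- yields $u_2^*\in(V\circlearrowright \D_{i-1})$. The main obstacle in the proof is not conceptual but bookkeeping: at several points (such as $X_3=\D_i$, or $X_1$ lying on $\el_i$) one must check that the candidate equality either collapses two adjacent corners of the parallelogram or contradicts the strict monotonicity of $d_{\el_i}$ along $\partial P$. No new geometric ingredient beyond convexity and the pairwise-nonparallel assumption is required.
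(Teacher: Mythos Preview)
Your approach is valid and genuinely different from the paper's. The paper never unpacks the parallelogram: it invokes Lemma~\ref{lemma:block-in-quad} to get $V\in\block(u_1^*,u_2^*)\subset\qd_{forw(u_1^*)}^{back(u_2^*)}\subset\hp_{forw(u_1^*)}^{back(u_2^*)}$, reads off $V\in(v_a\circlearrowright v_{a'+1})$ with $a=forw(u_1^*),a'=back(u_2^*)$, and then converts this to $e_a\prec e_i$ and $e_{i-1}\prec e_{a'}$; the endpoint exclusions $u_1^*\ne\D_i$, $u_2^*\ne\D_{i-1}$ come from a separate chasing check. Your route trades that machinery for the parallelogram identity and the monotonicity of $d_{\el_i}$ on $\partial P$---more elementary, but with more corner cases to chase.

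One step is misattributed. You say ``the same identity'' forces $X_1,X_2$ onto $\el_i$ when $X_3\in[v_i\circlearrowright v_{i+1}]$. It does not: the distance identity only yields $d_{\el_i}(X_2)=d_{\el_i}(X_1)$, which says nothing about either vanishing. What actually does the work is the clockwise order $V,X_1,X_2,X_3$ on $\partial P$ that you asserted earlier: if $X_3\in(v_i\circlearrowright v_{i+1}]$ then $X_1,X_2$ must lie in $(v_i\circlearrowright X_3)\subset\bar e_i$, and then $\unit(X_3),\unit(X_1)\in\{v_i,e_i,v_{i+1}\}$, where the only chasing pair is $(v_i,v_{i+1})$, forcing $X_3=v_i=V$---degenerate. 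You should also verify nondegeneracy of $VX_1X_2X_3$ before invoking that cyclic order; this follows because $X_2\in\zeta(u_1^*,u_2^*)$, and Fact~\ref{fact:dist-unique-location} places every $Z$-point, hence all of $\zeta(u_1^*,u_2^*)$, off the closed edges $\bar e_{i-1}\cup\bar e_i$, so the four corners cannot be collinear through $v_i$.
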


\begin{proof}
By the definition of $\nabla_V$, it reduces to proving that $u_1^*\in (\D_i\circlearrowright V)$ while $u_2^*\in (V\circlearrowright \D_{i-1})$.

Let $e_a=forw(u^*_1),e_{a'}=back(u^*_2)$. Since $u^*_1$ is chasing $u^*_2$, we have $e_a\preceq e_{a'}$.
By Lemma~\ref{lemma:block-in-quad}, $V\in \block(u^*_1,u^*_2)\subset \qd_{u^*_1}^{u^*_2}=\qd_a^{a'}\subseteq \hp_a^{a'}$, hence $V=v_i\in (v_a\circlearrowright v_{a'+1})$. Together, $e_a\prec e_i$ and $e_{i-1}\prec e_{a'}$.

Since $e_a\prec e_i$, $e_a\in (\D_i\circlearrowright V)$, i.e.\ $forw(u^*_1)\in (\D_i\circlearrowright V)$. So, $u^*_1\in [\D_i\circlearrowright V)$.

Since $e_{i-1}\prec e_{a'}$, $e_{a'}\in (V\circlearrowright \D_{i-1})$, i.e.\ $back(u^*_2)\in (V\circlearrowright \D_{i-1})$. So, $u^*_2\in (V\circlearrowright \D_{i-1}]$.

In the following we further argue that $u^*_1\neq \D_i$ and $u^*_2\neq \D_{i-1}$.
Because $e_{a'}\in (V\circlearrowright \D_{i-1})$, it also lies in $(V\circlearrowright \D_i)$. So, $e_{a'}\preceq back(\D_i)$. Therefore, $back(\D_i)\nprec e_{a'}$, i.e.\ $back(\D_i)\nprec back(u^*_2)$. Therefore, $\D_i$ is not chasing $u^*_2$.
This means that $u^*_1\neq \D_i$ because $u^*_1$ must be chasing $u^*_2$.
Symmetrically, $u^*_2\neq \D_{i-1}$.
\end{proof}

\begin{figure}[b]
\centering \includegraphics[width=.7\textwidth]{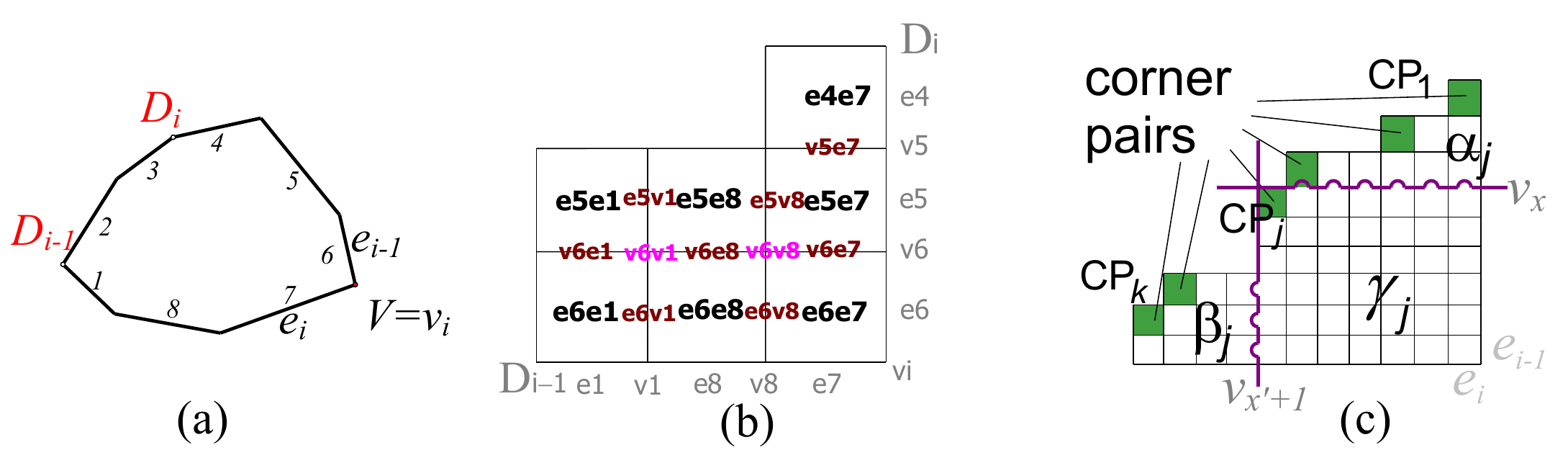}
\caption{An illustration of set $\nabla_V$ and its corner pairs.}\label{fig:nabla}
\end{figure}

Following the definition of chasing and due to the definition of $\nabla_V$, all elements $(u,u')$ in $\nabla_V$ constitute into a ``staircase'' structure
  when they are filled into a 2-dimensional matrix (as shown in Figure~\ref{fig:nabla}~(b)), where
    elements with the same $u$ are in the same row and rows are sorted from top to bottom according to the clockwise order of $u$, whereas
     elements with the same $u'$ are in the same column and columns are sorted from right to left according to the clockwise order of $u'$.
See also Figure~\ref{fig:nabla}~(c) for an illustration of the ``staircase'' structure.

We define the ``corners of this staircase'' as the corner pairs.
Formally, $(e_x,e_{x'})$ in $\nabla_V$ is a \textbf{\emph{corner pair}}, if neither $(e_{x-1},e_{x'})$ nor $(e_x,e_{x'+1})$ belongs to $\nabla_V$.
(This concept is similar to extremal pair introduced in Definition~\ref{def:extremal}.)
Denote by $\mathsf{CP}_1,\ldots,\mathsf{CP}_k$ the corner pairs, which are sorted so that
  $\mathsf{CP}_1$ is at topmost and $\mathsf{CP}_k$ at leftmost.

\medskip As demonstrated below, we are going to pick a corner pair of $\nabla_V$ to be $(e_p,e_q)$.
The exact definition of $p,q$ relies on an interesting observation of $\nabla_V$ (Fact~\ref{fact:test_cri} below). We need some notations.
\begin{description}
\item [Subsets $\alpha_j,\beta_j,\gamma_j$ of $\nabla_V$.]
Consider $\mathsf{CP}_j=(e_x,e_{x'})$.
If we cut $\nabla_V$ along the horizontal line corresponding to $v_x$ and the vertical line corresponding to $v_{x'+1}$,
  we get three chunks; the unit pairs in the top chunk are in $\alpha_j$; those in the left chunk are in $\beta_j$; and the rest form a rectangular shape and they are in $\gamma_j$. Formally,

$\alpha_j=\{(u,u')\in \nabla_V\mid u\hbox{ lies in }(\D_i\circlearrowright v_x)\}$,

$\beta_j=\{(u,u')\in \nabla_V\mid u'\hbox{ lies in }(v_{x'+1}\circlearrowright \D_{i-1})\}$,

$\gamma_j=\left\{(u,u')\in \nabla_V\mid u \hbox{ lies in $[v_x\circlearrowright V)$}, u' \hbox{ lies in $(V\circlearrowright v_{x'+1}]$}\right\}$.

See the illustration of subsets $\alpha_j,\beta_j$ and $\gamma_j$ in Figure~\ref{fig:nabla}~(c). For convenience, denote $\alpha_{k+1}=\nabla_V$.
\end{description}

Recall $\{\bp_u^{u'}\}$ in Definition~\ref{def:quad}. For any subset $S$ of $\nabla_V$, denote $\bp[S]=\bigcup_{(u,u')\in S}\bp_{u}^{u'}$.

\begin{fact}\label{fact:test_cri}
$\bp[\alpha_{j+1}]\cap \bp[\beta_j]=\varnothing~ (1\leq j\leq k).$
\end{fact}

Before proving Fact~\ref{fact:test_cri}, we state explicit formulas for $\bp[\alpha_j]$ and $\bp[\beta_j]$.
Let $a_j,b_j~(1<j\leq k)$ respectively denote
  the edge pairs at the upper right and lower left corners of $\alpha_j$;
Let $c_j,d_j~(1\leq j<k)$ respectively denote
  the edge pairs at the upper right and lower left corners of $\beta_j$; see Figure~\ref{fig:test_criterion}~(a).
Recall that $\rho.s$ and $\rho.t$ denote the starting and terminal point of boundary-portion $\rho$.
Applying the monotonicity of $\bp$ (Lemma~\ref{lemma:br_monotone}), we have
\begin{eqnarray}
\bp[\alpha_j]=(\bp[a_j].s\circlearrowright \bp[b_j].t)\text{, for any }1<j\leq k. \label{eqn:alpha}\\
\bp[\beta_j]=(\bp[c_j].s\circlearrowright \bp[d_j].t)\text{, for any }1\leq j<k  \label{eqn:beta}.
\end{eqnarray}

\begin{figure}[b]
\centering\includegraphics[width=.6\textwidth]{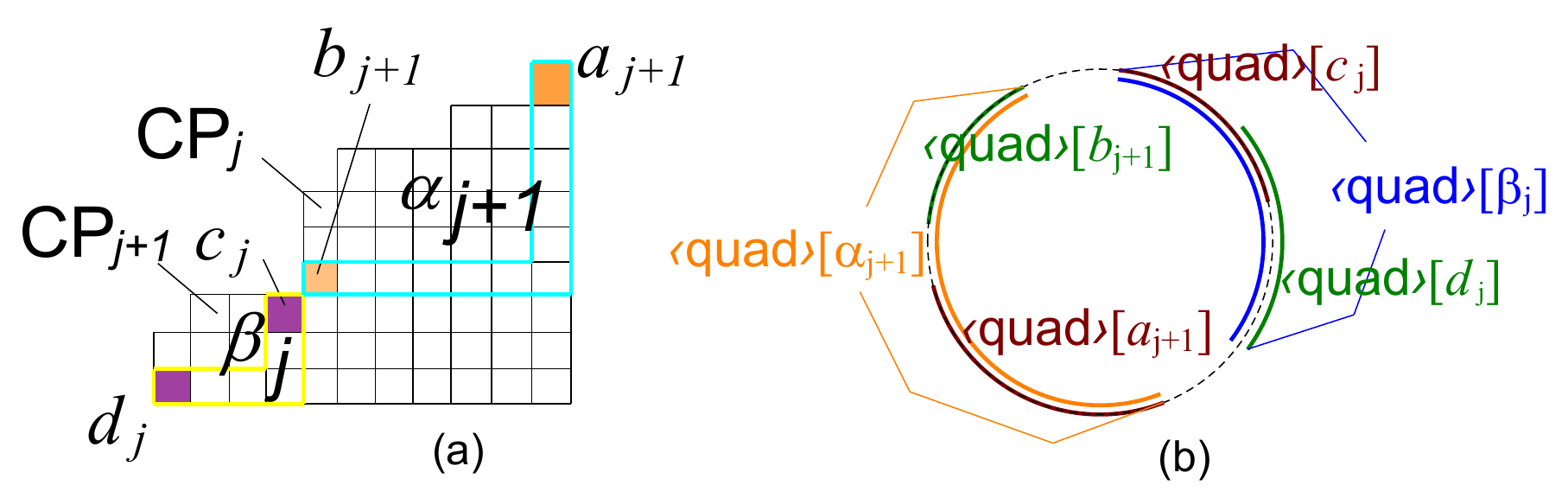}
\caption{Illustration of the proof of Fact~\ref{fact:test_cri}.}\label{fig:test_criterion}
\end{figure}

\begin{proof}[Proof of Fact~\ref{fact:test_cri}]
When $j=k$, set $\beta_j$ is empty and the equation is trivial.

Next, we assume that $j<k$. We state the following observations.
\begin{eqnarray}
&& \text{$\bp[a_{j+1}].s, \bp[b_{j+1}].s, \bp[c_j].s, \bp[d_j].s$ lie in clockwise order.}\label{eqn:s_monotone}\\
&& \text{$\bp[a_{j+1}].t, \bp[b_{j+1}].t, \bp[c_j].t, \bp[d_j].t$ lie in clockwise order.}\label{eqn:t_monotone}\\
&&   \bp[a_{j+1}]\text{ has no overlap with }\bp[d_j].\label{eqn:test_cri_proof1}\\
&&   \bp[b_{j+1}]\text{ has no overlap with }\bp[c_j].\label{eqn:test_cri_proof2}
\end{eqnarray}
Proof: The first two facts follow from the monotonicity of $\bp$; the proof of (\ref{eqn:test_cri_proof1}) is given in the following; the proof of  (\ref{eqn:test_cri_proof2}) is similar to that of (\ref{eqn:test_cri_proof1}) and omitted. Recall $V=v_i$. Notice that $a_{j+1}=(forw(\D_i),e_i)$ and $d_j=(e_{i-1},back(\D_{i-1}))$.
Observing that edges $forw(\D_i),e_i,e_{i-1},back(\D_{i-1})$ do not lie in any inferior portion,
 applying the peculiarity of the bounding-quadrants (Lemma~\ref{lemma:br_peculiar}), $\qd_{forw(\D_i)}^i\cap \qd_{i-1}^{back(\D_{i-1})}$ lie in the interior of $P$.
So, $\qd_{forw(\D_i)}^i\cap \partial P$ is disjoint with $\qd_{i-1}^{back(\D_{i-1})}\cap \partial P$. This further implies (\ref{eqn:test_cri_proof1}).

\medskip See Figure~\ref{fig:test_criterion}~(b). Combining the observations,
  $\bp[a_{j+1}].s$, $\bp[b_{j+1}].s$, $\bp[b_{j+1}].t$, $\bp[c_j].s$, $\bp[d_j].s$, $\bp[d_j].t$
lie in clockwise order around $\partial P$. In particular, points
  $\bp[a_{j+1}].s$, $\bp[b_{j+1}].t$, $\bp[c_j].s$, $\bp[d_j].t$ lie in clockwise order around $\partial P$.
So $(\bp[a_{j+1}].s\circlearrowright \bp[b_{j+1}].t)$ is disjoint with $(\bp[c_j].s \circlearrowright \bp[d_j].t)$.
Further, by (\ref{eqn:alpha}) and (\ref{eqn:beta}), $\bp[\alpha_{j+1}]$ is disjoint with $\bp[\beta_j]$.
\end{proof}

We state a trivial fact before showing the definition of $p_V$ and $q_V$.
\begin{enumerate}
\item[(i)] \emph{If $(u^*_1,u^*_2)$ belongs to some set $S$, then $V\in \bp [S]$.}
\end{enumerate}
\begin{proof}
$V\in \block(u^*_1,u^*_2)\cap \partial P \subseteq \qd_{u^*_1}^{u^*_2}\cap \partial P \subseteq \bp_{u^*_1}^{u^*_2}\subseteq \bigcup_{(u,u')\in S}\bp_{u}^{u'}=\bp[S].$
\end{proof}

\begin{definition}[Delimiting edges $e_{p_V}$ and $e_{q_V}$]\label{def:pq}
Recall that $(u^*_1,u^*_2)\in \nabla_V$ (see Fact~\ref{fact:nabla}). So $V\in \bp[\nabla_V]$ by fact~(i) above.
Further, since $\varnothing=\alpha_1\subset \ldots\subset \alpha_{k+1}=\nabla_V$, there must be a unique index in $1,\ldots,k$, denoted by $h$,
    such that $V \notin \bp[\alpha_{h}]$ but $V\in \bp[\alpha_{h+1}]$.
We choose the corner pair $\mathsf{CP}_h$ to be $(e_{p_V},e_{q_V})$.\smallskip
\end{definition}

By defining $p_V,q_V$ this way, we can verify that (\ref{eqn:pq1}) and (\ref{eqn:pq2}) are satisfied.

\begin{proof}[Proof of (\ref{eqn:pq1})] Since $(e_p,e_q)\in \nabla_V$, we get $e_p\prec e_{q}$ and $V\in(v_p\circlearrowright v_{q+1})$ by (\ref{eqn:nabla}).
\end{proof}

\begin{proof}[Proof of (\ref{eqn:pq2})]
By the definition of $h$, we know $V \notin \bp[\alpha_{h}]$ and $V\in \bp[\alpha_{h+1}]$.

Since $V \notin \bp[\alpha_{h}]$, we know $(u^*_1,u^*_2)\notin \alpha_{h}$ due to fact~(i).
Since $V \in \bp[\alpha_{h+1}]$, we get $V \notin \bp[\beta_{h}]$ according to Fact~\ref{fact:test_cri}, which further implies that
$(u^*_1,u^*_2)\notin \beta_{h}$ due to fact~(i).

However, by Fact~\ref{fact:nabla}, $(u^*_1,u^*_2)\in \nabla_V=\alpha_{h}\cup \beta_{h}\cup \gamma_{h}$.
So $(u^*_1,u^*_2)$ must belong to $\gamma_{h}$, i.e.\ $(u^*_1,u^*_2)\in \left\{(u,u')\in \nabla_V\mid u \hbox{ lies in $[v_p\circlearrowright V)$}, u' \hbox{ lies in $(V\circlearrowright v_{q+1}]$}\right\}$.
This implies (\ref{eqn:pq2}).
\end{proof}

\begin{lemma}
Given $1\leq j\leq k$, in $O(1)$ time we can determine whether $V$ lies in $\bp[\alpha_j]$.
As a corollary, we can compute $h$  and thus compute $(e_p,e_q)$ (using definition~\ref{def:pq}) in $O(\log n)$ time.
\end{lemma}

\begin{figure}[h]
\centering\includegraphics[width=.7\textwidth]{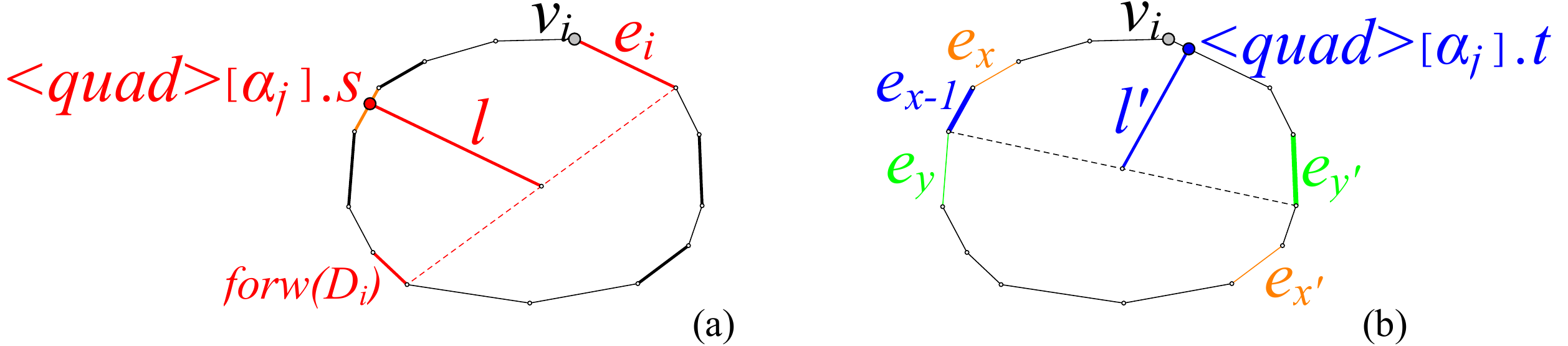}
\caption{Compute $e_p,e_q$.}\label{fig:preprocess_pq}
\end{figure}

\begin{proof}
The case $j=1$ is trivial since $\bp[\alpha_j]=\varnothing$. Assume that $j>1$.
Assume $\mathsf{CP}_j=(e_x,e_{x'}),\mathsf{CP}_{j-1}=(e_y,e_{y'})$.
(We can compute $\mathsf{CP}_j$ and $\mathsf{CP}_{j-1}$ in $O(1)$ time.
Except for the first and last element of $\mathsf{CP}$, the other corner pairs are extremal pairs.
Yet we can obtain a list of extremal pairs beforehand and use it to compute $\mathsf{CP}_j$.)

Recall that (i) $\bp[\alpha_j]= (\bp[a_j].s\circlearrowright \bp[b_j].t)$, by (\ref{eqn:alpha}), where
   $a_j,b_j$ are respectively the edge pairs at the upper right and lower left corners of $\alpha_j$.
   Observe that $a_j=(forw(\D_i),e_i)$ and $b_j=(e_{x-1},e_{y'})$.
   According to Definition~\ref{def:quad},
  (ii)  $\bp[a_j].s$ is at the intersection between $l$ and $[\D_i\circlearrowright v_{i+1}]$,
where $l$ denotes the line at $(\D_i+v_{i+1})/2$ parallel to $e_i$ (see Figure~\ref{fig:preprocess_pq}~(a)),
   and (iii) $\bp[b_j].t$ is at the intersection between $l'$ and $[v_{x-1}\circlearrowright v_{y'+1}]$,
where $l'$ denotes the line at $(v_{x-1}+v_{y'+1})/2$ parallel to $e_{x-1}$ (see Figure~\ref{fig:preprocess_pq}~(b)).
Altogether, $V\in \bp[\alpha_j]$ if and only if $V$ lies in the open half-plane delimited by $l'$ and containing $e_{x-1}$.
Further, since in $O(1)$ time we can compute $l'$ and find the side of $l'$ that contains $V$,
   we can determine whether $V\in \bp[\alpha_j]$ in $O(1)$ time, as claimed above.
\end{proof}



\subsection{Find the block that contains $V$}\label{subsect:alg_B_SV=E}

Fix a vertex $V$.
Given $w$ such that $V\in \sector(w)$, and $p_V,q_V$ such that (\ref{eqn:pq1}) and (\ref{eqn:pq2}) hold ,
  we now compute $(u^*_1,u^*_2)$ so that $\block(u^*_1,u^*_2)$ contains $V$.
  (Unit $w$ and indices $p_V,q_V$ are computed in subsections~\ref{subsect:alg_S} and \ref{subsect:alg_B-pq}.) \smallskip

Recall the sketch of our algorithm in section~\ref{sect:techover} and recall the following notations.
A unit pair $(u,u')$ in which $u$ is chasing $u'$ is \emph{alive} if
  $u\in [v_p\circlearrowright V)$ and $u'\in (V\circlearrowright v_{q+1}]$.
Moreover, it is \emph{active} if it is alive and $\zeta(u,u')\text{ intersects }w$.
Furthermore, for each active pair $(u,u')$, define $\cell(u,u'):=\block(u,u')\cap \sector(w)$ and call it a \emph{cell}.

\begin{fact}\label{fact:active}
  $(u^*_1,u^*_2)$ is active, and $\cell(u^*_1,u^*_2)$ is the unique cell that contains $V$.
\end{fact}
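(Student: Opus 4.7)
The plan is to prove Fact~\ref{fact:active} by directly invoking Fact~\ref{fact:how-to-reverse-f-1}, which characterizes the unique block and sector containing any boundary point of $f(\T)$ through the preimage $f^{-1}(V)$. There are three sub-claims to dispatch: that $\zeta(u^*_1,u^*_2)$ meets $w$ (so that $(u^*_1,u^*_2)$ is active and $\cell(u^*_1,u^*_2)$ is defined at all), that $V$ actually lies in this cell, and that no other cell contains $V$. Note that $(u^*_1,u^*_2)$ is already known to be alive by (\ref{eqn:condition2}), so the only missing ingredient for activeness is the intersection condition on $\zeta$.

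For activeness, I would set $(X_1,X_2,X_3):=f^{-1}(V)$, which is well-defined since $V\in f(\T)\cap \partial P$. Because $V$ lies simultaneously in $\block(u^*_1,u^*_2)$ and in $\sector(w)$, the ``moreover'' clause of Fact~\ref{fact:how-to-reverse-f-1} forces $X_3\in u^*_1$, $X_1\in u^*_2$ and $X_2\in w$. But $(X_1,X_2,X_3)\in \T$, so by the definition of $\T$ (Definition~\ref{def:T}) we have $X_2\in\zeta(\unit(X_3),\unit(X_1))=\zeta(u^*_1,u^*_2)$. Combined with $X_2\in w$, this yields $X_2\in\zeta(u^*_1,u^*_2)\cap w\neq\varnothing$, which is precisely the condition for $(u^*_1,u^*_2)$ to be active. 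Containment $V\in\cell(u^*_1,u^*_2)$ is then immediate from the definition (\ref{eqn:cell}), since $V\in\block(u^*_1,u^*_2)\cap\sector(e_k)$ by hypothesis.

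For uniqueness, suppose some cell $\cell(u,u')$ also contains $V$. By (\ref{eqn:cell}), $V\in\block(u,u')$. But Fact~\ref{fact:how-to-reverse-f-1} asserts that a boundary point of $f(\T)$ lies in \emph{exactly} one block, so $\block(u,u')=\block(u^*_1,u^*_2)$, i.e.\ $(u,u')=(u^*_1,u^*_2)$. This concludes the uniqueness claim and thus the entire fact.

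I do not foresee any substantial obstacle: all the heavy lifting has been absorbed into \textsc{Block-disjointness} and the associated reverse-function machinery of Fact~\ref{fact:how-to-reverse-f-1}. The only bookkeeping concern is being careful that ``active'' presupposes ``alive,'' so one should cite (\ref{eqn:condition2}) at the outset to license the use of the term $\cell(u^*_1,u^*_2)$ before proving the $\zeta$-intersection condition. The entire argument should comfortably fit in a few lines.
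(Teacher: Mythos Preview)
Your proposal is correct and follows essentially the same approach as the paper: set $(X_1,X_2,X_3)=f^{-1}(V)$, use Fact~\ref{fact:how-to-reverse-f-1} to read off $X_3\in u^*_1$, $X_1\in u^*_2$, $X_2\in w$, then combine $X_2\in w$ with $X_2\in\zeta(u^*_1,u^*_2)$ from the definition of $\T$ to get activeness, and deduce uniqueness from the uniqueness of the block containing $V$. The only cosmetic difference is that the paper cites \textsc{Block-disjointness} directly for the uniqueness step, whereas you cite the ``exactly one block'' clause of Fact~\ref{fact:how-to-reverse-f-1}; these are equivalent since the latter is an immediate consequence of the former.
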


\begin{proof}
We know $(u^*_1,u^*_2)$ is alive due to (\ref{eqn:pq2}).

Assume $f^{-1}(V)=(X_1,X_2,X_3)$.
Because $V\in \block(u^*_1,u^*_2)$, we know $X_3\in u^*_1$ and $X_1\in u^*_2$.
Because $(X_1,X_2,X_3)\in \T$, point $X_2\in \zeta(\unit(X_3),\unit(X_1))$. Together, $X_2\in \zeta(u^*_1,u^*_2)$.
Because $V\in \sector(w)$, we know $X_2\in w$.
Therefore, $\zeta(u^*_1,u^*_2)$ intersects $w$ (at $X_2$), which means that $(u^*_1,u^*_2)$ is active.

Since $\block(u^*_1,u^*_2)$ and $\sector(w)$ both contain $V$, their intersection $\cell(u^*_1,u^*_2)$ contains $V$.

At last, we argue that $\cell(u^*_1,u^*_2)$ is the unique cell that contains $V$. If, to the opposite, $V$ lies in two distinct cells, it lies in two distinct blocks, which contradicts \textsc{Block-disjointness}.
\end{proof}

By Fact~\ref{fact:active}, it reduces to finding the cell that contains $V$.
Next, there are two cases, depending on whether $w$ is an edge or a vertex.
We first concentrate on the typical case where $w$ is an edge. Assume $w=e_k$.

\subparagraph*{Outline.} We first prove some observations of the cells (e.g., a monotonicity as stated in Fact~\ref{fact:cell-monotone}),
  define those regions called layers,
    and prove a monotonicity between the layers (Fact~\ref{fact:layer-monotone}), and then present our algorithm.

\subsubsection{Basic observations}

\begin{description}
\item[Active edges.]
An edge $e_j$ in $(v_p\circlearrowright V)$ is \emph{active} if there is at least one unit $u$ such that $(e_j,u)$ is active;
an edge $e_j$ in $(V\circlearrowright v_{q+1})$ is \emph{active} if there is at least one unit $u$ such that $(u,e_j)$ is active.
\end{description}

\begin{fact}\label{fact:active-consecutive}
\begin{enumerate}
\item For $e_j\in(v_p\circlearrowright V)$ that is active,
  the units in $\{u\mid (e_j,u)\text{ is active}\}$ are consecutive.
The clockwise first and last such units can be found in $O(\log n)$ time.
  For $e_j\in(V\circlearrowright v_{q+1})$ that is active,
  the units in $\{u\mid (u,e_j)\text{ is active}\}$ are consecutive.
The clockwise first and last such units can be found in $O(\log n)$ time.
\item The active edges in $(v_p\circlearrowright V)$ are consecutive.
  The clockwise first and last such edges can be computed in $O(\log n)$ time.
  Similarly, the active edges in $(V\circlearrowright v_{q+1})$ are consecutive.
  The clockwise first and last such edges can be computed in $O(\log n)$ time.
\end{enumerate}
\end{fact}

\begin{proof} For any edge $e_j$ in $(v_p\circlearrowright V)$, denote
    $b(j)=\begin{cases}q+1 & \text{if $e_j\prec e_{q+1}$;}\\
                            q& \text{otherwise.}\end{cases}$
Denote $b=b(j)$ when $j$ is clear.

Assume $V=v_i$. Denote $\Pi_j=\big(\zeta(e_j,e_i),\zeta(e_j,v_{i+1}),\ldots,\zeta(e_j,v_{b}),\zeta(e_j,e_b)\big)$. We state two observations:
\begin{itemize}
\item[(i)] $\Pi_j=\left(Z_j^i,[Z_j^i\circlearrowright Z_j^{i+1}],\ldots,[Z_j^{b-1}\circlearrowright Z_j^{b}],Z_j^{b}\right)$. (By definition of $\zeta(e_j,u)$)
\item[(ii)] $Z_j^i,\ldots,Z_j^b$ lie in clockwise order on $\rho=[v_{b+1}\circlearrowright v_j]$. (By bi-monotonicity of the $Z$-points)
\end{itemize}

\noindent Proof of part~1. Assume $e_j\in (v_p\circlearrowright V)$; the case where $e_j\in (V\circlearrowright v_{q+1})$ is symmetric.
By observations~(i) and (ii), the elements in $\Pi_j$ that intersect $e_k$ are consecutive. This simply implies that $U=\{u\mid (e_j,u)\text{ is active}\}$ consists of consecutive units.
Computing the first unit in $U$ reduces to computing  $h$ such that $Z_j^{h-1}\leq_\rho v_k<_\rho Z_j^h$,
which can be computed in $O(\log n)$ time using Lemma~\ref{lemma:Z-compute} and binary search.
The last unit in $U$ can be computed similarly.\smallskip

\noindent Proof of part~2. Let $\pi_j$ be the union of all portions in $\Pi_j$, which equals $[Z_j^i \circlearrowright Z_j^{b(j)}]$ due to observations~(i) and (ii).
Applying the bi-monotonicity of the $Z$-points (fact~\ref{fact:Z_bi-monotonicity}), the starting points of $\pi_p,\ldots,\pi_{i-1}$ lie in clockwise order around $\partial P$, and so do their terminal points.
So, the ones in $\pi_p,\ldots,\pi_{i-1}$ that intersect $e_k$ are consecutive.
This means the active edges in $(v_p\circlearrowright V)$ are consecutive, since $e_j$ is active if and only if $\pi_j$ intersects $e_k$.
Computing the first and last active edges in $(v_p\circlearrowright V)$ reduces to computing the first and last elements in $\pi_p,\ldots,\pi_{i-1}$ that intersect $e_k$.
By Lemma~\ref{lemma:Z-compute}, in $O(1)$ time we can determine whether $\pi_j$ is contained in $[v_{b(j)+1}\circlearrowright v_k]$ or in $[v_{k+1}\circlearrowright v_j]$, or intersects $e_k$. So, by a binary search, in $O(\log n)$ time we can compute these two edges.
\end{proof}

\begin{fact}\label{fact:cell-parallelogram}
Given an active pair $(e_j,u)$ (or $(u,e_j)$), region $\cell(e_j,u)$ (or $\cell(u,e_j)$) is a parallelogram with two sides congruent to $e_j$, and it can be computed in $O(1)$ time.
\end{fact}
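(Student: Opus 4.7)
The plan is to read the cell off formula~(\ref{eqn:block_scale}), which expresses $\block(e_j,u)$ as the union over $X\in\zeta(e_j,u)$ of the 2-scaling of $e_j\oplus u$ about $X$. First I would observe that for any $Y$ in this union, local reversibility of $f$ (Lemma~\ref{lemma:LRF}) assigns a unique centre $X=f^{-1,2}_{e_j,u}(Y)\in\zeta(e_j,u)$, and by the definition~(\ref{def:sector}) of the sector, $Y\in\sector(e_k)$ exactly when this $X$ lies in $e_k$. This would give the clean reduction
\[\cell(e_j,u)=\bigcup_{X\in e_k\cap\zeta(e_j,u)} \text{2-scaling of } (e_j\oplus u) \text{ about } X.\]

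Next I would identify the segment $e_k\cap\zeta(e_j,u)$ on $e_k$. Its two endpoints are either the $Z$-points that bound $\zeta(e_j,u)$, namely $Z_j^{back(u)}$ and $Z_j^{forw(u)}$, or the vertices $v_k,v_{k+1}$ of $e_k$, depending on how these $Z$-points lie relative to $e_k$. By Lemma~\ref{lemma:Z-compute}.2 this case distinction is made in $O(1)$ time, and by Lemma~\ref{lemma:Z-compute}.1 each $Z$-point that does lie in $e_k$ is computed in $O(1)$ time.

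I would then split on whether $u$ is an edge or a vertex. If $u=e_i$, then $\zeta(e_j,e_i)=\{Z_j^i\}$ is a single point, which by activity lies in $e_k$, so $\cell(e_j,e_i)=\block(e_j,e_i)$ itself, a parallelogram with sides congruent to $e_j$ and $e_i$. If $u=v_i$, then $e_j\oplus v_i$ is a length-$|e_j|/2$ segment parallel to $e_j$; its 2-scaling about any single point is a translate of $e_j$, and letting $X$ sweep the segment $e_k\cap\zeta(e_j,v_i)\subseteq e_k$ fills out a parallelogram whose two opposite sides are translates of $e_j$ and whose other two are translates of $e_k\cap\zeta(e_j,v_i)$. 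In either case two sides are congruent to $e_j$, and the four corners are obtained in $O(1)$ vector arithmetic from the endpoints of $e_k\cap\zeta(e_j,v_i)$. The symmetric pair $(u,e_j)$ is handled identically.

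The main subtlety I expect is in the opening reduction: a priori, a point $Y\in\block(e_j,u)$ could enter $\sector(e_k)$ via some foreign tuple from $\T(u'',u''')$ with $(u'',u''')\neq(e_j,u)$ whose own $X_2\in e_k$, rather than via its $\T(e_j,u)$-preimage. Such a $Y$ must sit in $\block(e_j,u)\cap\block(u'',u''')$, which by \BD lies in the interior of $P$; since the membership test $V\in\cell(e_j,u)$ is only applied to boundary vertices $V$, where \BD forces a unique containing block, the clean identification is unambiguous for all points the algorithm actually queries. Once that is settled, the remaining steps are routine.
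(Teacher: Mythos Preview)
Your approach is essentially the same as the paper's: both reduce to the identity
\[
\cell(e_j,u)=f\bigl(\{(X_1,X_2,X_3)\mid X_1\in u,\ X_2\in\zeta(e_j,u)\cap e_k,\ X_3\in e_j\}\bigr)
\]
(the paper states this as equation~(\ref{def:cell-full})), then split on whether $u$ is an edge or a vertex, and read off the parallelogram shape together with the $O(1)$ computation via Lemma~\ref{lemma:Z-compute}.

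You are in fact more careful than the paper on one point. The paper asserts (\ref{def:cell-full}) ``by definition,'' but as you correctly observe, the containment $\cell(e_j,u)\subseteq\text{RHS}$ is not immediate: a point of $\block(e_j,u)$ could lie in $\sector(e_k)$ via a preimage living in some other $\T(u'',u''')$. Your resolution---that such a point would then also lie in $\block(u'',u''')$, hence by \BD in the interior of $P$, so the identity holds on $\partial P$ where the algorithm actually queries it---is exactly right, and is a gap the paper leaves implicit. One small addition the paper makes that you omit: in the vertex case $u=v_{j'}$, the paper checks via Fact~\ref{fact:dist-unique-location} that $\zeta(e_j,v_{j'})$ is not a single point (so the resulting parallelogram is non-degenerate); this is minor and not essential for the algorithm.
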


\begin{proof}
Assume $(e_j,u)$ is active. This means $\zeta(e_j,u)$ intersects with $e_k$, and
\begin{equation}\label{def:cell-full}
        \cell(e_j,u)=f(\{(X_1,X_2,X_3)\mid X_1=u,X_2\in \zeta(e_j,u)\cap e_k,X_3\in e_j\}).
        \end{equation}

\noindent \emph{Case~1}: $u$ is an edge, e.g.\ $u=e_{j'}$.
Since $\zeta(e_j,u)=Z_j^{j'}$ and it intersects $e_k$, point $Z_j^{j'}$ lies in $e_k$ and hence can be computed in $O(1)$ time according to Lemma~\ref{lemma:Z-compute}.
Then, $\cell(e_j,e_{j'})$ is the 2-scaling of $e_j\oplus e_{j'}$ with respect to $Z_j^{j'}$,
  which is a parallelogram with two sides congruent to $e_j$ and which can be computed in $O(1)$ time.\medskip

\noindent \emph{Case~2}: $u$ is a vertex, e.g.\ $u=v_{j'}$. First, we argue that $\zeta(e_j,v_{j'})$ is not a single point.
    Suppose to the opposite that $\zeta(e_j,v_{j'})$ is a single point.
    Then, its two endpoints $Z_j^{j'-1},Z_j^{j'}$ must be identical, and must lie in $e_k$ since $\zeta(e_j,v_{j'})$ intersects $e_k$.
    However, by Fact~\ref{fact:dist-unique-location}, when $Z_j^{j'-1},Z_j^{j'}$ lie in $e_k$, they lie at $(\I_{j,k}+\I_{j'-1,k})/2$, $(\I_{j,k}+\I_{j',k})/2$,
        respectively, which do not coincide because $\I_{j'-1,k}\neq \I_{j',k}$.
    Contradictory. Following this argument, $\zeta(e_j,u)\cap e_k$ is a segment that is not a single point.
        Combining this fact with (\ref{def:cell-full}), $\cell(e_j,v_{j'})$ is a parallelogram with two sides congruent to $e_j$. (To see this more clearly, we refer to Figure~\ref{fig:block-borders}~(c).)
    Moreover, applying Lemma~\ref{lemma:Z-compute}, segment $\zeta(e_j,v_{j'})\cap e_k$ can be computed in $O(1)$ time,  and thus $\cell(e_j,v_{j'})$ can be computed in $O(1)$ time.

\smallskip The proof of the claim on $\cell(u,e_j)$ is symmetric and omitted.
\end{proof}

\subsubsection{Monotonicity of cells, definition of layers, and monotonicity of layers}

\begin{fact}\label{fact:cell-monotone}
See Figure~\ref{fig:cells-layers}~(a) and (b) for illustrations of the following statements (and see the proof below).
\begin{enumerate}
\item For $e_j$ in $(v_p\circlearrowright V)$ that is active,
  $\cell(e_j,u_s)$, $\ldots, \cell(e_j,u_t)$ are contiguous and lie monotonously in the opposite direction of $e_k$,
    where $u_s,\ldots,u_t$ list the units in $\{u\mid (e_j,u)\text{ is active}\}$ in clockwise order.
\item  For $e_j$ in $(V\circlearrowright v_{q+1})$ that is active,
   $\cell(u_s,e_j)$, $\ldots, \cell(u_t,e_j)$ are contiguous and lie monotonously in the opposite direction of $e_k$,
    where $u_s,\ldots,u_t$ list the units in $\{u\mid (u,e_j)\text{ is active}\}$ in clockwise order.
\end{enumerate}
\end{fact}

\begin{definition}[Layers]See Figure~\ref{fig:cells-layers} and Figure~\ref{fig:cell_monotone}~(b). We define two types of layers.
\begin{itemize}
\item[(A)] Let $e_j$ be an active edge in $(v_p\circlearrowright V)$.
Assume $\{u\mid (e_j,u)\text{ is active}\}=\{u_s,\ldots,u_t\}$ (in clockwise order).
Let $\mathsf{body}_j$ denote the region united by regions $\cell(e_j,u_s)$, $\ldots, \cell(e_j,u_t)$.

Clearly, $\mathsf{body}_j$ is a region with two borders congruent to $e_j$ since the cells have borders congruent to $e_j$ (according to Fact~\ref{fact:cell-parallelogram}).
  By removing these two borders, we can get an extension of $\mathsf{body}_j$ which contains two strip regions parallel to $e_k$.
  This extension is defined as $\mathsf{layer}_j$ and is called an \textbf{\emph{A-type layer}}.

\item[(B)] Let $e_j$ be an active edge in $(V\circlearrowright v_{q+1})$.
Assume $\{u\mid (u,e_j)\text{ is active}\}=\{u_s,\ldots,u_t\}$ (in clockwise order).
Let  $\mathsf{body}_j$ denote the region united by regions $\cell(u_s,e_j)$, $\ldots, \cell(u_t,e_j)$.

Clearly, $\mathsf{body}_j$ is a region with two borders congruent to $e_j$ since the cells have borders congruent to $e_j$ (according  to Fact~\ref{fact:cell-parallelogram}).
  By removing these two borders, we can get an extension of $\mathsf{body}_j$ which contains two strip regions parallel to $e_k$.
  This extension is defined as $\mathsf{layer}_j$ and is called a \textbf{\emph{B-type layer}}.
\end{itemize}
\end{definition}

\begin{figure}[h]
\begin{minipage}[b]{0.5\textwidth}
 \centering\includegraphics[width=.97\textwidth]{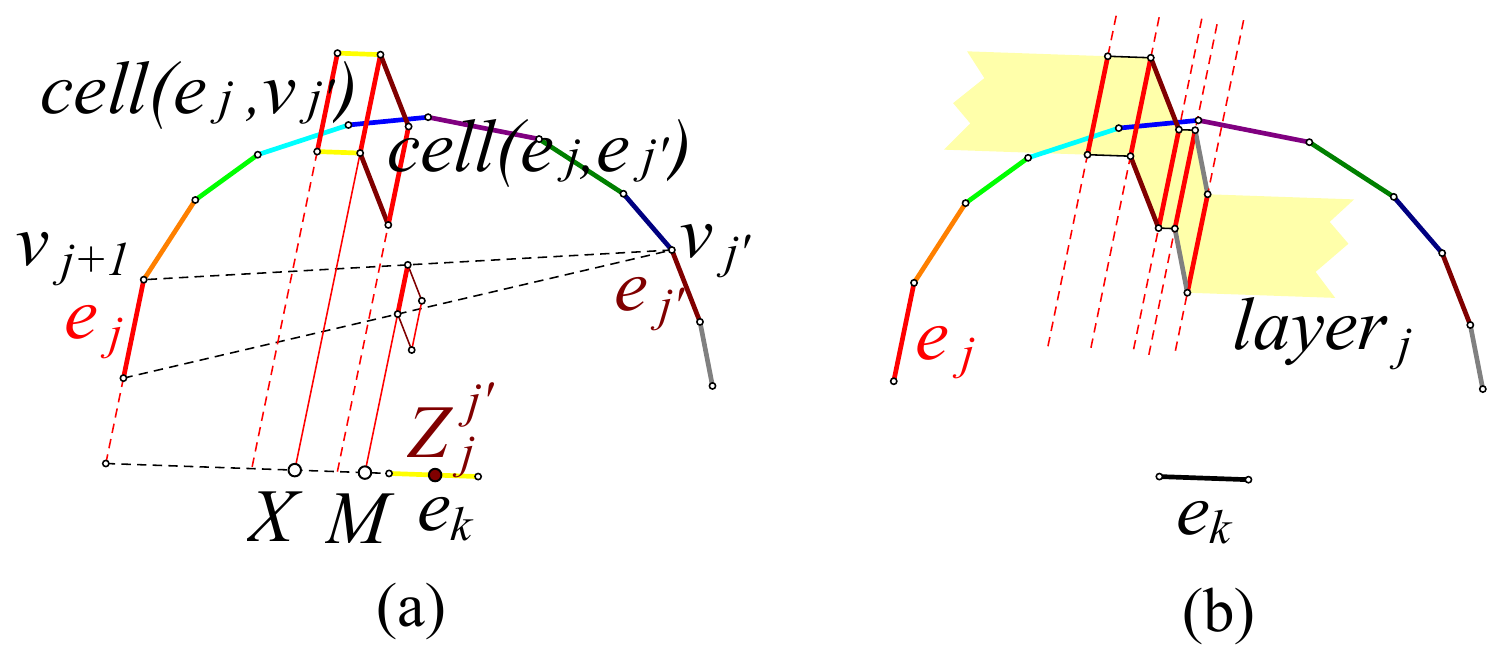}
 \caption{Monotonicity of cells \& definition of layers.}\label{fig:cell_monotone}
\end{minipage}
\begin{minipage}[b]{0.5\textwidth}
 \centering\includegraphics[width=.93\textwidth]{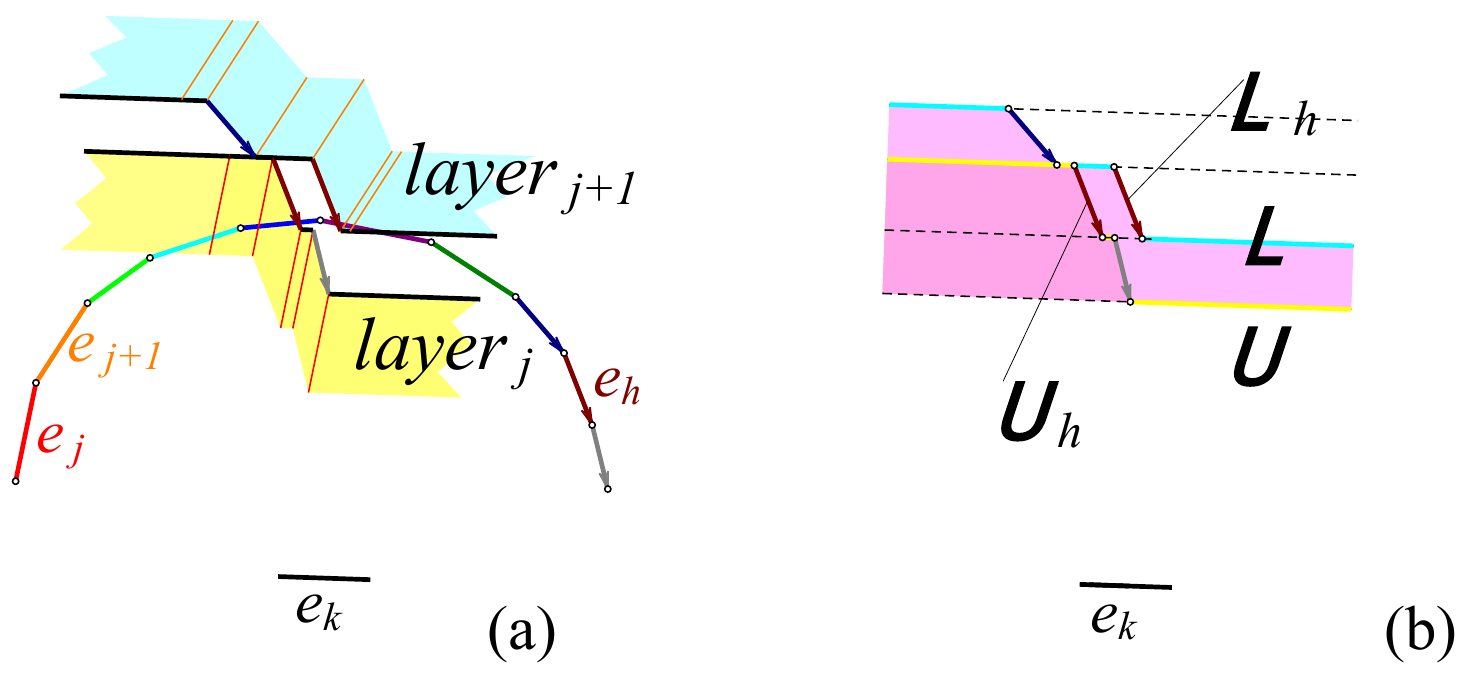}
 \caption{Monotonicity of layers.}\label{fig:layer_monotone}
\end{minipage}
\end{figure}

\begin{proof}[Proof of Fact~\ref{fact:cell-monotone}]
We show how to prove part~1; the proof of part~2 is symmetric and omitted.

See Figure~\ref{fig:cell_monotone}~(a). Let us consider the projections of these cells along direction $e_j$ onto $\el_k$,
    it reduces to proving that these projections are pairwise-disjoint and are arranged in order.
Take two incident units in $\{u_s,\ldots,u_t\}$, e.g.\ $v_{j'}$ and $e_{j'}$.
  (For incident units $e_{j'},v_{j'+1}$, the proof is similar.)
    Let $M$ be the projection of $(v_{j+1}+v_{j'})/2$; and $X$ the reflection of $Z_j^{j'}$ with respect to $M$.
    We state that the projection of $\cell(e_j,e_{j'})$ terminates at $X$ while the projection of $\cell(e_j,v_{j'})$ starts at $X$.
      This follows the definition of cells. More details are trivial and omitted.
\end{proof}

\begin{fact}\label{fact:layer-monotone}
(1) All the layers lie in the closed half-plane delimited by $\el_k$ and containing $P$.
More importantly, (2) all the A-type layers are pairwise-disjoint and lie monotonously in the direction perpendicular to $e_k$.
    Symmetrically, all the B-type layers have the same monotonicity.
\end{fact}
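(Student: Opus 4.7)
\emph{Claim~1.} Let $H$ denote the closed half-plane bounded by $\el_k$ and containing $P$, and let $d(\cdot)$ be the signed distance to $\el_k$ taken positive on the $P$-side. I first show $\sector(e_k)\subseteq H$. By~(\ref{def:sector}), any $Y\in\sector(e_k)$ is of the form $f(X_1,X_2,X_3)=2\M(X_1,X_3)-X_2$ with $(X_1,X_2,X_3)\in\T$ and $X_2\in e_k\subset\el_k$; therefore $d(Y)=2\,d(\M(X_1,X_3))-d(X_2)=2\,d(\M(X_1,X_3))\ge 0$, since $\M(X_1,X_3)\in P\subseteq H$. Hence every cell and body lies in $H$. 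The layer is obtained from the body by attaching two strips parallel to $\el_k$; translation parallel to $\el_k$ preserves $d(\cdot)$, so the layer also lies in $H$.

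\emph{Claim~2 (A-type; B-type is symmetric).} I plan to show that $d(\cdot)$ attains on $\mathsf{layer}_j$ an interval $I_j$, and that the $I_j$'s are pairwise disjoint and ordered in $j$. Since strip extensions do not affect $d(\cdot)$, only the body matters. Any point of $\cell(e_j,u')$ has the form $2\M(X_1,X_3)-X_2$ with $X_3\in e_j$, $X_1\in u'$, $X_2\in\zeta(e_j,u')\cap e_k$, whence $d=d(X_1)+d(X_3)$. Setting $U_j=\bigcup\{u':(e_j,u')\text{ active}\}$, which is a set of consecutive units in $(V\circlearrowright v_{q+1}]$ by Fact~\ref{fact:active-consecutive}.1, and using the contiguity from Fact~\ref{fact:cell-monotone}, the body's $d$-range is the interval
\[
I_j=\bigl[\min_{X_3\in e_j}d(X_3)+\min_{X_1\in U_j}d(X_1),\ \max_{X_3\in e_j}d(X_3)+\max_{X_1\in U_j}d(X_1)\bigr].
\]

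It then remains to compare $I_j$ and $I_{j+1}$ for consecutive active $A$-type edges. The key ingredients I plan to use are: (i)~$e_j$ and $e_{j+1}$ share the vertex $v_{j+1}$; (ii)~the monotonicity of $j\mapsto U_j$ from Fact~\ref{fact:active-consecutive}.2, which controls which units enter or leave $U_j$ as $j$ increases; and (iii)~the convexity of $P$, which dictates how $d$ varies along the inferior portion $(v_p\circlearrowright v_{q+1})$. Combining these I expect to pin down the two extremes of $I_j$ and $I_{j+1}$ in terms of specific endpoint vertices of $e_j,e_{j+1}$ and of endpoints of $U_j,U_{j+1}$, and verify that the maximum of $I_j$ equals (or precedes) the minimum of $I_{j+1}$; the pairwise-disjoint monotone ordering of all A-type layers then follows by transitivity.

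\textbf{Main obstacle.} The crux is the bookkeeping across the shared vertex $v_{j+1}$. As $j$ increases by one, some units are lost from $U_j$ (on the side closer to $V$) while new ones enter (on the side closer to $v_{q+1}$), and I must verify that every such change keeps the extremes of $d(X_1)+d(X_3)$ aligned so that $I_j$ and $I_{j+1}$ do not overlap. This will rest on combining the bi-monotonicity of the $Z$-points (Fact~\ref{fact:Z_bi-monotonicity}) with the ``active'' condition $\zeta(e_j,u')\cap e_k\ne\varnothing$, together with a case analysis on which endpoint of each $e_j$ realises the maximum of $d$; the latter is controlled by the position of $\D_k$ relative to the inferior portion.
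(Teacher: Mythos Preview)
Your Claim~1 argument is correct and matches the paper's.

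For Claim~2, however, the plan has a genuine gap: the intervals $I_j$ are \emph{not} pairwise disjoint in general. Suppose an edge $e_{j'}\in(V\circlearrowright v_{q+1})$ satisfies both $Z_j^{j'}\in e_k$ and $Z_{j+1}^{j'}\in e_k$ (nothing forbids this; by bi-monotonicity the $Z$-points move continuously as the index changes, and several can land on the same edge $e_k$). Then $e_{j'}\in U_j\cap U_{j+1}$, so $\cell(e_j,e_{j'})\subseteq\mathsf{body}_j$ and $\cell(e_{j+1},e_{j'})\subseteq\mathsf{body}_{j+1}$. Your formula gives $\max I_j\ge d(v_{j+1})+\max_{e_{j'}}d$ while $\min I_{j+1}\le d(v_{j+1})+\min_{e_{j'}}d$ (using the shared vertex $v_{j+1}$), and since $e_{j'}\not\parallel e_k$ we have $\min_{e_{j'}}d<\max_{e_{j'}}d$, hence $\min I_{j+1}<\max I_j$. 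The $d$-projections overlap even though the layers themselves are disjoint. The statement ``lie monotonously in the direction perpendicular to $e_k$'' means that the upper border $\mathcal{U}$ of $\mathsf{layer}_j$ lies entirely in the region between $\ell_k$ and the lower border $\mathcal{L}$ of $\mathsf{layer}_{j+1}$; this does \emph{not} require disjoint $d$-ranges, because the borders are piecewise-linear curves, not lines parallel to $\ell_k$.

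The paper's proof exploits a duality you are missing. It slices the plane by lines parallel to $\ell_k$ through the vertices of $\mathcal{U}$ and $\mathcal{L}$. In each slice, the pieces $\mathcal{U}_h$ and $\mathcal{L}_h$ of the two borders are both translations of the \emph{same} edge $e_h$ with $e_h\in(V\circlearrowright v_{q+1})$. But then $\mathcal{U}_h$ is a side of the B-type cell $\cell(v_{j+1},e_h)$ (or an adjacent one) and $\mathcal{L}_h$ is a side of $\cell(e_{j+1},e_h)$ in the B-type layer $\mathsf{layer}_h$, and the needed ordering follows directly from Fact~\ref{fact:cell-monotone}.2 applied to that B-type layer. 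So A-type layer monotonicity is proved via B-type cell monotonicity (and vice versa), rather than by tracking $d$-ranges.
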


\begin{proof}
(1) Denote by $H$ the half-plane bounded by $\el_k$ and containing $P$.
Proving that all layers lie in $H$ reduces to proving that all cells lie in $H$,
  which further reduces to proving that $\sector(e_k)\subset H$.
Now, let $X$ be an arbitrary point in $\sector(e_k)$, we shall prove $X\in H$.
Notice that there is $(X_1,X_2,X_3)\in \T$ such that $X_2\in e_k$ and $f(X_1,X_2,X_3)=X$.
Because $X_1,X_3\in \partial P$, point $(X_1+X_3)/2$ lies in $H$.
Since $X_2\in e_k$, point $X_2$ lies on the boundary of $H$.
Together, the 2-scaling of $(X_1+X_3)/2$ with respect to $X_2$, which equals $X$, lies in $H$.

\medskip \noindent (2) Each layer has two boundaries; we refer to them as the \emph{lower border} and the \emph{upper border}, so that the lower one is closer to $\el_k$ than the upper one.
Assume that $\mathsf{layer}_j$ and $\mathsf{layer}_{j+1}$ are consecutive A-type layers. See Figure~\ref{fig:layer_monotone}~(a).
We shall prove that the upper border of $\mathsf{layer}_j$ (denoted by $\mathcal{U}$) lies between $\el_k$ and the lower border  of $\mathsf{layer}_{j+1}$ (denoted by $\mathcal{L}$). Make an auxiliary line parallel to $\el_k$ at each vertex of the two borders;
  these auxiliary lines cut the plane into \emph{slices}, as shown in Figure~\ref{fig:layer_monotone}~(b).
It reduces to proving that (i) in each slice, the region under $\mathcal{U}$ is contained in the region under $\mathcal{L}$.
Now, consider any slice that intersects both $\mathcal{U}$ and $\mathcal{L}$ (e.g.\ the middle one in the figure). (The proof for other slices are similar and easier.) Then, there is an edge $e_h$, such that the part of $\mathcal{U}$ that lies in this slice (labeled by $\mathcal{U}_h$ in the figure) and
    the part of $\mathcal{L}$ that lies in this slice (labeled by $\mathcal{L}_h$) are both translations of $e_h$.
Applying the monotonicity of cells within $\mathsf{layer}_h$ (Fact~\ref{fact:cell-monotone}), we have a monotonicity between these two translations of $e_h$
  that implies statement~(i). We can prove the monotonicity of the B-type layers symmetrically.
\end{proof}

\subsubsection{Algorithm for computing $u^*_1,u^*_2$}

\begin{lemma}\label{lemma:algB-binarysearch}
\begin{enumerate}
    \item Given an active edge $e_j$, we can do the following tasks in $O(\log n)$ time:\\
    (a) Determine whether $V$ lies in $\mathsf{layer}_j$; if not, determine which side of $\mathsf{layer}_j$ it lies on.\\
    (b) Determine whether $V$ lies in $\mathsf{body}_j$; if so, find the unique cell in $\mathsf{body}_j$ containing $V$.
    \item We can compute $u^*_1,u^*_2$ in $O(\log^2n)$ time.
\end{enumerate}
\end{lemma}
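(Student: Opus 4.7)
The plan is to reduce both tasks to a common setup, then dispose of (a) in $O(1)$ additional work after a preparatory $O(\log n)$ step, and handle (b) by a binary search. I describe the A-type case ($e_j \in (v_p\circlearrowright V)$); the B-type case is symmetric. Using Fact~\ref{fact:active-consecutive}.1, I first compute in $O(\log n)$ time the clockwise first and last units $u_s, u_t$ of the consecutive list $\{u \mid (e_j,u)\text{ is active}\}$. By Fact~\ref{fact:cell-parallelogram}, each $\cell(e_j,u)$ in this list is a parallelogram with two sides parallel to $e_j$ and can be computed in $O(1)$ once $u$ is fixed.

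For (a), by construction $\mathsf{layer}_j$ is the open infinite slab bounded by the two lines parallel to $e_j$ that extend the outer $e_j$-parallel sides of the extreme cells $\cell(e_j,u_s)$ and $\cell(e_j,u_t)$. I compute these two cells in $O(1)$, read off the two bounding lines, and in $O(1)$ test which open half-plane delimited by each line contains $V$. This immediately decides whether $V\in\mathsf{layer}_j$ and, if not, on which side of $\mathsf{layer}_j$ it lies.

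For (b), I first invoke (a): if $V\notin\mathsf{layer}_j$ then $V\notin\mathsf{body}_j$ since $\mathsf{body}_j\subseteq\mathsf{layer}_j$. Otherwise I apply Fact~\ref{fact:cell-monotone}.1, which says $\cell(e_j,u_s),\ldots,\cell(e_j,u_t)$ are contiguous and lie monotonously in the direction opposite to $e_k$. Projecting each cell along direction $e_j$ onto $\el_k$ (well-defined since $e_j$ and $e_k$ are not parallel, by the pairwise-nonparallel assumption on edges of $P$) yields contiguous, monotonically arranged subintervals of $\el_k$. I then binary search over the list $u_s,\ldots,u_t$: at each step I pick the middle unit $u$, compute $\cell(e_j,u)$ in $O(1)$, project its $e_j$-parallel sides onto $\el_k$, and compare with the projection of $V$ to decide which half of the range to recurse on. After $O(\log n)$ steps a single candidate cell remains; in $O(1)$ I test whether $V$ lies inside this parallelogram. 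If yes, it is the unique cell containing $V$; otherwise $V\notin \mathsf{body}_j$.

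The main obstacle is keeping the bookkeeping straight: identifying which of the four sides of $\cell(e_j,u_s)$ and $\cell(e_j,u_t)$ is the ``outer'' side that actually lies on the boundary of $\mathsf{layer}_j$ (the other two sides face inside $\mathsf{body}_j$ and the remaining non-$e_j$-parallel sides are laterally extended into the strips), and indexing the alternating vertex/edge list $u_s,\ldots,u_t$ so that the middle unit can be recovered in constant time. Since the units of $P$ are naturally linearly indexed as $v_1,e_1,\ldots,v_n,e_n$, the indexing is immediate, and the correct identification of outer sides follows from the explicit block construction in Section~\ref{subsect:misc-block-def}.
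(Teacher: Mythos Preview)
Your plan for (b) is essentially the paper's approach: the paper's ``chops'' are exactly the slabs cut out by the $e_j$-parallel sides of the cells, and the binary search you describe (projecting along $e_j$ onto $\el_k$ and using Fact~\ref{fact:cell-monotone}) is what locates the correct chop.

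Your plan for (a), however, rests on a mistaken description of $\mathsf{layer}_j$. The layer is \emph{not} the infinite slab bounded by the two $e_j$-parallel lines through the outer sides of $\cell(e_j,u_s)$ and $\cell(e_j,u_t)$. By definition, $\mathsf{layer}_j$ is $\mathsf{body}_j$ together with two strip regions parallel to $e_k$ (not $e_j$) attached where the two $e_j$-congruent borders of the body were removed. Its two bounding curves --- the ``upper'' and ``lower'' borders used in the proof of Fact~\ref{fact:layer-monotone} --- are zigzag polylines built from the non-$e_j$ sides of the cells (which are parallel to the various $e_{j'}$ for edge units $u=e_{j'}$, and to $e_k$ for vertex units), extended at both ends by rays parallel to $e_k$. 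Consequently, testing $V$ against the two $e_j$-parallel lines tells you whether $V$ lies in the union of the chops, which is a different region; $V$ can sit inside your slab yet lie above the upper border or below the lower border of the layer (hence outside $\mathsf{layer}_j$), and $V$ can sit outside your slab yet lie in one of the $e_k$-parallel strips (hence inside $\mathsf{layer}_j$). Your $O(1)$ test for (a) therefore does not decide layer membership, nor does it report the side in the direction perpendicular to $e_k$, which is what the outer binary search over layers in Lemma~\ref{lemma:algB-binarysearch2} needs.

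The fix is exactly what the paper does: run your chop-finding binary search first (this is the $O(\log n)$ work), and once you know the chop $\chop_u$ containing $V$, compute $\cell(e_j,u)$ in $O(1)$ and compare $V$ against its two non-$e_j$ sides. Within $\chop_u$ the upper and lower borders of $\mathsf{layer}_j$ coincide with those two sides (or, if $V$ falls beyond the first/last chop, with the corresponding $e_k$-parallel ray), so this $O(1)$ comparison correctly answers both (a) and (b).
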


\begin{proof}
\newcommand{\chop}{\mathsf{chop}}

1. Assume $e_j\in (v_p\circlearrowright V)$; otherwise $e_j\in (V\circlearrowright v_{q+1})$ and it is symmetric.
By Fact~\ref{fact:cell-parallelogram}, the cells in $\{\cell(e_j,u)\mid (e_j,u)\text{ is active}\}$ are parallelograms with two sides parallel to $e_j$.
Those sides parallel to $e_j$ can be extended so that they divide the plane into several regions as shown in Figure~\ref{fig:cell_monotone}~(b).
We refer to each such region as a \emph{chop} and denote the one containing $\cell(e_j,u)$ by $\chop_{u}$.
Notice that
\begin{enumerate}
\item[(1)] We can compute the first and last unit in $\{u\mid (e_j,u)\text{ is active}\}$ in $O(\log n)$ time (Fact~\ref{fact:active-consecutive}).
\item[(2)] We can compute $\chop_{u}$ in $O(1)$ time, since $\cell(e_j,u)$ can be computed in $O(1)$ time (Fact~\ref{fact:cell-parallelogram}).
\item[(3)] The chops have the same monotonicity as their corresponding cells. (see Fact~\ref{fact:cell-monotone})
\end{enumerate}
Altogether, by a binary search, we can find the chop that contains $V$, which costs $O(\log n)$ time.
After the chop containing $V$ has been computed, we can easily solve tasks (a) and (b) in $O(1)$ time.

\medskip 2. We design two \emph{subroutines}.
  One assumes that $V$ is contained in an $A$-layer (i.e.\ it assumes that $u^*_1$ is an edge),
  the other assumes that $V$ is contained in a $B$-layer (i.e.\ it assumes that $u^*_2$ is an edge).
We describe the first one in the following; the other is symmetrical.
First, compute the first and last active edges $e_g,e_{g'}$ in $(v_p\circlearrowright V)$, which costs $O(\log n)$ time by Fact~\ref{fact:active-consecutive}.2.
Then, using part~1~(a) and a binary search,
  find the only $A$-layer in $\mathsf{layer}_g,\ldots,\mathsf{layer}_{g'}$ that contains $V$.
  If failed, terminate this subroutine.
Otherwise, assume that $\mathsf{layer}_j$ contains $V$, check whether $\mathsf{body}_j$ contains $V$ using part~1~(b).
If so, we find the cell containing $V$ and thus obtain $(u^*_1,u^*_2)$. It costs $O(\log^2n)$ time.

\smallskip \textsc{Correctness}: If $u^*_1$ is an edge, the first subroutine obtains $(u^*_1,u^*_2)$; if $u^*_2$ is an edge, the second subroutine obtains $(u^*_1,u^*_2)$;
however, in a degenerate case, $u^*_1,u^*_2$ can both be vertices, and the two subroutines both fail to find $(u^*_1,u^*_2)$.
(This case is indeed degenerate because it implies a parallelogram inscribed in $P$ with three corners lying on the vertices.)
Nevertheless, our algorithms can handle the degenerate case using the following modification.

\smallskip \textsc{Modification}: When $u^*_1,u^*_2$ are both vertices, $V$ lies on the boundary of some $\cell(u,u')$ such that at least one of $u,u'$ is an edge (see fact~(i) below).
\emph{We first find a cell that contains $V$ or its boundary contains $V$.
If we only find a cell whose boundary contains $V$, we use $O(1)$ extra time to find the cell that contains $V$ which is nearby.}
\begin{itemize}
\item[(i)]\emph{If $(v_j,v_{j'})$ is active and $X\in \cell(v_j,v_{j'})$, at least one of following holds.
   1. $(v_j,e_{j'-1})$ is active and $X$ lies in the boundary of $\cell(v_j,e_{j'-1})$.
   2. $(e_j,v_{j'})$ is active and $X$ lies in the boundary of $\cell(e_j,v_{j'})$.}
\end{itemize}
Proof of (i): Denote $M=(v_j+v_{j'})/2$ and denote by $X'$ the reflection of $X$ with respect to $M$.
Because $\cell(v_j,v_{j'})$ is the reflection of $\zeta(v_j,v_{j'})\cap e_k$ with respect to $M$, point $X'$ lies in $\zeta(v_j,v_{j'})\cap e_k$.
Notice that $\zeta(v_j,v_{j'})$ is the concatenation of $\zeta(v_j,e_{j'-1})$ and $\zeta(e_j,v_{j'})$.
Point $X'$ lies in $\zeta(v_j,e_{{j'}-1})\cap e_k$ or $\zeta(e_j,v_{j'})\cap e_k$.
In the former case, $(v_j,e_{j'-1})$ is active and the reflection of $X'$ with respect to $M$ (which equals $X$) lies on the boundary of $\cell(v_j,e_{j'-1})$;
in the latter case, $(e_j,v_{j'})$ is active and $X$ lies on the boundary of $\cell(e_j,v_{j'})$.
\end{proof}

\begin{remark}[Similarities between computing $\sector(V) \cap \partial P$ and answering Vertex-in-Block query]
We locate among A-type and B-type roads or layers. Roads (of the same type) admit good monotonicities, so as the layers.
\end{remark}

\subsubsection*{Compute the block containing $V$ when $V$ lies in $\sector(v_k)$}

In the above we assume $w$ is an edge. We now discuss the easier case where $w$ is vertex. Assume $w=v_k$.

In fact, by regarding $v_k$ as a sufficiently small edge, this case can be regarded as a special case of the edge case (where $w$ is an edge). But we can design a more simple solution to the vertex case, as shown below.

Let $(X_1,X_2,X_3)$ denote the preimage of $V$ under function $f$.
Clearly, $u^*_1,v_k,u^*_2$ are respectively the units containing $X_3,X_2,X_1$.
Moreover, due to (\ref{eqn:pq2}),
$[v_p\circlearrowright V)$  contains $u^*_1$; and $(V\circlearrowright v_{q+1}]$ contains $u^*_2$.
Therefore,
    \[X_1 \in (V\circlearrowright v_{q+1}],X_2= v_k,X_3\in [v_p\circlearrowright V), \text{and $VX_1X_2X_3$ is a parallelogram}.\]

\begin{lemma}\label{lemma:vertexcase}
There is a unique parallelogram $A_0A_1A_2A_3$ whose corners $A_0,A_1,A_2,A_3$ respectively lie in $V,(V \circlearrowright v_{q+1}],v_k,[v_p\circlearrowright V)$, and we can compute it in $O(\log^2n)$ time.
\end{lemma}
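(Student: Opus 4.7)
The plan is to reduce the problem to a one-dimensional search by exploiting that fixing two opposite corners of a parallelogram determines the midpoint of the other diagonal. Since $A_0=V$ and $A_2=v_k$ are opposite corners, the two diagonals bisect each other at $M:=\M(V,v_k)$, so any parallelogram meeting the constraints must satisfy $A_3=2M-A_1$. Thus I only need to find $A_1\in\rho_1:=(V\circlearrowright v_{q+1}]$ with $2M-A_1\in\rho_2:=[v_p\circlearrowright V)$, and $A_3$ is then forced.

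For \emph{uniqueness}, I will invoke \textsc{Reversibility-of-}$f$ (Theorem~\ref{theorem:nestp}). Since $V\in\sector(v_k)$ and $V\in\partial P$, we have $V\in f(\T^*)$, so $f^{-1}(V)=(X_1,X_2,X_3)$ is well-defined and unique. By Fact~\ref{fact:how-to-reverse-f-1}, $\unit(X_2)=v_k$, so $X_2=v_k$, and $VX_1v_kX_3$ is a parallelogram. The hypotheses (\ref{eqn:condition2}) place $X_3$ in $\rho_2$ and $X_1$ in $\rho_1$, so such a parallelogram exists. Any other parallelogram meeting the stated conditions would, by Fact~\ref{fact:diag-bisect}, yield another preimage of $V$ in $\T^*$ under $f$, contradicting the bijectivity guaranteed by \textsc{Reversibility-of-}$f$.

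For the \emph{algorithm}, I will binary search on the edge $e_{j^*}$ of $\rho_1$ containing $A_1$. At each step, I examine the middle edge $e_j=(v_j\circlearrowright v_{j+1})$ of the current search range; in $O(1)$ time I compute the reflections $2M-v_j$ and $2M-v_{j+1}$, and in $O(\log n)$ time I locate each reflected point relative to $\partial P$ (and in particular relative to $\rho_2$) using the standard $O(\log n)$ point-location procedure on a convex polygon. This yields a three-way verdict for the reflected segment $\phi(e_j):=\{2M-X:X\in e_j\}$: it lies entirely clockwise-before $\rho_2$, entirely after $\rho_2$, or intersects $\rho_2$. A monotonicity argument (sketched below) will show that this verdict is monotone in $j$, so $O(\log n)$ rounds locate $e_{j^*}$, giving $O(\log^2 n)$ total. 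Once $e_{j^*}$ is identified, $A_1$ is obtained in $O(\log n)$ time by intersecting the line supporting $\phi(e_{j^*})$ with the (unique) edge of $\rho_2$ that hosts $A_3$.

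The main obstacle will be rigorously establishing the monotonicity that drives the binary search. The intuition is that as $A_1$ travels clockwise along $\rho_1$, its reflection $2M-A_1$ traces the image $\phi(\rho_1)$ in the opposite orientation, and by convexity of $P$ together with the fact that $\rho_1$ and $\rho_2$ lie on opposite clockwise arcs issuing from $V$ (guaranteed by (\ref{eqn:condition1}) and (\ref{eqn:condition2})), the signed position of $2M-A_1$ relative to $\rho_2$ advances monotonically. Making this precise will require parameterizing $\rho_1$ by arc length, writing $A_3(s)=2M-A_1(s)$, and verifying that the map $s\mapsto (\text{clockwise position of }A_3(s)\text{ on }\partial P)$ is weakly monotone on the subset of $s$ for which $A_3(s)$ lies on $\partial P$; the uniqueness of the intersection $\phi(\rho_1)\cap\rho_2$ established above then pins down the correct edge.
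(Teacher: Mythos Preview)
Your algorithmic plan is essentially the paper's: reflect one arc through $M=\M(V,v_k)$ and intersect with the other by a two-level binary search, for $O(\log^2 n)$ total. That part is fine and matches the paper's brief description.

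The gap is in your uniqueness argument. You claim that a second parallelogram $VA_1v_kA_3$ meeting the position constraints would produce a second preimage of $V$ in $\T^*$, contradicting \textsc{Reversibility-of-}$f$. But for $(A_1,v_k,A_3)$ to lie in $\T$ you need $\unit(A_3)$ chasing $\unit(A_1)$ \emph{and} $v_k\in\zeta(\unit(A_3),\unit(A_1))$. Neither follows from the bare hypotheses $A_3\in[v_p\circlearrowright V)$, $A_1\in(V\circlearrowright v_{q+1}]$. For example, if $A_3=v_p$ then chasing requires $e_{p-1}\prec back(\unit(A_1))$, which is not guaranteed by $e_p\preceq e_q$; and even when chasing holds, there is no reason $v_k$ should land in $\zeta(\unit(A_3),\unit(A_1))$. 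So \textsc{Reversibility-of-}$f$ only excludes a second parallelogram whose triple happens to lie in $\T$, not an arbitrary second parallelogram with the stated corner constraints. The lemma, as stated, asserts the stronger claim.

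The paper proves uniqueness by a direct geometric argument that avoids $\T$ entirely: if $VAv_kA'$ and $VBv_kB'$ both satisfy the constraints, then $ABA'B'$ is a non-degenerate parallelogram inscribed in $[v_p\circlearrowright v_{q+1}]$, and (\ref{eqn:condition1}) makes this an inferior portion; but the paper has already shown (in the proof of Fact~\ref{fact:sector-pieces-disjoint}) that no non-degenerate parallelogram can be inscribed in an inferior portion. This is both shorter and actually establishes the lemma as stated. Your monotonicity sketch for the binary search is also, in effect, replaced by this same observation: uniqueness of the intersection of $\phi(\rho_1)$ with $\rho_2$ is exactly the inferior-portion fact, so you need it anyway.
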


Applying Lemma~\ref{lemma:vertexcase}, we can find $(X_1,X_2,X_3)$ in $O(\log^2n)$ time and thus obtain $(u^*_1,u^*_2)=(\unit(X_3),\unit(X_1))$.

\begin{proof}
Suppose there are two such parallelograms, $VA_1v_kA_3$ and $VB_1v_kB_3$. Because their centers coincide at $(v_k+V)/2$,
  quadrant $ABA'B'$ is a parallelogram.
    Recall that $[v_p \circlearrowright v_{q+1}]$ is an inferior portion by (\ref{eqn:pq1}) and notice that it contains $A_1,A_3,B_1,B_3$.
  Thus we get a parallelogram inscribed in an inferior portion, which is impossible.

Computing $A_0A_1A_2A_3$ is equivalent to finding two points $A_3,A_1$ respectively restricted to $[v_p \circlearrowright V),(V \circlearrowright v_{q+1}]$ so that their mid point lies at $(v_k+V)/2$, which is further equivalent to computing the intersection between $[v_p \circlearrowright V)$ and the reflection of $(V \circlearrowright v_{q+1}]$ with respect to $(v_k+V)/2$. Using standard methods in computational geometry, we can find this intersection in $O(\log^2n)$ time by a binary search. We omit further details.
\end{proof}

\subparagraph{Acknowledgement.} This work is mainly done during my Ph. D. at Tsinghua University. This work was supported by National Basic Research Program of China Grant 2007CB807900, 2007CB807901, and the National Natural Science Foundation of China Grant 61033001, 61061130540, 61073174.

\medskip    The author thanks god for his amazing creation.
After studying this mysterious structure, we got an impression that everything is perfect and rotating.
The author thanks Haitao Wang for taking part in fruitful discussions and for many other helps,
and thanks Andrew C. Yao, Jian Li, Danny Chen, Wolfgang Mulzer, Matias Korman, Donald Sheehy, Kevin Matulef, Junhui Deng, and anonymous reviewers from past conferences for many precious suggestions.
Last but not least, the author appreciates the developers of Geometer's Sketchpad${}^\circledR$.

\bibliographystyle{latex8}
\bibliography{MAP}

\appendix

\section{Appendix}
\tikzstyle{KeyNotation} =  [thick, rectangle, rounded corners, minimum width=1.8cm, minimum height=0.6cm, text centered, draw=black, fill=green!30]
\tikzstyle{KeyNotation2} = [thick, rectangle, rounded corners, minimum width=1.8cm, minimum height=0.6cm, text centered, draw=black, fill=blue!30]
\tikzstyle{KeyNotation3} = [thick, rectangle, rounded corners, minimum width=1.8cm, minimum height=0.6cm, text centered, draw=black, fill=red!30]
\tikzstyle{arrow} = [black, thick,->,>=stealth]
\begin{figure}[h]
\centering
\begin{tikzpicture}[node distance=1.55cm]
\small
\node (Zpoints)[KeyNotation]{$Z$-points};
\node (FTBS)   [KeyNotation, right of=Zpoints,  xshift=3.3cm]{$f,\T,\block,\sector$};
\node (nestp)  [KeyNotation3, below of=FTBS,     yshift=0.2cm]{$\Nest(P)$ (including $\sigma P$)};
\node (BQ)     [KeyNotation, right of=FTBS,     xshift=4.2cm]{Bounding-quadrants of blocks};
\node (keyI)   [KeyNotation2, below of=BQ,     yshift=0.2cm]{The answer to the queries};
\draw [arrow](Zpoints) -- (FTBS);
\draw [arrow](FTBS) -- (nestp);
\draw [arrow](nestp) -- node [anchor=south] {} (keyI);
\draw [arrow](FTBS) -- node [anchor=south] {} (keyI);
\draw [arrow](BQ) -- node [anchor=north] {} (nestp);
\draw [arrow](BQ) -- node {} (keyI);
\end{tikzpicture}
\setcaptionwidth{.95\textwidth}
\caption{Key geometric objects studied in this manuscript and their relations.}\label{fig:flow}
\end{figure}

\begin{lemma}
The total number of segments in $\Nest(P)$ is $O(n^2)$.
\end{lemma}

\begin{proof}
By the explicit definition of $\LVS,\RVS$ given in subsection~\ref{subsect:algorithms-endpoints}, either of them consists of $O(n)$ segments.
So it reduces to showing that (i) \emph{the boundary-portions in
   $\{\zeta(u,u)' \mid \text{unit }u\text{ is chasing unit }u'\}$ have in total $O(n^2)$ line segments}.
For any $e_i$, applying the bi-monotonicity of the $Z$-points (Fact~\ref{fact:Z_bi-monotonicity}),
  the boundary-portions in $\{\zeta(e_i,u') \mid \text{$e_i$ is chasing unit $u'$}\}$ have in total $O(n)$ line segments.
Similarly, for any $e_j$,
  the boundary-portions in $\{\zeta(u,e_j) \mid \text{unit $u$ is chasing $e_j$}\}$ have in total $O(n)$ line segments.
These together imply that the number of line segments of $\{\zeta(v_i,v_j)\}$ is also $O(n^2)$ (see Figure~\ref{fig:blocks_def}).
Altogether, we obtain statement~(i). So the lemma holds.
\end{proof}

\end{document}